\def\dOi{10(4:3)2014}
\newcommand{\fo}{\mathrm{FO}\xspace} 
\newcommand{\EF}{Ehrenfeucht-Fra\"iss\' e\xspace}
\begin{document}

\title[quantifier structure hierarchy]{On the strictness of the quantifier structure hierarchy in first-order logic\rsuper*}

\author{Yuguo He}	
\address{School of Computer Science, Beijing Institute of Technology, Beijing 100081, China.}	
\email{hugo274@gmail.com}  



\keywords{Finite Model Theory, Quantifier Structure, Quantifier Class, Ehrenfeucht-Fra\"iss\' e Games, Strategy Composition, Point-expansion, Ordered Structures.}
\subjclass{F.4.1.}
\titlecomment{{\lsuper*}This paper is from a part of the author's PhD thesis. It is an extension and revision of a conference paper published in Proceedings of the 25th Annual IEEE Symposium on Logic in Computer Science.}


\begin{abstract}
  \noindent We study a natural hierarchy in first-order logic, namely the quantifier structure hierarchy, which gives a systematic classification of first-order formulas based on structural quantifier resource. We define a variant of Ehrenfeucht-Fra\"iss\' e games that characterizes quantifier classes and use it to prove that this hierarchy is strict over finite structures, using strategy compositions. Moreover, we prove that this hierarchy is strict even over ordered finite structures, which is interesting in the context of descriptive complexity. 
\end{abstract}

\maketitle

\section{Introduction}\label{S:one}
One of the major interests of finite model theory is to separate the expressive power of different logics or fragments of logics. Quantifiers are an important logical resource for measuring the logical complexity of problems. The study of fragments of first-order logic ($\fo$) based on quantifier structures, especially quantifier prefixes, has a long history in model theory \cite{GradelM96}. However, so far there are few results about the expressive power of such fragments. Walkoe \cite{W.Walkoe_partially-ordered_1970} proved that there exists a sentence with prefix $p$ which is different from any sentence with prefix $q$ if $p$ and $q$ are different but with the same length. In the proof, the structures are assumed to be infinite. Afterwards Keisler and Walkoe \cite{kw73diversity} improved this result by showing its validity over finite structures. Chandra and Harel \cite{ch82fohierarchy} proved that $\Sigma_k\subsetneq \Sigma_{k+1}$ over finite digraphs. 
Sipser \cite{sipser_borel_1983} proved a similar result in the context of unbounded fan-in bounded depth circuits.  

In 1996, Gr\"adel and McColm \cite{GradelM96} established a strict hierarchy based on quantifier classes in the infinitary logic over finite structures and resolved a conjecture of Immerman, i.e.\ $\Sigma_i^{TC}\subsetneq \Sigma_{i+1}^{TC}$ for each $i$. At the same time, they proposed a conjecture on the expressive power of the fragments of $\fo$  based on prefixes, which generalized the previous results \cite{W.Walkoe_partially-ordered_1970}, \cite{kw73diversity} and \cite{ch82fohierarchy}. In 1998, Rosen \cite{rosen05prifix} confirmed this conjecture and called the strict hierarchy based on these fragments of $\fo$ the first-order prefix hierarchy. Actually, Rosen proved a stronger result, which states that, over a single binary relation, for any prefix $p$ there is a first-order sentence $\varphi_p$ in prenex normal form with prefix $p$, such that for any sentences $\psi$ in infinitary logic, $\varphi_p$ is not equivalent to $\psi$ if $p$ is not embeddable in the ``quantifier structure'' of $\psi$.\footnote{Here, the notion ``quantifier structure'' is from Gr\"{a}del and McColm \cite{GradelM96}, which is different from ours (cf. Definition \ref{quantifier-structure-1}).}  However, a stronger version of the conjecture remains open, i.e.\ whether it holds over finite structures or not \cite{rosen_some-aspects_2002,rosen05prifix}. One way to prove the conjecture is to prove a finite version of Rosen's main theorem.  

In this paper, we continue this line of study. 
We define a variant of Ehrenfeucht-Fra\"iss\' e games that characterizes quantifier classes and prove the following main result:

\textit{Let $S_1$ and $S_2$ be two finite $\Gamma$-labeled forests. Over the class of all finite digraphs,
\[
         \mbox{if}\hspace{3pt} S_1\npreceq_e S_2, \hspace{2pt} \mbox{then}\hspace{3pt} \fo\{S_1\}\nsubseteq \fo\{S_2\}.
\]}

\noindent The structures we use in the proof are finite trees, which
makes it easy to prove a stronger result: The above main result holds
even when the structures have a linear order. Here we introduce the ideas that are used to deal with linear order in a simpler context: 

\textit{Over the class of all ordered finite digraphs,
\[
         \mbox{if}\hspace{3pt} \mathscr{W}(S_1)\nsubseteq \mathscr{W}(S_2),\hspace{2pt} \mbox{then}\hspace{3pt} \fo\{S_1\}\nsubseteq \fo\{S_2\}.
\]}

\section{Preliminaries}

\subsection{General background}

Let $\mathbb{N}$ and $\mathbb{N}^+$ denote the set of natural numbers (non-negative integers) and positive natural numbers respectively. 

We assume that the readers have basic knowledge about finite model theory. 
In the following we briefly introduce some necessary background. The readers can cf. the textbook \cite{Libkin04Elements} for more of it. 

A \textit{relational signature} consists of a sequence of relation and constant symbols. In this paper, a signature is relational and finite, whenever mentioned.  

Let $\sigma=\langle R_1,\cdots\!,R_m,c_1,\cdots\!,c_n\rangle$ be a signature,  a $\sigma$-\textit{structure} $\mathfrak{A}$  consists of a universe $|\mathfrak{A}|$ together with an interpretation of\begin{itemize}
\item each $k$-ary relation symbol $R_i\in \sigma$ as a $k$-ary relation on $|\mathfrak{A}|$, denoted by $R_i^{\mathfrak{A}}$;
\item each constant symbol $c_i\in \sigma$ as an element in $|\mathfrak{A}|$. 
\end{itemize}   
A structure is called \textit{finite} if its universe is a finite set. 

A $\sigma$-structure $\mathfrak{A}^{\prime}$ is a \textit{substructure} of $\mathfrak{A}$ if the following hold:
\begin{enumerate}
\item $|\mathfrak{A}^{\prime}|\subseteq |\mathfrak{A}|$;
\item For any $k$-ary relation $R\in \sigma\cup\{=\}$, $R^{\mathfrak{A}^{\prime}}=R^{\mathfrak{A}}\cap |\mathfrak{A}^{\prime}|^k$;
\item For any constant $c\in \sigma$, $c^{\mathfrak{A}^{\prime}}=c^{\mathfrak{A}}$.
\end{enumerate}  

Let $\sigma^{\prime}\subseteq \sigma$. The $\sigma^{\prime}$-\textit{reduct} of  $\mathfrak{A}$, denoted $\mathfrak{A}|\sigma^{\prime}$, is obtained from $\mathfrak{A}$ by leaving all the symbols in $\sigma\setminus\sigma^{\prime}$ uninterpreted.

Let $\mathfrak{A}$ and $\mathfrak{B}$ be wo structures of the same signature. 
An \textit{isomorphism} between $\mathfrak{A}$ and $\mathfrak{B}$ is a bijection $h: |\mathfrak{A}|\rightarrow|\mathfrak{B}|$ such that the following hold: \begin{enumerate}
\item For any $k$-ary relation $R\in \sigma\cup\{=\}$ and $(a_1,\ldots,a_k)\in |\mathfrak{A}|^k$, \[(a_1,\ldots,a_k)\in R^{\mathfrak{A}} \hspace{3pt}\mathrm{iff}\hspace{3pt} (h(a_1),\ldots,h(a_k))\in R^{\mathfrak{B}};\]
\item For any constant $c\in \sigma$, $h(c^{\mathfrak{A}})=c^{\mathfrak{B}}$. 

\end{enumerate}

\noindent Say that two structures $\mathfrak{A}$ and $\mathfrak{B}$ over the same signature are isomorphic if there is an isomorphism between them, denoted $\mathfrak{A}\cong\mathfrak{B}$.

Let $\bar a=(a_1,\cdots,a_k)\in |\mathfrak{A}|^k$, $\bar b=(b_1,\cdots,b_k)\in |\mathfrak{B}|^k$. Say that $(\bar a,\bar b)$ defines a\textit{ partial isomorphism} between $\mathfrak{A}$ and $\mathfrak{B}$ if $\bar a$ contains  the elements that interpret all the constants of $\mathfrak{A}$, $\bar b$ contains  the elements that interpret all the constants of $\mathfrak{B}$, and the substructure of $\mathfrak{A}$ that is generated by $\bar a$ is isomorphic to the substructure of $\mathfrak{B}$ that is generated by $\bar b$.   
More precisely, the following hold:  \begin{enumerate}
\item for any $m$-ary relation symbol $R\in \sigma\cup\{=\}$ and any sequence $(i_1,\cdots,i_m)$ of numbers from [k],  
\[
(a_{i_1},\cdots,a_{i_m})\in R^{\mathfrak{A}} 
\hspace{5pt}\mathrm{iff}\hspace{5pt}
(b_{i_1},\cdots,b_{i_m})\in R^{\mathfrak{B}}.
\]
\item for any constant $c\in \sigma$ and any $i\in [k]$, 
\[
a_i=c^{\mathfrak{A}} \hspace{5pt} \mathrm{iff} \hspace{5pt} b_i=c^{\mathfrak{B}}.
\]
\end{enumerate}   

\noindent We assume that the readers have basic knowledge about first-order logic, 
especially what is the meaning of ``a formula is true in a structure''. 
Without loss of generality, we assume that all the formulas and sentences are
in \textit{negation normal form}, i.e. all negations can only occur immediately before 
atoms.

Let $\mathfrak{A}$ be a $\sigma$-structure and $\psi$ be a first-order sentence. We use $\mathfrak{A}\models \psi$ to denote that $\psi$ is true in $\mathfrak{A}$, and we call $\mathfrak{A}$ a \textit{model} for $\psi$. Let $\mathrm{Mod}(\psi)$ be the set of models of $\psi$.
A \textit{property} $Q$ over $\sigma$ is a set of $\sigma$-structures closed under isomorphism. Say that $Q$ is \textit{expressible}, or \textit{definable}, in $\fo$ if there is a sentence $\varphi$ in $\fo$ such that for every $\mathfrak{A}$,  $\mathfrak{A}\in\mathrm{Mod}(\varphi)$ iff $\mathfrak{A}\in Q$. 

A \textit{linear order} is a binary relation that is transitive, antisymmetric and total.   
Let $\tau$ be a signature. And let $\tau^{\text{ORD}}:= \tau \cup \{\leq\}$ where $\leq$ is interpreted in a $\tau^{\text{ORD}}$-structure as a linear order of its universe.

  \subsection{$\Gamma$-labeled forests}
 Let $n\in\mathbb{N}^+$. Given a graph $\mathcal{G}=(V,E)$, a \textit{directed path} $P$ in $\mathcal{G}$ is a sequence of vertices $(v_0,\cdots, v_{n})$  such that there is an arc from $v_i$ to $v_{i+1}$ for any $i<n$. The length of $P$ is $n$. A directed path is \textit{nontrivial} if the length of the path is nonzero. 

Trees are defined in the usual way in computer science. If there is an arrow from a node $a$ to a node $b$, then we call $a$ a \textit{father} of $b$ and $b$ a \textit{child} of $a$.  In a tree, each node has zero or more children and each node has at most one father. A node which has no father is called a \textit{root} and a node which has no child is called a \textit{leaf}. An \textit{inner node} is any node that has child nodes.  
A \textit{tree} is a connected acyclic digraph that has a root and some leaves. A \textit{degenerate tree} is a directed path. 

The \textit{height of a tree} is the length of a longest directed path in the tree. A \textit{forest} is composed of disjoint trees. 
Let $S$ be a forest. Define its height, denoted $h(S)$, as the maximum height of its trees. And define its rank, denoted $rk(S)$, as $h(S)+1$ when $S$ is not empty and 0 otherwise. Let $\Gamma=\{\exists,\forall\}$. A forest is a \textit{$\Gamma$-labeled forest} if all its nodes are labeled with ``$\exists$'' or ``$\forall$''. We call those nodes labeled with ``$\exists$'' $\mathscr{E}$ nodes and the other nodes $\mathscr{A}$ nodes. 

A\textit{ $\Gamma$-labeled perfect binary tree} is a $\Gamma$-labeled tree where each node, except the leaves, has exactly one $\mathscr{E}$ child and one $\mathscr{A}$ child, and all the leaves are at the same depth. 

A \textit{$\exists_n$-perfect binary tree}, denoted   $^*\mathcal{T}^{\exists}_n$, is a $\Gamma$-labeled perfect binary tree, whose root is labeled with $\exists$ and height is $(n-1)$. Likewise, a \textit{$\forall_n$-perfect binary tree}, denoted  $^*\mathcal{T}^{\forall}_n$, is a $\Gamma$-labeled perfect binary tree, whose root is labeled with $\forall$ and height is $(n-1)$.

\subsection{Prefixes}

The following terminology and conventions come from   Gr\"adel-McColm \cite{GradelM96} and Rosen \cite{rosen05prifix}.  A \textit{prefix} $p$ is a finite string in $\Gamma^*$.  The dual of $p$, denoted by $\bar p$, is the prefix obtained from $p$ by swapping $\exists$ with $\forall$.  Let $\mathcal{P}\subseteq \Gamma^*$. Then $\overline{\mathcal{P}}:=\{\bar p\in \Gamma^*\mid\hspace{3pt} p\in \mathcal{P}\}$. A prefix $p$ is a \textit{subsequence} of a prefix $q$ if $p$ can be obtained from $q$ by possibly deleting some elements of $q$, without changing the order of the remaining elements of $q$. A partial order on $\Gamma^*$, called prefix embedding, can be defined as follows: $p\preceq q$ iff $p$ is a subsequence of $q$. Here we use the curly symbol to distinguish it from the usual symbol of linear orders. Nevertheless, whether a symbol stands for a linear order or (prefix) embedding should be easily decided from the context. 
We use the same notation ``$\preceq$'' to denote the embedding relation between two sets of prefixes. For $\mathcal{P}_1$, $\mathcal{P}_2\subseteq\Gamma^*$, $\mathcal{P}_1\preceq \mathcal{P}_2\Leftrightarrow \forall p\in \mathcal{P}_1, \exists q\in \mathcal{P}_2$ s.t.\ $p\preceq q.$ $\mathcal{P}_1\prec\mathcal{P}_2$ if $\mathcal{P}_1\preceq \mathcal{P}_2$ but $\mathcal{P}_2\npreceq\mathcal{P}_1$. We use ``$*$'' to denote the concatenation of words. For any $\alpha\in \Gamma$ and $\mathcal{P}\subseteq \Gamma^*$, $\alpha*\mathcal{P}:=\{\alpha*p\mid p\in \mathcal{P}\}$. We define ${\mathcal{P}}^-:=\{p\mid \exists q\in \mathcal{P} \hspace{4pt} s.t.\ p\preceq q \}$ as the downward closure of $\mathcal{P}$. Let $\Gamma_c=\{\exists,\forall,\exists^*,\forall^*\}$ where $\exists^*$ and  $\forall^*$ are characters.  We interpret a word in $\Gamma_c^*$ as a regular expression. $\gamma: \Gamma_c^*\rightarrow \wp({\Gamma^*})$ maps such a regular expression to the regular language it denotes, where $ \wp({\Gamma^*})$ is the power set of $\Gamma^*$. We define $\gamma^-: \Gamma_c^*\rightarrow \wp(\Gamma^*)$ so that for any $v\in \Gamma_c^*$, $\gamma^-(v)=\{q\in \Gamma^*\mid$ there is $q^{\prime}\in \gamma(v)$ and $q\preceq q^{\prime}\}$, the downward closure of $\gamma(v)$. 

For a prefix $p$, $|p|$ is the length of $p$. $p[i]$ is the $i$-th letter of $p$. Let $l(p)$ be the last letter of $p$. For $0\leq i<|p|$, let $p^{-i}$ be the prefix obtained from $p$ by removing the first $i$ letters in $p$, i.e.\ $p=p[1]*\cdots p[i]*p^{-i}$.

Finally, let $\epsilon$ be the empty string.

\begin{lem}\label{conca-duals}
Let $p,q$ be prefixes. The following hold:
\begin{enumerate}
\item $\overline{\overline{p}}=p$.
\item  $\bar p*\bar q=\overline{p*q}$.
\end{enumerate}
\end{lem}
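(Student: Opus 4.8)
The plan is to exploit the fact that dualization acts pointwise on prefixes: $\bar p$ is obtained from $p$ by applying, to each letter, the involution on $\Gamma=\{\exists,\forall\}$ that interchanges $\exists$ and $\forall$. Both parts then follow either by a one-line pointwise argument or by an easy induction on $|p|$.

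For part (1), since the swap map on $\Gamma$ is its own inverse, applying it twice to each letter of $p$ returns that letter unchanged, so $\overline{\overline{p}}=p$. If one prefers induction on $|p|$: the base case is $\overline{\overline{\epsilon}}=\epsilon$, and for $p=\alpha*p'$ with $\alpha\in\Gamma$ we get $\overline{\overline{\alpha*p'}}=\alpha*\overline{\overline{p'}}=\alpha*p'$ by the induction hypothesis.

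For part (2), observe that concatenation is length-additive, $|p*q|=|p|+|q|$, with $(p*q)[i]=p[i]$ for $1\le i\le |p|$ and $(p*q)[i]=q[i-|p|]$ for $|p|<i\le |p|+|q|$. Applying the pointwise swap yields a string of the same length whose $i$-th letter is $\bar p[i]$ when $i\le|p|$ and $\bar q[i-|p|]$ otherwise, which is exactly $\bar p*\bar q$. Alternatively, induct on $|p|$: for $p=\epsilon$ we have $\overline{\epsilon*q}=\bar q=\bar\epsilon*\bar q$; and for $p=\alpha*p'$, using associativity of $*$ and part of the reasoning already given, $\overline{(\alpha*p')*q}=\overline{\alpha*(p'*q)}=\bar\alpha*\overline{p'*q}=\bar\alpha*(\bar{p'}*\bar q)=(\bar\alpha*\bar{p'})*\bar q=\bar p*\bar q$.

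There is no substantive obstacle here; the statement is a bookkeeping fact about the dual operation. The only points requiring a little care are the index arithmetic around position $|p|$ in part (2) and the empty-string base case $\bar\epsilon=\epsilon$.
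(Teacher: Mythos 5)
Your proposal is correct and takes essentially the same route as the paper: the paper also proves part (1) directly from the definition and part (2) by expanding both prefixes letterwise and invoking associativity of concatenation, which is the same pointwise/index bookkeeping you carry out. No gaps.
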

\begin{proof}\hfill
\begin{enumerate}
\item By definition.
\item Let $|p|=n$, $|q|=m$. 

By definition, $\bar p* \bar q=\overline{p[1]*\cdots*p[n]}*\overline{q[1]*\cdots*q[m]}=(\overline{p[1]}*\cdots*\overline{p[n]})*(\overline{q[1]}*\cdots*\overline{q[m]})$. Because the concatenation operation on  words satisfies the associative law, it follows that  $(\overline{p[1]}*\cdots*\overline{p[n]})*(\overline{q[1]}*\cdots*\overline{q[m]})=\overline{p[1]}*\cdots*\overline{p[n]}*\overline{q[1]}*\cdots*\overline{q[m]}$. Hence, by definition, $\bar p*\bar q=\overline{p*q}$.\qedhere 
\end{enumerate}
\end{proof}

\begin{defi} (Rosen, \cite{rosen05prifix}). \label{function-f} 
Let $s\in \Gamma$ and $s^i$ denote the string consisting of $i$ repetitions of $s$. Define $f: \Gamma^*\rightarrow \Gamma_c^*$ as follows: 
\begin{enumerate}
\item If $p=\exists^n$, then $f(p):=a_1*\cdots *a_{2n-1},$ where $a_i=\forall^*$ for $i$ odd, and $a_i=\exists$ for $i$ even;
\item If $p=\forall^n$, then $f(p):=a_1*\cdots *a_{2n-1},$ where $a_i=\exists^*$ for $i$ odd, and $a_i=\forall$ for $i$ even;
\item If $p=s_1^{i_1}*\cdots *s_n^{i_n}$ ($s_i\in\{\exists,\forall\}, s_i\neq s_{i+1}, i_j\in\mathbb{N}^+$), then $f(p):=f(s_1^{i_1})*\cdots * f(s_n^{i_n})$.
\end{enumerate} 
  \end{defi}

\begin{lem} \label{prefix}
Let $p$ be a prefix. Then $f(p)=\overline{f(\bar p)}.$
\end{lem}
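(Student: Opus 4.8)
The plan is to argue by the same case distinction used in Definition~\ref{function-f}, after first pinning down the meaning of the dual on $\Gamma_c^*$: one extends $\bar{\cdot}$ to words over $\Gamma_c$ by letting it swap $\exists$ with $\forall$ \emph{and} $\exists^*$ with $\forall^*$, so that on $\Gamma_c^*$ it is still an involution that distributes over concatenation (the obvious $n$-fold iteration of Lemma~\ref{conca-duals}(2), obtained from the binary case by a trivial induction). Without this convention the displayed identity does not even typecheck, so I would state it explicitly at the outset.

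First I would handle the pure blocks $p=s^n$ with $s\in\Gamma$. If $p=\exists^n$, then $\bar p=\forall^n$, so by clause (2) of Definition~\ref{function-f} the word $f(\bar p)=a_1*\cdots*a_{2n-1}$ has $a_i=\exists^*$ for odd $i$ and $a_i=\forall$ for even $i$; applying the extended dual letter by letter turns this into the word with $\forall^*$ in the odd positions and $\exists$ in the even positions, which is exactly $f(\exists^n)=f(p)$ by clause (1). The case $p=\forall^n$ is symmetric, with the roles of clauses (1) and (2) exchanged. This settles the lemma for every prefix of the form $s^n$.

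For the general case, write $p=s_1^{i_1}*\cdots*s_n^{i_n}$ with $s_j\in\Gamma$, $s_j\neq s_{j+1}$, and $i_j\in\mathbb{N}^+$ (the empty prefix being the case $n=0$, where both sides are $\epsilon$). Then $\bar p=(\bar{s_1})^{i_1}*\cdots*(\bar{s_n})^{i_n}$, and since $\bar{s_j}\neq\bar{s_{j+1}}$ this is again an alternating-block decomposition with the same block structure, so clause (3) applies and gives $f(\bar p)=f((\bar{s_1})^{i_1})*\cdots*f((\bar{s_n})^{i_n})$. Taking duals and using that $\bar{\cdot}$ distributes over concatenation, $\overline{f(\bar p)}=\overline{f((\bar{s_1})^{i_1})}*\cdots*\overline{f((\bar{s_n})^{i_n})}$; by the block cases already proved, $\overline{f((\bar{s_j})^{i_j})}=f(s_j^{i_j})$ for every $j$, whence $\overline{f(\bar p)}=f(s_1^{i_1})*\cdots*f(s_n^{i_n})=f(p)$ by clause (3) again.

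There is no genuine obstacle here; the lemma is essentially a bookkeeping fact about Definition~\ref{function-f}. The only two points that deserve care are (i) being explicit that dualization on $\Gamma_c^*$ also swaps $\exists^*$ and $\forall^*$, and (ii) noting that dualization preserves the ``consecutive blocks are distinct'' condition, so that clause (3) can be invoked for $\bar p$ with exactly the block decomposition coming from that of $p$. Both are immediate.
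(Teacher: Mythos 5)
Your proof is correct and follows essentially the same route as the paper's: decompose $p$ into alternating blocks, observe that dualization preserves the block structure, apply clause (3) of Definition~\ref{function-f} to $\bar p$, reduce to the pure-block identity $f(\overline{s}^{\,i})=\overline{f(s^{i})}$, and pull the dual out of the concatenation via Lemma~\ref{conca-duals}(2). The only difference is that you spell out two points the paper leaves implicit (the extension of $\bar{\cdot}$ to $\Gamma_c^*$ swapping $\exists^*$ with $\forall^*$, and the explicit verification of the block case), which is a welcome clarification rather than a change of method.
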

\begin{proof}
 Assume that $p=s_1^{i_1}*\cdots *s_n^{i_n}$ ($s_i\in\{\exists,\forall\}, s_i\neq s_{i+1}, i_j\in\mathbb{N}^+$). By the definition of the dual of a prefix, $\bar p=\overline{s_1}^{i_1}*\cdots*\overline{s_n}^{i_n}$. Note that $\overline{s_i}\neq \overline{s_{i+1}}$ since $s_i\neq s_{i+1}$. By definition, $f(\bar p)=f(\overline{s_1}^{i_1})*\cdots*f(\overline{s_n}^{i_n})$.  Note that by definition $f(\overline{s_j}^{i_j})=\overline{f(s_j^{i_j})}$. Hence, $f(\bar p)=\overline{f(s_1^{i_1})}*\cdots*\overline{f(s_n^{i_n})}$.   
 By Lemma \ref{conca-duals} \textit{(ii)}, it means that $f(\bar p)=\overline{f(s_1^{i_1})*\cdots * f(s_n^{i_n})}=\overline{f(p)}$. Therefore,  $f(p)=\overline{f(\bar p)}$.
\end{proof}

\begin{lem} \label{roson}
(Rosen, \cite{rosen05prifix}). For every prefix $p\in \Gamma^*$, $f(p)$ is the unique word in $\Gamma_c^*$ such that $\gamma^-(f(p))=\{q\in\Gamma^*\mid p\npreceq q\}$.
\end{lem}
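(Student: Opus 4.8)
The statement has two parts: the identity $\gamma^-(f(p))=\{q\in\Gamma^*\mid p\npreceq q\}$, and the uniqueness of $f(p)$ having this property. The plan is to prove the identity by induction on the number $n$ of maximal constant blocks of $p$ --- that is, writing $p=s_1^{i_1}*\cdots *s_n^{i_n}$ with each $i_j\in\mathbb{N}^+$ and $s_j\neq s_{j+1}$, and taking $p\neq\epsilon$. Two auxiliary observations drive the induction. First, a \emph{concatenation} fact: for all $u,v\in\Gamma_c^*$ one has $\gamma^-(u*v)=\gamma^-(u)\cdot\gamma^-(v)$ (language concatenation on the right), because an embedding witnessing $q\preceq w_1w_2$ cuts $q$ into a prefix embedding into $w_1$ and a suffix embedding into $w_2$, and conversely. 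Second, a \emph{duality} fact: $\overline{\gamma^-(u)}=\gamma^-(\bar u)$ and $\overline{\{q\mid p\npreceq q\}}=\{q\mid\bar p\npreceq q\}$, both immediate since $q\mapsto\bar q$ is an automorphism of $(\Gamma^*,\preceq)$ that commutes with the regular-expression semantics.

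For the base case $n=1$, i.e.\ $p=s^i$: when $s=\exists$, unwinding Definition~\ref{function-f}(1) gives $f(\exists^i)=\forall^*(\exists\,\forall^*)^{i-1}$, so $\gamma(f(\exists^i))$ is exactly the set of words with precisely $i-1$ occurrences of $\exists$, hence $\gamma^-(f(\exists^i))$ is the set of words with at most $i-1$ such occurrences; on the other hand $\exists^i\preceq q$ iff $q$ has at least $i$ occurrences of $\exists$, so the two sides coincide. The case $s=\forall$ then follows from the duality fact together with Lemma~\ref{prefix}, which yields $f(\forall^i)=\overline{f(\exists^i)}$.

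For the inductive step $n\geq 2$, write $p=s_1^{i_1}*p'$ where $p'=s_2^{i_2}*\cdots *s_n^{i_n}$ is a nonempty prefix beginning with the letter $\overline{s_1}$ (as $s_2\neq s_1$). By Definition~\ref{function-f}(3), $f(p)=f(s_1^{i_1})*f(p')$, so the concatenation fact, the base case and the induction hypothesis give $\gamma^-(f(p))=\{q\mid s_1^{i_1}\npreceq q\}\cdot\{q\mid p'\npreceq q\}$. It therefore suffices to prove the combinatorial identity
\[
\{q\mid s_1^{i_1}*p'\npreceq q\}\;=\;\{q\mid s_1^{i_1}\npreceq q\}\cdot\{q\mid p'\npreceq q\},
\]
which holds \emph{because} $p'$ is nonempty and starts with $\overline{s_1}$. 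The inclusion $\supseteq$ is straightforward: if $q=q_1q_2$ with $s_1^{i_1}\npreceq q_1$ and $p'\npreceq q_2$, then any split of $q$ witnessing $s_1^{i_1}*p'\preceq q$ must cut strictly to the right of $q_1$ (otherwise $s_1^{i_1}$ would embed into $q_1$), and hence leaves a suffix of $q_2$ into which $p'$ embeds --- a contradiction. The inclusion $\subseteq$ is the crux: given $s_1^{i_1}*p'\npreceq q$, if $q$ has fewer than $i_1$ occurrences of $s_1$ take $q_1=q$ and $q_2=\epsilon$ (using $p'\neq\epsilon$); otherwise let $q_0$ be the shortest prefix of $q$ with $i_1$ occurrences of $s_1$ --- its last letter is then an $s_1$ --- and write $q=q_0q_0'$; since $s_1^{i_1}\preceq q_0$ we must have $p'\npreceq q_0'$, so set $q_1:=q_0$ with its trailing $s_1$ removed (giving $s_1^{i_1}\npreceq q_1$) and $q_2:=s_1*q_0'$, and note that since $p'$ begins with $\overline{s_1}$, no embedding of $p'$ into $q_2$ can use the leading $s_1$, whence $p'\npreceq q_2$ as well. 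This exhibits $q=q_1q_2$ of the required form and completes the induction.

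For uniqueness one observes that $f(p)$ is already in the natural reduced form for $\Gamma_c^*$ (no two adjacent symbols involve the same letter of $\Gamma$), that any $u\in\Gamma_c^*$ can be rewritten without changing $\gamma^-(u)$ into such a reduced word by collapsing redundant adjacencies (e.g.\ $\exists\exists^*$, $\exists^*\exists$, $\exists^*\exists^*$ and their $\forall$-duals), and that on reduced words $u\mapsto\gamma^-(u)$ is injective because the alternating letter/star pattern of a reduced $u$ can be read off from the unique $\preceq$-minimal word excluded by $\gamma^-(u)$ --- which for $\{q\mid p\npreceq q\}$ is $p$ itself. Hence $f(p)$ is the only reduced word, and up to this normalization the only word, with the stated property. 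I expect the main obstacle to be precisely the $\subseteq$ direction of the combinatorial identity in the inductive step: it is the one place where the alternation hypothesis $s_j\neq s_{j+1}$ is genuinely used, and the bookkeeping of the boundary $s_1$ moved from the first factor to the second is the only delicate point; the concatenation and duality facts, the base case, and the uniqueness half are all routine.
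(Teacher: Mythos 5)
The paper does not prove this lemma at all: it is imported verbatim from Rosen \cite{rosen05prifix} and used as a black box (only the identity $\gamma^-(f(p))=\{q\mid p\npreceq q\}$ is ever invoked, in Theorems \ref{main} and \ref{main1}). So there is no in-paper argument to compare against, and I can only assess your proof on its own terms. Your proof of the identity is correct and complete for $p\neq\epsilon$. The concatenation fact $\gamma^-(u*v)=\gamma^-(u)\cdot\gamma^-(v)$, the duality reduction via Lemma \ref{prefix}, the base case $\gamma^-(\forall^*(\exists\forall^*)^{i-1})=\{q\mid q \text{ has at most } i-1 \text{ occurrences of }\exists\}$, and both inclusions of the combinatorial identity $\{q\mid s_1^{i_1}*p'\npreceq q\}=\{q\mid s_1^{i_1}\npreceq q\}\cdot\{q\mid p'\npreceq q\}$ all check out; in particular the delicate bookkeeping in the $\subseteq$ direction (moving the trailing $s_1$ of $q_0$ into the second factor and using that $p'$ begins with $\overline{s_1}$) is exactly right, and you correctly identify that this is where the alternation hypothesis is used. (The excluded case $p=\epsilon$ is a genuine defect of the lemma as quoted, not of your argument: $\{q\mid\epsilon\npreceq q\}=\emptyset$ is not $\gamma^-(w)$ for any $w$.)

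The one soft spot is the uniqueness paragraph. Your normalization claim is false as stated: the word $\exists\exists\in\Gamma_c^*$ cannot be rewritten into a form with no two adjacent symbols over the same letter while preserving $\gamma^-$, since $\gamma^-(\exists\exists)=\{\epsilon,\exists,\exists\exists\}$ is not the downward closure of any alternating word. Moreover literal uniqueness in $\Gamma_c^*$ fails anyway ($\gamma^-(\forall^*)=\gamma^-(\forall^*\forall^*)$, yet both would qualify for $p=\exists$), so the lemma's uniqueness clause can only be meant up to such normalization, which is the hedge you adopt. To make that half airtight you would need to restrict the normalization claim to words $u$ with $\gamma^-(u)=\{q\mid p\npreceq q\}$ (ruling out shapes like $\exists\exists$ by showing their downward closures are never of this form) rather than asserting it for all of $\Gamma_c^*$. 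Since the paper never uses uniqueness, this does not affect anything downstream.
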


\begin{lem} \label{downward-closure-P}
For any $\mathcal{P}, \mathcal{P}_1, \mathcal{P}_2\subseteq \Gamma^*$, $\mathcal{P}^-=\mathcal{P}_1^-\cup\mathcal{P}_2^-$ if $\mathcal{P}=\mathcal{P}_1\cup\mathcal{P}_2$.
\end{lem}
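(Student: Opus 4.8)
The plan is to prove the set equality by establishing the two inclusions $\mathcal{P}^-\subseteq \mathcal{P}_1^-\cup\mathcal{P}_2^-$ and $\mathcal{P}_1^-\cup\mathcal{P}_2^-\subseteq \mathcal{P}^-$ separately, working directly from the definition of the downward closure, namely ${\mathcal{P}}^-=\{p\in\Gamma^*\mid \exists q\in \mathcal{P}\ \text{s.t.}\ p\preceq q\}$.

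For the inclusion $\mathcal{P}^-\subseteq \mathcal{P}_1^-\cup\mathcal{P}_2^-$, I would take an arbitrary $p\in\mathcal{P}^-$ and unfold the definition: there is a witness $q\in\mathcal{P}$ with $p\preceq q$. Since $\mathcal{P}=\mathcal{P}_1\cup\mathcal{P}_2$, this $q$ lies in $\mathcal{P}_1$ or in $\mathcal{P}_2$. In the first case $p\preceq q$ with $q\in\mathcal{P}_1$ witnesses $p\in\mathcal{P}_1^-$, and in the second case it witnesses $p\in\mathcal{P}_2^-$; either way $p\in\mathcal{P}_1^-\cup\mathcal{P}_2^-$.

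For the reverse inclusion, I would first observe that the downward-closure operator is monotone: if $\mathcal{Q}\subseteq\mathcal{R}$ then $\mathcal{Q}^-\subseteq\mathcal{R}^-$, because any witness $q\in\mathcal{Q}$ for $p\preceq q$ is also a witness in $\mathcal{R}$. Applying this to the inclusions $\mathcal{P}_1\subseteq\mathcal{P}$ and $\mathcal{P}_2\subseteq\mathcal{P}$ gives $\mathcal{P}_1^-\subseteq\mathcal{P}^-$ and $\mathcal{P}_2^-\subseteq\mathcal{P}^-$, whence $\mathcal{P}_1^-\cup\mathcal{P}_2^-\subseteq\mathcal{P}^-$. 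Combining the two inclusions yields the claimed equality.

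I do not expect any genuine obstacle here: the statement is an immediate consequence of the definition of $(\cdot)^-$ (in fact the same argument shows $(\cdot)^-$ distributes over arbitrary unions), so the author's proof is presumably a one- or two-line unfolding. The only point worth a moment's care is that here ``$\preceq$'' denotes the subsequence/prefix-embedding order on $\Gamma^*$ rather than a linear order, but the argument uses no property of $\preceq$ beyond its being a binary relation, so this causes no difficulty.
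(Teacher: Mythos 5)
Your proof is correct and matches the paper's own argument, which is exactly the same element-chasing in both directions; your packaging of the reverse inclusion via monotonicity of $(\cdot)^-$ is only a cosmetic restatement of the paper's direct witness argument.
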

\begin{proof}
If $p\in \mathcal{P}^-$, then there exists $q\in \mathcal{P}_1\cup\mathcal{P}_2$ such that $p\preceq q$. In other words, either $q\in \mathcal{P}_1$ or $q\in \mathcal{P}_2$ such that $p\preceq q$.  That is, $p\in\mathcal{P}_1^-$ or $p\in\mathcal{P}_2^-$. Hence, $p\in \mathcal{P}_1^-\cup \mathcal{P}_2^-$.

If $p\in \mathcal{P}_1^-\cup \mathcal{P}_2^-$, then either $p\in \mathcal{P}_1^-$ or $p\in \mathcal{P}_2^-$. That is, there exists $q\in \mathcal{P}_1$, or $q\in \mathcal{P}_2$, such that $p\preceq q$. In other words, there exists $q\in \mathcal{P}$ such that $p\preceq q$. Therefore, $p\in \mathcal{P}^-$.
\end{proof}

\subsection{Quantifier classes}
Let $\varphi$ be a $\fo$ formula. Recall that we assume that any formula is in negation normal form. Let $\uplus$ be the  disjoint union and $\displaystyle \biguplus_i S_i$ be the disjoint union of $S_i$. 

If $\Phi$ is a set of formulas, then $\bigwedge \Phi$ ($\bigvee \Phi$ resp.) is the conjunction (disjunction resp.) of all the formulas in $\Phi$. Similarly, we use $\displaystyle\bigwedge_i \theta_i$ ($\displaystyle\bigvee_i \theta_i$ resp.) to represent the conjunction (disjunction resp.) of all $\theta_i$.  
\begin{defi}\label{quantifier-structure-1}
The \textit{quantifier structure} of $\varphi$, denoted $qs(\varphi)$, is a $\Gamma$-labeled forest, which is defined inductively as follows: 
\begin{itemize}
\item If $\varphi$ is a literal, then $qs(\varphi)$ is empty;
\item If $\varphi=\displaystyle \bigwedge_i \theta_i$ or $\displaystyle \bigvee_i \theta_i$, then $qs(\varphi)$ is $\displaystyle\biguplus_i qs(\theta_i)$;
\item If $\varphi=\exists x \theta,$ then $qs(\varphi)$ is composed of an $\mathscr{E}$ node and $qs(\theta)$ where there is an arc from this $\mathscr{E}$ node  to each root of $qs(\theta)$ (note that $qs(\theta)$ is a forest); 

Similarly, If $\varphi=\forall x \theta,$ then $qs(\varphi)$ is composed of an $\mathscr{A}$ node and $qs(\theta)$ where  there is an arc from this $\mathscr{A}$ node to each root of $qs(\theta)$; 

In these two cases, if $qs(\theta)$ is empty, then $qs(\varphi)$ contains a single node.
\end{itemize}
  \end{defi}  
Note that this definition is  different from Gr\"{a}del and McColm's \cite{GradelM96}, in which $qs(\varphi)$ is defined as a set of strings: 
\begin{itemize}
\item If $\psi$ is a literal, then $qs(\psi)=\{\epsilon\}$ where $\epsilon$ is the empty word;

\item If $\psi:=\bigwedge \Phi$ or $\psi:=\bigvee \Phi$ where $\Phi$ is a set of formulas, then 
\[
   qs(\psi):=\bigcup_{\varphi\in \Phi} qs(\varphi);
\]
\item $\psi:=\exists x_i \varphi$, then $qs(\varphi):=\exists *qs(\varphi)$; likewise, if $\psi:=\forall x_i \varphi$, then $qs(\psi):=\forall *qs(\varphi)$.
\end{itemize}

\begin{defi}\label{forests-embedding-1}
Let $S_1$, $S_2$ be two  $\Gamma$-labeled forests. Define $S_1\preceq_e S_2$ if there is a mapping $\iota$, \textit{not necessarily injective}, from the nodes of $S_1$ to the nodes of $S_2$ such that $v$ and $\iota(v)$ have the same label for any $v$, and there is a nontrivial directed path from $\iota(x)$ to $\iota(y)$ in $S_2$ if there is an arc from node $x$ to node $y$ in $S_1$. 
  \end{defi}

\begin{rem}

Note that the relation $\preceq_e$ is not necessary antisymmetric. That is, there are non-isomorphic $\Gamma$-labeled forests $S_1, S_2$ such that $S_1\preceq_e S_2$ and $S_2\preceq_e S_1$.  
       
\end{rem}
\begin{defi}\label{read-off}
Suppose that we are given a $\Gamma$-labeled forest $S$. For any path $P:=(v_0,\cdots,v_n)$ in the forest, there is a word $(s_0,\cdots,s_n)$ in $\Gamma^*$ associated with it such that the node $v_i$ is labeled with $s_i$. We say that this word, as well as all its subsequences, can be \textit{read off} this $\Gamma$-labeled forest.  Let $\mathscr{W}(\mathcal{S})$ be the set of words that can be read off the forest $\mathcal{S}$.
  \end{defi}

\begin{defi}\label{F(P)}
Suppose that we are given a set $\mathcal{P}\subseteq \Gamma^*$. Let $\mathcal{P}=\mathcal{P}_\exists\cup \mathcal{P}_\forall$, where $\mathcal{P}_\exists=\exists*\mathcal{P}_1$  and $\mathcal{P}_\forall=\forall*\mathcal{P}_2$. These  sets can be empty. We can inductively define a $\Gamma$-labeled forest $\mathscr{F}(\mathcal{P})$ as follows: \begin{enumerate}
\item If $\mathcal{P}$ is empty, then $\mathscr{F}(\mathcal{P})$ is empty, i.e.\ this forest contains no node.
\item Let $S_1$ be a $\Gamma$-labeled forest such that its root is an $\mathscr{E}$ node and there is an arc from this root to each root of $\mathscr{F}({\mathcal{P}_1})$. Likewise, let $S_2$ be a $\Gamma$-labeled forest such that its root is an $\mathscr{A}$ node and there is an arc from this root to each root of $\mathscr{F}(\mathcal{P}_2)$.
\item $\mathscr{F}(\mathcal{P})$ is the disjoint union of $S_1$ and $S_2$.
\end{enumerate}
  \end{defi}

Note that $\mathscr{F}(P)$ is composed of at most two trees, the roots of which have different labels.
\begin{lem}\label{forest2word}
$\mathscr{W}(\mathscr{F}(\mathcal{P}))=\mathcal{P}^-$, for all $\mathcal{P}\subseteq \Gamma^*$.
\end{lem}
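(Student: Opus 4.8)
The plan is to prove $\mathscr{W}(\mathscr{F}(\mathcal{P}))=\mathcal{P}^-$ by induction on $\ell(\mathcal{P}):=\max\{|p|\mid p\in\mathcal{P}\}$, the length of a longest word of $\mathcal{P}$ (this maximum exists when $\mathcal{P}$ has words of bounded length; the general case is reduced to this one at the end). This is the natural parameter, since in the decomposition $\mathcal{P}=\exists*\mathcal{P}_1\cup\forall*\mathcal{P}_2$ of Definition \ref{F(P)} every word of $\mathcal{P}_1$ or $\mathcal{P}_2$ arises from a word of $\mathcal{P}$ by deleting a leading letter, so $\ell(\mathcal{P}_1),\ell(\mathcal{P}_2)\leq\ell(\mathcal{P})-1$. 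I would first record two elementary facts about the operator $\mathscr{W}$, both immediate from Definition \ref{read-off} (using that any directed path in a forest extends upward to one starting at the root of its tree, so $\mathscr{W}(S)$ is the subsequence-closure of the root-to-node path-words of $S$). (i) $\mathscr{W}(S'\uplus S'')=\mathscr{W}(S')\cup\mathscr{W}(S'')$, since a directed path in a disjoint union lies wholly inside one of the two forests. (ii) If $S$ is obtained from a forest $S_0$ by adding a new root $r$ with label $\alpha\in\Gamma$ and an arc from $r$ to each root of $S_0$, then
\[
  \mathscr{W}(S)=\{\epsilon,\alpha\}\cup\mathscr{W}(S_0)\cup\alpha*\mathscr{W}(S_0),
\]
because the root-to-node paths of $S$ are $(r)$ and the extensions by $r$ of the root-to-node paths of $S_0$, so their words are $\alpha$ and the words $\alpha*u$ with $u$ read off a root-to-node path of $S_0$; taking subsequence-closure and noting that a subsequence of $\alpha*u$ either omits the leading $\alpha$ (and is then a subsequence of $u$) or has the form $\alpha*u'$ with $u'\preceq u$, one gets the displayed identity.

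For the inductive step, write $\mathscr{F}(\mathcal{P})=S_1\uplus S_2$ as in Definition \ref{F(P)}: $S_1$ is the empty forest if $\mathcal{P}_1=\emptyset$, and otherwise is obtained by placing a root that is an $\mathscr{E}$ node above $\mathscr{F}(\mathcal{P}_1)$; symmetrically for $S_2$. Since $\ell(\mathcal{P}_i)<\ell(\mathcal{P})$, the induction hypothesis gives $\mathscr{W}(\mathscr{F}(\mathcal{P}_i))=\mathcal{P}_i^-$. Applying (i) to $S_1\uplus S_2$ and then (ii) to each nonempty $S_i$, and observing that when $\mathcal{P}_i\neq\emptyset$ the extra words $\epsilon$ and $\alpha$ already lie in $\mathcal{P}_i^-$ and $\alpha*\mathcal{P}_i^-$ (as $\epsilon\in\mathcal{P}_i^-$), the goal collapses to the combinatorial identity
\[
  \bigl(\exists*\mathcal{P}_1\cup\forall*\mathcal{P}_2\bigr)^-
  =\mathcal{P}_1^-\cup\exists*\mathcal{P}_1^-\cup\mathcal{P}_2^-\cup\forall*\mathcal{P}_2^-,
\]
which follows from Lemma \ref{downward-closure-P}, splitting the left side as $(\exists*\mathcal{P}_1)^-\cup(\forall*\mathcal{P}_2)^-$, together with the fact $(\alpha*\mathcal{Q})^-=\mathcal{Q}^-\cup\alpha*\mathcal{Q}^-$ for every $\mathcal{Q}\subseteq\Gamma^*$ — again the case split on the leading letter of a subsequence used in (ii). When exactly one of $\mathcal{P}_1,\mathcal{P}_2$ is empty the corresponding summand vanishes on both sides and the computation is the same.

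The base case $\mathcal{P}=\emptyset$ is immediate: $\mathscr{F}(\mathcal{P})$ has no node, hence no word can be read off it, and $\emptyset=\emptyset^-$; the only genuinely degenerate instance, $\mathcal{P}=\{\epsilon\}$, is dealt with by hand (the empty word contributes no node to $\mathscr{F}$). For a length-unbounded $\mathcal{P}$ one reduces to the bounded case by noting that $\mathcal{P}^-=\bigcup_{q\in\mathcal{P}}\{q\}^-$ and, similarly, $\mathscr{W}(\mathscr{F}(\mathcal{P}))=\bigcup_{q\in\mathcal{P}}\mathscr{W}(\mathscr{F}(\{q\}))$: for ``$\subseteq$'' a root-to-node path realising a given $w\in\mathscr{W}(\mathscr{F}(\mathcal{P}))$ has its bottom-node word a subsequence of some $q\in\mathcal{P}$ (by induction on depth), so $w\in\{q\}^-$; for ``$\supseteq$'' each $q\in\mathcal{P}$ spans a root-to-node path of $\mathscr{F}(\mathcal{P})$ with word $q$ (by induction on $|q|$), so $\{q\}^-\subseteq\mathscr{W}(\mathscr{F}(\mathcal{P}))$. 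I do not expect a genuine obstacle here; the argument is pure bookkeeping, and the one place where care is needed is tracking which of $S_1,S_2,\mathscr{F}(\mathcal{P}_1),\mathscr{F}(\mathcal{P}_2)$ is empty and where stray copies of $\epsilon$ and $\alpha$ sit, so that fact (ii) is applied in the right normalised form.
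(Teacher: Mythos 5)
Your proposal is correct and follows essentially the same route as the paper: induction on the depth of the decomposition of Definition \ref{F(P)} (your parameter $\ell(\mathcal{P})$ coincides with the paper's $rk(\mathscr{F}(\mathcal{P}))$), splitting $\mathscr{W}$ over the disjoint union $S_1\uplus S_2$, using $\mathscr{W}(S)=\alpha*\mathscr{W}(S_0)\cup\mathscr{W}(S_0)\cup\{\epsilon,\alpha\}$ for the root-prepending step, and finishing with Lemma \ref{downward-closure-P} together with $(\alpha*\mathcal{Q})^-=\mathcal{Q}^-\cup\alpha*\mathcal{Q}^-$. Your extra bookkeeping (the length-unbounded case and the degenerate instance $\mathcal{P}=\{\epsilon\}$, which Definition \ref{F(P)} silently discards) only adds rigor the paper glosses over.
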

\begin{proof}
The base case when $\mathscr{F}(\mathcal{P})$ is empty, i.e.\ when $rk(\mathscr{F}(\mathcal{P}))=0$, is trivial. 

Assume that it holds when $rk(\mathscr{F}(\mathcal{P}))\leq k$ for some $k\geq 0$. 

Assume that \[rk(\mathscr{F}(\mathcal{P}))=k+1 \text{ and }
\mathcal{P}=\mathcal{P}_\exists\cup \mathcal{P}_\forall,\]
where \[\mathcal{P}_\exists=\exists*\mathcal{P}_1  \text{ and }
\mathcal{P}_\forall=\forall*\mathcal{P}_2.\]
Clearly, \[rk(\mathscr{F}(\mathcal{P}_1))\leq k \mbox{ and }
rk(\mathscr{F}(\mathcal{P}_2))\leq k.\]\vspace{-2 pt}

\noindent According to Definition \ref{F(P)}, $\mathscr{F}(\mathcal{P})$ is the disjoint union of $\mathscr{F}(\mathcal{P}_\exists)$ and $\mathscr{F}(\mathcal{P}_\forall)$. In other words, $\mathscr{W}(\mathscr{F}(\mathcal{P}))$ equals $\mathscr{W}(\mathscr{F}(\mathcal{P}_\exists))\cup \mathscr{W}(\mathscr{F}(\mathcal{P}_\forall))$, hence equals \[(\exists*\mathscr{W}(\mathscr{F}(\mathcal{P}_1)))\cup \mathscr{W}(\mathscr{F}(\mathcal{P}_1))\cup (\forall*\mathscr{W}(\mathscr{F}(\mathcal{P}_2)))\cup \mathscr{W}(\mathscr{F}(\mathcal{P}_2)),\] and by assumption  equals \[(\exists*\mathcal{P}_1^-)\cup\mathcal{P}_1^-\cup(\forall*\mathcal{P}_2^-)\cup\mathcal{P}_2^-=\mathcal{P}_\exists^-\cup \mathcal{P}_\forall^-.\] By Lemma \ref{downward-closure-P}, $\mathcal{P}^-=\mathcal{P}_\exists^-\cup \mathcal{P}_\forall^-$.  Therefore, $\mathscr{W}(\mathscr{F}(\mathcal{P}))=\mathcal{P}^-$.
\end{proof}

\begin{rem} This lemma implies that, for any $p\in \mathcal{P}^-\subseteq \Gamma^*$, $p$ can be read off from some path of $\mathscr{F}(\mathcal{P})$.
       
\end{rem}

\begin{lem}\label{words2forest}
For any $\Gamma$-labeled forest $S$ and $\mathcal{P}\subseteq \Gamma^*$, $\hspace{2pt} S\preceq_e \mathscr{F}(\mathcal{P})$ if $\mathscr{W}(S)\subseteq \mathcal{P}^-$.
\end{lem}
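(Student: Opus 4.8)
The plan is to build the embedding by induction on the height $h(S)$. Since every arc of $S$ joins two nodes of a single tree of $S$, it suffices to embed each tree $T$ of $S$ into $\mathscr{F}(\mathcal{P})$ separately and take the union of the resulting maps; so assume $S$ is a single tree $T$ with root $r$, and let $\alpha\in\Gamma$ be the label of $r$. If $h(S)=0$, then $T$ is a single node, $(\alpha)\in\mathscr{W}(T)\subseteq\mathcal{P}^-$, so some word of $\mathcal{P}$ contains the letter $\alpha$ and hence, by Definition~\ref{F(P)}, $\mathscr{F}(\mathcal{P})$ contains a node labelled $\alpha$; send $r$ to it.

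For the inductive step, put $T'=T\setminus\{r\}$, a forest of height $h(T)-1$ whose roots are precisely the children of $r$. The heart of the matter is to locate a single node $z$ of $\mathscr{F}(\mathcal{P})$, labelled $\alpha$, below which all of $T'$ fits. To use such a $z$, first note the structural fact that, unwinding Definition~\ref{F(P)} along the root-path of any node $z$ with associated word $u$ (which ends in the label of $z$), the subforest $F_{<z}$ of proper descendants of $z$ equals $\mathscr{F}(\mathcal{P}^{(u)})$, where $\mathcal{P}^{(u)}:=\{v\mid u*v\in\mathcal{P}\}$; hence by Lemma~\ref{forest2word}, $\mathscr{W}(F_{<z})=(\mathcal{P}^{(u)})^-$. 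So it suffices to find $z$ with label $\alpha$ and $\mathscr{W}(T')\subseteq(\mathcal{P}^{(u)})^-$. Given this, apply the induction hypothesis to $T'$ and $\mathcal{P}^{(u)}$ to get an embedding $\iota':T'\to\mathscr{F}(\mathcal{P}^{(u)})=F_{<z}$, and set $\iota(r)=z$, $\iota|_{T'}=\iota'$. This is an embedding: labels are preserved; an arc internal to $T'$ is carried by $\iota'$ to a nontrivial path of $F_{<z}$, hence of $\mathscr{F}(\mathcal{P})$; and for every child $c$ of $r$, $\iota(c)$ is a proper descendant of $z=\iota(r)$, so the arc $r\to c$ is carried to a nontrivial directed path. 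Thus $T\preceq_e\mathscr{F}(\mathcal{P})$.

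What is left — and this is the step I expect to be the real obstacle — is producing the node $z$. The natural choice is the root $\rho_\alpha$ of $\mathscr{F}(\mathcal{P})$ bearing label $\alpha$: it exists because $(\alpha)\in\mathscr{W}(T)\subseteq\mathcal{P}^-$, so $\alpha$ appears at the head of the decomposition in Definition~\ref{F(P)}. For $z=\rho_\alpha$ one has $u=(\alpha)$, and the inclusion $\mathscr{W}(T')\subseteq(\mathcal{P}^{(\alpha)})^-$ should follow from the observation that any root-to-node word $w'$ of $T'$, prefixed by $\alpha$, is a root-to-node word of $T$, so $\alpha*w'\in\mathscr{W}(T)\subseteq\mathcal{P}^-$; using that $\mathcal{P}^-$ is downward closed this places $w'$ and all its subsequences into $(\mathcal{P}^{(\alpha)})^-$, and $\mathscr{W}(T')$ is exactly the downward closure of its root-to-node words. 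The delicate point, which needs care, is to reconcile the precise shape of $\mathscr{F}(\mathcal{P})$ with the downward-closed set $\mathcal{P}^-$ in this last inclusion (equivalently, to make the identity $F_{<\rho_\alpha}=\mathscr{F}(\mathcal{P}^{(\alpha)})$ interact correctly with passing to downward closures); once that is settled, everything else is routine structural induction.
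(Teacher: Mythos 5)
Your overall scheme---reduce to a single tree $T$, identify the subforest of proper descendants of a node $z$ reached along the word $u$ with $\mathscr{F}(\mathcal{P}^{(u)})$, and recurse---is sound, but the step you yourself flagged as the obstacle is where the argument breaks. From $(\alpha)\in\mathcal{P}^-$ you may only conclude that some word of $\mathcal{P}$ \emph{contains} the letter $\alpha$, not that some word \emph{begins} with $\alpha$; so the root $\rho_\alpha$ need not exist. For $\mathcal{P}=\{\forall\exists\exists\}$ the forest $\mathscr{F}(\mathcal{P})$ is the single path $\forall\to\exists\to\exists$ with no $\mathscr{E}$ root, yet the path $T=\exists\to\exists$ satisfies $\mathscr{W}(T)\subseteq\mathcal{P}^-$ (it embeds, but only starting at depth $1$). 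For the same reason the inference from $\alpha*w'\in\mathcal{P}^-$ to $w'\in(\mathcal{P}^{(\alpha)})^-$ is invalid: the subsequence embedding of $\alpha*w'$ into a witness $q\in\mathcal{P}$ may match $\alpha$ to a non-initial letter of $q$, after which nothing places $w'$ below a word of $\mathcal{P}^{(\alpha)}=\{v\mid \alpha*v\in\mathcal{P}\}$ (in the example, $\mathcal{P}^{(\exists)}=\emptyset$).

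You should not feel too bad about this, because the statement is in fact false as printed for general $\mathcal{P}$: take $\mathcal{P}=\{\exists\forall,\ \forall\exists\exists\}$ and let $T$ have an $\mathscr{E}$ root with one $\mathscr{E}$ child and one $\mathscr{A}$ child. Then $\mathscr{W}(T)=\{\epsilon,\exists,\forall,\exists\exists,\exists\forall\}\subseteq\mathcal{P}^-$, but $\mathscr{F}(\mathcal{P})$ is the disjoint union of the paths $\exists\to\forall$ and $\forall\to\exists\to\exists$, and no $\mathscr{E}$ node there has both an $\mathscr{E}$ node and an $\mathscr{A}$ node among its proper descendants (and the image of $T$ must lie in one tree, since arcs go to directed paths); so $T\npreceq_e\mathscr{F}(\mathcal{P})$. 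The lemma is only ever invoked with $\mathcal{P}=f^p_m$, which is downward closed, and under the additional hypothesis $\mathcal{P}=\mathcal{P}^-$ your argument closes cleanly: $\alpha\in\mathcal{P}$ forces $\mathcal{P}_\alpha\neq\emptyset$, so $\rho_\alpha$ exists, and every $w'\in\mathscr{W}(T')$ satisfies $\alpha*w'\in\mathscr{W}(T)\subseteq\mathcal{P}$, hence $w'\in\mathcal{P}^{(\alpha)}$ directly by definition (and $\mathcal{P}^{(\alpha)}$ is again downward closed, so the induction applies). The correct repair is therefore to add downward closure, not to hunt for a cleverer $z$. For comparison, the paper's proof takes a different route---it splits $S$ into the subforests with $\mathscr{E}$ roots and with $\mathscr{A}$ roots, strips all roots at once, and asserts the existence of substructures $\mathcal{F}_\exists,\mathcal{F}_\forall$ of $\mathscr{F}(\mathcal{P})$ receiving them---but its final extension step (``map the father of $r_1$ to the father of $\iota_0(r_1)$'') tacitly assumes all these fathers coincide and fails on the same counterexample; restricted to downward-closed $\mathcal{P}$, your version is the tighter argument.
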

\begin{proof}
The base case when $S$ is empty is trivial. 

Assume that it holds when $rk(S)\leq k$ where $k\geq 0$. 

Let $S$ be a $\Gamma$-labeled forest such that  $rk(S)=k+1$ and $\mathscr{W}(S)\subseteq \mathcal{P}^-$. $S$ is a disjoint union of at most two forests $S_\exists$ and $S_\forall$: the roots of $S_\exists$ are all $\mathscr{E}$ nodes and the roots of $S_\forall$ are all $\mathscr{A}$ nodes. Then $\mathscr{W}(S)=\mathscr{W}(S_\exists)\cup \mathscr{W}(S_\forall)$. Note that a substructure of a forest is also a forest. Because $\mathscr{W}(S)\subseteq \mathcal{P}^-$ and Lemma \ref{forest2word}, $\mathscr{W}(S)\subseteq \mathscr{W}(\mathscr{F}(P))$.  It means that there is a forest (substructure) $\mathcal{F}_\exists$ of $\mathscr{F}(P)$ such that all its roots are $\mathscr{E}$ nodes and that $\mathscr{W}(S_\exists)\subseteq\mathscr{W}(\mathcal{F}_\exists)$. Likewise,   there is a forest (substructure) $\mathcal{F}_\forall$ of $\mathscr{F}(P)$ such that all its roots are $\mathscr{A}$ nodes and that $\mathscr{W}(S_\forall)\subseteq\mathscr{W}(\mathcal{F}_\forall)$. Note that $\mathcal{F}_\exists$ and $\mathcal{F}_\forall$ are not necessary disjoint.

Now, if we remove all the $\mathscr{E}$ roots from $S_\exists$, we get a forest called $S_1$. Similarly, if we remove all the $\mathscr{A}$ roots from $S_\forall$, we get another forest called $S_2$. 

Likewise, if we remove all the $\mathscr{E}$ roots from $\mathcal{F}_\exists$, we get a forest called $\mathcal{F}_1$. Similarly,   if we remove all the $\mathscr{A}$ roots from $\mathcal{F}_\forall$, we get another forest called $\mathcal{F}_2$.

Observe that $\mathscr{W}(S_1)\subseteq \mathscr{W}(\mathcal{F}_1)\subseteq \mathscr{W}(\mathcal{F}_1)^-$ and $rk(S_1)\leq k$. By assumption, $S_1\preceq_e\mathcal{F}_1$. Let us denote the map that embeds $S_1$ to $\mathcal{F}_1$ as $\iota_1$. Likewise, $S_2\preceq_e\mathcal{F}_2$ and the embedding map is denoted $\iota_2$. Note that the domains of $\iota_1$ and $\iota_2$ are different.   Therefore, we can merge these two maps easily, i.e.\ let $\iota_0=\iota_1\cup \iota_2$. Note that $S_1$ ($S_2$ resp.) is embeddable to $\mathcal{F}_1$ ($\mathcal{F}_2$ resp.) through $\iota_0$.  Now, we can extend the embedding map $\iota_0$ to $\iota$ such that: (i) the father of any root $r_1$ of $S_1$ is mapped to the father of $\iota_0(r_1)$; (ii) the father of any root $r_2$ of $S_2$ is mapped to the father of $\iota_0(r_2)$. Therefore, $S$ is embeddable to $\mathscr{F}(P)$ through $\iota$, i.e.\ $S\preceq_e \mathscr{F}(P)$. 
\end{proof}
\begin{rem} Lemma \ref{forest2word} and Lemma \ref{words2forest} tell us that $\mathscr{F}(P)$ is the ``maximal'' $\Gamma$-labeled forest (in the sense of embeddings) among all these forests, from which the set of words that can be read off is a subset of $\mathcal{P}^-$. 
       
\end{rem}

Define the \textit{quantifier rank} of $\fo$ formula $\varphi$, denoted  $qr(\varphi)$, to be $rk(qs(\varphi))$. Note that this definition is equivalent to the usual definition of quantifier rank (see for instance Libkin \cite{Libkin04Elements}). Let $\fo[k]:=\{\varphi\in\fo\mid qr(\varphi)\leq k\}$.

 Let $S$ be a $\Gamma$-labeled forest. Define the quantifier class $\fo\{S\}$ to be the set of queries that are definable by the set of first-order sentences $\{\theta\in \fo\hspace{1pt}| \hspace{3pt} qs(\theta))\preceq_e S\}$. 

A first-order formula is in prenex normal form if it is  a single string of quantifiers  followed by a quantifier free formula. Its quantifier prefix, which is obtained from this string of quantifiers by removing the variables in the string, corresponds to a $\Gamma$-labeled degenerate tree.

Given a prefix $p$, we define the prefix class $\fo(p)$ as the set of $\fo$ sentences in prenex normal form such that for any $\psi\in\fo(p)$, its prefix is a subsequence of $p$ (Gr\"adel and McColm, \cite{GradelM96}). Gr\"adel-McColm's conjecture says that the prefix classes form a strict hierarchy: For any prefix $p,q$, $\fo(p)\nsubseteq \fo(q)$ if $p\npreceq q$ over arbitrary structures. Rosen \cite{rosen05prifix} confirmed this conjecture over infinite structures and called it the first-order prefix hierarchy. \label{Grdel-McColm-conjecture} 
Similarly, we can define a hierarchy formed by quantifier classes, which can be called the first-order quantifier structure hierarchy. 
These two hierarchies are independent.

\section{Quantifier Structure Hierarchy: the first observation}\label{games-and-hierarchy}
In this section, we define a variant of Ehrenfeucht-Fra\"iss\' e games that characterizes quantifier classes and prove that  those quantifier classes form a natural and strict hierarchy:
\begin{thm}\label{main1}
Let $S_1$ and $S_2$ be two $\Gamma$-labelled forests. Over the class of all digraphs,
\[
         \mathrm{if}\hspace{3pt} \mathscr{W}(S_1)\nsubseteq \mathscr{W}(S_2), \hspace{2pt} \mathrm{then}\hspace{3pt} \fo\{S_1\}\nsubseteq \fo\{S_2\}.
\]
\end{thm}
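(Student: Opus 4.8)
The plan is to establish the theorem by constructing, for each "witnessing" word $p \in \mathscr{W}(S_1) \setminus \mathscr{W}(S_2)$, a concrete digraph property that is definable in $\fo\{S_1\}$ but not in $\fo\{S_2\}$. First I would reduce to the prefix setting: since $p$ can be read off some path of $S_1$, there is a sentence $\varphi$ in prenex normal form with quantifier prefix exactly $p$ (in fact exactly the word read off that path, so that $qs(\varphi)$ is a degenerate tree embeddable in $S_1$), and hence any query defined by $\varphi$ lies in $\fo\{S_1\}$. Conversely, I would show that if a query were in $\fo\{S_2\}$, witnessed by a sentence $\psi$ with $qs(\psi) \preceq_e S_2$, then every word readable off $qs(\psi)$ — and in particular the prefix of $\psi$ after any prenexing — embeds into $\mathscr{W}(S_2)$; so it suffices to exhibit a property separating $\fo(p)$ from the class of sentences whose quantifier structure embeds into $S_2$. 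This is where Rosen's machinery enters: by Lemma~\ref{roson}, $\gamma^-(f(p)) = \{q \in \Gamma^* \mid p \npreceq q\}$, and since $p \notin \mathscr{W}(S_2) = \mathscr{W}(S_2)$, every word $q$ read off $S_2$ satisfies $p \npreceq q$, i.e.\ $\mathscr{W}(S_2) \subseteq \gamma^-(f(p))$.

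Next I would introduce the game-theoretic characterization announced in the section header: a variant of the \EF game, played on two digraphs $\mathfrak{A}, \mathfrak{B}$ and parametrized by a $\Gamma$-labeled forest $S$, in which Spoiler must follow the branching/labeling discipline of $S$ (an $\mathscr{E}$ node forces a move in $\mathfrak{A}$, an $\mathscr{A}$ node a move in $\mathfrak{B}$, arcs dictate which pebbles may be placed/removed), and Duplicator wins iff the resulting partial maps are always partial isomorphisms. The key lemma I would prove is the expected soundness/completeness: Duplicator wins the $S$-game on $(\mathfrak{A},\mathfrak{B})$ iff $\mathfrak{A}$ and $\mathfrak{B}$ agree on every query in $\fo\{S\}$ (equivalently, on every sentence $\theta$ with $qs(\theta) \preceq_e S$). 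Monotonicity in $S$ under $\preceq_e$ falls out of this. To separate, I would then build, for the single word $p$, a pair of digraphs $\mathfrak{A}_p \models \varphi_p$ and $\mathfrak{B}_p \not\models \varphi_p$ (Rosen's construction gives linearly-ordered-path-like gadgets encoding the alternation pattern of $p$) such that Duplicator wins the $S_2$-game on $(\mathfrak{A}_p, \mathfrak{B}_p)$. Because $\mathscr{W}(S_2) \subseteq \gamma^-(f(p))$, every prefix $q$ realizable along a branch of $S_2$ satisfies $p \npreceq q$, so Rosen's analysis of which prefixes can distinguish $\mathfrak{A}_p$ from $\mathfrak{B}_p$ applies branch-by-branch; the forest structure only lets Spoiler run several such branch-strategies "in parallel on disjoint pebble sets," which Duplicator counters by composing the corresponding per-branch Duplicator strategies — this is the "strategy composition" advertised in the abstract.

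The main obstacle I expect is exactly this composition step: a forest $S_2$ is not a single prefix, so a winning Spoiler play in the $S_2$-game interleaves moves governed by different root-to-node paths, and a naive branchwise pasting of Duplicator strategies need not yield a global partial isomorphism, since pebbles from different branches coexist on the board and must be mutually consistent. Handling this requires either (i) choosing $\mathfrak{A}_p, \mathfrak{B}_p$ so that elements reachable via independent branches are "far apart" and hence impose no cross-constraints (a locality/Gaifman-type argument tailored to the gadget), or (ii) proving a general composition lemma stating that if Duplicator wins the $q$-game for every prefix $q$ readable off $S_2$ on suitable local pieces, then Duplicator wins the $S_2$-game on the full structure. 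I would pursue route (ii), isolating the composition lemma as the technical heart and reducing the forest case to the prefix case of Rosen, with the remaining pieces (the game characterization, the reduction via Lemmas~\ref{roson} and \ref{forest2word}) being comparatively routine.
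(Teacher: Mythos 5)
Your high-level skeleton matches the paper's: pick $p\in\mathscr{W}(S_1)\setminus\mathscr{W}(S_2)$, invoke Lemma~\ref{roson} to get $\mathscr{W}(S_2)\subseteq\gamma^-(f(p))$, characterize quantifier classes by a forest-indexed \EF game, and exhibit structures $\mathfrak{A}_p\models\varphi_p$, $\mathfrak{B}_p\not\models\varphi_p$ on which the duplicator wins. But there is a genuine gap at the heart of the argument: you delegate the construction of the separating structures and the duplicator's winning strategy to ``Rosen's construction/analysis,'' and Rosen's theorem is proved only over infinite structures --- the finite case is exactly what is open and exactly what this theorem must supply. The paper does this from scratch: it builds finite coloured trees $\widetilde{\mathfrak{A}}^p_m,\widetilde{\mathfrak{B}}^p_m$ recursively by point-expansion (Definition~\ref{structures-over-TAU}) and proves by induction on $|p|$ that the duplicator wins $G_{\mathscr{F}(f^p_m)}(\widetilde{\mathfrak{A}}^p_m,\widetilde{\mathfrak{B}}^p_m)$ (Lemma~\ref{EF-Game}); this induction, driven by the strategy-composition lemma for point-expansions (Lemma~\ref{composition_of_strategies}), is the bulk of the work and is absent from your proposal. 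A second, smaller error: your reduction to the prefix setting via ``the prefix of $\psi$ after any prenexing'' is unsound, since prenexing a conjunction interleaves quantifiers from different branches and produces a prefix that need not be readable off $qs(\psi)$ (e.g.\ $\exists x\,\theta_1\wedge\forall y\,\theta_2$ prenexes to $\exists\forall$ while only $\exists$ and $\forall$ are readable off its quantifier structure). The paper never prenexes $\psi$; it works with $qs(\psi)$ directly.

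The obstacle you single out --- pasting branchwise duplicator strategies when ``pebbles from different branches coexist'' --- is also not where the difficulty lies for the game the paper uses: there the spoiler moves a single token down one adaptively chosen root-to-leaf path, so no two branches are ever active simultaneously. The real issue raised by an arbitrary forest $S_2$ is adaptivity of the branch choice, and the paper dissolves it with a different device than your proposed composition lemma: Lemma~\ref{words2forest} shows every forest $S$ with $\mathscr{W}(S)\subseteq\gamma^-(f(p))$ embeds into the single canonical forest $\mathscr{F}(f^p_m)$, so one winning strategy there, transported by the monotonicity Lemma~\ref{m2m-}, covers every admissible $\psi$ at once. Finally, note that the theorem is stated over digraphs $\langle E\rangle$, so a last reduction from the coloured-tree signature $\tau^+$ to a single binary relation (Lemmas~\ref{hintikka} and~\ref{translation}) is also needed; your proposal does not address this step.
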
 
 
\subsection{Games that characterize quantifier classes}\label{games}
Let $S$ be a $\Gamma$-labelled forest. We define an asymmetric variant of the Ehrenfeucht-Fra\"iss\' e games as follows. Let $k\in\mathbb{N}$, $\sigma$ contains $k$ constant symbols. Let $\mathfrak{A}$ and $\mathfrak{B}$ be two $\sigma$-structures. Let the $k$-tuple $\overline{\mathfrak{u}}$ be the interpretation of the constants in $\mathfrak{A}$ and the $k$-tuple $\overline{\mathfrak{v}}$ be the interpretation of the constants in $\mathfrak{B}$.  The game $G_{S}(\mathfrak{A},\mathfrak{B})$ is played by two players, called the spoiler and duplicator, on a game board consisting of $S$, $\mathfrak{A}$ and $\mathfrak{B}$.  At the beginning of the game, the spoiler picks a tree $\mathcal{T}$ in the forest $S$ and puts a token on the root of $\mathcal{T}$. Assume that the depth of $\mathcal{T}$ is $n-1$. Afterwards, for every $i$ where $1\leq i\leq n$, in the $i$-th round the spoiler chooses an element from the structure $\mathfrak{A}$ if the current node, on which the token is put, is an $\mathscr{E}$ node. Otherwise if it is an $\mathscr{A}$ node he picks an element of $\mathfrak{B}$. Then the duplicator has to respond by picking an element from the other structure. Afterwards the spoiler chooses a child of the current node in $S$  and moves the token to it. This completes one round.  

Assume that after $n^{\prime}$ ($n^{\prime}\leq n$) rounds a sequence $\bar c=(c_1,\cdots,c_{n^{\prime}})$ has been picked in $\mathfrak{A}$ and a sequence $\bar d=(d_1,\cdots,d_{n^{\prime}})$ has been picked in $\mathfrak{B}$. The spoiler wins the game if $(\overline{\mathfrak{u}}\bar c,\overline{\mathfrak{v}}\bar d)$ does not define a partial isomorphism between $\mathfrak{A}$ and $\mathfrak{B}$. 

The game ends whenever the spoiler wins or the token arrives at a leaf of $\mathcal{T}$. The duplicator wins if the spoiler fails to win in the end. 

Informally, $\overline{\mathfrak{u}}$ and $\overline{\mathfrak{v}}$ can be regarded as a carry-over of past history of the game played before the beginning, which has to be taken care of. 

A strategy of the duplicator is a scheme by which she knows how to choose an element in each round depending on the history of the play.

 For any tuple $(\mathfrak{a}_1,\cdots,\mathfrak{a}_n)\in (|\mathfrak{A}|\uplus |\mathfrak{B}|)^n$, we associate it with a prefix $(p[1],\cdots,p[n])$ such that $\mathfrak{a}_i\in |\mathfrak{A}|$ iff $p[i]=\exists$ for any $1\leq i\leq n$. A \textit{strategy of the duplicator} in the game $G_{S}(\mathfrak{A},\mathfrak{B})$ is a function\\[4pt]
\indent\hspace{20pt}$D^{S}_{\mathfrak{A},\mathfrak{B}}:\biguplus_{i=1}^{h(S)+1} (|\mathfrak{A}|\uplus |\mathfrak{B}|)^i\rightarrow (|\mathfrak{A}|\uplus |\mathfrak{B}|)$.\\[-4pt]  

If the duplicator has a strategy guiding her choices in the game that ensures her winning in the end no matter how the spoiler plays, we call this strategy  a winning strategy of the duplicator. If there exists such a winning strategy for the duplicator in the game $G_{S}(\mathfrak{A},\mathfrak{B})$ then we write $\mathfrak{A}\leadsto_{S} \mathfrak{B}$. The winning strategy of the spoiler can be defined dually because in our games either the spoiler or the duplicator has a winning strategy. Let $\bar a\in |\mathfrak{A}|^t$ and $\bar b\in |\mathfrak{B}|^t$. We use $(\mathfrak{A},\bar a)\leadsto_{S} (\mathfrak{B},\bar b)$ to denote that the duplicator has a winning strategy, in which $\bar a$ is picked in $\mathfrak{A}$, and $\bar b$ is picked in $\mathfrak{B}$, before the game starts. Equivalently, we say that the spoiler has a winning strategy in the game $G_S((\mathfrak{A},\bar a),(\mathfrak{B},\bar b))$.

Note that the standard Ehrenfeucht-Fra\"iss\' e game $G_n(\mathfrak{A},\mathfrak{B})$ (see \cite{Libkin04Elements}) is exactly the game $G_{S}(\mathfrak{A},\mathfrak{B})$ where $S=\{^*\mathcal{T}^{\exists}_n,^*\!\mathcal{T}^{\forall}_n\}$.

\begin{defi}
Let $n\in \mathbb{N}$ and $\bar c$ be an $n$-tuple of elements from  $|\mathfrak{A}|$. Then for a $\Gamma$-labelled forest $S$,  the $QS$-$S$ $n$-type of $\bar c$ over $\sigma$-structure $\mathfrak{A}$ is defined as:
\[tp^S_n(\mathfrak{A},\bar c)=\{\varphi(\bar c)\in \fo\{S\} \mid \mathfrak{A}\models \varphi(\bar c)\}.\]
    \end{defi} 

The following lemma is well-known, cf. \cite{Libkin04Elements} for a simple explanation.   
\begin{lem}\label{finite-logic-equiv-fixed-rank} 
For fixed $k,n\in\mathbb{N}$, there are only finitely many formulas, in $n$ free variables, in $\fo[k]$ up to logical equivalence.
\end{lem}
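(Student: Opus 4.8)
The plan is to proceed by induction on $k$, establishing the following (slightly reformulated) statement: \emph{for every $n\in\mathbb{N}$, there are only finitely many $\fo$ formulas whose free variables are among a fixed list $x_1,\dots,x_n$ and whose quantifier rank is at most $k$, up to logical equivalence.} Fixing the pool of variable names in advance (and renaming bound variables as needed) is what keeps the count finite; without such a normalization there are trivially infinitely many syntactically distinct formulas already at rank $1$.

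\emph{Base case $k=0$.} Since $\sigma$ is finite and relational, there are only finitely many terms taking values in $\{x_1,\dots,x_n\}$ (the variables $x_i$ together with the constant symbols of $\sigma$), hence only finitely many atomic formulas $R(t_1,\dots,t_r)$ and $t_1=t_2$ in these variables, say $N$ of them. A formula of quantifier rank $0$ in negation normal form is a positive Boolean combination of literals over these $N$ atoms; up to logical equivalence such a formula is determined by the set of truth assignments to the $N$ atoms that it accepts, so there are at most $2^{2^{N}}$ of them up to logical equivalence — in particular, finitely many.

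\emph{Inductive step.} Assume the statement for $k$ (for all $n$), and fix $n$. Reading off the inductive clauses of Definition \ref{quantifier-structure-1} (and using that $qr=rk\circ qs$), any $\varphi$ with $qr(\varphi)\le k+1$, written in negation normal form, is a positive Boolean combination of literals in $x_1,\dots,x_n$ and of subformulas of the form $\exists y\,\psi$ or $\forall y\,\psi$ in which $qr(\psi)\le k$; after renaming the outermost bound variable of each such subformula to a single fixed fresh symbol $x_{n+1}$, the body $\psi$ has free variables among $x_1,\dots,x_{n+1}$. By the induction hypothesis applied with $n+1$ in place of $n$, there are only finitely many such $\psi$ up to logical equivalence; replacing each $\psi$ by a fixed representative of its equivalence class changes neither $\exists x_{n+1}\psi$ nor $\forall x_{n+1}\psi$ up to equivalence, and then replacing each quantified block occurring in $\varphi$ by a representative leaves $\varphi$ unchanged up to equivalence. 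Hence there is a finite set $B$ of building blocks --- the finitely many literals together with one representative for each class of $\exists x_{n+1}\psi$ and of $\forall x_{n+1}\psi$ --- such that every $\varphi\in\fo[k+1]$ with free variables among $x_1,\dots,x_n$ is logically equivalent to a positive Boolean combination of elements of $B$. Exactly as in the base case, there are only finitely many such combinations up to logical equivalence, completing the induction.

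\emph{Main obstacle.} There is no genuine mathematical difficulty: the substantive input --- finiteness of the signature --- enters only in the base case, and the inductive step is pure propositional bookkeeping. The only point needing care is making the decomposition in the inductive step precise, i.e.\ that a negation-normal-form formula of quantifier rank $k+1$ is a positive Boolean combination of literals and single-quantifier formulas whose bodies have quantifier rank at most $k$ (immediate from Definition \ref{quantifier-structure-1}), and controlling the names of bound variables so that at each level only $n+1$ free variables are ever in play. Alternatively one could route the argument through Hintikka (rank-$k$ $n$-type) formulas --- every rank-$\le k$ formula in the given free variables being equivalent to a disjunction of such Hintikka formulas --- but this merely repackages the same induction.
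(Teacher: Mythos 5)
Your proof is correct and is exactly the standard argument: the paper does not prove this lemma itself but notes it is well-known and cites Libkin's textbook, where the same induction on quantifier rank is carried out --- fixing the pool of variable names, bounding the quantifier-free case by Boolean combinations of the finitely many atoms over a finite relational signature, and in the inductive step reducing rank-$(k{+}1)$ formulas to Boolean combinations of literals and single-quantifier formulas whose bodies use one extra free variable. Your handling of the bound-variable renaming and the $n\mapsto n+1$ shift in the induction hypothesis is the right normalization, so there is nothing to correct.
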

\begin{cor} \label{finite-formu-in-type}Let  $n\in\mathbb{N}$ and $S$ be a $\Gamma$-labelled forest, and let $\bar c$ be an $n$-tuple of elements from  $|\mathfrak{A}|$, there are only finitely many formulas in $tp^{S}_n(\mathfrak{A},\bar c)$ up to logical equivalence.
\end{cor}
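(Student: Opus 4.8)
The plan is to reduce Corollary \ref{finite-formu-in-type} to Lemma \ref{finite-logic-equiv-fixed-rank} by observing that every formula appearing in $tp^S_n(\mathfrak{A},\bar c)$ has bounded quantifier rank, so there can be only finitely many such formulas up to logical equivalence. Concretely, let $k := rk(S) = h(S)+1$. By the definition of $\fo\{S\}$, each formula $\varphi(\bar c)$ with $\varphi(\bar c)\in tp^S_n(\mathfrak{A},\bar c)$ is (equivalent to) a first-order sentence $\theta$ — here with the $n$ elements of $\bar c$ as parameters, i.e.\ a formula in $n$ free variables — such that $qs(\theta)\preceq_e S$. The key point is that $qs(\theta)\preceq_e S$ forces $qr(\theta)\le rk(S)$: an embedding $\iota$ in the sense of Definition \ref{forests-embedding-1} sends any directed path of length $\ell$ in $qs(\theta)$ to a directed path of length $\ge \ell$ in $S$ (each arc is mapped to a nontrivial directed path), so the height of $qs(\theta)$ is at most the height of $S$, whence $qr(\theta) = rk(qs(\theta)) = h(qs(\theta))+1 \le h(S)+1 = k$ (the empty case being trivial).

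With that inequality in hand, every formula in $tp^S_n(\mathfrak{A},\bar c)$ lies in $\fo[k]$ and has $n$ free variables (the components of $\bar c$ treated as variables), so Lemma \ref{finite-logic-equiv-fixed-rank} applied with this fixed $k$ and this fixed $n$ immediately yields that there are only finitely many such formulas up to logical equivalence. Since $tp^S_n(\mathfrak{A},\bar c)\subseteq \fo\{S\}\subseteq \fo[k]$ (modulo logical equivalence), the corollary follows.

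I do not expect any real obstacle here; the statement is a routine corollary. The only point that deserves care — and the one I would spell out explicitly — is the monotonicity claim that $qs(\theta)\preceq_e S$ implies $qr(\theta)\le rk(S)$, i.e.\ that the non-injective, path-expanding embedding of Definition \ref{forests-embedding-1} cannot increase height. This is immediate from the fact that an arc in $qs(\theta)$ maps to a \emph{nontrivial} directed path in $S$, so a root-to-leaf path of length $h(qs(\theta))$ in $qs(\theta)$ maps to a directed path in $S$ of length at least $h(qs(\theta))$, giving $h(qs(\theta))\le h(S)$. Everything else is a direct invocation of the preceding definitions and of Lemma \ref{finite-logic-equiv-fixed-rank}.
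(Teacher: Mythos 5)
Your proof is correct and follows exactly the route the paper intends: the corollary is stated without proof precisely because it reduces to Lemma \ref{finite-logic-equiv-fixed-rank} once one notes that every formula in $tp^S_n(\mathfrak{A},\bar c)$ lies in $\fo[rk(S)]$. Your explicit verification that $qs(\theta)\preceq_e S$ bounds the quantifier rank (arcs map to nontrivial directed paths, so height cannot increase) is the only non-trivial point, and you handle it correctly.
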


Let $S$ be a $\Gamma$-labelled forest.
\begin{defi}
Let $[\mathfrak{A},\bar a]$ be an expansion of $\mathfrak{A}$ to $\sigma\cup\{\overline{\mathfrak{c}}\}$ such that $\bar a$ interprets the tuple of constants $\overline{\mathfrak{c}}$ in $[\mathfrak{A},\bar a]$.
    \end{defi}

\begin{lem} \label{m2m-}
Let $S$ and $S^{\prime}$ be two $\Gamma$-labelled forests such that $S^{\prime}\preceq_e S$. Let $\mathfrak{A}$ and $\mathfrak{B}$ be two structures over the same signature. If the duplicator has a winning strategy in the game $G_S(\mathfrak{A},\mathfrak{B})$, then she also has a winning strategy in the game $G_{S^{\prime}}(\mathfrak{A},\mathfrak{B})$.
\end{lem}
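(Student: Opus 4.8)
The plan is for the duplicator to play $G_{S'}(\mathfrak{A},\mathfrak{B})$ by running, ``in her head'', a shadow play of $G_S(\mathfrak{A},\mathfrak{B})$ in which she follows a fixed winning strategy $\mathcal{D}$, whose existence is the hypothesis. Fix a map $\iota$ witnessing $S'\preceq_e S$; recall it is label-preserving and sends each arc $x\to y$ of $S'$ to a nontrivial directed path from $\iota(x)$ to $\iota(y)$ in $S$. The invariant to maintain is: whenever the token of the $G_{S'}$-play is on a node $v'$, the token of the shadow play is on $\iota(v')$ (which carries the same label), and the tuple of elements chosen so far in $G_{S'}$ is obtained from the tuple chosen so far in the shadow play by deleting some coordinates while keeping the coordinates that carry $\overline{\mathfrak{u}}$ and $\overline{\mathfrak{v}}$.

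First I would spell out the simulation. When the $G_{S'}$-spoiler opens by choosing a tree $\mathcal{T}'$ with root $r'$, the duplicator starts a shadow play in which the $G_S$-spoiler chooses the tree $\mathcal{T}$ of $S$ that contains $\iota(r')$; the shadow token starts at the root $\rho$ of $\mathcal{T}$, and the duplicator makes the shadow-spoiler walk down the unique descending path $\rho=t_0,t_1,\dots,t_\ell=\iota(r')$, picking at each $t_j$ an arbitrary fixed element of the structure prescribed by the label of $t_j$ and letting $\mathcal{D}$ reply. These $\ell$ ``pre-moves'' are invisible to $G_{S'}$ and end with the shadow token on $\iota(r')$. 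Thereafter, in round $i$ of $G_{S'}$ with the token on a node $v'_i$, the $G_{S'}$-spoiler's choice is fed to the shadow play as the $G_S$-spoiler's choice at $\iota(v'_i)$ (legitimate, since the labels agree), $\mathcal{D}$ produces a reply, and the duplicator plays that reply in $G_{S'}$; when the $G_{S'}$-spoiler then moves the token to a child $w'_i$ of $v'_i$, the duplicator advances the shadow token along the path $\iota(v'_i)=u_0,u_1,\dots,u_m=\iota(w'_i)$ given by $\iota$ (with $m\ge 1$), inserting $m-1$ ``filler'' rounds at $u_1,\dots,u_{m-1}$ (arbitrary fixed spoiler choices, $\mathcal{D}$'s replies), so that the shadow token ends on $\iota(w'_i)$ and the invariant is restored. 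Since $\iota$ does not shorten any arc, the descending path traced in $\mathcal{T}$ has at most $h(S)+1$ nodes, so the shadow play uses at most $h(S)+1$ rounds and $\mathcal{D}$ is always defined on the relevant history; the shadow play is moreover a legal, fully determined $G_S$-play consistent with $\mathcal{D}$, so the above prescription really is a strategy (a function of the $G_{S'}$-history) for the duplicator.

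It then remains to verify that this strategy wins. Since $\mathcal{D}$ is winning, the spoiler never wins the shadow play, hence (by the rule that the spoiler wins as soon as the chosen tuple fails to be a partial isomorphism) after every shadow round the chosen tuple defines a partial isomorphism between $\mathfrak{A}$ and $\mathfrak{B}$. By the invariant, the tuple chosen after any round of $G_{S'}$ is obtained from such a shadow tuple by deleting coordinates other than those carrying the constants, and deleting coordinates of a partial isomorphism again yields a partial isomorphism (immediate from the definition, which is universally quantified over coordinate sequences). So throughout $G_{S'}$ the chosen tuple is a partial isomorphism, the spoiler never wins $G_{S'}(\mathfrak{A},\mathfrak{B})$, and the duplicator's strategy is winning. (If $|\mathfrak{A}|$ or $|\mathfrak{B}|$ is empty there is no ``arbitrary fixed element'' available, but then $\sigma$ has no constants and both games are degenerate, so this case is trivial.)

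The main obstacle, I expect, is organizational rather than conceptual: one must handle that $\iota$ need not send roots to roots (hence the pre-moves) and sends arcs to strictly longer paths (hence the filler moves), and --- the key observation --- one must notice that the shadow play need not be carried to a leaf; all that is used is that the partial-isomorphism invariant holds at every stage of any play consistent with the winning strategy $\mathcal{D}$, not only at the end. Beyond that, the only things to be careful about are the bookkeeping separating ``real'' from ``filler'' shadow coordinates and the bound $h(S)+1$ guaranteeing that $\mathcal{D}$ is always applicable.
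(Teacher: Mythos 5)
Your proposal is correct and is essentially the paper's own argument — the paper disposes of this lemma in two sentences (``the duplicator can mimic her winning strategy'' and ``a subset of a partial isomorphism is still a partial isomorphism''), and your shadow-play simulation with pre-moves and filler rounds is exactly the careful spelling-out of that mimicking, including the key point that the partial-isomorphism invariant holds at every intermediate stage of a play consistent with the winning strategy. No discrepancy to report.
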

\begin{proof}
Note that the duplicator can mimic her winning strategy in the game $G_{S^{\prime}}(\mathfrak{A},\mathfrak{B})$ to play the game 
$G_{S^{\prime}}(\mathfrak{A},\mathfrak{B})$. And a subset of a partial isomorphism is still a partial isomorphism.
\end{proof}

\begin{rem} Lemma \ref{m2m-} tells us that if the duplicator has a winning strategy in $G_S(\mathfrak{A},\mathfrak{B})$, she also has a winning strategy when the players are allowed to skip playing arbitrary rounds of the game. Lemma \ref{m2m-} also tells us that if the spoiler has a winning strategy in $G_{S^{\prime}}(\mathfrak{A},\mathfrak{B})$, he also has a winning strategy in $G_S(\mathfrak{A},\mathfrak{B})$. In other words, quantifiers are logical resources that can be exploited by the spoiler to detect the difference between two structures in the games.
             
\end{rem}

\indent It is obvious that $[\mathfrak{A},\bar a]\leadsto_{S} [\mathfrak{B},\bar b]$ iff $(\mathfrak{A},\bar a)\leadsto_{S} (\mathfrak{B},\bar b)$, because in each round of both the game $G_S([\mathfrak{A},\bar a],[\mathfrak{B},\bar b])$ and $G_S((\mathfrak{A},\bar a),(\mathfrak{B},\bar b))$, if there is a partial isomorphism between two structures in the former, then this partial isomorphism is also a partial isomorphism between two structures in the latter.

In the following, we prove a connection between the games just defined and quantifier classes, which is a variant of the result of Gr\"adel and McColm \cite{GradelM96}.

Recall that $\mathfrak{A}$ has constants that are interpreted by $\overline{\mathfrak{u}}$. And $\mathfrak{B}$ has constants that are interpreted by $\overline{\mathfrak{v}}$. We assume that $\overline{\mathfrak{u}}$ ($\overline{\mathfrak{v}}$ resp.) and $\bar a$ ($\bar b$ resp.) do not share any element.

\begin{thm}\label{qs-game}
For arbitrary finite $\sigma$-structures $\mathfrak{A}$,
$\mathfrak{B}$, two tuples $\bar a\in |\mathfrak{A}|^t$, $\bar b\in
|\mathfrak{B}|^t$, and a $\Gamma$-labelled forest $S$, the following
are equivalent:
\begin{enumerate}[label=(\roman*)]
\item $[\mathfrak{A},\bar a]\leadsto_{S} [\mathfrak{B},\bar b]$;
\item $tp^{S}_t (\mathfrak{A},\bar a)\subseteq tp^{S}_t (\mathfrak{B},\bar b)$.
\end{enumerate}
\end{thm}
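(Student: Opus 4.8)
The plan is to prove the two implications separately, each by a single induction. We rely on two elementary observations. First, if the duplicator wins $G_S(\cdot,\cdot)$ then the starting position of the game is already a partial isomorphism (a position that is not a partial isomorphism cannot become one after further moves, so otherwise the spoiler would win), and a restriction of a partial isomorphism to a subtuple is again a partial isomorphism. Second, $\preceq_e$ is transitive, and since the spoiler's opening move in $G_S$ is to choose a tree of $S$, the duplicator wins $G_S$ iff for every tree $\mathcal{T}$ of $S$ she wins the game started at the root of $\mathcal{T}$.

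We prove $(i)\Rightarrow(ii)$ by induction on the negation-normal-form structure of $\varphi$, uniformly in $S,\mathfrak{A},\mathfrak{B},\bar a,\bar b$: assuming $qs(\varphi)\preceq_e S$, a duplicator win in $G_S([\mathfrak{A},\bar a],[\mathfrak{B},\bar b])$, and $\mathfrak{A}\models\varphi(\bar a)$, we derive $\mathfrak{B}\models\varphi(\bar b)$. Literals follow from the first observation; conjunctions and disjunctions split, because restricting the embedding keeps each conjunct's/disjunct's quantifier structure inside $S$. The substantive case is $\varphi=\exists z\,\theta$ (the case $\forall z\,\theta$ being dual). An embedding $\iota$ of $qs(\varphi)$ into $S$ sends the $\mathscr{E}$-root of $qs(\varphi)$ to some $\mathscr{E}$-node $v$ of a tree $\mathcal{T}$ of $S$ and sends the roots of $qs(\theta)$ to proper descendants of $v$, whence (tracing $\iota$ along paths) $qs(\theta)\preceq_e S_v$, where $S_v$ is the forest of subtrees of $\mathcal{T}$ rooted at the children of $v$. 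Given a witness $a'\in|\mathfrak{A}|$ with $\mathfrak{A}\models\theta(\bar a,a')$, consider the play of $G_S$ in which the spoiler picks $\mathcal{T}$, walks the token from the root of $\mathcal{T}$ down to $v$ with arbitrary moves, and then plays $a'$ at $v$; let $b'$ be the duplicator's response under her winning strategy. Since that strategy wins every continuation into $S_v$, we get $[\mathfrak{A},\bar a\,\bar c\,a']\leadsto_{S_v}[\mathfrak{B},\bar b\,\bar d\,b']$, where $\bar c,\bar d$ are the elements accumulated during the walk; discarding $\bar c,\bar d$ (again the first observation) gives $[\mathfrak{A},\bar a\,a']\leadsto_{S_v}[\mathfrak{B},\bar b\,b']$, and the induction hypothesis applied to $\theta$ with the forest $S_v$ yields $\mathfrak{B}\models\theta(\bar b,b')$, hence $\mathfrak{B}\models\varphi(\bar b)$.

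We prove $(ii)\Rightarrow(i)$ by induction on $rk(S)$; the base case $rk(S)=0$ says that an inclusion of quantifier-free types is a partial isomorphism. For the step, by the second observation it is enough to give the duplicator a strategy started at the root $r$ of a fixed tree $\mathcal{T}$ of $S$. Say $r$ is an $\mathscr{E}$-node and the spoiler plays $a\in|\mathfrak{A}|$; call $b\in|\mathfrak{B}|$ \emph{good} if $(\bar a\,a,\bar b\,b)$ defines a partial isomorphism and $tp^{\mathcal{T}_c}_{t+1}(\mathfrak{A},\bar a\,a)\subseteq tp^{\mathcal{T}_c}_{t+1}(\mathfrak{B},\bar b\,b)$ for every child $c$ of $r$, which by the induction hypothesis is exactly what the duplicator needs to survive the rest of the game after answering $b$. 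If no $b$ were good, then for each $b\in|\mathfrak{B}|$ we could pick a formula $\psi_b$ — a literal, or one with $qs(\psi_b)\preceq_e\mathcal{T}_c$ for some child $c$ of $r$ — with $\mathfrak{A}\models\psi_b(\bar a\,a)$ and $\mathfrak{B}\not\models\psi_b(\bar b\,b)$; as $\mathfrak{B}$ is finite, $\chi:=\bigwedge_{b\in|\mathfrak{B}|}\psi_b$ is a genuine formula, $qs(\exists y\,\chi)\preceq_e\mathcal{T}\preceq_e S$ (map the new $\mathscr{E}$-node to $r$ and each $qs(\psi_b)$ into the appropriate $\mathcal{T}_c$), and $\mathfrak{A}\models\exists y\,\chi(\bar a)$ with witness $a$; then $(ii)$ forces $\mathfrak{B}\models\exists y\,\chi(\bar b)$, so some $b^{*}$ satisfies all the $\psi_b(\bar b,b^{*})$ and in particular $\psi_{b^{*}}(\bar b,b^{*})$ — contradicting the choice of $\psi_{b^{*}}$. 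Thus a good $b$ exists; the case where $r$ is an $\mathscr{A}$-node is dual, with $\bigvee_{a\in|\mathfrak{A}|}\psi_a$ under a leading $\forall y$.

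The step we expect to be the main obstacle is this last separating-formula construction. Because $\fo\{S\}$ is not closed under negation (the dual of an $\mathscr{E}$-node is an $\mathscr{A}$-node), one cannot simply take the conjunction of a type as a separating formula; the correct device is to diagonalize over all elements of the relevant finite structure, and the delicate point is to check that the resulting large conjunction or disjunction still has a quantifier structure that embeds into $\mathcal{T}$ after prefixing the appropriate quantifier — it is here that finiteness of $\mathfrak{A}$ and $\mathfrak{B}$ is genuinely used. A secondary, purely bookkeeping nuisance in $(i)\Rightarrow(ii)$ is the ``junk'' $\bar c,\bar d$ produced while walking the token down to $v$, which is removed by the observation (cf.\ Lemma~\ref{m2m-}) that throwing away coordinates of a partial isomorphism leaves a partial isomorphism.
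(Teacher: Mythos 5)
Your proof is correct, and it follows the same overall skeleton as the paper's (each implication by its own induction, turning a spoiler advantage into a separating formula and vice versa), but it differs in the key device on both sides. For $(i)\Rightarrow(ii)$ you run a direct structural induction on $\varphi$ and handle explicitly the fact that the embedding may send the root of $qs(\varphi)$ to an interior node $v$ of a tree of $S$, by having the spoiler walk the token down to $v$ and then discarding the junk parameters; the paper instead argues contrapositively and sidesteps this bookkeeping by playing the game $G_{qs(\psi_1)}$ on the formula's own quantifier structure and transferring the spoiler's win to $G_S$ via Lemma~\ref{m2m-}. For $(ii)\Rightarrow(i)$ your separating formula $\bigwedge_{b\in|\mathfrak{B}|}\psi_b$ (resp.\ $\bigvee_{a\in|\mathfrak{A}|}\psi_a$) diagonalizes over the elements of the finite structure, so finiteness of the conjunction comes from finiteness of the universe; the paper instead takes the conjunction of a full set of representatives of $tp^{S'}_{t+1}(\mathfrak{A},\bar a c)$ modulo logical equivalence, so finiteness comes from Corollary~\ref{finite-formu-in-type} (finitely many formulas of bounded quantifier rank up to equivalence). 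The two devices are interchangeable here, but note that the paper's version does not actually use finiteness of the structures and would survive in the infinite setting, whereas yours genuinely needs $|\mathfrak{A}|$ and $|\mathfrak{B}|$ finite -- which is exactly the hypothesis of the theorem, so nothing is lost. Your identification of the delicate point (that $\fo\{S\}$ is not closed under negation, so one must check that the assembled conjunction or disjunction, after prefixing the quantifier, still embeds into $\mathcal{T}$ by mapping the new root to $r$ and reusing the per-child embeddings non-injectively) is the same check the paper performs when it asserts $\exists y\,\varphi(\bar x y)\in\fo\{S\}$.
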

\begin{proof}
\noindent{(i)}$\rightarrow${(ii)}:

When $S$ is an empty forest, i.e.\ $rk(S)=0$, $tp^{S}_t (\mathfrak{A},\bar a)\nsubseteq tp^{S}_t (\mathfrak{B},\bar b)$ means there is a quantifier-free formula $\eta(\bar x)$  such that $(\mathfrak{A},\bar a)\models\eta(\bar x)$ but $(\mathfrak{B},\bar b)\not\models\eta(\bar x)$. Hence, the mapping from $\overline{\mathfrak{u}}\bar a$ to $\overline{\mathfrak{v}}\bar b$ does not define a partial isomorphism. In other words, the spoiler wins the game and $[\mathfrak{A},\bar a]\not\leadsto_S [\mathfrak{B},\bar b]$. 

Assume that {(i)}$\rightarrow${(ii)} when $rk(S)\leq k$ for $k\geq 0$.

Assume that $rk(S)=k+1$ and $S$ consists of $m$ trees  $S_1,\cdots,S_m$. Suppose that $(ii)$ is false. Let $\varphi(\bar x)\in \fo\{S\}$ such that $(\mathfrak{A},\bar a)\models\varphi(\bar x)$ but $(\mathfrak{B},\bar b)\not\models\varphi(\bar x)$. Then $\varphi$ is a first-order formula that is a disjunction or conjunction of formulas $\fo\{S_i\}$ ($1\leq i\leq m$). There must exist one disjunct or conjunct $\psi$ such that $(\mathfrak{A},\bar a)\models\psi(\bar x)$ while $(\mathfrak{B},\bar b)\not\models\psi(\bar x)$, where $qs(\psi)\preceq_e S_i$ for some $1\leq i\leq m$. By Lemma \ref{m2m-}, we may assume that the spoiler moves the token from the root of $S_i$. Assume that the root of $S_i$ is an $\mathscr{E}$ node, then $\psi$ has the form $\exists y \psi_1(\bar xy)$. Hence, there is $c\in |\mathfrak{A}|$ s.t.\ $(\mathfrak{A},\bar ac)\models \psi_1(\bar xy)$. Then the spoiler can pick $c$, and no matter which element, say $d$, the duplicator picks, $\psi_1(\bar xy)$ distinguishes the pair $[\mathfrak{A},\bar ac]$ and $[\mathfrak{B},\bar bd]$, where the variables $\bar xy$ are assigned the values $\bar ac$ and $\bar bd$ respectively, because $(\mathfrak{B},\bar b)\not\models \exists y \psi_1(\bar xy)$. By induction assumption the spoiler has a winning strategy over the game $G_{qs(\psi_1)}([\mathfrak{A},\bar ac],[\mathfrak{B},\bar bd])$.
Similarly, if $S_i$ is a tree whose root is an $\mathscr{A}$ node, the spoiler can pick $d\in |\mathfrak{B}|$ such that for any $c\in |\mathfrak{A}|$ picked by the duplicator $\psi_1(\bar xy)$ distinguishes  the pair $[\mathfrak{A},\bar ac]$ and $[\mathfrak{B},\bar bd]$ where the variables $\bar xy$ are assigned the values $\bar ac$ and $\bar bd$ respectively. In other words, the spoiler can show that there is an element $d$ such that it makes $(\mathfrak{B},\bar bd)\not\models \psi_1(\bar xy)$ while $(\mathfrak{A},\bar ac)\models \psi_1(\bar xy)$ is always true.  By induction assumption the spoiler has a  winning strategy over the game $G_{qs(\psi_1)}([\mathfrak{A},\bar ac],[\mathfrak{B},\bar bd])$. Therefore, $[\mathfrak{A},\bar a]\not\leadsto_S[\mathfrak{B},\bar b]$. \\[5pt] 
\noindent {(ii)}$\rightarrow${(i)}:  

According to the definition of the game, when $rk(S)=0$, $[\mathfrak{A},\bar a]\not\leadsto_S [\mathfrak{B},\bar b]$ means that the mapping from $\overline{\mathfrak{u}}\bar a$ to $\overline{\mathfrak{v}}\bar b$ is not a partial isomorphism, which implies that  there exists a quantifier-free $\fo$ formula $\xi(\bar x)$ s.t.\ either ($(\mathfrak{A},\bar a)\models \xi(\bar x)$ but $(\mathfrak{B},\bar b)\not\models \xi(\bar x)$), or ($(\mathfrak{A},\bar a)\not\models \xi(\bar x)$ but $(\mathfrak{B},\bar b)\models \xi(\bar x)$). Let $\lnot \xi(\bar x)$ be the negation of $\xi(\bar x)$. In the former case, it implies that $tp^{S}_t (\mathfrak{A},\bar a)\nsubseteq tp^{S}_t (\mathfrak{B},\bar b)$; in the latter case, $(\mathfrak{A},\bar a)\models \lnot \xi(\bar x)$ but $(\mathfrak{B},\bar b)\not\models \lnot\xi(\bar x)$, which also implies that $tp^{S}_t (\mathfrak{A},\bar a)\nsubseteq tp^{S}_t (\mathfrak{B},\bar b)$. All in all, $[\mathfrak{A},\bar a]\not\leadsto_{S}[\mathfrak{B},\bar b]$ implies $tp^{S}_t (\mathfrak{A},\bar a)\nsubseteq tp^{S}_t (\mathfrak{B},\bar b)$ when $rk(S)=0$.

Assume that {(ii)}$\rightarrow${(i)} when $rk(S)=k$ for $k\geq 0$.

Now assume that $S$ is composed of trees $S_1,\cdots,S_m$ and $rk(S)=k+1$. Assume that $(i)$ is false. Then over one of the trees the spoiler has a winning strategy. Hence, the spoiler can first pick this tree to play. If this tree's root is an $\mathscr{E}$ node $r$, we can regard this tree as a digraph composed of $r$ and a forest $S^{\prime}$ such that there is an arc from $r$ to each root of $S^{\prime}$. In the first round the spoiler can pick an element $c\in |\mathfrak{A}|$ such that no matter which element $d\in |\mathfrak{B}|$ the duplicator picks, $[\mathfrak{A},\bar ac]$ and $[\mathfrak{B},\bar bd]$ form the new game board over which the spoiler will win the game $G_{S^{\prime}}([\mathfrak{A},\bar ac],[\mathfrak{B},\bar bd])$. By Corollary \ref{finite-formu-in-type}, there are only finitely many formulas in $tp^{S^{\prime}}_{t+1}(\mathfrak{A},\bar ac)$ up to logical equivalence. Let $T/E$ be a set of formulas where each equivalent class in $tp^{S^{\prime}}_{t+1}(\mathfrak{A},\bar ac)$ has exactly one formula in $T/E$. Let $\varphi(\bar xy)$ be the conjunction of all the formulas in $T/E$. By the induction hypothesis,  for any $d$ there is a formula $\eta(\bar xy)\in \fo\{S^{\prime}\}$ such that $(\mathfrak{A},\bar ac)\models \eta(\bar xy)$ but  $(\mathfrak{B},\bar bd)\not\models \eta(\bar xy)$. Note that $\eta$ is equivalent to one formula in $T/E$. 
Hence, $(\mathfrak{A},\bar ac)\models \varphi(\bar xy)$ but $(\mathfrak{B},\bar bd)\not\models \varphi(\bar xy)$,  for any $d$. 
In other words, $(\mathfrak{A},\bar a)\models \exists y \varphi(\bar xy)$ but  $(\mathfrak{B},\bar b)\not\models \exists y \varphi(\bar xy)$. 
Note that $\exists y\varphi(\bar xy)\in \fo\{S\}$. Therefore, $tp^{S}_t (\mathfrak{A},\bar a)\nsubseteq tp^{S}_t (\mathfrak{B},\bar b)$. 

The case when the tree picked by the spoiler in the first step  is a tree whose root is an $\mathscr{A}$ node can be proved similarly.
\end{proof}

\begin{cor} \label{q-type-game-2} 
Let $\mathcal{K}$ be a class of finite structures and $S$ be a $\Gamma$-labelled forest. If there is $\mathfrak{A}\in \mathcal{K}$ and $\mathfrak{B}\notin \mathcal{K}$ such that $\mathfrak{A}\leadsto_S \mathfrak{B}$, then there is no first-order sentence $\varphi$ such that $qs(\varphi)\preceq_e S$ and $\mathcal{K}=\mathrm{Mod}(\varphi)$.
\end{cor}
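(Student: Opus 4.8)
The plan is to read this off directly from Theorem~\ref{qs-game}, applied in the parameter-free case $t=0$, together with the fact (noted just before the theorem) that $[\mathfrak{A},\bar a]\leadsto_S[\mathfrak{B},\bar b]$ iff $(\mathfrak{A},\bar a)\leadsto_S(\mathfrak{B},\bar b)$. Suppose, towards a contradiction, that a first-order sentence $\varphi$ with $qs(\varphi)\preceq_e S$ satisfies $\mathcal{K}=\mathrm{Mod}(\varphi)$.

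First I would instantiate the framework with a signature containing no constant symbols ($k=0$) and with the empty tuples $\bar a=\bar b=()$ (so $t=0$). Then $[\mathfrak{A},\bar a]$ is just $\mathfrak{A}$, $[\mathfrak{B},\bar b]$ is just $\mathfrak{B}$, and the type $tp^S_0(\mathfrak{A})$ is exactly the set of those $\fo\{S\}$-sentences, i.e.\ sentences $\theta$ with $qs(\theta)\preceq_e S$, that hold in $\mathfrak{A}$; likewise for $\mathfrak{B}$. The hypothesis $\mathfrak{A}\leadsto_S\mathfrak{B}$ is, in this case, precisely $[\mathfrak{A},\bar a]\leadsto_S[\mathfrak{B},\bar b]$, so Theorem~\ref{qs-game} yields $tp^S_0(\mathfrak{A})\subseteq tp^S_0(\mathfrak{B})$.

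Next I would use the sentence $\varphi$ to reach the contradiction. Since $\mathfrak{A}\in\mathcal{K}=\mathrm{Mod}(\varphi)$ we have $\mathfrak{A}\models\varphi$, and since $qs(\varphi)\preceq_e S$ this gives $\varphi\in tp^S_0(\mathfrak{A})$. By the containment of types just obtained, $\varphi\in tp^S_0(\mathfrak{B})$, that is $\mathfrak{B}\models\varphi$, so $\mathfrak{B}\in\mathrm{Mod}(\varphi)=\mathcal{K}$, contradicting $\mathfrak{B}\notin\mathcal{K}$. Hence no such $\varphi$ exists, which is the claim.

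The argument is essentially bookkeeping: there is no combinatorial obstacle, since all the real work is hidden inside Theorem~\ref{qs-game}. The only points that need a line of care are the reduction to the constant-free, parameter-free instance ($k=t=0$) — one checks that the game $G_S(\mathfrak{A},\mathfrak{B})$ as originally defined coincides with $G_S([\mathfrak{A},\bar a],[\mathfrak{B},\bar b])$ when $\bar a,\bar b$ are empty — and the observation that ``$\varphi$ defines a query in $\fo\{S\}$'' is the same condition as ``$\varphi\in\fo\{S\}$'' as used in the definition of $tp^S_0$; both are immediate from the definitions.
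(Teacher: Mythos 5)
Your argument is correct and is precisely the route the paper intends: the corollary is stated without proof as an immediate consequence of Theorem~\ref{qs-game}, and instantiating that theorem with empty tuples ($t=0$) to get $tp^S_0(\mathfrak{A})\subseteq tp^S_0(\mathfrak{B})$ and then feeding the hypothetical defining sentence $\varphi$ through this containment is exactly the intended reading. The bookkeeping points you flag (identifying $G_S(\mathfrak{A},\mathfrak{B})$ with the parameter-free instance, and reading $tp^S_0$ as the set of $\fo\{S\}$-sentences true in the structure) are handled correctly.
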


\subsection{Point-expansions}
\begin{defi} \label{point-expansion}
Let $\mathfrak{A}$ be a structure  over signature $\sigma_A$ and $\mathcal{K}$ be a set $\{\mathfrak{C}_i\}_{i\in I}$ of finite structures indexed by a set $I$. Let $\sigma_{i}$ be the signature of $\mathfrak{C}_i$ for each $i\in I$ such that $\sigma_i$ contains a special constant $c_i$ that is called {hook}. Assume that no two signatures share a constant. Let $\gimel_{\mathfrak{A}}: |\mathfrak{A}|\rightarrow \mathcal{K}$ be a total function. Define the {point-expansion} of $\mathfrak{A}$ by $\gimel_{\mathfrak{A}}$ over $\mathcal{K}$, denoted $\mathcal{E}_{\mathcal{K}}^{\gimel_{\mathfrak{A}}}(\mathfrak{A})$, as follows: \begin{enumerate}
\item The signature of $\mathcal{E}_{\mathcal{K}}^{\gimel_\mathfrak{A}}(\mathfrak{A})$, denoted $\sigma_{\mathcal{E}}$, is composed of the union of $\sigma_{i}$ and $\sigma_A$,  except for the hook in $\sigma_i$ for any $i$. 
\item Let $M:=\displaystyle\biguplus_{a\in |\mathfrak{A}|} \gimel_{\mathfrak{A}}(a)$ and $|\mathcal{E}_{\mathcal{K}}^{\gimel_{\mathfrak{A}}}(\mathfrak{A})|:=|M|$.
\item Let $M^{\prime}$ be the set of elements that interpret the hooks in respective disjoint substructures.
\item There is a bijection $g: M^{\prime}\rightarrow |\mathfrak{A}|$ such that for any $k$-tuple $\bar v=(v_1,\cdots, v_k)\in |M^{\prime}|^k$ and any $k$-ary relation $R\in \sigma_A$, 

\[\bar v \in R^{\mathcal{E}_{\mathcal{K}}^{\gimel_{\mathfrak{A}}}(\mathfrak{A})} \hspace{3pt}\mathrm{iff}\hspace{3pt} (g(v_1),\cdots, g(v_k))\in R^{\mathfrak{A}}.\]

In other words, $M^{\prime}$ induces in $\mathcal{E}_{\mathcal{K}}^{\gimel_{\mathfrak{A}}}(\mathfrak{A})$ an isomorphic copy of $\mathfrak{A}$. Let \[R_g:=\{\bar v\in |M^{\prime}|^k\mid (g(v_1),\cdots, g(v_k))\in R^{\mathfrak{A}}\}.\] 
\item Let $\{\mathcal{X}_i^{\hat{\mathcal{R}}}\}$ be the set of structures in $M$ whose signatures contain $\hat{\mathcal{R}}$.
\[
\hat{\mathcal{R}}^{\mathcal{E}_{\mathcal{K}}^{\gimel_{\mathfrak{A}}}(\mathfrak{A})}:=
\begin{cases}
   \displaystyle \bigcup_i \hat{\mathcal{R}}^{\mathcal{X}_i^{\hat{\mathcal{R}}}}, &\hat{\mathcal{R}}\notin\sigma_A;\\
   \displaystyle \bigcup_i \hat{\mathcal{R}}^{\mathcal{X}_i^{\hat{\mathcal{R}}}}\cup \hat{\mathcal{R}}_g, &\hat{\mathcal{R}}\in\sigma_A.\\
\end{cases}   
\]
\item For any constant $c\in \sigma_{\mathcal{E}}$, \[c^{\mathcal{E}_{\mathcal{K}}^{\gimel_{\mathfrak{A}}}(\mathfrak{A})}:=
\begin{cases}
      c^{\mathfrak{C}_i}, & c\in \sigma_i\setminus \sigma_A;\\
      g^{-1}(c^{\mathfrak{A}}), & c\in \sigma_A.\\ 
\end{cases}
\]
\end{enumerate} 
    \end{defi}

\noindent Informally speaking, $\mathcal{E}_{\mathcal{K}}^{\gimel_{\mathfrak{A}}}(\mathfrak{A})$ is a structure that is obtained from $\mathfrak{A}$ by substituting each element $a\in|\mathfrak{A}|$ with $\gimel_{\mathfrak{A}}(a)$, identifying $a$ with the hook in $\gimel_{\mathfrak{A}}(a)$. A point-expansion of the structure $\mathfrak{A}$ can also be regarded as the result of a process that ``glues'' a small substructure $\gimel_{\mathfrak{A}}(a)$ at each element $a$ (also at the hook of $\gimel_{\mathfrak{A}}(a)$) of the ``prototype'' structure $\mathfrak{A}$. And each small substructure shares only one element with the prototype structure, i.e.\ the ``point'' where they are ``glued'' together. 

\begin{lem} \label{composition_of_strategies}
Suppose that we are given a forest $\mathcal F$, two structures $\mathfrak{A}$, $\mathfrak{B}$, a finite set $\mathcal{K}$ of structures and two mappings $\gimel_{\mathfrak{A}}$ and $\gimel_{\mathfrak{B}}$ that expand $\mathfrak{A}$ and $\mathfrak{B}$ over $\mathcal{K}$  respectively. Then the duplicator has a winning strategy in the game $G_{\mathcal F}(\mathcal{E}_{\mathcal{K}}^{\gimel_{\mathfrak{A}}}(\mathfrak{A}),\mathcal{E}_{\mathcal{K}}^{\gimel_{\mathfrak{B}}}(\mathfrak{B}))$
 if the following is true: the duplicator has a winning strategy $D^{\mathcal F}_{\mathfrak{A},\mathfrak{B}}$ in the game $G_{\mathcal F}(\mathfrak{A},\mathfrak{B})$ such that
\begin{enumerate}
\item for any $a\in |\mathfrak{A}|$ and  sequence of elements $\bar s\in (|\mathfrak{A}|\uplus |\mathfrak{B}|)^{|\bar s|}$ ($|\bar s|\leq h(\mathcal F)$),  the duplicator has a winning strategy 
\[D^{\mathcal F}_{\gimel_{\mathfrak{A}}(a), \gimel_{\mathfrak{B}}(D^{\mathcal F}_{\mathfrak{A},\mathfrak{B}}(\bar s a))}\] in the game $G_{\mathcal F}(\gimel_{\mathfrak{A}}(a),\gimel_{\mathfrak{B}}(D^{\mathcal F}_{\mathfrak{A},\mathfrak{B}}(\bar s a)))$. \vspace{3pt}
\item for any $b\in |\mathfrak{B}|$ and  sequence of elements $\bar s\in (|\mathfrak{A}|\uplus |\mathfrak{B}|)^{|\bar s|}$ ($|\bar s|\leq h(\mathcal F)$),  the duplicator  has a winning strategy \[D^{\mathcal F}_{\gimel_{\mathfrak{A}}(D^{\mathcal F}_{\mathfrak{A},\mathfrak{B}}(\bar s b)),\gimel_{\mathfrak{B}}(b)}\] in the game $G_{\mathcal F}(\gimel_{\mathfrak{A}}(D^{\mathcal F}_{\mathfrak{A},\mathfrak{B}}(\bar s b)),\gimel_{\mathfrak{B}}(b))$.
\end{enumerate} 
\end{lem}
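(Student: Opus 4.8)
The plan is to have the duplicator in the big game $G_{\mathcal F}(\mathcal{E}_{\mathcal{K}}^{\gimel_{\mathfrak{A}}}(\mathfrak{A}),\mathcal{E}_{\mathcal{K}}^{\gimel_{\mathfrak{B}}}(\mathfrak{B}))$ maintain two ``layers'' of bookkeeping simultaneously: a \emph{prototype play}, which is a (virtual) play of $G_{\mathcal F}(\mathfrak{A},\mathfrak{B})$ in which she follows $D^{\mathcal F}_{\mathfrak{A},\mathfrak{B}}$, and, for each element $a\in|\mathfrak{A}|$ (resp.\ $b\in|\mathfrak{B}|$) that has so far appeared in the prototype play, a \emph{local play} of the game on the corresponding gadget pair $G_{\mathcal F}(\gimel_{\mathfrak{A}}(a),\gimel_{\mathfrak{B}}(b))$ in which she follows the local winning strategy furnished by hypothesis (1) or (2). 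When the spoiler picks an element $v$ in, say, $\mathcal{E}_{\mathcal{K}}^{\gimel_{\mathfrak{A}}}(\mathfrak{A})$, this $v$ lies in a unique gadget copy $\gimel_{\mathfrak{A}}(a)$. The duplicator first consults her prototype strategy: if $a$ is a genuinely new prototype element, she feeds $a$ as the spoiler's move into the virtual $G_{\mathcal F}(\mathfrak{A},\mathfrak{B})$, obtains the response $b:=D^{\mathcal F}_{\mathfrak{A},\mathfrak{B}}(\bar s a)$, opens a fresh local game on $G_{\mathcal F}(\gimel_{\mathfrak{A}}(a),\gimel_{\mathfrak{B}}(b))$ with both hooks pre-identified (this is where the ``$\bar s$'' prefix and $|\bar s|\le h(\mathcal F)$ come in — the hook may already carry history), and then passes $v$ into that local game to get the actual response $v'\in|\gimel_{\mathfrak{B}}(b)|$; if $a$ is an old prototype element, she skips the prototype step and just continues the already-running local game on $\gimel_{\mathfrak{A}}(a)$ versus $\gimel_{\mathfrak{B}}(b)$. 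The token on $\mathcal F$ is shared: both the prototype play and every local play are driven by the \emph{same} node the spoiler moves to, so ``$|\bar s|\le h(\mathcal F)$'' and the fact that strategies are defined on $\biguplus_{i\le h(\mathcal F)+1}(\cdots)^i$ guarantee all of these subgames stay within budget.

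The key steps, in order, are: (i) set up the invariant precisely — after each round, let $\bar c$, $\bar d$ be the play in the big game; partition $\bar c$ according to which gadget each element lies in, and likewise $\bar d$; claim that (a) the induced ``prototype subtuple'' of hook-preimages is a position consistent with $D^{\mathcal F}_{\mathfrak{A},\mathfrak{B}}$ in $G_{\mathcal F}(\mathfrak{A},\mathfrak{B})$, and (b) within each gadget pair, the restricted tuples form a position consistent with the corresponding local winning strategy; (ii) show the invariant is preserved by one round, by the case analysis above (spoiler in the $\mathfrak{A}$-side or $\mathfrak{B}$-side, moving into a fresh or an old gadget); (iii) show the invariant implies the duplicator has not lost, i.e.\ $(\overline{\mathfrak u}\bar c,\overline{\mathfrak v}\bar d)$ is a partial isomorphism of $\mathcal{E}_{\mathcal{K}}^{\gimel_{\mathfrak{A}}}(\mathfrak{A})$ and $\mathcal{E}_{\mathcal{K}}^{\gimel_{\mathfrak{B}}}(\mathfrak{B})$. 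Part (iii) is really a structural fact about point-expansions that should be isolated as a sub-claim: a tuple map that (1) sends each gadget-copy into the ``same-label'' gadget-copy preserving its internal relations, (2) on the hook-elements agrees with a partial isomorphism of the prototypes via $g_{\mathfrak A}$, $g_{\mathfrak B}$, and (3) sends two big-game elements into the same gadget-copy iff their preimages do, is automatically a partial isomorphism — because by Definition \ref{point-expansion} every relation of $\sigma_{\mathcal E}$ is either entirely internal to one gadget ($\hat{\mathcal R}\notin\sigma_A$) or else, on the hooks, is a faithful copy of the prototype relation ($\hat{\mathcal R}\in\sigma_A$, via $R_g$), and there are no ``mixed'' tuples spanning two distinct gadgets in any relation.

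I expect the main obstacle to be the careful handling of the \emph{hook elements}, precisely because they sit in both layers at once: a hook element of $\gimel_{\mathfrak{A}}(a)$ is simultaneously a witness in the local game on that gadget \emph{and} the avatar of the prototype element $a$. So when the spoiler plays an element that happens to be a hook, the duplicator's two consultations (prototype strategy, then local strategy) must be made consistent — the local winning strategy on $G_{\mathcal F}(\gimel_{\mathfrak{A}}(a),\gimel_{\mathfrak{B}}(b))$ must already have the two hooks paired, which is exactly why the hypotheses pre-commit $\gimel_{\mathfrak{B}}(D^{\mathcal F}_{\mathfrak{A},\mathfrak{B}}(\bar s a))$ as the partner and phrase things in terms of the extended sequence $\bar s a$. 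One has to check that feeding the prototype move $a$ first and \emph{then} the local move $v=$ (hook of $\gimel_{\mathfrak A}(a)$) does not create a clash, i.e.\ the local strategy's forced response on the hook is consistent with the already-fixed pairing; this follows because a winning strategy, when the spoiler plays an element already carrying a committed partner, must respond with that partner (else the position fails to be a partial isomorphism on the diagonal, contradicting ``winning''). A secondary bit of care is the \emph{order} in which new gadgets get opened versus old ones continued, and ensuring that reusing a local strategy across several non-contiguous rounds is legitimate — but that is immediate from Lemma \ref{m2m-} / the remark after it, since skipping rounds only helps the duplicator. Once the invariant is stated sharply, steps (ii) and (iii) are routine unwinding of Definition \ref{point-expansion}; the real content is the bookkeeping in step (i) and the hook-consistency check.
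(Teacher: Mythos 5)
Your proposal is correct and follows essentially the same route as the paper's proof: the duplicator plays the main strategy $D^{\mathcal F}_{\mathfrak{A},\mathfrak{B}}$ on the hook copies $\mathfrak{A}_{\gimel_{\mathfrak{A}}}$, $\mathfrak{B}_{\gimel_{\mathfrak{B}}}$ and, for a pick inside a gadget $\gimel_{\mathfrak{A}}(a)$, consults $D^{\mathcal F}_{\mathfrak{A},\mathfrak{B}}(\bar s a)$ to select the partner gadget and then the local strategy there. Your explicit invariant and the hook-consistency check (forced because the hooks are constants of the gadget signatures) are exactly the details the paper compresses into ``clearly, partial isomorphisms are preserved under such compositions.''
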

\begin{proof}
One winning strategy of the duplicator in the game $G_{\mathcal F}(\mathcal{E}_{\mathcal{K}}^{\gimel_{\mathfrak{A}}}(\mathfrak{A}),\mathcal{E}_{\mathcal{K}}^{\gimel_{\mathfrak{B}}}(\mathfrak{B}))$ is  the composition of her winning strategies in $G_{\mathcal F}(\mathfrak{A},\mathfrak{B})$ and $G_{\mathcal F}(\mathfrak{C}_i^A,\mathfrak{C}_j^B)$ where $\mathfrak{C}_i^A, \mathfrak{C}_j^B\in \mathcal{K}$. 

Let $\mathfrak{A}_{\gimel_{\mathfrak{A}}}$ be the isomorphic copy of $\mathfrak{A}$ in $\mathcal{E}_{\mathcal{K}}^{\gimel_{\mathfrak{A}}}(\mathfrak{A})$, whose elements interpret the hooks of disjoint substructures in $\mathcal{K}$. And let $\mathfrak{B}_{\gimel_{\mathfrak{B}}}$ be the isomorphic copy of $\mathfrak{B}$ in $\mathcal{E}_{\mathcal{K}}^{\gimel_{\mathfrak{B}}}(\mathfrak{B})$, whose elements interpret the hooks of disjoint substructures in $\mathcal{K}$. Assume that the spoiler has already picked a sequence $\bar s$ of elements in the game $G_{\mathcal F}(\mathfrak{A}_{\gimel_{\mathfrak{A}}},\mathfrak{B}_{\gimel_{\mathfrak{B}}})$:

\begin{enumerate}[label=(\roman*i)]
\item When the spoiler picks an element in $\mathfrak{A}_{\gimel_{\mathfrak{A}}}$ or $\mathfrak{B}_{\gimel_{\mathfrak{B}}}$, the duplicator uses the strategy $D^{\mathcal F}_{\mathfrak{A},\mathfrak{B}}$; otherwise:
\item For any $a\in |\mathfrak{A}_{\gimel_{\mathfrak{A}}}|$, if the spoiler picks an element in $\gimel_{\mathfrak{A}}(a)$, the duplicator uses the strategy $D^{\mathcal F}_{\gimel_{\mathfrak{A}}(a), \gimel_{\mathfrak{B}}(D^{\mathcal F}_{\mathfrak{A},\mathfrak{B}}(\bar s a))}$;
\item For any $b\in |\mathfrak{B}_{\gimel_{\mathfrak{B}}}|$, if the spoiler picks an element in $\gimel_{\mathfrak{B}}(b)$, the duplicator uses the strategy $D^{\mathcal F}_{\gimel_{\mathfrak{A}}(D^{\mathcal F}_{\mathfrak{A},\mathfrak{B}}(\bar s b)),\gimel_{\mathfrak{B}}(b)}$.
\end{enumerate}

\noindent Clearly, partial isomorphisms are preserved under such compositions, which provides the duplicator a winning strategy in the game $G_{\mathcal F}(\mathcal{E}_{\mathcal{K}}^{\gimel_{
\mathfrak{A}}}(\mathfrak{A}),\mathcal{E}_{\mathcal{K}}^{\gimel_{\mathfrak{B}}}(\mathfrak{B}))$, which can be regarded as a main game together with a series of subgames. 
\end{proof}

\subsection{Strictness of the FO quantifier structure hierarchy}

\begin{defi}
Let $\mathcal{I}$ be a structure over signature $\langle c_I\rangle$ where $c_I$ is the hook constant and  $\mathcal{I}$ has only one element, which is used to interpret $c_I$.
    \end{defi}

\begin{defi}\label{structures-over-TAU}
Let $m\in \mathbb{N}^+$ and $p\in \Gamma^*$. Let $\tau^+ =\langle U,R,B,r\rangle$ where $R,B$  are  binary relation symbols, $U$ is a unary relation symbol, and $r$ is a constant symbol. To make it vivid, we say that an element $x$ is \textit{black} if $x\in U$. All the elements in the structures are white unless explicitly labeled black. Likewise, an arc $(x,y)$ is red if $(x,y)\in R$;  an arc is blue if $(x,y)\in B$. Let $\tau^+_0=\tau^+\setminus\{U\}$. 

We define $\widetilde{\mathfrak{A}}^p_m$ and $\widetilde{\mathfrak{B}}^p_m$ to be $\tau^+$-structures as follows: \begin{itemize}
\item $\widetilde{\mathfrak{A}}^p_m$ and $\widetilde{\mathfrak{B}}^p_m$ are trees, whose edges are  either red or blue. The constant $r$ is interpreted by the root of the respective trees.
\item When $|p|=1$, all edges in $\widetilde{\mathfrak{A}}^p_m$ and $\widetilde{\mathfrak{B}}^p_m$ are  red.
\item If $p=\exists$, 

{(1)} $\widetilde{\mathfrak{A}}^p_m$ is a depth 1 tree that has $2m+1$ leaves. One of its leaves is black. 

{(2)} $\widetilde{\mathfrak{B}}^p_m$ is a depth 1 tree that has $2m$ leaves. \textit{None} of them is black.

$\mathcal{T}^{\mathfrak{A}}_{\exists,m}:=\widetilde{\mathfrak{A}}^\exists_m|\tau^+_0$; $\mathcal{T}^{\mathfrak{B}}_{\exists,m}:=\widetilde{\mathfrak{B}}^\exists_m|\tau^+_0$. 
\item If $p=\forall$, 

{(1)} $\widetilde{\mathfrak{A}}^p_m$ is a depth 1  tree that has $2m$ leaves. All of them are black. 

{(2)} $\widetilde{\mathfrak{B}}^p_m$ is a depth 1  tree that has
$2m+1$ leaves. All are black except one. 

$\mathcal{T}^{\mathfrak{A}}_{\forall,m}:=\widetilde{\mathfrak{A}}^\forall_m|\tau^+_0$; $\mathcal{T}^{\mathfrak{B}}_{\forall,m}:=\widetilde{\mathfrak{B}}^\forall_m|\tau^+_0$. 

Note that $\mathcal{T}^{\mathfrak{A}}_{\exists,m}=\mathcal{T}^{\mathfrak{B}}_{\forall,m}$ and $\mathcal{T}^{\mathfrak{B}}_{\exists,m}=\mathcal{T}^{\mathfrak{A}}_{\forall,m}$.

\item When $|p|>1:$\\[3pt]
\indent\hspace{6pt} Let $\widetilde{\mathfrak{A}}^{-q}_m$ be the same as $\widetilde{\mathfrak{A}}^{q}_m$ except that the colours of all the edges are exchanged, i.e.\ red is interchanged with blue. Let $\mathfrak{D}^{A,A}_{q,m}$, a $(\tau^+\cup\{\mathfrak{e}\}\setminus\{r\})$-structure, be built from a copy of $\widetilde{\mathfrak{A}}^q_m$ and a copy of $\widetilde{\mathfrak{A}}^{-q}_m$ where they are joined together at their roots. Call their shared root a \textit{junction point}, which interprets the hook constant $\mathfrak{e}$.  Similarly, define $\mathfrak{D}^{A,B}_{q,m}$ to be the join of $\widetilde{\mathfrak{A}}^q_m$ and $\widetilde{\mathfrak{B}}^{-q}_m$, $\mathfrak{D}^{B,A}_{q,m}$ be the join of $\widetilde{\mathfrak{B}}^q_m$ and $\widetilde{\mathfrak{A}}^{-q}_m$;  $\mathfrak{D}^{B,B}_{q,m}$ be the join of $\widetilde{\mathfrak{B}}^q_m$ and $\widetilde{\mathfrak{B}}^{-q}_m$. By  ``copies'' we mean disjoint copies. Let $\mathcal{K}:=\{\mathcal{I},\mathfrak{D}^{A,A}_{q,m},\mathfrak{D}^{A,B}_{q,m},\mathfrak{D}^{B,A}_{q,m}\}$.

\begin{itemize}
\item If $p=\exists q$, 

{({1}$^\exists$)} $\widetilde{\mathfrak{A}}^p_m$ is a point-expansion of $\mathcal{T}^{\mathfrak{A}}_{\exists,m}$ over $\mathcal{K}$ as follows: The root of $\mathcal{T}^{\mathfrak{A}}_{\exists,m}$ is expanded by $\mathcal{I}$. It is also called the ``\textit{root}'' of $\mathcal{T}^{\mathfrak{A}}_{\exists,m}$ that interprets $r$. One of its leaves is expanded by a copy of $\mathfrak{D}^{A,A}_{q,m}$. $m$ leaves are expanded by a copy of $\mathfrak{D}^{A,B}_{q,m}$. Another $m$ leaves are expanded by a copy of $\mathfrak{D}^{B,A}_{q,m}$.  

{({2}$^\exists$)} $\widetilde{\mathfrak{B}}^p_m$ is a point-expansion of $\mathcal{T}^{\mathfrak{B}}_{\exists,m}$ over $\mathcal{K}$ as follows: The root of $\mathcal{T}^{\mathfrak{B}}_{\exists,m}$ is expanded by $\mathcal{I}$. It is also called the ``\textit{root}'' of $\widetilde{\mathfrak{B}}^p_m$ that interprets $r$. $m$ leaves are expanded by a copy of $\mathfrak{D}^{A,B}_{q,m}$. The other $m$ leaves are expanded by a copy of $\mathfrak{D}^{B,A}_{q,m}$.
\item If $p=\forall q$,

{({1}$^\forall$)} The \textit{lem} of  $\widetilde{\mathfrak{A}}^p_m$ is the same as the above \textit{lem} ({2}$^\exists$) of  $\widetilde{\mathfrak{B}}^p_m$.

{({2}$^\forall$)} The \textit{lem} of  $\widetilde{\mathfrak{B}}^p_m$ is the same as the \textit{lem}  ({1}$^\exists$) of $\widetilde{\mathfrak{A}}^p_m$ except that $\mathfrak{D}^{A,A}_{q,m}$ is replaced by a copy of $\mathfrak{D}^{B,B}_{q,m}$.

\end{itemize}
\end{itemize}
    \end{defi}
\begin{rem} $\widetilde{\mathfrak{A}}^p_m$ is the same as $\widetilde{\mathfrak{B}}^{\bar p}_m$ except that the ``colours'' of leaves are flipped: black is interchanged with \textit{not} black.
             
\end{rem}

The structures $\widetilde{\mathfrak{A}}^p_m$ and $\widetilde{\mathfrak{B}}^p_m$ are trees with  coloured edges and nodes. Note that only a leaf could be black by this definition because the root of a tree is not black by default. See Figure \ref{EE-tree} for example, where $p=\exists\exists$ and $m=1$. Here, we use a solid line to represent a red edge and a dashed line to represent a blue edge.

\begin{figure}[h]
\centering
\includegraphics[trim = 15mm 35mm 15mm 40mm, clip, width=7cm]{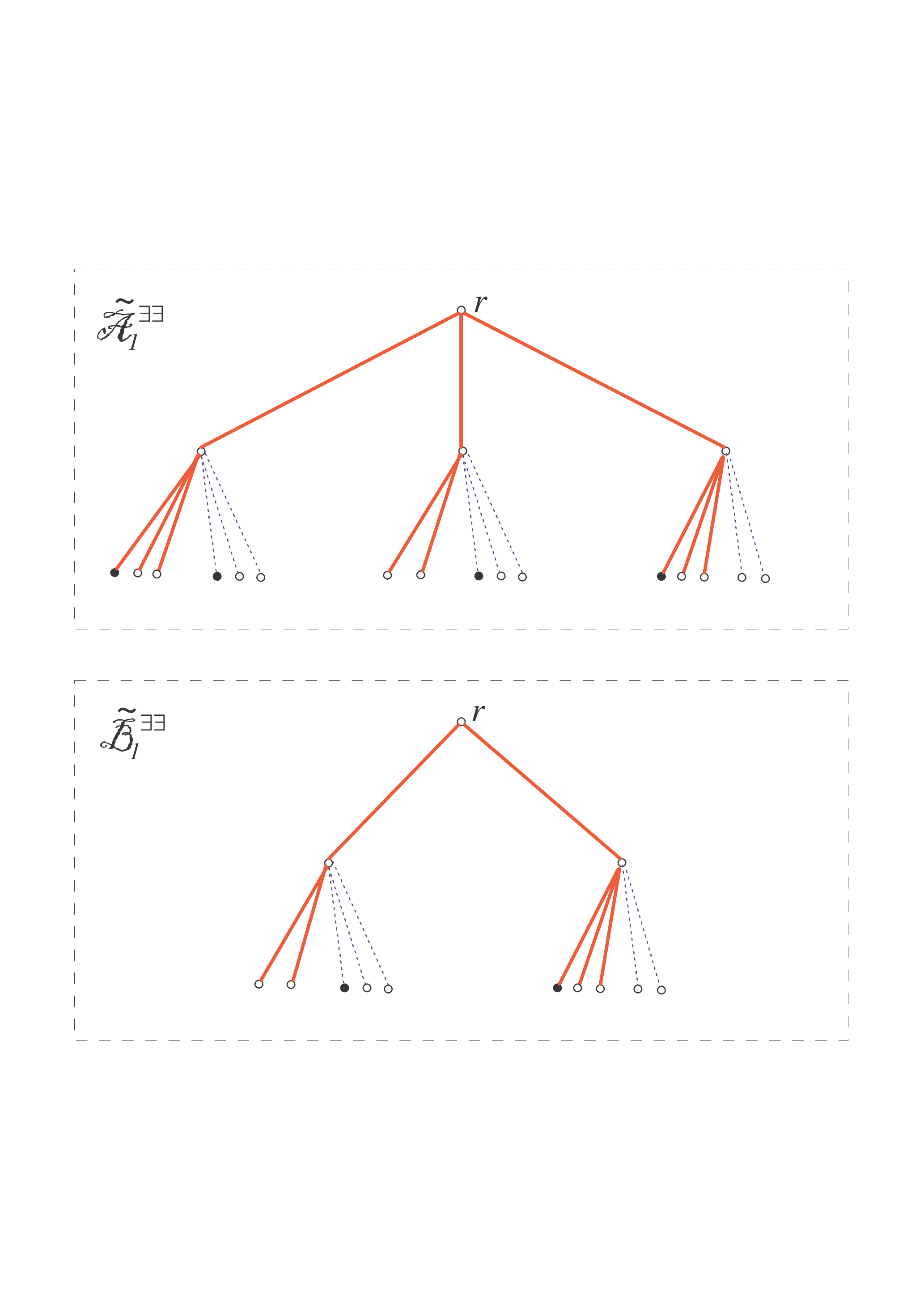}
\caption{The structures $\widetilde{\mathfrak{A}}^{\exists\exists}_1$ and $\widetilde{\mathfrak{B}}^{\exists\exists}_1$.}
\label{EE-tree}
\end{figure}

We are going to define a formula $\widetilde{\varphi}_p$ for each string $p\in \Gamma^*$.
\begin{defi} \label{property-over-TAU}
Let $q\in \Gamma^*$ and assume that $|q|=d\geq 0$.

\begin{enumerate}
\item 
$\widetilde{\psi}_{\epsilon}(x)=\widetilde{\psi}_{-\epsilon}(x):=U(x)$;
\vspace{6pt}
\item $\widetilde{\psi}_{\exists q}(y):=\exists x_{d+1}(Ryx_{d+1}\land \widetilde{\psi}_{q}(x_{d+1})\land \widetilde{\psi}_{-q}(x_{d+1}))$;\\[4pt]
$\widetilde{\psi}_{\forall q}(y):=\forall x_{d+1}( Ryx_{d+1}\rightarrow\widetilde{\psi}_{q}(x_{d+1})\lor \widetilde{\psi}_{-q}(x_{d+1}))$;
\vspace{6pt}
\item $\widetilde{\psi}_{-\exists q}(y):=\exists x_{d+1}(Byx_{d+1}\land \widetilde{\psi}_{q}(x_{d+1})\land \widetilde{\psi}_{-q}(x_{d+1}))$;\\[4pt]
$\widetilde{\psi}_{-\forall q}(y):=\forall x_{d+1}(Byx_{d+1}\rightarrow \widetilde{\psi}_{q}(x_{d+1})\lor \widetilde{\psi}_{-q}(x_{d+1}))$;
\end{enumerate}
\vspace{6pt}
Now we define a sentence over the signature $\tau^+$:
\begin{gather}\label{separate-sentence-0}
\widetilde{\varphi}_{p}:=\widetilde{\psi}_{p}(r); \\
\widetilde{\varphi}_{-p}:=\widetilde{\psi}_{-p}(r).
\end{gather}
    \end{defi}

\noindent From now on, we assume that $m\in \mathbb{N}^+$ is an arbitrary natural number except where defined explicitly in context (see Theorem \ref{main} for example). 

\begin{lem} \label{Dual-Hintikka}
$\widetilde{\mathfrak{A}}^p_m\models\widetilde{\varphi}_p\hspace{3pt}$ iff $\hspace{3pt}\widetilde{\mathfrak{A}}^{-p}_m\models\widetilde{\varphi}_{-p}$.
\end{lem}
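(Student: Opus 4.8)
The statement is a symmetry claim relating the two "dual" structures and the two dual sentences. The natural strategy is induction on the length of $p$, mirroring the recursive construction of both $\widetilde{\mathfrak A}^p_m$ in Definition \ref{structures-over-TAU} and $\widetilde\psi_p$ in Definition \ref{property-over-TAU}. First I would record the slightly more general fact that will actually carry the induction: for every node $v$ of $\widetilde{\mathfrak A}^p_m$ sitting at depth $i$ in the tree, if $q=p^{-i}$ is the tail of $p$ from that depth on, then $\widetilde{\mathfrak A}^p_m\models\widetilde\psi_q(v)$ iff the corresponding node $v'$ in the edge-colour-flipped structure $\widetilde{\mathfrak A}^{-p}_m$ satisfies $\widetilde\psi_{-q}(v')$, where "corresponding" is via the obvious identity isomorphism of the underlying node sets (the two structures differ only by swapping $R$ with $B$). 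Taking $i=0$, $v=r$, $q=p$ gives the lemma.

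The key observation driving each inductive step is built directly into Definition \ref{property-over-TAU}: $\widetilde\psi_{-\exists q}$ is obtained from $\widetilde\psi_{\exists q}$ by replacing the edge predicate $R$ with $B$ (and likewise for $\forall$), while the recursive sub-calls $\widetilde\psi_q,\widetilde\psi_{-q}$ are left untouched. So if $v$ is a non-leaf node of $\widetilde{\mathfrak A}^p_m$ with $p=\exists q$ (say), then $\widetilde{\mathfrak A}^p_m\models\widetilde\psi_{\exists q}(v)$ asserts the existence of an $R$-child $w$ of $v$ with $\widetilde\psi_q(w)\wedge\widetilde\psi_{-q}(w)$. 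In the flipped structure, $R$-edges become $B$-edges, so this is exactly the assertion $\widetilde\psi_{-\exists q}(v')$ evaluated against $w'$, \emph{provided} we know $\widetilde{\mathfrak A}^p_m\models\widetilde\psi_q(w)\wedge\widetilde\psi_{-q}(w)$ iff $\widetilde{\mathfrak A}^{-p}_m\models\widetilde\psi_q(w')\wedge\widetilde\psi_{-q}(w')$. But $w$ sits at depth one greater than $v$, and the subtree hanging below $w$ in $\widetilde{\mathfrak A}^p_m$ is, up to the global edge-flip, the subtree below $w'$ in $\widetilde{\mathfrak A}^{-p}_m$ — which is the content of the induction hypothesis applied to the tail $q$ (and here one uses that flipping twice is the identity, so "$\widetilde\psi_{-q}$ in the flipped structure" corresponds to "$\widetilde\psi_q$ in the original", matching the two conjuncts up). The base case $|p|=0$ is immediate since $\widetilde\psi_\epsilon=\widetilde\psi_{-\epsilon}=U(x)$ and the edge-flip does not touch the unary predicate $U$, so blackness of a node is preserved.

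The one genuinely fiddly point — and the step I'd expect to absorb most of the care — is bookkeeping the correspondence between the point-expansion structure of $\widetilde{\mathfrak A}^{\exists q}_m$ and that of $\widetilde{\mathfrak A}^{-\exists q}_m = \widetilde{\mathfrak A}^{\overline{\exists q}}_m$ with edges flipped, because the construction in Definition \ref{structures-over-TAU} glues the gadgets $\mathfrak D^{A,A}_{q,m}$, $\mathfrak D^{A,B}_{q,m}$, $\mathfrak D^{B,A}_{q,m}$ (each itself a join of a $\widetilde{\mathfrak A}^{q}_m$-part and a $\widetilde{\mathfrak A}^{-q}_m$-part) at the leaves, and one must check that flipping all edges of the whole expansion sends each glued gadget to the glued gadget appearing at the corresponding leaf of $\widetilde{\mathfrak A}^{-\exists q}_m$, with the two "halves" of the join swapping roles. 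Concretely, flipping the edges of $\mathfrak D^{A,A}_{q,m}$ (the join of $\widetilde{\mathfrak A}^q_m$ and $\widetilde{\mathfrak A}^{-q}_m$) yields the join of $\widetilde{\mathfrak A}^{-q}_m$ and $\widetilde{\mathfrak A}^{q}_m$, i.e.\ $\mathfrak D^{A,A}_{q,m}$ again up to reordering the two branches, and similarly $\mathfrak D^{A,B}_{q,m}$ flips to the join of $\widetilde{\mathfrak A}^{-q}_m$ and $\widetilde{\mathfrak B}^q_m$, which is $\mathfrak D^{B,A}_{q,m}$ with branches reordered — exactly matching the leaf-labelling of $\widetilde{\mathfrak B}^{\overline{\exists q}}_m$. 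Since $\widetilde\psi_q$ evaluated at the junction point of such a gadget only ever inspects one of the two branches (the one reachable by the appropriate colour of edge), this reordering is harmless, and the induction goes through. Once this correspondence is set up, the verification that the quantified subformula $\widetilde\psi_q(x_{d+1})\wedge\widetilde\psi_{-q}(x_{d+1})$ behaves correctly at each child reduces, leaf by leaf, to the induction hypothesis for the strictly shorter string $q$.
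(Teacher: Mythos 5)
Your core argument (the first two paragraphs) is correct, but it takes a genuinely different route from the paper's. The paper works on the structure side: it first proves by induction that the global edge-colour flip of $\widetilde{\mathfrak{A}}^p_m$ is isomorphic to the structure obtained by flipping only the edges incident to the root --- this is exactly where the gadget facts ($\mathfrak{D}^{A,B}_{q,m}$ and $\mathfrak{D}^{B,A}_{q,m}$ swap under the flip while $\mathfrak{D}^{A,A}_{q,m}$ is fixed, and the first two occur with equal multiplicity $m$) are used --- and then observes that $\widetilde{\varphi}_p$ and $\widetilde{\varphi}_{-p}$ differ only in the colour of the outermost guard. You instead work on the formula side: your induction shows that satisfaction of $\widetilde{\psi}_q$ at a node corresponds to satisfaction of $\widetilde{\psi}_{-q}$ at the same node of the edge-colour-swapped structure, which boils down to the purely syntactic fact that $\widetilde{\psi}_{-q}$ is, up to commuting conjuncts and disjuncts, the result of interchanging $R$ with $B$ throughout $\widetilde{\psi}_q$. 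Your route is the more robust one: it uses nothing about $\widetilde{\mathfrak{A}}^p_m$ beyond the definition of $\widetilde{\mathfrak{A}}^{-p}_m$ as its $R$/$B$-swap, so it actually proves $\mathfrak{M}\models\widetilde{\varphi}_p$ iff $\mathfrak{M}'\models\widetilde{\varphi}_{-p}$ for an arbitrary $\tau^+$-structure $\mathfrak{M}$ and its swap $\mathfrak{M}'$, whereas the paper's argument genuinely exploits the construction.

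One correction to your final paragraph: the identification ``$\widetilde{\mathfrak{A}}^{-\exists q}_m=\widetilde{\mathfrak{A}}^{\overline{\exists q}}_m$ with edges flipped'' is false. By Definition \ref{structures-over-TAU}, $\widetilde{\mathfrak{A}}^{-\exists q}_m$ is $\widetilde{\mathfrak{A}}^{\exists q}_m$ with edges flipped, while $\widetilde{\mathfrak{A}}^{\overline{\exists q}}_m=\widetilde{\mathfrak{A}}^{\forall\bar q}_m$ is a different structure altogether (it is $\widetilde{\mathfrak{B}}^{\exists q}_m$ with the \emph{leaf} colours, not the edge colours, interchanged); neither the dual prefix $\bar p$ nor the $\widetilde{\mathfrak{B}}$-structures enter this lemma. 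With the correct definition in hand, the ``fiddly point'' you flag evaporates: the identity on the universe is already the required correspondence between each gadget of $\widetilde{\mathfrak{A}}^p_m$ and its image in $\widetilde{\mathfrak{A}}^{-p}_m$, and no leaf-matching is needed for your induction to close. (Your concrete computations of how the gadgets transform under the flip are nonetheless correct --- they are precisely the facts the paper's own proof relies on --- they are simply not needed for your argument.)
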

\begin{proof}
Even though $\widetilde{\mathfrak{A}}^{-p}_m$ is defined as a structure which is obtained from $\widetilde{\mathfrak{A}}^p_m$ by exchanging all the colours of the edges, we will prove that $\widetilde{\mathfrak{A}}^p_m$ can be turned into $\widetilde{\mathfrak{A}}^{-p}_m$ if we only exchange the colours of the edges that connect to root (from red to blue and from blue to red) in $\widetilde{\mathfrak{A}}^p_m$, and vice versa: \begin{enumerate}
\item The base cases when $|p|=1$ are trivial;
\item Assume that it holds when $|p|=k$;
\item Assume that $p=\exists q$ where $|q|=k$. According to the definition,  the structure obtained from $\mathfrak{D}^{A,B}_{q,m}$ by exchanging the edge colours is  $\mathfrak{D}^{B,A}_{q,m}$. Likewise,  the structure obtained from $\mathfrak{D}^{B,A}_{q,m}$ by exchanging the edge colours is  $\mathfrak{D}^{A,B}_{q,m}$.  For the reason of symmetry, the structure obtained from $\mathfrak{D}^{A,A}_{q,m}$ by exchanging the edge colours is the same as $\mathfrak{D}^{A,A}_{q,m}$. Therefore, the conclusion holds according to the definition of $\widetilde{\mathfrak{A}}^{\exists q}_m$.
\item Similarly, we can prove it holds when $p=\forall q$ where $|q|=k$.
Therefore, it holds when $|p|=k+1$.
\end{enumerate}

\noindent Also note that $\widetilde{\varphi}_p$ is the same as $\widetilde{\varphi}_{-p}$ except that $\widetilde{\varphi}_p$ claims that the edges that connect to root are red and $\widetilde{\varphi}_{-p}$ claims that the edges that connect to root are blue. Therefore, $\widetilde{\mathfrak{A}}^p_m\models\widetilde{\varphi}_p\hspace{3pt}$ iff $\hspace{3pt}\widetilde{\mathfrak{A}}^{-p}_m\models\widetilde{\varphi}_{-p}$.
\end{proof}

\begin{lem} \label{claim1}
$\widetilde{\mathfrak{A}}^p_m \models \widetilde{\varphi}_p$ and $\widetilde{\mathfrak{B}}^p_m \not\models \widetilde{\varphi}_p$. 
\end{lem}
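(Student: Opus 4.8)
The plan is to prove both claims simultaneously by induction on $|p|$, following exactly the recursive structure of Definition \ref{structures-over-TAU} and Definition \ref{property-over-TAU}. Since $\widetilde{\varphi}_p = \widetilde{\psi}_p(r)$, the key is to understand, for each node $a$ in one of these trees, which of the formulas $\widetilde{\psi}_q$ and $\widetilde{\psi}_{-q}$ hold at $a$, where $q$ is the appropriate suffix of $p$. So I would actually prove a strengthened statement: for every suffix $q$ of $p$ (with $|q| = d$), and every node $a$ at the appropriate ``level'' $|p| - d$ in the relevant substructure, $a$ satisfies $\widetilde{\psi}_q$ exactly in the intended cases and fails it in the ``diagonal'' case. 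Concretely, I expect to track the invariant that in the $\mathfrak{A}$-structures every node (at the right level) satisfies the positive formula $\widetilde{\psi}_q$, whereas in the $\mathfrak{B}$-structures exactly the one ``exceptional'' child fails the subformula that the spoiler needs.

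First I would handle the base case $|p| = 1$. For $p = \exists$: $\widetilde{\psi}_\exists(r) = \exists x(Rrx \wedge U(x) \wedge U(x))$, i.e. $\exists x(Rrx \wedge U(x))$; in $\widetilde{\mathfrak{A}}^\exists_m$ the root has a black (red-)child, so the formula holds, while in $\widetilde{\mathfrak{B}}^\exists_m$ no leaf is black, so it fails. For $p = \forall$: $\widetilde{\psi}_\forall(r) = \forall x(Rrx \rightarrow U(x))$; in $\widetilde{\mathfrak{A}}^\forall_m$ all $2m$ leaves are black so the formula holds, while in $\widetilde{\mathfrak{B}}^\forall_m$ one leaf is white, falsifying the formula. (Recall $\widetilde{\psi}_{-\epsilon} = \widetilde{\psi}_\epsilon = U$, and the colours of root-edges are red here by Definition \ref{structures-over-TAU}, so the $R$-version of the quantifier is the right one; Lemma \ref{Dual-Hintikka} already packages the red/blue symmetry.)

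For the inductive step, suppose the claim (in its strengthened form) holds for all prefixes of length $\le k$, and let $p = \exists q$ with $|q| = k$. Then $\widetilde{\varphi}_p = \exists x_1(Rrx_1 \wedge \widetilde{\psi}_q(x_1) \wedge \widetilde{\psi}_{-q}(x_1))$. In $\widetilde{\mathfrak{A}}^p_m$, the root has one red-child $a$ which is the junction point of a copy of $\mathfrak{D}^{A,A}_{q,m}$; that junction point sits at the root of $\widetilde{\mathfrak{A}}^q_m$ and at the root of $\widetilde{\mathfrak{A}}^{-q}_m$, so by the induction hypothesis applied to $\widetilde{\mathfrak{A}}^q_m$ we get $a \models \widetilde{\psi}_q(a)$, and applied to $\widetilde{\mathfrak{A}}^{-q}_m$ (using Lemma \ref{Dual-Hintikka}, $\widetilde{\mathfrak{A}}^{-q}_m \models \widetilde{\varphi}_{-q}$ i.e. $\widetilde{\psi}_{-q}$ at its root) we get $a \models \widetilde{\psi}_{-q}(a)$; hence $\widetilde{\mathfrak{A}}^p_m \models \widetilde{\varphi}_p$. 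For $\widetilde{\mathfrak{B}}^p_m$, every red-child of the root is the junction point of either a $\mathfrak{D}^{A,B}_{q,m}$ or a $\mathfrak{D}^{B,A}_{q,m}$. In the first case the junction point is the root of $\widetilde{\mathfrak{A}}^q_m$ but also the root of $\widetilde{\mathfrak{B}}^{-q}_m$, so by the induction hypothesis (and the $\mathfrak{B}$-side of it: $\widetilde{\mathfrak{B}}^{-q}_m \not\models \widetilde{\varphi}_{-q}$) it fails $\widetilde{\psi}_{-q}$; in the second case it is the root of $\widetilde{\mathfrak{B}}^q_m$, so it fails $\widetilde{\psi}_q$. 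Either way the conjunct $\widetilde{\psi}_q(x_1)\wedge\widetilde{\psi}_{-q}(x_1)$ fails at every red-child, so $\widetilde{\mathfrak{B}}^p_m \not\models \widetilde{\varphi}_p$. The case $p = \forall q$ is dual: $\widetilde{\varphi}_p = \forall x_1(Rrx_1 \rightarrow \widetilde{\psi}_q(x_1) \vee \widetilde{\psi}_{-q}(x_1))$, and one runs the same analysis using that the \emph{lem} of $\widetilde{\mathfrak{A}}^{\forall q}_m$ mimics the $\exists$-case of $\widetilde{\mathfrak{B}}$ (so every child lies in some $\mathfrak{D}^{A,B}$ or $\mathfrak{D}^{B,A}$ and hence satisfies one of the two disjuncts by the induction hypothesis), whereas $\widetilde{\mathfrak{B}}^{\forall q}_m$ has a distinguished child in a copy of $\mathfrak{D}^{B,B}_{q,m}$ whose junction point fails \emph{both} $\widetilde{\psi}_q$ and $\widetilde{\psi}_{-q}$.

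The main obstacle is purely bookkeeping: matching up ``which copy of $\widetilde{\mathfrak{A}}^q_m$ / $\widetilde{\mathfrak{B}}^q_m$ the junction point of a given $\mathfrak{D}^{X,Y}_{q,m}$ plays the role of the root in'' with ``which of $\widetilde{\psi}_q, \widetilde{\psi}_{-q}$ we therefore know at that point'', keeping the red/blue edge bookkeeping consistent (this is exactly what Lemma \ref{Dual-Hintikka} and the definition $\widetilde{\mathfrak{A}}^{-q}_m$ are there to manage), and being careful that the induction hypothesis is stated strongly enough — not just about roots of the top-level trees but about the root of every nested $\widetilde{\mathfrak{A}}^q_m$ or $\widetilde{\mathfrak{B}}^q_m$ appearing inside a $\mathfrak{D}$-gadget. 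Once the strengthened induction hypothesis is phrased correctly, each step is a short unwinding of the definitions.
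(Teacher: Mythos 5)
Your proposal is correct and follows essentially the same route as the paper: induction on $|p|$, using the junction point of $\mathfrak{D}^{A,A}_{q,m}$ (resp.\ $\mathfrak{D}^{B,B}_{q,m}$) as the witness (resp.\ counterexample) and Lemma \ref{Dual-Hintikka} to handle the red/blue duality. The only cosmetic difference is that where you propose a strengthened induction hypothesis about roots of nested copies, the paper instead observes that all quantifiers in $\widetilde{\psi}_q$ are relativized by $R$ or $B$, so satisfaction in a tree substructure transfers directly to satisfaction at the corresponding node of the whole structure.
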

\begin{proof} It is obvious when $p=\exists$ or $\forall$.
Assume that it holds when $|p|=k$.  

If $p=\exists q$, $\widetilde{\mathfrak{A}}^p_m$ is composed of a node $a_r$, a copy of $\mathfrak{D}^{A,A}_{q,m}$, $m$ copies of $\mathfrak{D}^{A,B}_{q,m}$ and $m$ copies of $\mathfrak{D}^{B,A}_{q,m}$; $\widetilde{\varphi}_p=\exists x_{k+1}(Rrx_{k+1}\land\widetilde{\psi}_{q}(x_{k+1})\land \widetilde{\psi}_{-q}(x_{k+1}))$. Let $a_r$ interpret $r$ in $\widetilde{\mathfrak{A}}^p_m$, and the junction point $b_r$ of $\mathfrak{D}^{A,A}_{q,m}$ witness $\exists x_{k+1}$ in $\widetilde{\varphi}_p$. Note that $b_r$ divides $\mathfrak{D}^{A,A}_{q,m}$ into two parts: one is an isomorphic copy of $\widetilde{\mathfrak{A}}^q_m$ and the other is an isomorphic copy of $\widetilde{\mathfrak{A}}^{-q}_m$. For  convenience, we still use $\widetilde{\mathfrak{A}}^q_m$ and $\widetilde{\mathfrak{A}}^{-q}_m$ to denote these copies. Therefore, when $r$ is interpreted as $b_r$, $\hspace{3pt}\widetilde{\mathfrak{A}}^{q}_m\models\widetilde{\psi}_{q}(r)$ and  $\hspace{3pt}\widetilde{\mathfrak{A}}^{-q}_m\models\widetilde{\psi}_{-q}(r)$ (according to the induction hypothesis and Lemma \ref{Dual-Hintikka}).  

Moreover, all the quantifiers in $\widetilde{\psi}_q$ are relativized by relations either $Ryx$ or $Byx$, where $x$ is the quantified variable. And $\widetilde{\psi}_q$ expresses some property that has nothing to do with the elements outside the tree substructure $\widetilde{\mathfrak{A}}^q_m$. More precisely, in Definition \ref{property-over-TAU}, the variable ``$y$'' does not occur free in the formulas $\widetilde{\psi}_{q}(x_{d+1})$ and $\widetilde{\psi}_{-q}(x_{d+1})$.  
As a consequence, $\widetilde{\mathfrak{A}}^{q}_m\models\widetilde{\psi}_{q}(r)$ implies $\widetilde{\mathfrak{A}}^{p}_m\models\widetilde{\psi}_{q}(b_r)$.\\[4pt]
\indent By the same argument,  $\widetilde{\mathfrak{A}}^{-q}_m\models\widetilde{\psi}_{-q}(r)$ implies $\widetilde{\mathfrak{A}}^{p}_m\models\widetilde{\psi}_{-q}(b_r)$. Therefore, $\hspace{3pt}\widetilde{\mathfrak{A}}^{p}_m\models\widetilde{\varphi}_{p}$. \\[4pt]
\indent Let $c_r$ be a child of $r^{\widetilde{\mathfrak{B}}^{p}_m}$ in $\widetilde{\mathfrak{B}}^{p}_m$. By assumption,  $\widetilde{\mathfrak{B}}^q_m\not\models\widetilde{\psi}_{q}(r)$ ($\widetilde{\mathfrak{B}}^{-q}_m\not\models\widetilde{\psi}_{-q}(r)$ resp.) where $c_r$ interprets $r$ in $\widetilde{\mathfrak{B}}^q_m$ ($\widetilde{\mathfrak{B}}^{-q}_m$ resp.). As explained before, this means $\widetilde{\mathfrak{B}}^p_m\not\models\widetilde{\psi}_{q}(c_r)$ ($\widetilde{\mathfrak{B}}^p_m\not\models\widetilde{\psi}_{-q}(c_r)$ resp.). So, $c_r$ cannot be a witness of $\exists x_{k+1}$. Therefore,
$\hspace{3pt}\widetilde{\mathfrak{B}}^{p}_m\not\models\widetilde{\varphi}_{p}$.

Similarly, we can prove that it also holds when $p=\forall q$. Hence, it holds when $|p|=k+1$. 
\end{proof}
\vspace{4pt}

Let $S$ and $S^{\prime}$ be two finite $\Gamma$-labelled forests. We collect some simple facts below.

\begin{lem}\label{p2-p}
$\widetilde{\mathfrak{A}}^p_m\leadsto_{S} \widetilde{\mathfrak{B}}^p_m\hspace{3pt}$ iff $\hspace{3pt}\widetilde{\mathfrak{A}}^{-p}_m\leadsto_{S} \widetilde{\mathfrak{B}}^{-p}_m$.
\end{lem}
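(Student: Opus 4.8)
The plan is to exploit the symmetry baked into the construction, namely the Remark immediately after Definition~\ref{structures-over-TAU}: $\widetilde{\mathfrak{A}}^p_m$ is obtained from $\widetilde{\mathfrak{B}}^{\bar p}_m$ by flipping the ``colours'' of leaves (black $\leftrightarrow$ not-black). Since $\overline{\overline{p}}=p$ by Lemma~\ref{conca-duals}, the same relation holds with the roles of $\widetilde{\mathfrak{A}}$ and $\widetilde{\mathfrak{B}}$ swapped: $\widetilde{\mathfrak{B}}^p_m$ is obtained from $\widetilde{\mathfrak{A}}^{\bar p}_m$ by the same leaf-colour flip. Hence there is a bijection $h$ between the universes of $\widetilde{\mathfrak{A}}^p_m$ and $\widetilde{\mathfrak{A}}^{-p}_m$, and simultaneously a bijection $h'$ between $\widetilde{\mathfrak{B}}^p_m$ and $\widetilde{\mathfrak{B}}^{-p}_m$, each of which preserves the red/blue arc relation $R,B$ and the constant $r$, and which toggles the unary predicate $U$ uniformly. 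The key observation is that toggling $U$ everywhere is a global operation that commutes with every move in the game: a pair of played tuples $(\overline{\mathfrak{u}}\bar c,\overline{\mathfrak{v}}\bar d)$ is a partial isomorphism between $\widetilde{\mathfrak{A}}^p_m$ and $\widetilde{\mathfrak{B}}^p_m$ if and only if the $h,h'$-images $(h(\overline{\mathfrak{u}}\bar c),h'(\overline{\mathfrak{v}}\bar d))$ form a partial isomorphism between $\widetilde{\mathfrak{A}}^{-p}_m$ and $\widetilde{\mathfrak{B}}^{-p}_m$, precisely because flipping $U$ on both sides at once does not change whether the $U$-memberships match up.

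Concretely, I would first make the leaf-flipping relation precise: using the inductive clauses of Definition~\ref{structures-over-TAU} together with the definition $\widetilde{\mathfrak{A}}^{-q}_m$ (edge colours exchanged) and the Remark, establish by induction on $|p|$ that there is an isomorphism of $\tau^+_0$-reducts $h\colon \widetilde{\mathfrak{A}}^p_m\vert\tau^+_0 \xrightarrow{\ \cong\ }\widetilde{\mathfrak{A}}^{-p}_m\vert\tau^+_0$ fixing $r$, under which a node of $\widetilde{\mathfrak{A}}^p_m$ is black iff its $h$-image is \emph{not} black; and symmetrically a $\tau^+_0$-isomorphism $h'\colon \widetilde{\mathfrak{B}}^p_m\vert\tau^+_0 \xrightarrow{\ \cong\ }\widetilde{\mathfrak{B}}^{-p}_m\vert\tau^+_0$ with the analogous leaf-colour behaviour. (The edge-colour exchange is exactly what distinguishes $\widetilde{\mathfrak{A}}^p_m$ from $\widetilde{\mathfrak{A}}^{-p}_m$ as $\tau^+$-structures, so on the $\tau^+_0$-reduct the only remaining discrepancy is the $U$-predicate, which flips uniformly.) Then I would check that $h,h'$ send the constant interpretation $\overline{\mathfrak{u}}$ of $\widetilde{\mathfrak{A}}^p_m$ to that of $\widetilde{\mathfrak{A}}^{-p}_m$ and similarly for $\widetilde{\mathfrak{B}}$.

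With $h$ and $h'$ in hand, the translation of strategies is routine: given a winning strategy $D$ for the duplicator in $G_S(\widetilde{\mathfrak{A}}^p_m,\widetilde{\mathfrak{B}}^p_m)$, define a strategy $D^{*}$ in $G_S(\widetilde{\mathfrak{A}}^{-p}_m,\widetilde{\mathfrak{B}}^{-p}_m)$ by transporting the spoiler's move back through $h^{-1}$ or $(h')^{-1}$, consulting $D$, and transporting the response forward through $h'$ or $h$. One shows by induction on the number of rounds that after each round the played tuples in the $(-p)$-game are exactly the $h,h'$-images of the played tuples in the simulated $p$-game. It remains to verify the winning condition is preserved: $(h(\overline{\mathfrak{u}}\bar c),h'(\overline{\mathfrak{v}}\bar d))$ is a partial isomorphism between $\widetilde{\mathfrak{A}}^{-p}_m$ and $\widetilde{\mathfrak{B}}^{-p}_m$ iff $(\overline{\mathfrak{u}}\bar c,\overline{\mathfrak{v}}\bar d)$ is one between $\widetilde{\mathfrak{A}}^p_m$ and $\widetilde{\mathfrak{B}}^p_m$. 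For the $\tau^+_0$-atoms this is immediate since $h,h'$ are $\tau^+_0$-isomorphisms; for atoms $U(x_i)$, membership on the left equals membership on the right because $U$ was flipped on \emph{both} structures, so the biconditional ``$c_i\in U^{\widetilde{\mathfrak{A}}^p_m}$ iff $d_i\in U^{\widetilde{\mathfrak{B}}^p_m}$'' is logically equivalent to ``$h(c_i)\in U^{\widetilde{\mathfrak{A}}^{-p}_m}$ iff $h'(d_i)\in U^{\widetilde{\mathfrak{B}}^{-p}_m}$''. This gives the forward direction $\widetilde{\mathfrak{A}}^p_m\leadsto_S\widetilde{\mathfrak{B}}^p_m \Rightarrow \widetilde{\mathfrak{A}}^{-p}_m\leadsto_S\widetilde{\mathfrak{B}}^{-p}_m$, and the converse follows by applying the same argument to $-p$ and using $\overline{\overline{p}}=p$.

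The main obstacle I anticipate is not conceptual but bookkeeping: pinning down cleanly, in the inductive step of the point-expansion construction, that the structure obtained from $\widetilde{\mathfrak{A}}^p_m$ by flipping \emph{only} the root-incident edge colours and all leaf colours is isomorphic (as a $\tau^+$-structure, hence also after reduct) to $\widetilde{\mathfrak{A}}^{-p}_m$ — this is the same style of argument already used in the proof of Lemma~\ref{Dual-Hintikka}, where $\mathfrak{D}^{A,B}_{q,m}$ and $\mathfrak{D}^{B,A}_{q,m}$ swap under colour exchange while $\mathfrak{D}^{A,A}_{q,m}$ is fixed, so I would lean on exactly that symmetry. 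Everything else is a mechanical strategy-transport argument of the kind already carried out for Lemma~\ref{m2m-} and Lemma~\ref{composition_of_strategies}.
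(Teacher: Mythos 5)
Your overall plan --- transport the duplicator's strategy through a pair of bijections that relabel both structures in the same uniform way --- has the right shape, but the specific bijections you build it on do not exist, because you have conflated two distinct symmetries of Definition~\ref{structures-over-TAU}. The Remark you invoke relates $\widetilde{\mathfrak{A}}^p_m$ to $\widetilde{\mathfrak{B}}^{\bar p}_m$, where $\bar p$ is the \emph{dual prefix} ($\exists\leftrightarrow\forall$), via a flip of the unary predicate $U$. The lemma, however, concerns $\widetilde{\mathfrak{A}}^{-p}_m$, which by definition is $\widetilde{\mathfrak{A}}^{p}_m$ on the \emph{same universe, with the same} $U$, and with only the edge colours $R$ and $B$ interchanged. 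Consequently your claimed map $h\colon \widetilde{\mathfrak{A}}^p_m|\tau^+_0\to\widetilde{\mathfrak{A}}^{-p}_m|\tau^+_0$ --- a $\tau^+_0$-isomorphism fixing $r$ and toggling blackness --- cannot exist: already for $p=\exists$ all edges of $\widetilde{\mathfrak{A}}^{\exists}_m$ are red while all edges of $\widetilde{\mathfrak{A}}^{-\exists}_m$ are blue, so no map preserving $R$ and $B$ separately (which a $\tau^+_0$-isomorphism must do, since $\tau^+_0=\tau^+\setminus\{U\}$ \emph{contains} $R$ and $B$ and omits $U$, the opposite of what your parenthetical assumes) can relate them; and blackness is in fact identical, not flipped, between $\widetilde{\mathfrak{A}}^p_m$ and $\widetilde{\mathfrak{A}}^{-p}_m$. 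The same objections apply to your $h'$.

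The repair is simpler than what you attempted and is the paper's actual argument: $\widetilde{\mathfrak{A}}^{-p}_m$ and $\widetilde{\mathfrak{B}}^{-p}_m$ are obtained from $\widetilde{\mathfrak{A}}^{p}_m$ and $\widetilde{\mathfrak{B}}^{p}_m$ by applying the \emph{same} permutation of the signature ($R\leftrightarrow B$) to both structures, leaving universes, $U$ and $r$ untouched. Hence a pair of played tuples defines a partial isomorphism between $\widetilde{\mathfrak{A}}^{p}_m$ and $\widetilde{\mathfrak{B}}^{p}_m$ if and only if it defines one between $\widetilde{\mathfrak{A}}^{-p}_m$ and $\widetilde{\mathfrak{B}}^{-p}_m$ (every atomic check is renamed consistently on both sides), and the duplicator can literally reuse the same strategy function on the same universes. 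If you wish to keep your transport machinery, the maps to transport through are the identity maps, which swap $R$ with $B$ and preserve $U$ --- not $\tau^+_0$-isomorphisms that flip $U$.
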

\begin{proof}
According to the definitions, $\widetilde{\mathfrak{A}}^{-p}_m$ ( $\widetilde{\mathfrak{B}}^{-p}_m$ resp.) is similar to $\widetilde{\mathfrak{A}}^p_m$ ($\widetilde{\mathfrak{B}}^p_m$ resp.) except that the colours of their edges are exchanged. Therefore, the duplicator can mimic her winning strategy in one game when she plays in the other. 
\end{proof}

\begin{defi}
Let $\mathfrak{A}\bigoplus_a\mathfrak{B}$, a structure over signature $\sigma$, be the join of two disjoint $\sigma$-substructures $\mathfrak{A}$ and $\mathfrak{B}$ at its element $a$. That is, \begin{enumerate}

\item $|\mathfrak{A}|\cap|\mathfrak{B}|=\{a\}$;
\item $|\mathfrak{\mathfrak{A}\bigoplus_a\mathfrak{B}}|=|\mathfrak{A}|\cup|\mathfrak{B}|$.
\item For any $R\in \sigma$, $R^{\mathfrak{A}\bigoplus_a\mathfrak{B}}:=R^{\mathfrak{A}}\cup R^{\mathfrak{B}}$.
\end{enumerate} 
    \end{defi}

\begin{lem} \label{pick-in-isom}
 Let $S$ be a $\Gamma$-labelled forest. $\mathfrak{A}\bigoplus_a\mathfrak{B}\leadsto_S \mathfrak{A}\bigoplus_b\mathfrak{B}^{\prime}$ if there is an automorphism $h$ of $\mathfrak{A}$ s.t.\ $b=h(a)$ and $\mathfrak{B}\leadsto_S\mathfrak{B}^{\prime}$.
\end{lem}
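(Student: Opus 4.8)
The plan is to build a winning strategy for the duplicator in $G_S(\mathfrak{A}\bigoplus_a\mathfrak{B},\,\mathfrak{A}\bigoplus_b\mathfrak{B}^{\prime})$ by running two strategies side by side: on the $\mathfrak{A}$-part of the board the duplicator answers through the automorphism $h$ (using $h$ when the spoiler moves in the left structure and $h^{-1}$ when he moves in the right one), while on the $\mathfrak{B}$-part she copies a fixed winning strategy $D:=D^S_{\mathfrak{B},\mathfrak{B}^{\prime}}$ witnessing $\mathfrak{B}\leadsto_S\mathfrak{B}^{\prime}$. Concretely, whenever the spoiler picks an element $x$, the duplicator checks on which side of the join $x$ lies and answers by the corresponding rule; a move on the shared element $a$ (resp.\ $b$) is answered by $b$ (resp.\ $a$). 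Forwarding a $\mathfrak{B}$-move to $D$ is legitimate: the label of the current node of $S$ forces the same ``direction'' (which of the two structures the spoiler picks from) in the auxiliary game $G_S(\mathfrak{B},\mathfrak{B}^{\prime})$, and by Lemma~\ref{m2m-} the strategy $D$ stays winning even when the rounds in which the spoiler played inside the $\mathfrak{A}$-part are simply skipped in that auxiliary game.

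I would then verify that this composed strategy preserves partial isomorphisms round by round. After any number of rounds the chosen pairs split into the ``$\mathfrak{A}$-pairs'', all lying on the graph of the automorphism $h$, and the ``$\mathfrak{B}$-pairs'', which by the choice of $D$ form a partial isomorphism between $\mathfrak{B}$ and $\mathfrak{B}^{\prime}$. The structural fact that lets one glue these two relations into a single partial isomorphism is that $R^{\mathfrak{A}\bigoplus_a\mathfrak{B}}=R^{\mathfrak{A}}\cup R^{\mathfrak{B}}$ for every relation symbol $R$: a tuple that uses an element of $|\mathfrak{A}|\setminus\{a\}$ together with an element of $|\mathfrak{B}|\setminus\{a\}$ lies in no relation on either side, so such ``mixed'' tuples are matched vacuously, while tuples living entirely in the $\mathfrak{A}$-part are matched by $h$ and tuples living entirely in the $\mathfrak{B}$-part are matched by $D$. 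The one place needing genuine care is the shared element $a\leftrightarrow b$: I must check that adjoining the pair $(a,b)$ to the $\mathfrak{B}$-partial-isomorphism keeps it a partial isomorphism. Since $h(a)=b$, it suffices to take $D$ so that it matches the gluing point of $\mathfrak{B}$ with that of $\mathfrak{B}^{\prime}$; this can be assumed, and is automatic when the gluing point is named by a constant, as partial isomorphisms respect constants.

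An alternative and essentially equivalent route is to reduce to Lemma~\ref{composition_of_strategies}: both $\mathfrak{A}\bigoplus_a\mathfrak{B}$ and $\mathfrak{A}\bigoplus_b\mathfrak{B}^{\prime}$ are point-expansions of the common prototype $\mathfrak{A}$ — one obtained by gluing $\mathfrak{B}$ at the single element $a$ and the trivial one-element structure $\mathcal I$ at every other element, the other by gluing $\mathfrak{B}^{\prime}$ at $b$ and $\mathcal I$ everywhere else — and $h$ induces a winning strategy for the duplicator in $G_S(\mathfrak{A},\mathfrak{A})$; the two hypotheses of that lemma then reduce either to $\mathfrak{B}\leadsto_S\mathfrak{B}^{\prime}$ (when the element under consideration is $a$, using $h(a)=b$ to see that the structure glued on the right is $\mathfrak{B}^{\prime}$) or to the trivial game $G_S(\mathcal I,\mathcal I)$. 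I expect the main obstacle to be exactly the bookkeeping at the shared point, i.e.\ making the two parallel strategies agree there and checking that the union of the two partial isomorphisms is again one; once that is settled everything else is routine, since relations in a join never connect the two sides.
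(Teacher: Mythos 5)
Your proposal is correct and essentially coincides with the paper's proof: the ``alternative route'' you sketch (viewing both joins as point-expansions of the prototype $\mathfrak{A}$ over $\{\mathcal I,\check{\mathfrak{B}}\}$ and $\{\mathcal I,\check{\mathfrak{B}}^{\prime}\}$ and invoking Lemma~\ref{composition_of_strategies} via the automorphism $h$) is exactly what the paper does, and your primary ``run two strategies side by side'' argument is just that composition lemma unfolded for this special case. You also correctly flag the one real subtlety --- that $D$ must match the gluing point of $\mathfrak{B}$ with that of $\mathfrak{B}^{\prime}$ --- which the paper handles by adjoining the hook constant $c_B$ (and which is indeed automatic in all of the lemma's applications, where the junction point interprets a constant).
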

\begin{proof}
Assume that both $\mathfrak{B}$ and $\mathfrak{B}^{\prime}$ are $\sigma$-structures.  Let $\sigma^+:=\sigma\cup\{c_B\}$ where $c_B$ is called a hook. Let   $\check{\mathfrak{B}}$ ($\check{\mathfrak{B}}^{\prime}$ resp.) be an expansion of $\mathfrak{B}$ ($\mathfrak{B}^{\prime}$ resp.) to $\sigma^+$.  Let $\mathcal{K}=\{\mathcal{I},\check{\mathfrak{B}}\}$ and  $\mathcal{E}_{\mathcal{K}}^{\gimel}(\mathfrak{A})$ be a point-expansion of $\mathfrak{A}$ over $\mathcal{K}$ defined by $\gimel$ such that the element $a$ of $\mathfrak{A}$ is expanded by  $\check{\mathfrak{B}}$ and all the other elements are expanded by $\mathcal{I}$. Similarly, let $\mathcal{K}^{\prime}=\{\mathcal{I},\check{\mathfrak{B}}^{\prime}\}$ and  $\mathcal{E}_{\mathcal{K}^{\prime}}^{\gimel^{\prime}}(\mathfrak{A})$ be a point-expansion of $\mathfrak{A}$ over $\mathcal{K}^{\prime}$ defined by $\gimel^{\prime}$ such that the element $b$ of $\mathfrak{A}$ is expanded by  $\check{\mathfrak{B}}^{\prime}$ and all the other elements are expanded by $\mathcal{I}$. If there is an automorphism $h$ of $\mathfrak{A}$ s.t.\ $b=h(a)$, then by Lemma \ref{composition_of_strategies} the following holds: \[\mathcal{E}_{\mathcal{K}}^{\gimel}(\mathfrak{A})\leadsto_S \mathcal{E}_{\mathcal{K}^{\prime}}^{\gimel^{\prime}}(\mathfrak{A}) \hspace{4pt}\mathrm{if}\hspace{4pt} \mathfrak{B}\leadsto_S \mathfrak{B}^{\prime}.\]
Observe that $\mathcal{E}_{\mathcal{K}}^{\gimel}(\mathfrak{A})$ is exactly $\mathfrak{A}\bigoplus_a\mathfrak{B}$ and $\mathcal{E}_{\mathcal{K}^{\prime}}^{\gimel^{\prime}}(\mathfrak{A})$ is exactly $\mathfrak{A}\bigoplus_b\mathfrak{B}^{\prime}$. Hence, the lemma holds.
\end{proof}

\begin{figure}[h]
\centering
\includegraphics[trim = 45mm 100mm 45mm 100mm, clip, scale=0.52]{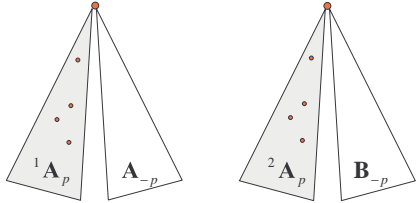}
\caption{$(\mathfrak{D}^{A,A}_{p,m},\bar a)$ and $(\mathfrak{D}^{A,B}_{p,m},\bar b)$ where $\bar a$ includes $r^{^1\widetilde{\mathfrak{A}}^p_m}$ and $\bar b$ includes $r^{^2\widetilde{\mathfrak{A}}^p_m}$. }
\label{Mini}
\end{figure}

We use $^1\widetilde{\mathfrak{A}}^p_m$ and $^2\widetilde{\mathfrak{A}}^p_m$ to denote two isomorphic copies of $\widetilde{\mathfrak{A}}^p_m$ in $\mathfrak{D}^{A,A}_{p,m}$ and $\mathfrak{D}^{A,B}_{p,m}$ respectively (see Figure \ref{Mini}. Note that  in the picture we use ``$^1$A$_{p}$'' to denote $^1\widetilde{\mathfrak{A}}^{p}_m$.). The superscripts ``1'' and ``2'' are used to distinguish these two copies. Let $\bar a\in |^1\widetilde{\mathfrak{A}}^p_m|^k$ and $\bar b\in |^2\widetilde{\mathfrak{A}}^p_m|^k$. The following lemma is a special case of Lemma \ref{pick-in-isom}.
\begin{lem} \label{constants-in-isom}
Assume that $(^1\widetilde{\mathfrak{A}}^p_m,\bar a)\cong(^2\widetilde{\mathfrak{A}}^p_m,\bar b)$. Then, \[(\mathfrak{D}^{A,A}_{p,m},\bar a)\leadsto_S(\mathfrak{D}^{A,B}_{p,m},\bar b) \hspace{4pt}\mathrm{if}\hspace{4pt} \widetilde{\mathfrak{A}}^{-p}_m\leadsto_S \widetilde{\mathfrak{B}}^{-p}_m\]
\end{lem}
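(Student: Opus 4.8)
The plan is to present both sides of the claimed $\leadsto_S$-relation as a join of two trees at a common root and then apply Lemma \ref{pick-in-isom}. By Definition \ref{structures-over-TAU}, $\mathfrak{D}^{A,A}_{p,m}$ is the join of the copy $^1\widetilde{\mathfrak{A}}^p_m$ of $\widetilde{\mathfrak{A}}^p_m$ and a copy of $\widetilde{\mathfrak{A}}^{-p}_m$, glued at their common root, which is the junction point interpreting $\mathfrak{e}$; likewise $\mathfrak{D}^{A,B}_{p,m}$ is the join of $^2\widetilde{\mathfrak{A}}^p_m$ and a copy of $\widetilde{\mathfrak{B}}^{-p}_m$ at their common root. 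The tuples $\bar a$ and $\bar b$ lie entirely inside the summands $^1\widetilde{\mathfrak{A}}^p_m$ and $^2\widetilde{\mathfrak{A}}^p_m$.

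First I would invoke the hypothesis: fix an isomorphism $\rho$ witnessing $(^1\widetilde{\mathfrak{A}}^p_m,\bar a)\cong(^2\widetilde{\mathfrak{A}}^p_m,\bar b)$. Since a tree has a unique node with no incoming arc, $\rho$ sends the root of $^1\widetilde{\mathfrak{A}}^p_m$ (the junction point of $\mathfrak{D}^{A,A}_{p,m}$) to the root of $^2\widetilde{\mathfrak{A}}^p_m$ (the junction point of $\mathfrak{D}^{A,B}_{p,m}$). Identifying the two copies of $\widetilde{\mathfrak{A}}^p_m$ along $\rho$, the pair $((\mathfrak{D}^{A,A}_{p,m},\bar a),(\mathfrak{D}^{A,B}_{p,m},\bar b))$ becomes an instance of the configuration of Lemma \ref{pick-in-isom}: the common summand is the copy $^2\widetilde{\mathfrak{A}}^p_m$ of $\widetilde{\mathfrak{A}}^p_m$ carrying the constants $\bar b$, the gluing point is its root, the automorphism is the identity, and the two varying summands are $\widetilde{\mathfrak{A}}^{-p}_m$ and $\widetilde{\mathfrak{B}}^{-p}_m$. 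The premise ``$\mathfrak{B}\leadsto_S\mathfrak{B}^{\prime}$'' of Lemma \ref{pick-in-isom} is exactly the hypothesis $\widetilde{\mathfrak{A}}^{-p}_m\leadsto_S\widetilde{\mathfrak{B}}^{-p}_m$; note that $\tau^+$ contains the constant $r$ interpreted by the roots, so any winning strategy in $G_S(\widetilde{\mathfrak{A}}^{-p}_m,\widetilde{\mathfrak{B}}^{-p}_m)$ already matches the two gluing points. The resulting winning strategy for the duplicator in $G_S((\mathfrak{D}^{A,A}_{p,m},\bar a),(\mathfrak{D}^{A,B}_{p,m},\bar b))$ is the composition guaranteed by Lemma \ref{composition_of_strategies}: use $\rho$ (and $\rho^{-1}$) whenever the spoiler plays inside the $\widetilde{\mathfrak{A}}^p_m$-summand, and use the winning strategy for $\widetilde{\mathfrak{A}}^{-p}_m\leadsto_S\widetilde{\mathfrak{B}}^{-p}_m$ whenever he plays inside the other summand.

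I expect the only delicate point — and the reason the lemma is a special case of, rather than a literal instance of, Lemma \ref{pick-in-isom} — to be checking that this composition preserves partial isomorphisms at the shared junction point. There the two prescriptions agree, since both match the junction point of $\mathfrak{D}^{A,A}_{p,m}$ with that of $\mathfrak{D}^{A,B}_{p,m}$; and since each $\mathfrak{D}$ is a join of two trees at their roots, every arc among the chosen elements is either internal to one summand (handled by $\rho$, an isomorphism of $\tau^+$-structures, on one side and by the $\widetilde{\mathfrak{A}}^{-p}_m$-strategy on the other) or incident to the junction point (handled consistently by the common match at that point). The extra constants $\bar a,\bar b$ merely ride along on the prototype of the point-expansion underlying Lemma \ref{pick-in-isom}, which Definition \ref{point-expansion} already allows, so the proof of Lemma \ref{pick-in-isom} via Lemma \ref{composition_of_strategies} carries over unchanged. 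Everything else is routine assembly of the cited lemmas.
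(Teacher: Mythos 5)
Your proposal is correct and follows exactly the route the paper intends: the paper gives no separate proof, merely declaring the lemma ``a special case of Lemma \ref{pick-in-isom}'', and your write-up spells out precisely that reduction (identify the two constant-carrying copies of $\widetilde{\mathfrak{A}}^p_m$ along the isomorphism $\rho$, take the gluing point to be the shared root/junction point, and feed $\widetilde{\mathfrak{A}}^{-p}_m\leadsto_S\widetilde{\mathfrak{B}}^{-p}_m$ into Lemma \ref{pick-in-isom} via the strategy composition of Lemma \ref{composition_of_strategies}). The extra care you take with the constants $\bar a,\bar b$ and with the junction points being matched is exactly the detail the paper leaves implicit.
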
\vspace{2 pt}

\noindent Clearly, for two $\Gamma$-labelled forests $S_1$ and $S_2$, $\mathscr{W}(S_1)\subseteq \mathscr{W}(S_2)$ if $S_1\preceq_e S_2$.
Let $f^p_m:=\{q\in \gamma^-(f(p))|\hspace{3pt}|q|\leq m\}$.
\begin{lem} \label{EF-Game}
Let $p\in \Gamma^*$ and $m\in\mathbb{N}^+$, then the following holds:
\begin{enumerate}
\item The duplicator has a winning strategy in $G_{\mathscr{F}(f^p_m)}(\widetilde{\mathfrak{A}}^p_m,\widetilde{\mathfrak{B}}^p_m)$;
\item  For any $\psi$ such that $\mathscr{W}(qs(\psi))\subseteq f^p_m$, we have that 
\[
\widetilde{\mathfrak{A}}^p_m\models \psi\Rightarrow\widetilde{\mathfrak{B}}^p_m\models \psi.
\]
\end{enumerate}
\end{lem}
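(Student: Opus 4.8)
The plan is to prove both parts simultaneously by induction on $|p|$, since part (ii) follows from part (i) by Theorem~\ref{qs-game} (with empty parameter tuples): if the duplicator wins $G_{\mathscr{F}(f^p_m)}(\widetilde{\mathfrak{A}}^p_m,\widetilde{\mathfrak{B}}^p_m)$, then $tp^{\mathscr{F}(f^p_m)}(\widetilde{\mathfrak{A}}^p_m)\subseteq tp^{\mathscr{F}(f^p_m)}(\widetilde{\mathfrak{B}}^p_m)$, and since any $\psi$ with $\mathscr{W}(qs(\psi))\subseteq f^p_m=\mathscr{W}(\mathscr{F}(f^p_m))$ (using Lemma~\ref{forest2word} together with the fact that $f^p_m$ is downward closed in the relevant sense) has $qs(\psi)\preceq_e \mathscr{F}(f^p_m)$ by Lemma~\ref{words2forest}, the implication $\widetilde{\mathfrak{A}}^p_m\models\psi\Rightarrow\widetilde{\mathfrak{B}}^p_m\models\psi$ follows. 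So the real content is part (i).

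For the base case $|p|=1$: if $p=\exists$, then $\widetilde{\mathfrak{A}}^\exists_m$ is a depth-$1$ tree with $2m+1$ leaves one of which is black, and $\widetilde{\mathfrak{B}}^\exists_m$ has $2m$ white leaves. By Lemma~\ref{roson}, $\gamma^-(f(\exists))$ omits exactly those $q$ with $\exists\preceq q$, i.e.\ $f^\exists_m$ contains only prefixes of the form $\forall^j$ with $j\le m$, so $\mathscr{F}(f^\exists_m)$ is a single $\mathscr{A}$-branch of length $\le m$. The duplicator must survive a game where the spoiler only ever picks in $\widetilde{\mathfrak{B}}$; since all $2m$ leaves of $\widetilde{\mathfrak{B}}$ are white, the duplicator answers with white leaves of $\widetilde{\mathfrak{A}}$ (there are $2m$ of them, enough for $m$ rounds after possibly fixing the root), maintaining a partial isomorphism. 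The case $p=\forall$ is symmetric, or follows from it via Lemma~\ref{p2-p} applied to the observation in the Remark after Definition~\ref{structures-over-TAU}.

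For the inductive step, write $p=\exists q$ (the case $p=\forall q$ being dual via Lemma~\ref{p2-p}). The structures $\widetilde{\mathfrak{A}}^p_m$ and $\widetilde{\mathfrak{B}}^p_m$ are point-expansions of $\mathcal{T}^{\mathfrak{A}}_{\exists,m}$ and $\mathcal{T}^{\mathfrak{B}}_{\exists,m}$ (which are depth-$1$ trees, and which agree with the structures of the base case up to the $U$-reduct) over the family $\mathcal{K}=\{\mathcal{I},\mathfrak{D}^{A,A}_{q,m},\mathfrak{D}^{A,B}_{q,m},\mathfrak{D}^{B,A}_{q,m}\}$, with the leaf labelled by $\mathfrak{D}^{A,A}$ present in $\widetilde{\mathfrak{A}}$ but absent in $\widetilde{\mathfrak{B}}$. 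The strategy is to apply Lemma~\ref{composition_of_strategies}: I would use the base-case-style duplicator strategy on the "skeleton" game between the depth-$1$ trees (matching the unique $\mathfrak{D}^{A,A}$-leaf of $\widetilde{\mathfrak{A}}$ with some $\mathfrak{D}^{A,B}$- or $\mathfrak{D}^{B,A}$-leaf of $\widetilde{\mathfrak{B}}$, and otherwise matching like-for-like), and then on each pair of attached substructures invoke a winning strategy coming from the induction hypothesis. The key point making the subgames winnable is Lemma~\ref{constants-in-isom}: $(\mathfrak{D}^{A,A}_{q,m},\bar a)\leadsto_S(\mathfrak{D}^{A,B}_{q,m},\bar b)$ whenever the hook-tuples match and $\widetilde{\mathfrak{A}}^{-q}_m\leadsto_S\widetilde{\mathfrak{B}}^{-q}_m$, where $S=\mathscr{F}(f^p_m)$; and by Lemma~\ref{p2-p} the hypothesis $\widetilde{\mathfrak{A}}^q_m\leadsto_S\widetilde{\mathfrak{B}}^q_m$ is equivalent to $\widetilde{\mathfrak{A}}^{-q}_m\leadsto_S\widetilde{\mathfrak{B}}^{-q}_m$, combined with Lemma~\ref{m2m-} to pass from the forest $\mathscr{F}(f^q_m)$ guaranteed by induction to the subforest that actually appears after the spoiler descends one level in $\mathscr{F}(f^p_m)$.

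The main obstacle is the bookkeeping around which forest governs the subgames: the induction hypothesis gives the duplicator a win in $G_{\mathscr{F}(f^q_m)}$, but after one round of $G_{\mathscr{F}(f^p_m)}$ the remaining game-forest is the "shift" $\mathscr{F}(f^p_m)^{-1}$ obtained by deleting the root, and one must check that $\mathscr{W}$ of this shifted forest is contained in $(f^q_m)^-$ so that Lemma~\ref{m2m-} applies. This amounts to the identity $f(p)=f(\exists q)=\forall^*\,\exists\,f(q)$ from Definition~\ref{function-f} together with Lemma~\ref{roson}: deleting one letter from a word of $\gamma^-(f(\exists q))$ of length $\le m$ and restricting to the appropriate branch lands in $\gamma^-(f(q))$ with length $\le m-1\le m$. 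Once this is verified, Lemma~\ref{composition_of_strategies} assembles the skeleton strategy and the subgame strategies into a winning strategy for the duplicator in $G_{\mathscr{F}(f^p_m)}(\widetilde{\mathfrak{A}}^p_m,\widetilde{\mathfrak{B}}^p_m)$, completing the induction; I would handle the case $p=\forall q$ by flipping edge colours and citing Lemma~\ref{p2-p} rather than repeating the argument.
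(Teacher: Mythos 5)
Your overall architecture — reducing (ii) to (i) via Theorem~\ref{qs-game} together with Lemmas~\ref{forest2word}, \ref{words2forest} and \ref{m2m-}, inducting on $|p|$, and composing a skeleton strategy on the depth-$1$ trees with subgame strategies via Lemma~\ref{composition_of_strategies} — is the paper's, and your base case is fine. But there is a genuine gap in the inductive step: you treat $p=\exists q$ uniformly, whereas the argument must split into $p=\exists\forall q'$ and $p=\exists\exists q'$. Your identity $f(\exists q)=\forall^*\,\exists\,f(q)$ holds only when $q$ begins with $\exists$; when $q$ begins with $\forall$ one has $f(\exists q)=\forall^*\,f(q)$, with no extra existential, and that is the \emph{easy} case. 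The case you have not handled is $p=\exists\exists q'$, where $\mathscr{F}(f^p_m)$ gives the spoiler one existential move \emph{on top of} the resource $f(\exists q')$ that governs the subgames. Concretely, take $p=\exists\exists$: the single-letter word $\exists$ lies in $f^p_m$, and your skeleton matching commits the unique $\mathfrak{D}^{A,A}_{\exists,m}$-leaf of $\widetilde{\mathfrak{A}}^p_m$ to some fixed $\mathfrak{D}^{A,B}_{\exists,m}$-leaf of $\widetilde{\mathfrak{B}}^p_m$. The spoiler spends his existential move on the black blue-child inside the $\widetilde{\mathfrak{A}}^{-\exists}_m$-half of $\mathfrak{D}^{A,A}_{\exists,m}$; since $\mathfrak{D}^{A,B}_{\exists,m}=\widetilde{\mathfrak{A}}^{\exists}_m\bigoplus\widetilde{\mathfrak{B}}^{-\exists}_m$ has no black blue-child, the composed strategy you describe loses immediately. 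Lemma~\ref{constants-in-isom} cannot repair this: it needs $\widetilde{\mathfrak{A}}^{-\exists q'}_m\leadsto_S\widetilde{\mathfrak{B}}^{-\exists q'}_m$ for the forest $S$ that actually governs the subgame, and that forest still carries the extra $\exists$, under which those two structures \emph{are} distinguishable — that is the whole point of the construction.

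The missing idea is that the routing of the $\mathfrak{D}^{A,A}$-leaf must be \emph{adaptive}: if the spoiler's existential move enters the $\widetilde{\mathfrak{A}}^{\exists q'}_m$-half of $\mathfrak{D}^{A,A}_{\exists q',m}$ the duplicator answers in the isomorphic half of a $\mathfrak{D}^{A,B}_{\exists q',m}$-copy, and if it enters the $\widetilde{\mathfrak{A}}^{-\exists q'}_m$-half she answers in the isomorphic half of a $\mathfrak{D}^{B,A}_{\exists q',m}$-copy, so that the extra existential is always wasted on an isomorphic pair (Lemma~\ref{pick-in-isom}) and only the resource $f^{\exists q'}_m$ remains for the non-isomorphic halves, where the induction hypothesis together with Lemmas~\ref{p2-p} and \ref{constants-in-isom} applies. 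This is precisely the paper's case distinction between ``picking the root or a junction point'' and ``picking inside the structure'' for $p=\exists\exists q$, which your single-case treatment collapses. A smaller point: reducing $p[1]=\forall$ to $p[1]=\exists$ needs the duality $f(p)=\overline{f(\bar p)}$ of Lemma~\ref{prefix} combined with the black/white flip noted after Definition~\ref{structures-over-TAU}, not just the colour-flip Lemma~\ref{p2-p}, which relates $p$ to $-p$ rather than to $\bar p$.
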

\begin{proof}
By Lemma \ref{words2forest}, $qs(\psi)\preceq_e \mathscr{F}(f^p_m)$. Hence by Theorem \ref{qs-game} and Lemma \ref{m2m-}, {(2)} is implied by {(1)} and we need only prove {(1)}.
\vspace{3 pt}

Note that $\widetilde{\varphi}_p$ is obtained from $\lnot \widetilde{\varphi}_{\bar p}$ by adding a ``$\lnot$'' before all the occurrences of the unary predicate $U$. And  $\widetilde{\varphi}_p$ is equivalent to a sentence in $\fo\{f(p)\}\hspace{1pt}$ iff $\lnot \widetilde{\varphi}_{\bar p}$ is equivalent to a sentence in $\fo\{f(p)\}\hspace{1pt}$ iff $\hspace{1pt}\widetilde{\varphi}_{\bar p}$ is equivalent to a sentence in $\fo\{f(\bar p)\}$ (the second ``iff'' is due to Lemma \ref{prefix}). That is,  the duplicator has a winning strategy in $G_{\mathscr{F}(f^p_m)}(\widetilde{\mathfrak{A}}^p_m,\widetilde{\mathfrak{B}}^p_m)$ iff she has a winning strategy in $G_{\mathscr{F}(f^{\bar p}_m)}(\widetilde{\mathfrak{A}}^{\bar p}_m,\widetilde{\mathfrak{B}}^{\bar p}_m)$.
 Therefore, we need only consider the case when $p[1]=\exists$ since the lemma holds when $p[1]=\exists$ iff it holds when $p[1]=\forall$.  Soon we shall see that the case when  $p[2]=\forall$ is different from the case when $p[2]=\exists$, provided that $p[1]=\exists$. Therefore, we discuss them separately.   

Let $|p|=d$. When $d=1$ i.e.\ $p=\exists$, $f(p)=\forall^*$. In the game, the spoiler can only pick at most $m$ distinct elements in $\widetilde{\mathfrak{B}}^p_m$. And $\widetilde{\mathfrak{A}}^p_m$ has $2m$ distinct elements that are not black. Hence, the duplicator is able to mimic the spoiler's picking as follows. She picks the root of $\widetilde{\mathfrak{A}}^p_m$ if the spoiler picks the root of $\widetilde{\mathfrak{B}}^p_m$; she picks a leaf that is not black in $\widetilde{\mathfrak{A}}^p_m$ if the spoiler picks a leaf of $\widetilde{\mathfrak{B}}^p_m$. That is, the duplicator has a winning strategy in this game. Similarly, when $p=\forall$, the duplicator also win the game. 

Assume that it holds when $d\leq k$ for some $k\geq 1$. That is, the duplicator has a winning strategy in the game $G_{\mathscr{F}(f^p_m)}(\widetilde{\mathfrak{A}}^p_m,\widetilde{\mathfrak{B}}^p_m)$ for any $|p|\leq k$ and any $m\in \mathbb{N}^+$.
 
Assume that $p=\exists\forall q$ where $|q|=k-1$. Then $f(p)=\forall^* *f(\forall q)$. This case is relatively easy to explain.

 The strategy of the duplicator in the game $G_{\mathscr{F}(f^{\forall q}_m)}(\mathcal{T}^{\mathfrak{A}}_{\exists,m},\mathcal{T}^{\mathfrak{B}}_{\exists,m})$ is very simple: \begin{enumerate}[label=(\Roman*)] 
\item If the spoiler picks the junction point of $\mathfrak{D}^{A,A}_{\forall q,m}$ that is a leaf of $\mathcal{T}^{\mathfrak{A}}_{\exists,m}$, the duplicator replies with a junction point of one copy of $\mathfrak{D}^{A,B}_{\forall q,m}$, called $\mathfrak{D}_A$, that is a leaf of  $\mathcal{T}^{\mathfrak{B}}_{\exists,m}$.
\item If the spoiler picks a junction point of $\mathfrak{D}^{A,B}_{\forall q,m}$ or $\mathfrak{D}^{B,A}_{\forall q,m}$ that is a leaf of one tree and that is not $\mathfrak{D}_A$, the duplicator replies with a junction point of an isomorphic copy that is a leaf of the other tree.
\item If the spoiler picks an element, say $a$, which has been picked before, the duplicator picks $b$, which was picked in the same round when $a$ was picked.

\end{enumerate}

\noindent We can regard the words that can be read off $\mathscr{F}(f^p_m)$ as the ``resource'' that the spoiler can use to detect the difference between the structures. Note that the first universal quantifier block in the words that can possibly be read off $\mathscr{F}(f^p_m)$ is useless for the spoiler: no matter how he picks in $\widetilde{\mathfrak{B}}^p_m$, the duplicator always picks an isomorphic substructure and mimics the spoiler's picks in the isomorphic substructure. By Lemma \ref{pick-in-isom}, if the spoiler can win in the end, he can also win if the players do not pick in these isomorphic substructures.

Let $Q:=\{s\in f^{\forall q}_j\mid 0\leq j\leq m\}$. Clearly, $Q\preceq f^{\forall q}_m$. 

Observe that the strategy described before is a winning strategy for the duplicator in the game $G_{\mathscr{F}(f^{\forall q}_m)}(\mathcal{T}^{\mathfrak{A}}_{\exists,m},\mathcal{T}^{\mathfrak{B}}_{\exists,m})$. By induction hypothesis, she also has a winning strategy in the game $G_{\mathscr{F}(f^{\forall q}_m)}(\widetilde{\mathfrak{A}}^{\forall q}_m,\widetilde{\mathfrak{B}}^{\forall q}_m)$. Hence, by Lemma \ref{constants-in-isom} and Lemma \ref{p2-p}, $\mathfrak{D}^{A,A}_{\forall q,m}\leadsto_{f^{\forall q}_m} \mathfrak{D}^{A,B}_{\forall q,m}$. Moreover, the winning strategies of the duplicator in the games $G_{\mathscr{F}(f^{\forall q}_m)}(\mathcal{T}^{\mathfrak{A}}_{\exists,m},\mathcal{T}^{\mathfrak{B}}_{\exists,m})$ and $G_{\mathscr{F}(f^{\forall q}_m)}(\mathfrak{D}^{A,A}_{\forall q,m},\mathfrak{D}^{A,B}_{\forall q,m})$ can be combined together: if the spoiler picks a leaf of $\mathcal{T}^{\mathfrak{B}}_{\exists,m}$ ($\mathcal{T}^{\mathfrak{A}}_{\exists,m}$ resp.) that is the junction point of $\mathfrak{D}^{A,B}_{\forall q,m}$ or $\mathfrak{D}^{B,A}_{\forall q,m}$ (not $\mathfrak{D}_A$), the duplicator picks a leaf of $\mathcal{T}^{\mathfrak{A}}_{\exists,m}$ ($\mathcal{T}^{\mathfrak{B}}_{\exists,m}$ resp.) that is the junction point of an isomorphic structure; if the spoiler picks a leaf of $\mathcal{T}^{\mathfrak{A}}_{\exists,m}$ that is the junction point of $\mathfrak{D}^{A,A}_{\forall q,m}$, the duplicator picks  a leaf of $\mathcal{T}^{\mathfrak{B}}_{\exists,m}$ that is the junction point of $\mathfrak{D}_A$. Therefore, she has a combined winning strategy in the game $G_{\mathscr{F}(f^{\forall q}_m)}(\widetilde{\mathfrak{A}}^{p}_m,\widetilde{\mathfrak{B}}^{p}_m)$, by  Lemma \ref{composition_of_strategies}. 
By  Lemma \ref{m2m-}, she also has a winning strategy in the game $G_{\mathscr{F}(Q)}(\widetilde{\mathfrak{A}}^{p}_m,\widetilde{\mathfrak{B}}^{p}_m)$. Note that $f^p_m=\{s\in\Gamma^m\mid s=\forall^i*s^{\prime}$ where $i\leq m$ and $s^{\prime}\in Q\}$. In other words, if we remove the path initiating from a root in $\mathscr{F}(f^p_m)$ where the word that can be read off the path is $\forall^m$, we can turn $\mathscr{F}(f^p_m)$ into $\mathscr{F}(Q)$.  Recall that the first universal quantifier block in the words that can be read off $\mathscr{F}(f^p_m)$ is useless for the spoiler. Therefore, the duplicator has a winning strategy in the game $G_{\mathscr{F}(f^p_m)}(\widetilde{\mathfrak{A}}^p_m,\widetilde{\mathfrak{B}}^p_m)$.

\begin{figure}[h]
\centering
\includegraphics[trim = 15mm 20mm 15mm 35mm, clip, scale=0.52]{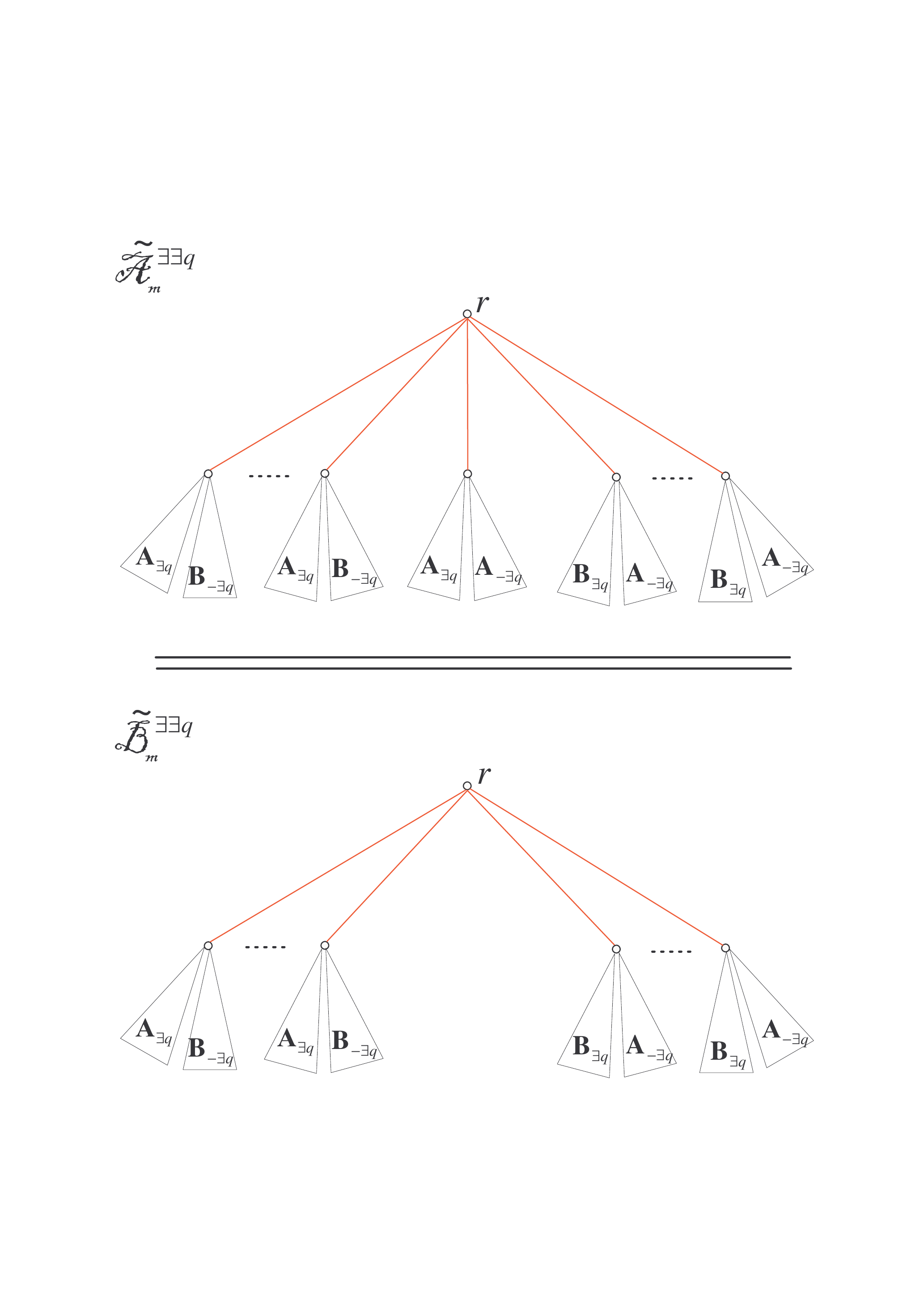}
\caption{The structures $\widetilde{\mathfrak{A}}^{\exists\exists q}_m$ and $\widetilde{\mathfrak{B}}^{\exists\exists q}_m$.}
\label{EEq}
\end{figure}

Assume that $p=\exists\exists q$ where $|q|=k-1$ (see Figure \ref{EEq}). Then $f(p)=\forall^*\exists*f(\exists q)$. As we have explained before, the first universal quantifier block in the words that can possibly be read off $\mathscr{F}(f^p_m)$ is useless for the spoiler. 

The strategy of the duplicator in the game $G_{\mathscr{F}(f^{\exists q}_m)}(\mathcal{T}^{\mathfrak{A}}_{\exists,m},\mathcal{T}^{\mathfrak{B}}_{\exists,m})$ is the same as her strategy in the game $G_{\mathscr{F}(f^{\forall q}_m)}(\mathcal{T}^{\mathfrak{A}}_{\exists,m},\mathcal{T}^{\mathfrak{B}}_{\exists,m})$.\\[6pt]
\indent For the first existential quantifier that can possibly be read off $\mathscr{F}(f^p_m)$, there are several choices for the spoiler:\begin{itemize} 
\item \textit{Picking the root} (or \textit{picking a junction point} resp.). 

The game is reduced to a composition of the main game $G_{\mathscr{F}(f^{\exists q}_m)}(\mathcal{T}^{\mathfrak{A}}_{\exists,m},\mathcal{T}^{\mathfrak{B}}_{\exists,m})$ and the subgames in which the duplicator has a winning strategy, that is, in the subgames 
\[
G_{\mathscr{F}(f^{\exists q}_m)}(\mathfrak{D}^{A,A}_{\exists q,m},\mathfrak{D}^{A,B}_{\exists q,m}) \hspace{5pt} \mathrm{or} \hspace{5pt} G_{\mathscr{F}(f^{\exists q}_m)}((\mathfrak{D}^{A,A}_{\exists q,m},r^{^1\widetilde{\mathfrak{A}}^p_m}),(\mathfrak{D}^{A,B}_{\exists q,m},r^{^2\widetilde{\mathfrak{A}}^p_m})),\] or subgames between isomorphic structures, according to Lemma \ref{p2-p} and Lemma \ref{constants-in-isom}. And by Lemma \ref{composition_of_strategies}, the duplicator has a winning strategy.

\item \textit{Picking inside the structure}, either $\mathfrak{D}^{A,A}_{\exists q,m}$ or $\mathfrak{D}^{A,B}_{\exists q,m}$ or $\mathfrak{D}^{B,A}_{\exists q,m}$, except their junction points. When the spoiler picks inside $\mathfrak{D}^{A,B}_{\exists q,m}$ or $\mathfrak{D}^{B,A}_{\exists q,m}$, the duplicator can mimic it in an isomorphic copy. Hence, only one new case need  be taken care of: the spoiler picks the element inside $\mathfrak{D}^{A,A}_{\exists q,m}$. In this case, the duplicator has a strategy as follows:

\begin{enumerate}[label=(\roman*)]
\item If the spoiler picks the element inside $\widetilde{\mathfrak{A}}^{\exists q}_m$, the duplicator mimics his picking in the isomorphic copy that is a part of $\mathfrak{D}^{A,B}_{\exists q,m}$. 
\item If the spoiler picks the element inside $\widetilde{\mathfrak{A}}^{-\exists q}_m$, the duplicator mimics his picking in the isomorphic copy that is a part of $\mathfrak{D}^{B,A}_{\exists q,m}$.
\end{enumerate} 

\noindent According to  Lemma \ref{pick-in-isom}, picking in these isomorphic substructures doesn't influence the outcome. In either case, the game is reduced to a composition of the main game $G_{\mathscr{F}(f^{\exists q}_m)}(\mathcal{T}^{\mathfrak{A}}_{\exists,m},\mathcal{T}^{\mathfrak{B}}_{\exists,m})$ and the subgame $G_{\mathscr{F}(f^{\exists q}_m)}(\widetilde{\mathfrak{A}}^{\exists q}_m,\widetilde{\mathfrak{B}}^{\exists q}_m)$ or $G_{\mathscr{F}(f^{\exists q}_m)}(\widetilde{\mathfrak{A}}^{-\exists q}_m,\widetilde{\mathfrak{B}}^{-\exists q}_m)$ or a subgame between two isomorphic structures. By the assumption we know that the duplicator has a winning strategy in $G_{\mathscr{F}(f^{\exists q}_m)}(\widetilde{\mathfrak{A}}^{\exists q}_m,\widetilde{\mathfrak{B}}^{\exists q}_m)$. Furthermore, by Lemma \ref{p2-p},  the duplicator has a winning strategy in $G_{\mathscr{F}(f^{\exists q}_m)}(\widetilde{\mathfrak{A}}^{-\exists q}_m,\widetilde{\mathfrak{B}}^{-\exists q}_m)$. Finally, by Lemma \ref{constants-in-isom}, Lemma \ref{m2m-} and Lemma \ref{composition_of_strategies}, the duplicator has a winning strategy in $G_{\mathscr{F}(f^p_m)}(\widetilde{\mathfrak{A}}^p_m,\widetilde{\mathfrak{B}}^p_m)$.
\end{itemize}
\vspace{3pt}

\noindent When $p[1]=\exists$, we have proved that $\widetilde{\mathfrak{A}}^p_m\leadsto_{f^p_m} \widetilde{\mathfrak{B}}^p_m$. Therefore, this theorem follows when $p[1]=\exists$. As a consequence it also holds when $p[1]=\forall$ by previous analysis.
\end{proof}

\begin{thm} \label{main}
Let $S_1$ and $S_2$ be two finite $\Gamma$-labelled forests. Over the class of all finite $\tau^+$-structures, 
\[
         \mathrm{if}\hspace{3pt} \mathscr{W}(S_1)\nsubseteq \mathscr{W}(S_2),\hspace{2pt} \mathrm{then}\hspace{3pt} \fo\{S_1\}\nsubseteq \fo\{S_2\}.
\]
\end{thm}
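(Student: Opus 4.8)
The plan is to separate the two classes by a single Boolean query. Assume $\mathscr{W}(S_1)\nsubseteq\mathscr{W}(S_2)$ and fix a word $p\in\Gamma^*$ that can be read off $S_1$ but not off $S_2$. Since $\mathscr{W}(\cdot)$ is closed under subsequences (Definition~\ref{read-off}), $S_1$ must be nonempty, and one checks that we may in fact take $|p|\geq 1$: if $S_2$ has a node then $\epsilon\in\mathscr{W}(S_2)$ so the witnessing word has length $\geq 1$, and if $S_2$ has no node then $\mathscr{W}(S_2)=\emptyset$ and any length-$1$ word read off $S_1$ works. The separating query is $\mathcal{K}$, the class of finite $\tau^+$-structures satisfying $\widetilde{\varphi}_p$ (Definition~\ref{property-over-TAU}).

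First I would check that $\mathcal{K}\in\fo\{S_1\}$. Unwinding Definition~\ref{property-over-TAU}, the atoms $Ryx$ and $Byx$ contribute nothing to the quantifier structure, and the two recursive subformulas $\widetilde{\psi}_q$ and $\widetilde{\psi}_{-q}$ carry identical quantifier labels (flipping $R\leftrightarrow B$ leaves quantifiers untouched); hence $qs(\widetilde{\varphi}_p)=qs(\widetilde{\psi}_p)$ is the $\Gamma$-labelled perfect binary tree of height $|p|-1$ whose nodes at depth $i$ all carry the quantifier $p[i{+}1]$, so that $p$ is exactly the word read off along every root-to-leaf branch. Since $p$ can be read off $S_1$, there is a directed path of $S_1$ along which nodes $w_1,\dots,w_{|p|}$ occur in path order (possibly with other nodes interspersed) with $w_i$ carrying label $p[i]$; sending each depth-$i$ node of $qs(\widetilde{\varphi}_p)$ to $w_i$ is a label-preserving map turning each arc into a nontrivial directed path, so $qs(\widetilde{\varphi}_p)\preceq_e S_1$ (recall the map in Definition~\ref{forests-embedding-1} need not be injective). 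Thus $\widetilde{\varphi}_p$ is one of the sentences defining $\fo\{S_1\}$, whence $\mathcal{K}\in\fo\{S_1\}$.

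Next I would show $\mathcal{K}\notin\fo\{S_2\}$ via Corollary~\ref{q-type-game-2} with $S=S_2$. Put $m:=\max\{rk(S_2),1\}$. By Lemma~\ref{claim1}, $\widetilde{\mathfrak{A}}^p_m\models\widetilde{\varphi}_p$ and $\widetilde{\mathfrak{B}}^p_m\not\models\widetilde{\varphi}_p$, so $\widetilde{\mathfrak{A}}^p_m\in\mathcal{K}$ and $\widetilde{\mathfrak{B}}^p_m\notin\mathcal{K}$; it remains to check $\widetilde{\mathfrak{A}}^p_m\leadsto_{S_2}\widetilde{\mathfrak{B}}^p_m$. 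By Lemma~\ref{roson}, $\gamma^-(f(p))=\{q\mid p\npreceq q\}$, which is downward closed, and intersecting with the length bound $\leq m$ preserves downward closure, so $(f^p_m)^-=f^p_m$. Every word in $\mathscr{W}(S_2)$ has length at most $rk(S_2)\leq m$, and no such word $q$ can satisfy $p\preceq q$ — otherwise $p$, as a subsequence of $q\in\mathscr{W}(S_2)$, would lie in $\mathscr{W}(S_2)$. Hence $\mathscr{W}(S_2)\subseteq f^p_m=(f^p_m)^-$, so $S_2\preceq_e\mathscr{F}(f^p_m)$ by Lemma~\ref{words2forest}. By Lemma~\ref{EF-Game}(1) the duplicator wins $G_{\mathscr{F}(f^p_m)}(\widetilde{\mathfrak{A}}^p_m,\widetilde{\mathfrak{B}}^p_m)$, and hence by Lemma~\ref{m2m-} also $G_{S_2}(\widetilde{\mathfrak{A}}^p_m,\widetilde{\mathfrak{B}}^p_m)$, i.e.\ $\widetilde{\mathfrak{A}}^p_m\leadsto_{S_2}\widetilde{\mathfrak{B}}^p_m$. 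Corollary~\ref{q-type-game-2} then yields that no first-order sentence $\varphi$ with $qs(\varphi)\preceq_e S_2$ defines $\mathcal{K}$, i.e.\ $\mathcal{K}\notin\fo\{S_2\}$; combined with the previous step this gives $\fo\{S_1\}\nsubseteq\fo\{S_2\}$.

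All the genuinely hard work has already been discharged into Lemma~\ref{EF-Game} (the composed game strategy, together with Lemma~\ref{composition_of_strategies} and the point-expansion machinery) and Lemma~\ref{claim1}, so what remains above is essentially bookkeeping. The one identification worth writing out carefully is that $qs(\widetilde{\varphi}_p)$ is the perfect binary tree reading $p$ down every branch, since it is this fact that converts the hypothesis ``$p\in\mathscr{W}(S_1)$'' into ``$qs(\widetilde{\varphi}_p)\preceq_e S_1$''; and the one quantitative choice to get right is $m=\max\{rk(S_2),1\}$, which — in tandem with the subsequence-closedness of $\mathscr{W}(S_2)$ and the downward-closedness of $f^p_m$ — is exactly what forces $\mathscr{W}(S_2)\subseteq(f^p_m)^-$ and hence $S_2\preceq_e\mathscr{F}(f^p_m)$.
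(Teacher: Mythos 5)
Your proof is correct and follows essentially the same route as the paper: the same separating sentence $\widetilde{\varphi}_p$ for a witness $p\in\mathscr{W}(S_1)\setminus\mathscr{W}(S_2)$, the same structures $\widetilde{\mathfrak{A}}^p_m,\widetilde{\mathfrak{B}}^p_m$, and the same chain Lemma~\ref{claim1} + Lemma~\ref{EF-Game} + Corollary~\ref{q-type-game-2}. The only (harmless) differences are bookkeeping: the paper fixes $m$ as the quantifier rank of a hypothetical separating sentence and invokes part (2) of Lemma~\ref{EF-Game}, whereas you fix $m=\max\{rk(S_2),1\}$ up front and pass through Lemmas~\ref{words2forest} and~\ref{m2m-} to win $G_{S_2}$ directly; you also spell out why $qs(\widetilde{\varphi}_p)\preceq_e S_1$, which the paper leaves as ``clear.''
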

\begin{proof}
Let $p\in\mathscr{W}(S_1)\setminus\mathscr{W}(S_2)$.  According to  Lemma \ref{roson}, $\gamma^-(f(p))$ is the set of \textit{all} prefixes that $p$ is not a subsequence of.  Hence, $\mathscr{W}(S_2)\subseteq \gamma^-(f(p))$.  Clearly, $\widetilde{\varphi}_p\in \fo\{S_1\}$.  By Lemma \ref{claim1}, for any positive natural number $n$, $\widetilde{\mathfrak{A}}^p_n\models \widetilde{\varphi}_p$ but $\widetilde{\mathfrak{B}}^p_n\not\models \widetilde{\varphi}_p$. Assume for the purpose of a contradiction that there is a formula $\psi$ such that $qs(\psi)\preceq_e S_2$, $qr(\psi)=m$ and $\psi$ defines the same property as $\widetilde{\varphi}_p$.  Clearly, $\mathscr{W}(qs(\psi))\preceq f^p_m$. By Lemma \ref{EF-Game}, $\widetilde{\mathfrak{A}}^p_m\models \psi\Rightarrow\widetilde{\mathfrak{B}}^p_m \models \psi$. Together with  Lemma \ref{claim1}, and Corollary \ref{q-type-game-2}, the property defined by $\widetilde{\varphi}_p$ is not definable by any first-order sentence whose quantifier structure is $\mathcal{F}$ s.t.\ $\mathscr{W}(\mathcal{F})\subseteq f^p_m$, which is in contradiction with the assumption that $\psi$ defines the same property as $\widetilde{\varphi}_p$ does.
\end{proof}

If we want to prove something similar to Theorem \ref{main}, but over a restricted signature $\langle r, E\rangle$, then we need to adapt the constructions and formulas a bit. Note that we can use forward arrows and backward arrows to replace the red edges and blue edges. And we can use bi-directional edges to indicate where the black leaves are. More precisely, the new structure ${\mathfrak{A}_m^p}^\prime$ (${\mathfrak{B}_m^p}^\prime$ resp.) is obtained from $\widetilde{\mathfrak{A}}^p_m (\widetilde{\mathfrak{B}}^p_m$ resp.) by the following process:
\begin{enumerate}

\item Use an arc from $b$ to $c$ to represent that the edge between the vertices $b$ and $c$ is red;\vspace{1pt}
\item Use an arc from $c$  to $b$ to represent that the edge between $b$ and $c$ is blue;\vspace{1pt}
\item  Add an edge from every black leaf to the junction point in the same connected component and from the junction point to every black leaf in the same connected component;\vspace{1pt}
\item When $|p|=2$, add a self loop to every leaf which is an endpoint of a red edge.
\end{enumerate}

Correspondingly, the new formula, called $\varphi_p^\prime$, is  obtained from $\widetilde{\varphi}_p$ by the following process:
\begin{enumerate}[label=(\arabic*{*})] 
\item $Rxy$ is replaced by $Exy$;\vspace{1pt}
\item $Bxy$ is replaced by $Eyx$;\vspace{1pt}
\item When $|p|\neq 2$, $Uy$ is replaced by $Ex_1y\land Eyx_1$;\vspace{1pt}
\item When $|p|=2$, $Ux_1$ in $\widetilde{\psi}_\exists(y)$ and $\widetilde{\psi}_\forall(y)$ is replaced by $Ex_1y\land Eyx_1\land Ex_1x_1$; $Ux_1$ in $\widetilde{\psi}_{-\exists}(y)$ and $\widetilde{\psi}_{-\forall}(y)$ is replaced by $Ex_1y\land Eyx_1\land \lnot Ex_1x_1$.
\end{enumerate}

More precisely, we inductively define $\varphi_p^\prime$ as follows. 
Let $q\in \Gamma^*$ and assume $|q|=d\geq 0$.
\begin{enumerate}
\item $\psi_{\epsilon}^\prime(x,y)=\psi_{-\epsilon}^\prime(x,y):=Exy\land Eyx$; 

\item $\psi_{\exists q}^\prime(x,y):=\exists x_{d+1}(Eyx_{d+1}\land x_{d+1}\neq x\land \psi_{q}^\prime(y,x_{d+1})\land \psi_{-q}^\prime(y,x_{d+1}))$;\\
$\psi_{\forall q}^\prime(x,y):=\forall x_{d+1}(Eyx_{d+1}\land x_{d+1}\neq x\rightarrow \psi_{q}^\prime(y,x_{d+1})\lor \psi_{-q}^\prime(y,x_{d+1}))$;
\item $\psi_{-\exists q}^\prime(x,y):=\exists x_{d+1}(Ex_{d+1}y\land x_{d+1}\neq x\land \psi_{q}^\prime(y,x_{d+1})\land \psi_{-q}^\prime(y,x_{d+1}))$;\\
$\psi_{-\forall q}^\prime(x,y):=\forall x_{d+1}(Ex_{d+1}y\land x_{d+1}\neq x\rightarrow \psi_{q}^\prime(y,x_{d+1})\lor \psi_{-q}^\prime(y,x_{d+1}))$;

\item Now we define a sentence $\varphi_p^\prime$ over the signature $\tau$:
\begin{itemize}
\item When $|p|\neq 2$, 

$\indent\hspace{58pt}\varphi_{p}^\prime:=\psi_{p}^\prime(r,r)$.

(note that $\varphi_{-p}^\prime:=\psi_{-p}^\prime(r,r)$).
\item When $|p|=2$,

$\varphi_{\exists\exists}^\prime\!:=\!\exists x_2(Ex_2x_2\land \exists x_1(Ex_2x_1\!\land\! Ex_1x_2\!\land\! Ex_1x_1)
\land \exists x_1(Ex_1x_2\land Ex_2x_1\land \lnot Ex_1x_1))$  

$\varphi_{\exists\forall}^\prime\!:=\!\exists x_2(Ex_2x_2\land\! \forall x_1(Ex_2x_1\rightarrow Ex_1x_2\!\land\! Ex_1x_1)\\\indent\hspace{194pt} 
\land \forall x_1(Ex_1x_2\rightarrow Ex_2x_1\!\land\! \lnot Ex_1x_1))$  

$\varphi_{\forall\exists}^\prime\!:=\!\forall x_2(Ex_2x_2\!\rightarrow\! \exists x_1(Ex_2x_1\!\land\! Ex_1x_2\!\land\! Ex_1x_1) 
\!\lor\! \exists x_1(Ex_1x_2\!\land\! Ex_2x_1\!\land\! \lnot Ex_1x_1))$ 

 $\varphi_{\forall\forall}^\prime:=\forall x_2(Ex_2x_2\rightarrow \forall x_1(Ex_2x_1\rightarrow Ex_1x_2\land Ex_1x_1)\\\indent\hspace{194pt} 
\lor \forall x_1(Ex_1x_2\rightarrow Ex_2x_1\!\land\! \lnot Ex_1x_1))$.
\end{itemize}
\end{enumerate}

\noindent Let $\tau:=\langle E\rangle$ where $E$ is a binary relation symbol. Could we prove something similar to Theorem \ref{main}, but over $\tau$? To achieve it, in addition to the adaption we have introduced above, we need to find a way to get rid of the root, which is obvious. The $\tau$-structure $\mathfrak{A}_m^p$ is obtained from ${\mathfrak{A}_m^p}^\prime$ by the following process: removing  $r^{{\mathfrak{A}^p_m}^\prime}$ and for any junction point $v$, using a self loop at node $v$ to replace  the edge from $r^{{\mathfrak{A}^p_m}^\prime}$  to $v$.

$\mathfrak{B}_m^p$ is obtained from ${\mathfrak{B}_m^p}^\prime$ in exactly the same way. 
  
Correspondingly, $\varphi_p$ is obtained from $\varphi_p^\prime$ by substituting $Rrx$ with $Exx$ and removing the atoms $x\neq r$. 

\begin{rem}
It is a special case when $|p|=2$  because using bi-directional edges as a scheme of colouring does not work in this case. See Figures \ref{prefix-EE} for example, when $p=\exists\exists$ and $m=1$. 
\end{rem}

\begin{figure}[h]
\centering
\includegraphics[trim = 15mm 50mm 15mm 40mm, clip, width=7cm]{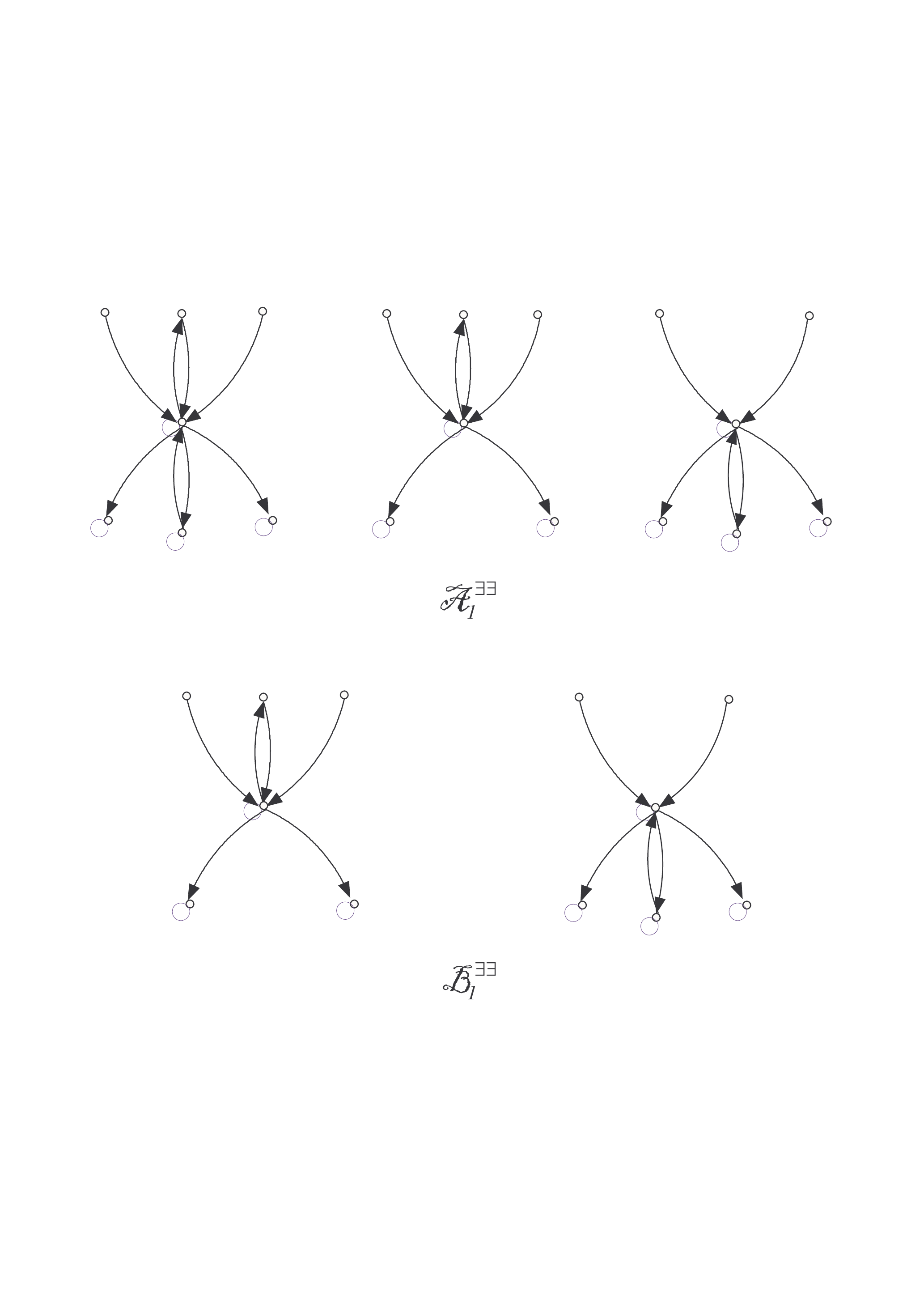}
\caption{The structures $\mathfrak{A}^{\exists\exists}_1$ and $\mathfrak{B}^{\exists\exists}_1$.}
\label{prefix-EE}
\end{figure}

\label{reductions-from-tauplus-to-tau}
\noindent Call the above reductions between structures and formulas ``\underline{reductions from $\tau^+$ to $\tau$}''. 

In a similar way to Lemma \ref{claim1}, we can prove the following lemma:
\begin{lem} \label{hintikka}
$\mathfrak{A}^p_m \models \varphi_p$ and $\mathfrak{B}^p_m \not\models \varphi_p$. 
\end{lem}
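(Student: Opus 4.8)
The plan is to deduce this from Lemma~\ref{claim1} by checking that the reductions from $\tau^+$ to $\tau$ are faithful on the objects at hand. Write ${\mathfrak{A}^p_m}'$, ${\mathfrak{B}^p_m}'$, $\varphi_p'$ for the intermediate structures and sentence over $\langle r,E\rangle$. I would establish the chain
\[
\mathfrak{A}^p_m \models \varphi_p \iff {\mathfrak{A}^p_m}' \models \varphi_p' \iff \widetilde{\mathfrak{A}}^p_m \models \widetilde{\varphi}_p ,
\]
and the analogous chain with $\mathfrak{B}$ in place of $\mathfrak{A}$; the lemma is then immediate from Lemma~\ref{claim1}. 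Each equivalence is an interpretation-style statement: the two sides live over different signatures, but each reduction step is a faithful re-encoding, and one verifies that the prescribed rewriting of the sentence tracks the prescribed rewriting of the structure.

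For the equivalence ${\mathfrak{A}^p_m}' \models \varphi_p' \iff \widetilde{\mathfrak{A}}^p_m \models \widetilde{\varphi}_p$ I would argue by induction on $|p|$, paralleling the proof of Lemma~\ref{claim1} but with the statement phrased for subformulas: if a node $v$ of $\widetilde{\mathfrak{A}}^p_m$ corresponds to the node $v'$ of ${\mathfrak{A}^p_m}'$ and $u'$ is the tree-parent of $v'$ (or $v'$ itself, at the root), then $\widetilde{\psi}_q(v)$ holds iff $\psi_q'(u', v')$ holds, for every suffix $q$ of $p$. The one substantive point is that the relativizing atom $E y x_{d+1}$ together with the guard $x_{d+1} \neq x$ selects, among all $E$-out-neighbours of $v'$, exactly the red-edge children of $v$ in the original tree: the only other $E$-out-neighbours of $v'$ are the tree-parent of $v'$ (present precisely when that edge is blue, and excluded by $x_{d+1} \neq x$, since the guard variable is always bound to the tree-parent of the current node) and, when $v'$ is a junction point, its black-leaf children via the bidirectional edges that encode $U$ --- but those are leaves and so cannot witness any further quantifier, hence are harmless once $q \neq \epsilon$; dually $E x_{d+1} y$ picks out the blue-edge children. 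At the base case, $\psi_\epsilon'(x,y) = Exy \land Eyx$ holds exactly when there is a bidirectional edge between the leaf $y$ and its parent, which by construction holds iff $y$ is black, i.e.\ iff $\widetilde{\psi}_\epsilon(y) = Uy$ holds. The remaining equivalence $\mathfrak{A}^p_m \models \varphi_p \iff {\mathfrak{A}^p_m}' \models \varphi_p'$ is easier: $r$ serves no logical role in $\varphi_p'$ beyond anchoring the first quantifier's relativization, so deleting it, recording each (necessarily red) edge from $r$ to a junction point as a self-loop there, and rewriting the top-level atoms and guards accordingly preserves satisfaction --- one checks only that the new self-loops at junction points create no spurious quantifier witness, which they do not.

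The main obstacle is the case $|p| = 2$, flagged as exceptional already in the remark before Figure~\ref{prefix-EE}: there the bidirectional-edge device for marking black leaves collides with the directed-arc device for red/blue edges along length-$2$ paths, so blackness is instead recorded through the presence or absence of self-loops, and after deleting $r$ the former root edges are \emph{also} recorded by self-loops. The uniform induction above therefore does not apply, and I would instead verify the four sentences $\varphi_{\exists\exists}, \varphi_{\exists\forall}, \varphi_{\forall\exists}, \varphi_{\forall\forall}$ directly against the explicitly described structures $\mathfrak{A}^p_m$ and $\mathfrak{B}^p_m$ (cf.\ Figure~\ref{prefix-EE} for $p = \exists\exists$, $m = 1$), a short finite case analysis. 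Alternatively, one could bypass the reductions entirely and re-run the induction of Lemma~\ref{claim1} directly over $\tau$ --- using the analogue of Lemma~\ref{Dual-Hintikka} for $\varphi_p$, the point-expansion decomposition of $\mathfrak{A}^{\exists q}_m$ and $\mathfrak{B}^{\exists q}_m$ for $|p| \geq 3$, and treating $|p| \leq 2$ as the base layer --- with the same bookkeeping.
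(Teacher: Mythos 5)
The paper itself gives no argument for this lemma beyond the remark ``In a similar way to Lemma~\ref{claim1}, we can prove the following lemma'', i.e.\ it implicitly takes your \emph{alternative} route (re-run the induction of Lemma~\ref{claim1} directly over $\tau$). Your primary route --- factoring through the two reduction steps and verifying that each rewriting of the sentence tracks the rewriting of the structure --- is a legitimate and more explicit plan, and you are right to single out the $|p|=2$ case and the role of the guard $x_{d+1}\neq x$ in excluding the tree-parent.

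However, the step you describe as ``the one substantive point'' hides a genuine obstruction that your argument does not resolve, and it occurs for every $p$ with $|p|\geq 3$, not only for $|p|=2$. The bidirectional edges that encode $U$ make a black leaf simultaneously an in-neighbour and an out-neighbour of its parent (the innermost junction point), \emph{regardless of whether its tree edge is red or blue}. Consequently, at the penultimate level the formulas $\psi_{\exists}'(x,y)$ and $\psi_{-\exists}'(x,y)$ (and dually the $\forall$ versions) both collapse to ``$y$ has a black child other than $x$'': the relativizing atoms $Eyx_1$ versus $Ex_1y$ no longer discriminate red-black from blue-black children, whereas $\widetilde{\psi}_{\exists}$ and $\widetilde{\psi}_{-\exists}$ do exactly that via $Ryx_1$ versus $Byx_1$. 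This erases the distinction between the junction points of $\mathfrak{D}^{A,A}_{l(p),m}$, $\mathfrak{D}^{A,B}_{l(p),m}$ and $\mathfrak{D}^{B,A}_{l(p),m}$ at the bottom two levels of the tree, which is precisely the distinction the induction of Lemma~\ref{claim1} relies on to show $\widetilde{\mathfrak{B}}^{p}_m\not\models\widetilde{\varphi}_p$; e.g.\ for $p=\exists\exists\exists$ one finds that $\varphi_p'$ as literally defined is satisfied by ${\mathfrak{B}^p_m}'$. This is the very phenomenon that forces the self-loop device (rule~4 and the modified base clause) when $|p|=2$, and it recurs at the leaves for all longer $p$. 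So your claim that the black-leaf edges are ``harmless once $q\neq\epsilon$'' is not correct as stated: to make either of your routes go through, the self-loop encoding of the red/blue status of black leaves must be deployed at the bottom level for \emph{all} $p$ (with the base of the induction adjusted accordingly), or some other local feature must be introduced to keep $\psi_{l(p)}'$ and $\psi_{-l(p)}'$ inequivalent at the innermost junction points. (To be fair, this difficulty is inherited from the paper's own underspecified description of the reductions, but a complete proof of the lemma must confront it.)
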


\begin{lem} \label{translation}
For any first-order sentence $\zeta$ over $\tau$, there is a first-order sentence $\xi$ over $\tau^+$, with the same quantifier structure, such that the following hold:

\vspace{7pt}
\begin{enumerate}
\item $\mathfrak{A}^p_m\models \zeta\hspace{3pt}$ iff  $\hspace{3pt}\widetilde{\mathfrak{A}}^p_m\models \xi$;\smallskip
\item $\mathfrak{B}^p_m\models \zeta\hspace{3pt}$ iff  $\hspace{3pt}\widetilde{\mathfrak{B}}^p_m\models \xi$.
\end{enumerate}   
\end{lem}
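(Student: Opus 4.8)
## Proof proposal for Lemma \ref{translation}

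The plan is to construct $\xi$ from $\zeta$ by a syntactic translation that exactly undoes the ``reductions from $\tau^+$ to $\tau$'' at the level of formulas, and then verify by induction on $\zeta$ that this translation preserves truth in the relevant structures. The key point is that the $\tau$-structures $\mathfrak{A}^p_m$ and $\mathfrak{B}^p_m$ were built from $\widetilde{\mathfrak{A}}^p_m$ and $\widetilde{\mathfrak{B}}^p_m$ by a fixed, \emph{quantifier-free definable} recipe: a red edge $(b,c)$ becomes an arc $b\to c$; a blue edge becomes the reverse arc; the unary predicate $U$ (blackness of a leaf) is coded by a bi-directional edge to the junction point in the same component (or, when $|p|=2$, by the presence/absence of a self-loop); the root is deleted and its incident edges become self-loops at junction points. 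Since each of these is a first-order, in fact atomic-or-small, definition \emph{without quantifiers}, replacing each atom of $\zeta$ by its $\tau^+$-translation yields a $\tau^+$-sentence $\xi$ with literally the same quantifier structure (no new quantifiers are introduced, only atoms are rewritten), which is exactly what the lemma demands.

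Concretely, I would define the translation $\zeta \mapsto \xi$ recursively: it commutes with $\land,\lor,\lnot,\exists,\forall$ and relabels variables unchanged, so that $qs(\xi)=qs(\zeta)$; on atoms it acts by cases mirroring items $(1^*)$--$(4^*)$ and the root-elimination step, e.g. $Exy$ is translated to $(Rxy \lor (x=y \land \text{``}x\text{ is a junction point''}) \lor \cdots)$ together with the blue-edge and blackness-coding disjuncts, where ``$x$ is a junction point'' and ``$x$ is a black leaf'' are themselves fixed quantifier-free $\tau^+$-formulas (using $r$, $U$, $R$, $B$). One must be slightly careful that the coding of $E$ is \emph{faithful} on both $\mathfrak{A}^p_m$ and $\mathfrak{B}^p_m$: every arc of $\mathfrak{A}^p_m$ arises from exactly one of the listed sources, and conversely. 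This is where the case $|p|=2$ needs separate bookkeeping, since there the blackness information sits in self-loops rather than in edges to the junction point, so the atomic translation of $Exy$ (and of $x=x$-style self-loop tests) must branch on whether $|p|=2$; but this is a finite case distinction built into the definition of the translation, not an obstacle to the induction.

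With the translation fixed, the proof of (1) and (2) is a straightforward induction on the structure of $\zeta$. The base case is the content of the previous paragraph: for every atomic $\tau$-formula $\alpha(\bar x)$ and every tuple $\bar a$ from $|\mathfrak{A}^p_m|=|\widetilde{\mathfrak{A}}^p_m|$ (the universes coincide except for the deleted root, and the translation's handling of the root disposes of that discrepancy), $\mathfrak{A}^p_m \models \alpha(\bar a)$ iff $\widetilde{\mathfrak{A}}^p_m \models \alpha^{tr}(\bar a)$, and likewise for $\mathfrak{B}$; this is checked by inspecting each of the finitely many clauses of the reduction. The inductive steps for Boolean connectives are immediate, and the quantifier steps work because the two universes are the same set (after accounting for $r$), so $\exists x$ and $\forall x$ range over matching domains. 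I would then note that the root issue is handled uniformly: since in $\mathfrak{A}^p_m$ the root was removed and $r^{{\mathfrak{A}^p_m}'}$'s incident edges replaced by self-loops, and since $\varphi_p$ was correspondingly obtained from $\varphi_p'$ by substituting $Rrx$ with $Exx$, the translation simply reads any occurrence of the would-be constant $r$ back into the junction-point self-loop encoding.

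The main obstacle I anticipate is purely bookkeeping: verifying the atomic base case requires checking that the edge-coding of $\mathfrak{A}^p_m$ (resp. $\mathfrak{B}^p_m$) has \emph{no spurious arcs} beyond those dictated by the recipe — i.e. that distinct sources (a red edge vs. a blackness-coding bi-directional pair vs. a root-elimination self-loop) never accidentally collide in a way that makes the quantifier-free $\tau^+$-definition of $E$ wrong on some tuple. Establishing this amounts to a careful look at Definition \ref{structures-over-TAU} and the reduction list, tracking where bi-directional edges, self-loops, and the junction points can occur; the case $|p|=2$ is the delicate one and should be stated and checked explicitly. Once that is in hand, the rest of the lemma is automatic, and in particular $qs(\xi)=qs(\zeta)$ holds by construction since no clause of the translation adds or removes a quantifier.
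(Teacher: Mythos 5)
Your proposal is correct and follows essentially the same route as the paper: the paper likewise builds $\xi$ by relativising all quantifiers by $x\neq r$ and replacing each atom $Exy$ with the quantifier-free $\tau^+$-disjunction $Rxy\lor Byx\lor (Rrx\land x=y)\lor (Rrx\land U(y))\lor (Rry\land U(x))$, noting that this atom-level interpretation introduces no new quantifiers and mirrors clause-by-clause the construction of $\mathfrak{A}^p_m$ from $\widetilde{\mathfrak{A}}^p_m$. Your explicit flagging of the $|p|=2$ self-loop coding and of the ``no spurious arcs'' check is if anything more careful than the paper, which leaves that bookkeeping implicit.
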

\begin{proof}
Let $\xi$ be obtained from $\zeta$ by \begin{enumerate}[label=\({\alph*}] 
\item relativising all quantifiers in $\zeta$ by $x\neq r$;
\item replacing all occurrences of $Exy$ by 
\[Rxy\lor Byx\lor (Rrx\land x=y)\lor (Rrx\land U(y))\lor (Rry\land U(x)).
\]
\end{enumerate}
Note that the quantifier structure of $\xi$ is the same as that of $\zeta$.

Because $\mathfrak{A}^p_m$ is obtained from $\widetilde{\mathfrak{A}}^p_m$ by: \begin{itemize}
\item deleting the root (corresponding to relativising quantifiers);
\item substituting red edges with forward arcs (corresponding to the disjunct $Rxy$ in the replacing of $Exy$);\vspace{1pt}
\item substituting blue edges with backward arcs (corresponding to the disjunct $Bxy$ in the replacing of $Exy$);\vspace{1pt}
\item adding self-loops at the junction points that are connected to the root(corresponding to the disjunct $Rrx\land x=y$ in the replacing of $Exy$);\vspace{1pt}
\item adding bi-directional edges between the black leaves and junction points that are in the same connected component (corresponding to the disjuncts $(Rrx\land U(y))$ and $(Rry\land U(x))$ in the replacing of $Exy$),
\end{itemize} 
it gives a reduction from the property defined by $\xi$ to the property defined by $\zeta$. In other words,  $\mathfrak{A}^p_m\models \zeta\hspace{3pt}$ iff  $\hspace{3pt}\widetilde{\mathfrak{A}}^p_m\models \xi$. For the same reason, {(2)} also holds.   
\end{proof}

Now, we prove the main result in this section, i.e.\ the theorem \ref{main1}.
\begin{proof}
Let $p\in\mathscr{W}(S_1)\setminus\mathscr{W}(S_2)$. Clearly, the property definable by $\varphi_p$ is in $\fo\{S_1\}$. We try to prove that this property is not in $\fo\{S_2\}$. Assume on the contrary that there is a formula $\psi$ such that $qs(\psi)\preceq_e S_2$ and $\psi$ defines the same property as $\varphi_p$ does. And let $qr(\psi)$ be $m$. According to  Lemma \ref{hintikka}, $\mathfrak{A}^p_m\models\psi$ and $\mathfrak{B}^p_m\not\models\psi$. According to  Lemma \ref{translation}, there is $\xi$, with the same quantifier structure as $\psi$, such that $\widetilde{\mathfrak{A}}^p_m\models\xi$ and $\widetilde{\mathfrak{B}}^p_m\not\models\xi$. Note that  $\mathscr{W}(qs(\xi))\subseteq \gamma^-(f(p))$ since $qs(\xi)\preceq_e S_2$ and $\mathscr{W}(S_2)\subseteq \gamma^-(f(p))$. But this is in contradiction to Lemma \ref{EF-Game}.
\end{proof}
Theorem \ref{main1} tells us that the distinctive collections of quantifier classes form a strict hierarchy, which we call  quantifier structure hierarchy.

\section{Strictness of quantifier hierarchy over ordered finite structures} \label{sec-order-hierarchy}

Up to now, using logics to characterize complexity classes inside NP requires the structures to be ordered, i.e.\ there is a linear order over the universe of the structures. Therefore, it is interesting to extend the main result in the last section to ordered structures: the first-order quantifier structure hierarchy is strict over ordered finite structures. However, separating the expressive power of logics over ordered structures is often difficult, because the spoiler may detect the difference between the structures using a given linear order. But we will see in this section that  the structures will be constructed in such a way that the power of linear order that the spoiler can use is quite limited: it is equivalent to the power that the spoiler can use in a game over a pair of linear orders, and a well-known result tells us that the duplicator has a winning strategy over a game between two linear orders that are sufficiently long. 
In this section we sketch the main ideas that conquer the order problem  and omit most details that resemble those in the formal proof of Theorem \ref{main1}. 

\subsection{The constructions and separating property}

\begin{defi}\label{ordered-structures}
Recall that $\tau^{+\text{ORD}}$ is $\langle R,B,U,r,\leq\rangle$ where $\leq$ is interpreted as a linear order over the universe. Let $\tau_o^{\prime}:=\tau^{+\text{ORD}}\setminus \{U\}$.

Let the structure $\overrightarrow{\mathfrak{A}}^p_m$ be a $\tau^{+\text{ORD}}$-structure defined as follows: \begin{itemize}
\item $\overrightarrow{\mathfrak{A}}^p_m | \tau^+$ and $\overrightarrow{\mathfrak{B}}^p_m | \tau^+$ are trees, whose  roots interpret $r$, and whose edges are coloured either red or blue. 
\item If $p=\exists$, \begin{enumerate}
\item $\overrightarrow{\mathfrak{A}}^p_m | \tau^+$ is a depth 1  tree that has $2^{m+2}+1$ leaves. To construct $\overrightarrow{\mathfrak{A}}^p_m$ from $\overrightarrow{\mathfrak{A}}^p_m | \tau^+$, give these leaves some order such that the $(2^{m+1}+1)$-th leaf is black. All the other leaves are not black.
\item $\overrightarrow{\mathfrak{B}}^p_m | \tau^+$ is a depth 1  tree that has $2^{m+1}$ leaves. \textit{None} of them is black. Give these leaves an arbitrary order to construct $\overrightarrow{\mathfrak{B}}^p_m$ from $\overrightarrow{\mathfrak{B}}^p_m | \tau^+$.

Let $\overrightarrow{\mathcal{T}}^{\mathfrak{A}}_{\exists,m}:=\overrightarrow{\mathfrak{A}}^\exists_m|\tau_o^{\prime}; \overrightarrow{\mathcal{T}}^{\mathfrak{B}}_{\exists,m}:=\overrightarrow{\mathfrak{B}}^\exists_m|\tau_o^{\prime}.$
\end{enumerate}
\item If $p=\forall$, \begin{enumerate}
\item $\overrightarrow{\mathfrak{A}}^p_m | \tau^+$ is a depth 1  tree that has $2^{m+1}$ leaves. All of them are black. 
\item $\overrightarrow{\mathfrak{B}}^p_m | \tau^+$ is a depth 1  tree that has $(2^{m+2}+1)$ leaves. All are black except one. Give these leaves some order such that the $(2^{m+1}+1)$-th leaf is not black. 
\end{enumerate}
In the above definition, we let a node be earlier in the linear order $\leq$ than its children. Moreover, we define all the colours of the edges to be red when $|p|=1$.

\item When $|p|>1$:

\hspace{6pt} Let $\overrightarrow{\mathfrak{A}}^{-q}_m$ be the same as $\overrightarrow{\mathfrak{A}}^{q}_m$ except that the colours of all the edges are exchanged, i.e.\ red is changed to blue and vice versa. Let $\overrightarrow{\mathfrak{D}}^{A,B}_{q,m}$, a $(\tau^{+\text{ORD}}\cup\{\mathfrak{e}\}\setminus\{r\})$-structure, be built from a copy of $\overrightarrow{\mathfrak{A}}^q_m$ and a copy of $\overrightarrow{\mathfrak{B}}^{-q}_m$ where they are joined together at their roots. Their shared root is called junction point, which is used to  interpretes the constant $\mathfrak{e}$. Similarly define $\overrightarrow{\mathfrak{D}}^{A,A}_{q,m}$ as the join of $\overrightarrow{\mathfrak{A}}^q_m$ and $\overrightarrow{\mathfrak{A}}^{-q}_m$, define  $\overrightarrow{\mathfrak{D}}^{B,A}_{q,m}$ as the join of $\overrightarrow{\mathfrak{B}}^q_m$ and $\overrightarrow{\mathfrak{A}}^{-q}_m$, 
and define $\overrightarrow{\mathfrak{D}}^{B,B}_{q,m}$ as the join of $\overrightarrow{\mathfrak{B}}^q_m$ and $\overrightarrow{\mathfrak{B}}^{-q}_m$. In $\overrightarrow{\mathfrak{D}}^{A,B}_{q,m}$, we let the elements in $\overrightarrow{\mathfrak{A}}^q_m$ be later in the linear order than the elements in  $\overrightarrow{\mathfrak{B}}^{-q}_m$. Similarly, in $\overrightarrow{\mathfrak{D}}^{A,A}_{q,m}$, we let the elements in $\overrightarrow{\mathfrak{A}}^q_m$ be later in the order than the elements in  $\overrightarrow{\mathfrak{A}}^{-q}_m$; in $\overrightarrow{\mathfrak{D}}^{B,A}_{q,m}$,  the elements in $\overrightarrow{\mathfrak{B}}^q_m$ are later in the order than the elements in  $\overrightarrow{\mathfrak{A}}^{-q}_m$; in $\overrightarrow{\mathfrak{D}}^{B,B}_{q,m}$,  the elements in $\overrightarrow{\mathfrak{B}}^q_m$ are later in the order than the elements in  $\overrightarrow{\mathfrak{B}}^{-q}_m$. 

In the following, we assume that any node in $\overrightarrow{\mathfrak{D}}^{X_1,Y_1}_{q,m}$ is later in the order than all nodes in  $\overrightarrow{\mathfrak{D}}^{X_2,Y_2}_{q,m}$ if the junction point of $\overrightarrow{\mathfrak{D}}^{X_1,Y_1}_{q,m}$ is later in the order than the junction point of $\overrightarrow{\mathfrak{D}}^{X_2,Y_2}_{q,m}$.
\begin{itemize}
\item If $p=\exists q$, \begin{enumerate}[label=(\arabic*$^\exists$)]
\item  $\overrightarrow{\mathfrak{A}}^p_m$ is a point-expansion of $\overrightarrow{\mathcal{T}}^{\mathfrak{A}}_{\exists,m}$ over $\{\mathcal{I},\overrightarrow{\mathfrak{D}}^{A,A}_{q,m},\overrightarrow{\mathfrak{D}}^{A,B}_{q,m},\overrightarrow{\mathfrak{D}}^{B,A}_{q,m}\}$. Its root is expanded by $\mathcal{I}$. Recall that $\mathcal{I}$ is a structure over the signature $\langle c_I \rangle$, whose universe has exactly one element, and this element is used to interpret the constant $c_I$. When $i\leq 2^{m+1}$, the $i$-th leaf is expanded by a copy of $\overrightarrow{\mathfrak{D}}^{A,B}_{q,m}$ if $i$ is odd, by a copy of $\overrightarrow{\mathfrak{D}}^{B,A}_{q,m}$ if $i$ is  even; the $i$-th leaf is expanded by a copy of $\overrightarrow{\mathfrak{D}}^{A,A}_{q,m}$ if $i=2^{m+1}+1$; when  $i>2^{m+1}+1$, the $i$-th leaf is expanded by a copy of $\overrightarrow{\mathfrak{D}}^{A,B}_{q,m}$ if $i-(2^{m+1}+1)$ is odd, by a copy of $\overrightarrow{\mathfrak{D}}^{B,A}_{q,m}$ if $i-(2^{m+1}+1)$ is even. 
\item $\overrightarrow{\mathfrak{B}}^p_m$ is a point-expansion of $\overrightarrow{\mathcal{T}}^{\mathfrak{B}}_{\exists,m}$ over $\{\mathcal{I},\overrightarrow{\mathfrak{D}}^{A,B}_{q,m},\overrightarrow{\mathfrak{D}}^{B,A}_{q,m}\}$.  Its root is expanded by $\mathcal{I}$. The $i$-th leaf is expanded by a copy of $\overrightarrow{\mathfrak{D}}^{A,B}_{q,m}$ if $i$ is odd, by a copy of $\overrightarrow{\mathfrak{D}}^{B,A}_{q,m}$ if $i$ is  even.
\end{enumerate}

\item If $p=\forall q$, \begin{enumerate}[label=\arabic*1$^\forall$)]
\item  The \textit{lem} of  $\overrightarrow{\mathfrak{A}}^p_m$ is the same as the above \textit{lem} (2$^\exists$) of  $\overrightarrow{\mathfrak{B}}^{\exists q}_m$.
\item  The \textit{lem} of $\overrightarrow{\mathfrak{B}}^p_m$ is the same as the \textit{lem} (1$^\exists$) of $\overrightarrow{\mathfrak{A}}^{\exists q}_m$ except that $\overrightarrow{\mathfrak{D}}^{A,A}_{q,m}$ is replaced by a copy of $\overrightarrow{\mathfrak{D}}^{B,B}_{q,m}$.
\end{enumerate}
\end{itemize}

\end{itemize}

                       \end{defi}

\noindent Note that  every structure under consideration is just an ordered coloured tree and there is a path from its root to any node. We can read off the string of colours of edges along this path without a skip. Define $lab(x)$ to be a string in $\{\mathrm{red},\mathrm{blue}\}^*$ associated with the path. $lab(r):=\epsilon$, where $\epsilon$ is the empty string. We let ``red'' be later in the order than ``blue'' in the lexicographic order. Let $x,y$ be two nodes. We use $x^{\mathfrak{f}}$ ($y^{\mathfrak{f}}$ resp.) to denote the father of $x$ ($y$ resp.) in this section.

In the above, we have defined the structures that will be used in the games. And the sentence that we use to define the separating property is the same as Definition \ref{property-over-TAU}. That is, we actually use the same separating property to achieve the goal.

In the last section, we use point-expansions to realize strategy compositions. In some special cases, such compositions can be simplified: some of the substructures  collapse to ``points'', i.e.\ the details of the substructures are omitted, and we use ``colours'' to distinguish different substructures, which are now \textit{regarded} temporarily as elements. We call such a method a kind of ``structural abstraction'', which is used to omit unrelated details of structures and simplify game arguments, and the games played on the simplified structures are images of the original games.

\subsection{The duplicator's winning strategy}
Note that, even in the unordered case, when the spoiler picks $x$, the duplicator's strategy in the games is always picking $y$ s.t.\ $lab(x)=lab(y)$ and picking a child of an element which is picked in some previous round, say the $i$-th round, when the spoiler picks a child of the other element which is picked in the $i$-th round. This lays the crucial basis for the previous inductive proof of Theorem \ref{main} to extend to classes of linearly ordered finite structures, because now we can use something similar to the following well-known result \cite{Libkin04Elements}: 

Let $k\geq 1$, and let $L_1,L_2$ be linear orders of length no less than $2^k$, then \[L_1\equiv_k L_2. \hspace{1in} (*)\]   

Here, $L_1\equiv_k L_2$ means that, for any $\varphi\in \fo$ with $qr(\varphi)\leq k$, $L_1\models \varphi$ iff $L_2\models \varphi$. 

 Let $\widetilde{\varphi}_p$ be given by the definition \ref{property-over-TAU}. The proof of  lemma \ref{claim1} also shows that $\overrightarrow{\mathfrak{A}}^p_m \models \widetilde{\varphi}_p$ and $\overrightarrow{\mathfrak{B}}^p_m \not\models \widetilde{\varphi}_p$, since ``$\leq$'' does not appear in $\widetilde{\varphi}_p$.

To prove a version of  Theorem \ref{main} over ordered structures, the main idea is almost the same. Here, we just need to take care of the linear order.  As we have explained before, the players will always pick a pair
of elements that have the same label, and if the spoiler picks more than one
child of a node in one structure, so does the duplicator in the other structure. Hence, we can use structural abstraction to conceal the details of the substructures like $\overrightarrow{\mathfrak{D}}^{A,B}_{q,m}$, and regard the problem to be a game over two linear orders with three ``colours", which represent three ``colours''  $\overrightarrow{\mathfrak{D}}^{A,B}_{q,m}$, $\overrightarrow{\mathfrak{D}}^{B,A}_{q,m}$, and $\overrightarrow{\mathfrak{D}}^{A,A}_{q,m}$ respectively. See Figure \ref{prefix-Eq-color} for example. Here, $\overrightarrow{\mathfrak{D}}^{A,B}_{q,m}$ is identified with light red; $\overrightarrow{\mathfrak{D}}^{B,A}_{q,m}$ is identified with light blue; $\overrightarrow{\mathfrak{D}}^{A,A}_{q,m}$ is identified with green. 

\begin{figure}[h]
\centering
\includegraphics[trim = 15mm 35mm 15mm 40mm, clip, width=10cm]{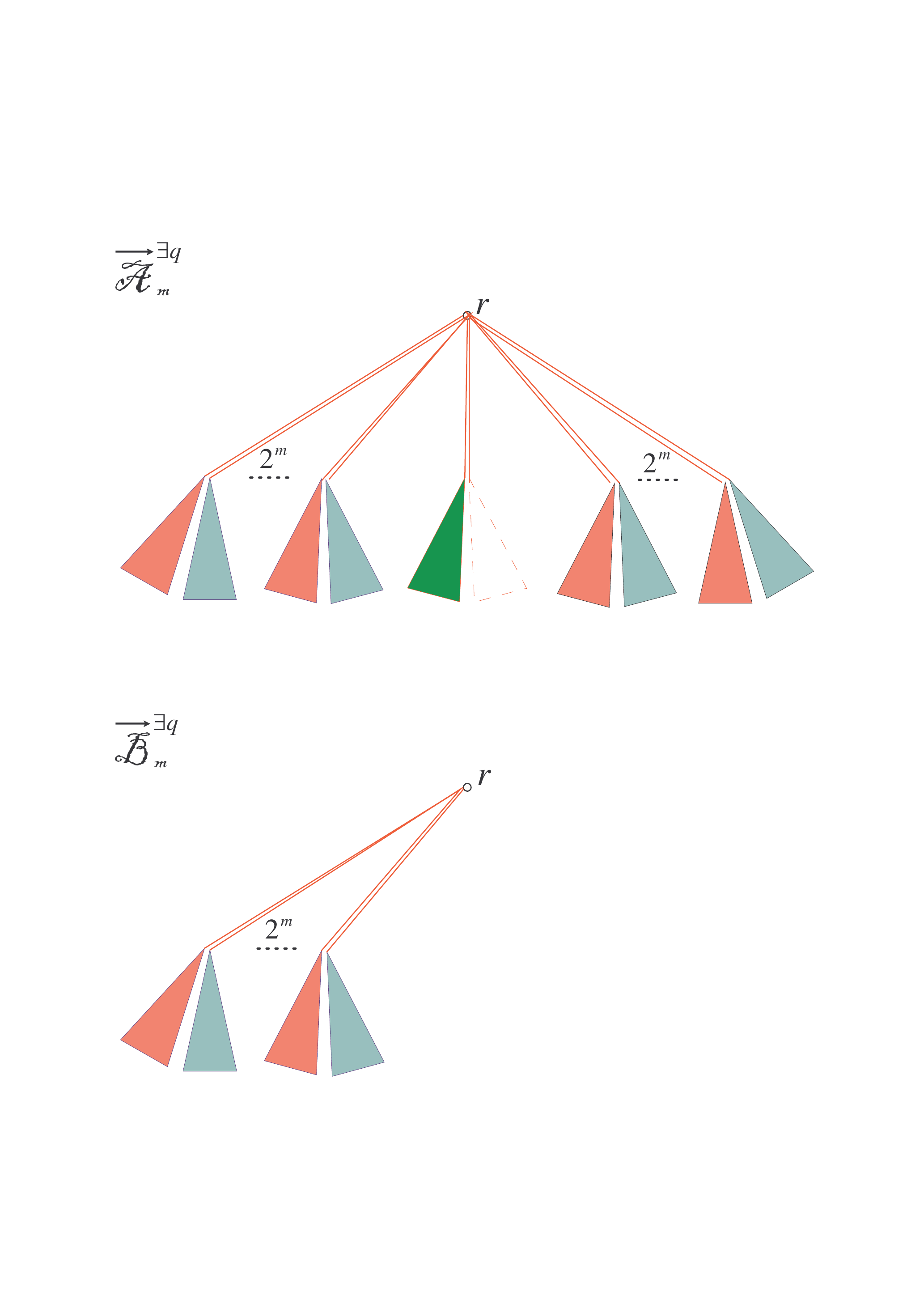}
\caption{The structures where subgraphs are identified with ``colours''.}
\label{prefix-Eq-color}
\end{figure}

As explained before, we may safely assume that $p[1]=\exists$.

\begin{figure}[h]
\centering
\includegraphics[trim = 15mm 35mm 15mm 40mm, clip, width=10cm]{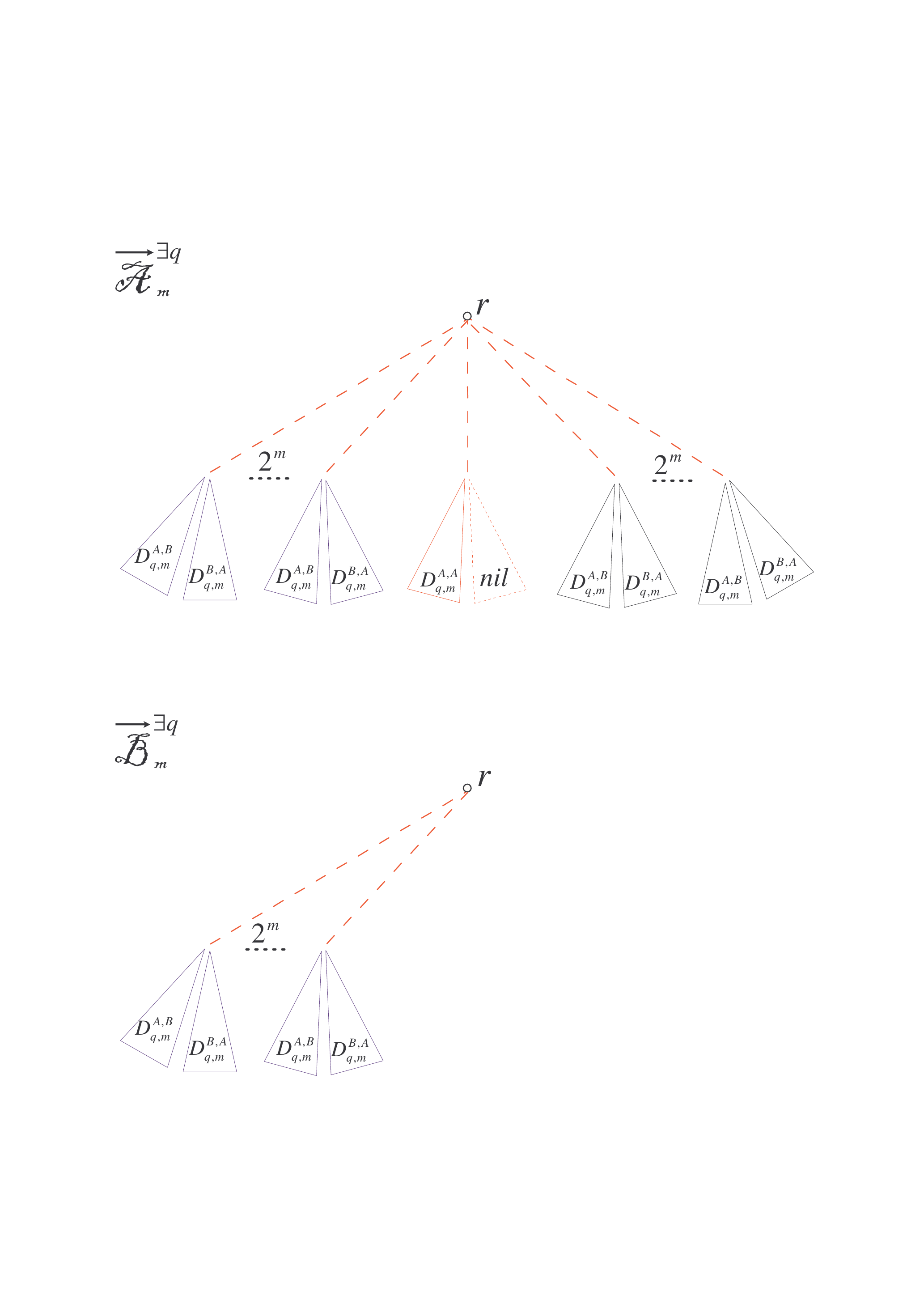}
\caption{The structures and 2-tuples.}
\label{prefix-Eq-order}
\end{figure}
Assume that $p=\exists q$. 
We can pair the children of $r^{\overrightarrow{\mathfrak{A}}^p_m}$ as:
 \begin{equation}\label{2-tuples-times}
(\overrightarrow{\mathfrak{D}}^{A,B}_{q,m},\overrightarrow{\mathfrak{D}}^{B,A}_{q,m}),\cdots,(\overrightarrow{\mathfrak{D}}^{A,A}_{q,m},\times),\cdots,(\overrightarrow{\mathfrak{D}}^{A,B}_{q,m},\overrightarrow{\mathfrak{D}}^{B,A}_{q,m}).
\end{equation}

We can regard two $(\tau^{+\text{ORD}}\cup\{\mathfrak{e}\}\setminus\{r\})$-structures in the brackets as a whole. More precisely, an  \underline{$s$-2-tuple} is a pair of structures from $\{\overrightarrow{\mathfrak{D}}^{A,B}_{s,m}$,$\overrightarrow{\mathfrak{D}}^{B,A}_{s,m}$,$\overrightarrow{\mathfrak{D}}^{A,A}_{s,m}$,$\overrightarrow{\mathfrak{D}}^{B,B}_{s,m}\}$ ($s\preceq p$), which is regarded as a single ``super-element''. From now on, we omit ``$s$'' in  ``$s$-2-tuple'' when it will not cause confusion from the context. In (\ref{2-tuples-times}), ``$\times$'' in the 2-tuple  $(\mathfrak{D}^{A,A}_{q,m},\times)$ represents an empty, or imaginary, structure, which is just used to make up a 2-tuple. See Figure \ref{prefix-Eq-order}. But, since we are regarding a 2-tuple as one single object now, we use one dashed line to represent two arrows.  Note that, by definition, the elements of $\overrightarrow{\mathfrak{D}}^{A,B}_{s,m}$ are earlier than those of $\overrightarrow{\mathfrak{D}}^{B,A}_{s,m}$, in a 2-tuple $(\overrightarrow{\mathfrak{D}}^{A,B}_{s,m},\overrightarrow{\mathfrak{D}}^{B,A}_{s,m})$.   
These 2-tuples have a natural linear order $\leq_2$ inherited  from $\leq^{\overrightarrow{\mathfrak{A}}^p_m}$: for any two 2-tuples $(\mathcal{X}_1,\mathcal{Y}_1)$ and $(\mathcal{X}_2,\mathcal{Y}_2)$, $(\mathcal{X}_1,\mathcal{Y}_1)\leq_2 (\mathcal{X}_2,\mathcal{Y}_2)$ iff $r^{\mathcal{Y}_1}\leq^{\overrightarrow{\mathfrak{A}}^p_m} r^{\mathcal{X}_2}$ where $r^{\mathcal{Y}_1}$ and $r^{\mathcal{X}_2}$ are the junction points of $\mathcal{Y}_1$ and $\mathcal{X}_2$ respectively. Similarly, we can pair the children of $r^{\overrightarrow{\mathfrak{B}}^p_m}$. Note that there are $(2^{m+1}+1)$  2-tuples in $\overrightarrow{\mathfrak{A}}^p_m$, which form a linear order $L_A$, and $2^{m}$ 2-tuples in $\overrightarrow{\mathfrak{B}}^p_m$, which form another linear order $L_B$. \label{LA-and-LB} See Figure \ref{prefix-Eq-color-2}: a yellow node represents a 2-tuple, which corresponds to a pair of light red and light blue structures in Figure \ref{prefix-Eq-color}; the green node represents the 2-tuple that is green in Figure  \ref{prefix-Eq-color}.

\begin{figure}[h]
\centering
\includegraphics[trim = 0mm 65mm 0mm 70mm, clip, width=10cm]{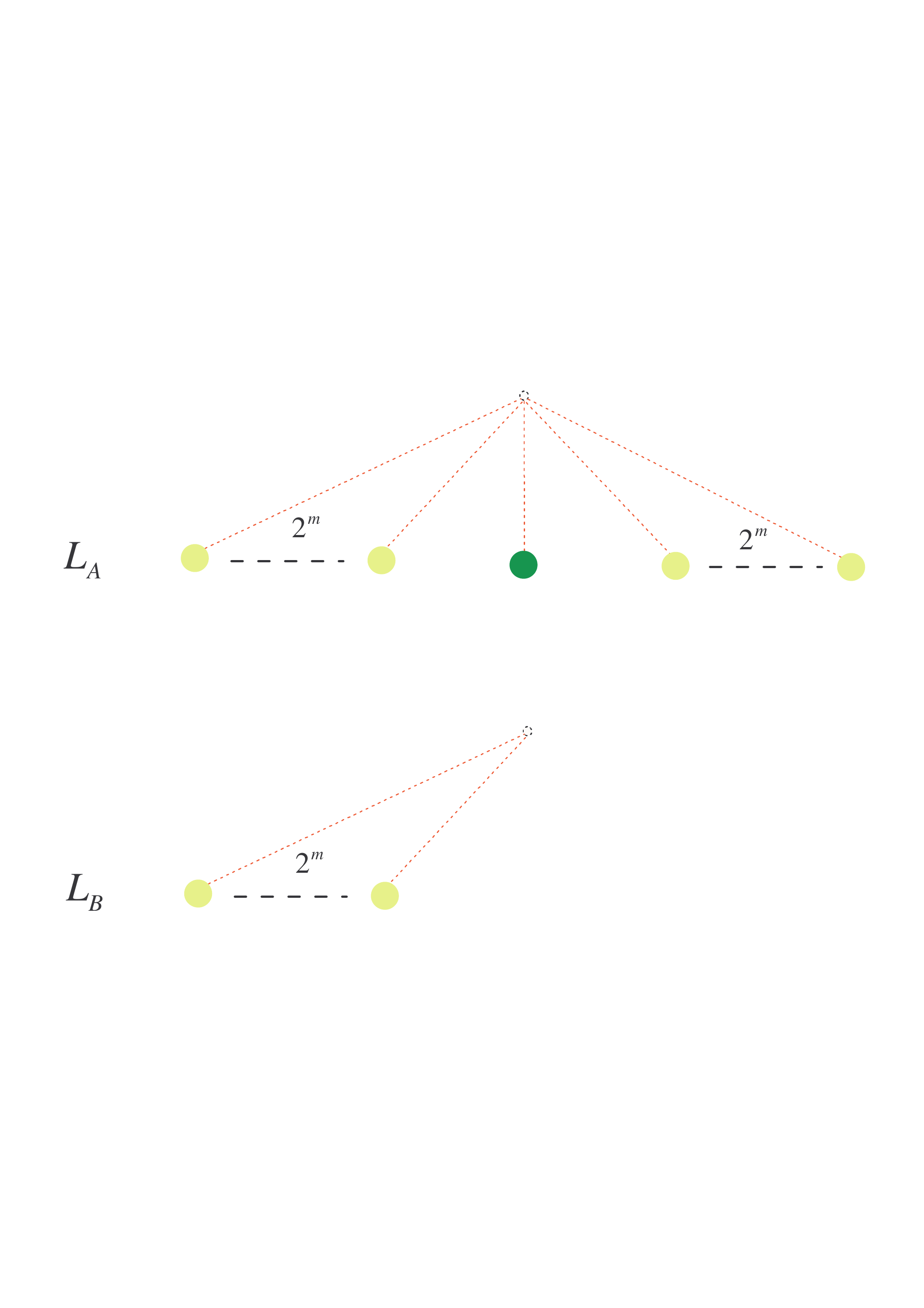}
\caption{$L_A$ and $L_B$.}
\label{prefix-Eq-color-2}
\end{figure}

Let $S$ be a $\Gamma$-labelled forest and $rk(S)=m$.   In the game $G_S(\overrightarrow{\mathfrak{A}}^p_m,\overrightarrow{\mathfrak{B}}^p_m)$, when the spoiler picks an element $x$, which is not the root of the structure, in some round, $x$ determines a $q$-2-tuple, where $p=\exists q$, s.t.\ $x$ is in the universe of this $q$-2-tuple. And we may also say that this  2-tuple is picked in this round. Hence,  $G_S(\overrightarrow{\mathfrak{A}}^p_m,\overrightarrow{\mathfrak{B}}^p_m)$ induces a new game in which the players pick 2-tuples instead of elements in the universe. Moreover, these 2-tuples form a pair of linear orders $L_A$ and $L_B$, as shown in Figure \ref{prefix-Eq-color-2}. Call this new game ``coloured linear order game''  $G_S(L_A,L_B)$. 

From now on, we use a natural number to denote a 2-tuple, in order to omit the details of 2-tuples  that are not related to our concern but at the same time retain the order relation between 2-tuples. Therefore, we can subtract one 2-tuple from another 2-tuple in this context. Note that in this viewpoint linear orders can also be regarded as intervals. Moreover, every set of elements that have the same labels form an interval.

The duplicator's strategy in $G_S(L_A,L_B)$ is as follows. Assume that $\mathcal{B}_i,\mathcal{B}_j$ are already picked. Let $\mathcal{A}_i$ ($\mathcal{A}_j$ resp.) be the element picked in $L_A$ in the same round as $\mathcal{B}_i$ ($\mathcal{B}_j$ resp.) was picked.  Recall that these 2-tuples can be compared by the induced order. 
\begin{itemize}
\item If in the current round the spoiler picks a 2-tuple $\mathcal{B}$ in the interval $[\mathcal{B}_i,\mathcal{B}_j]$ and  $\mathcal{B}-\mathcal{B}_i\leq \mathcal{B}_j-\mathcal{B}$ then the duplicator picks $\mathcal{A}$ s.t.\ $\mathcal{A}-\mathcal{A}_i=\mathcal{B}-\mathcal{B}_i$; otherwise she picks $\mathcal{A}$ s.t.\ $\mathcal{A}_j-\mathcal{A}=\mathcal{B}_j-\mathcal{B}$.
\item If in the current round the spoiler picks a 2-tuple $\mathcal{A}$ in the interval  $[\mathcal{A}_i,\mathcal{A}_j]$, the duplicator's strategy is as follows:
\begin{enumerate}
\item if $\mathcal{A}-\mathcal{A}_i\leq \frac{1}{2}(\mathcal{B}_j-\mathcal{B}_i)$ then the duplicator picks $\mathcal{B}$ in $[\mathcal{B}_i,\mathcal{B}_j]$ s.t.\ $\mathcal{B}-\mathcal{B}_i=\mathcal{A}-\mathcal{A}_i$; otherwise 
\item if $\mathcal{A}_j-\mathcal{A}\leq  \frac{1}{2}(\mathcal{B}_j-\mathcal{B}_i)$ then the duplicator picks   $\mathcal{B}$ in $[\mathcal{B}_i,\mathcal{B}_j]$ s.t.\ $\mathcal{B}_j-\mathcal{B}=\mathcal{A}_j-\mathcal{A}$; otherwise
\item the duplicator picks the middle element in $[\mathcal{B}_i,\mathcal{B}_j]$: if the special 2-tuple ($\overrightarrow{\mathfrak{D}}^{A,A}_{q,m},\times)$ (the green node in Figure \ref{prefix-Eq-color-2}) is in the region $[\mathcal{A},\mathcal{A}_j]$ then  $\mathcal{B}-\mathcal{B}_i=\lfloor\frac{1}{2}(\mathcal{B}_j-\mathcal{B}_i)\rfloor$, otherwise $\mathcal{B}_j-\mathcal{B}=\lfloor\frac{1}{2}(\mathcal{B}_j-\mathcal{B}_i)\rfloor$.
\end{enumerate}
\end{itemize}

\noindent Assume that in the first $k$ rounds the spoiler is restricted to pick in $\overrightarrow{\mathfrak{B}}^p_m$.

Because a 2-tuple may contain smaller 2-tuples when looking inside it, this strategy can be applied recursively until the duplicator finds an element to pick in the game  $G_S(\overrightarrow{\mathfrak{A}}^p_m,\overrightarrow{\mathfrak{B}}^p_m)$: \textit{she always picks an element that has the same label as the element picked by the spoiler in the same round}.  
 Using this strategy together with the strategy that is used in the game between two unordered structures, the duplicator can ensure that $(\overrightarrow{\mathfrak{D}}^{A,A}_{q,m},\times)$ will never be picked in the first $k$ rounds, in which the spoiler picks 2-tuples in $L_B$, and order itself will not cause a problem throughout the game  $G_S(\overrightarrow{\mathfrak{A}}^p_m,\overrightarrow{\mathfrak{B}}^p_m)$. More precisely, the following two lemmas are true:
 (Recall that $S$ is a $\Gamma$-labelled forest, and $m=rk(S)$)
\begin{lem}\label{color-order-game}
Assume that $k\in [0,m]$, and in the first $k$ rounds of the game $G_S(L_A,L_B)$, the spoiler picks 2-tuples in $L_B$.
\begin{enumerate}[label=(\roman*)]
\item At the end of the $i$-th round of $G_S(L_A,L_B)$, for each $i\leq k$,  there is only one interval in $L_A$  that is not isomorphic to the corresponding interval in $L_B$,  both of which are no shorter than $2^{m-i}$, and $(\overrightarrow{\mathfrak{D}}^{A,A}_{q,m},\times)$ always lies in this interval of $L_A$.
\item In the game $G_S(L_A,L_B)$, for any $i,j\leq k$, \[\mathcal{A}_i\leq_2 \mathcal{A}_j \hspace{4pt}\mathrm{iff}\hspace{4pt} \mathcal{B}_i\leq_2 \mathcal{B}_j.\] 
\end{enumerate}
\end{lem}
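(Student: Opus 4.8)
The plan is to prove parts~(i) and~(ii) of Lemma~\ref{color-order-game} together, by induction on the round number $i$, and to strengthen the inductive hypothesis enough that it propagates. Concretely, I would carry along the following invariant at the end of round $i$ (for $0\le i\le k$): (a) the picks $\mathcal{B}_1,\dots,\mathcal{B}_i$ in $L_B$ together with the two endpoints of $L_B$ cut $L_B$ into sub-intervals, the responses $\mathcal{A}_1,\dots,\mathcal{A}_i$ together with the endpoints of $L_A$ cut $L_A$ into sub-intervals, and these two partitions are matched by an order-isomorphism of the ``skeletons'' — this is exactly part~(ii); (b) each matched pair of sub-intervals is order-isomorphic respecting ``colours'' (since every non-green $2$-tuple carries the same colour, this just means ``equal length''), \emph{except} for a single pair $(I_A,I_B)$; (c) $(\overrightarrow{\mathfrak{D}}^{A,A}_{q,m},\times)$ lies in $I_A$ and splits it into a left yellow arc $I_A^{\ell}$ and a right yellow arc $I_A^{r}$; (d) $|I_A^{\ell}|,|I_A^{r}|\ge 2^{m-i}$, and moreover $|I_B|\le|I_A^{\ell}|$ and $|I_B|\le|I_A^{r}|$; (e) $|I_B|\ge 2^{m-i}$. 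Since then $|I_A| = |I_A^{\ell}|+1+|I_A^{r}|\ge 2\cdot 2^{m-i}+1$, clauses (b)--(e) yield the assertion of part~(i) and clause~(a) is part~(ii), so it suffices to establish the invariant. The conceptual content here is simply the familiar fact recalled in $(*)$ that a linear order of length $\ge 2^{m-i}$ cannot be distinguished from a longer one in the $\le m-i$ rounds that remain, specialised to the situation where the spoiler is, for now, confined to $L_B$.

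For the base case $i=0$ there are no picks, the single pair is $(L_A,L_B)$, it is not isomorphic (the lengths differ and only $L_A$ carries the green node), and by Definition~\ref{ordered-structures} the green $2$-tuple sits exactly in the middle of $L_A$, so $|I_A^{\ell}|=|I_A^{r}|=2^m=|L_B|=|I_B|$ and (b)--(e) hold. For the inductive step, assume the invariant after round $i-1$ with $i\le k$; by hypothesis the spoiler picks some $\mathcal{B}$ in $L_B$. If $\mathcal{B}$ has been picked before, the duplicator repeats her earlier answer and nothing changes. Otherwise $\mathcal{B}$ falls into one matched pair $([\mathcal{B}_a,\mathcal{B}_b],[\mathcal{A}_a,\mathcal{A}_b])$, and the duplicator plays the stated rule: match the distance $\mathcal{B}-\mathcal{B}_a$ from the left endpoint if $\mathcal{B}$ is in the left half of $[\mathcal{B}_a,\mathcal{B}_b]$, and the distance $\mathcal{B}_b-\mathcal{B}$ from the right endpoint otherwise. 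If the chosen pair was one of the isomorphic (``good'') pairs, the two intervals have equal length, the response $\mathcal{A}$ exists, is yellow (a good interval contains no green), and splits the good pair into two good pairs; the bad pair is untouched, and since $2^{m-(i-1)}\ge 2^{m-i}$ clauses (d)--(e) still hold with index $i$. If the chosen pair was the bad pair and, say, $\mathcal{B}$ lies in its left half (the right-half case is symmetric), then $\mathcal{B}-\mathcal{B}_a\le\tfrac12|I_B|\le\tfrac12|I_A^{\ell}|<|I_A^{\ell}|$, so $\mathcal{A}$ lands strictly to the left of the green node; hence $\mathcal{A}$ is yellow (matching $\mathcal{B}$), the left part of the split is a good pair, and the right part is the new bad pair, still containing green. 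The new left arc has length $|I_A^{\ell}|-(\mathcal{B}-\mathcal{B}_a)\ge\tfrac12|I_A^{\ell}|\ge 2^{m-i}$, the new right arc is the old $I_A^{r}$, the new $|I_B|$ is $|I_B|-(\mathcal{B}-\mathcal{B}_a)\ge\tfrac12|I_B|\ge 2^{m-i}$, and the two inequalities $|I_B'|\le|I_A'^{\ell}|$, $|I_B'|\le|I_A'^{r}|$ are inherited from the old ones; so (b)--(e) hold at stage $i$. Part~(ii) is preserved because the response always lands inside the $L_A$-interval matched to the $L_B$-interval that contains $\mathcal{B}$.

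The step I expect to be the real work is clause~(d): it says precisely that the green node always stays flanked, on each side, by a yellow arc no shorter than the current $L_B$-interval (and at least $2^{m-i}$), and it is this that guarantees the duplicator is never forced onto $(\overrightarrow{\mathfrak{D}}^{A,A}_{q,m},\times)$ — the only place where a partial-isomorphism violation could arise in this phase. Pinning down the exact form of (d) and getting the off-by-one bookkeeping right in the half-interval split (whether $\lfloor\tfrac12|I_B|\rfloor$ or $\lceil\tfrac12|I_B|\rceil$, and the exact constants at the two ends) is the only delicate part; everything else is routine, and in particular no reasoning about the interiors of the $2$-tuples is needed, since Lemma~\ref{color-order-game} lives entirely at the level of the abstracted game $G_S(L_A,L_B)$.
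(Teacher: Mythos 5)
Your proposal is correct and follows essentially the same route as the paper's proof: induction on the round number, tracking the unique non-isomorphic pair of intervals, showing that the green $2$-tuple $(\overrightarrow{\mathfrak{D}}^{A,A}_{q,m},\times)$ stays inside the $L_A$-member of that pair flanked by sufficiently long yellow arcs, and deriving (ii) from the fact that each response lands in the interval matched to the one the spoiler split. The only difference is that you state explicitly, as clause (d), the invariant that each arc flanking the green node is at least as long as the current bad $L_B$-interval --- the paper uses exactly this fact (``both of them are longer than $I_b$'') inside its inductive step but does not isolate it as a carried hypothesis, so your formulation is, if anything, slightly more careful.
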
 
\begin{proof} 
Before the game $G_S(L_A,L_B)$, there is only one interval in $L_A$ and $L_B$. Namely $L_A$ and $L_B$ themselves. Let $\mathcal{A}_f, \mathcal{A}_l$ be the fist and the last 2-tuples of $L_A$ respectively. And $\mathcal{B}_f$ and $\mathcal{B}_l$ be the first and the last 2-tuples of $L_B$ respectively.
\begin{enumerate}
\item  In the first round, there are two possibilities:
\begin{enumerate}
	\item The players pick the first 2-tuple or the last 2-tuple in the respective intervals and the intervals remain unchanged;
	\item The picked 2-tuples split $L_A$ and $L_B$ into two intervals. In this case, according to the duplicator's strategy, one interval in $L_A$ and $L_B$ has the same length, which is no larger than $2^{m-1}$, and the other interval in $L_A$ includes $(\overrightarrow{\mathfrak{D}}^{A,A}_{q,m},\times)$ because the length between the first 2-tuple (the last 2-tuple resp.) and $(\overrightarrow{\mathfrak{D}}^{A,A}_{q,m},\times)$ is the length of $L_B$ plus one which means that this length is larger than any interval in $L_B$. It also means that the interval which includes $(\overrightarrow{\mathfrak{D}}^{A,A}_{q,m},\times)$ is longer than the other one in $L_A$. Hence, it is larger than $2^{m-1}$, like the corresponding interval in $L_B$. 
\end{enumerate} 

\noindent Therefore, {(i)} holds when the game is at the end of the first round.
Assume that {(i)} holds when  the game is at the end of the $s$-th round, where $1\leq s\leq k$.

 When  the game is in the $(s+1)$-th round, there are only two cases: \begin{enumerate}
\item  The spoiler picks a 2-tuple that is in the interval, say $I_b$ in $L_B$.  Let the corresponding interval in $L_A$ be $I_a$, which includes the 2-tuple $(\overrightarrow{\mathfrak{D}}^{A,A}_{q,m},\times)$. By assumption, both of them are longer than $2^{m-s}$. In this round, the unique pair of intervals $I_a$ and $I_b$ are split into two pairs: one is isomorphic; the other one is not and includes $(\overrightarrow{\mathfrak{D}}^{A,A}_{q,m},\times)$ because $(\overrightarrow{\mathfrak{D}}^{A,A}_{q,m},\times)$ splits $I_a$ into two pieces and both of them  are longer than $I_b$. Note that the pair of isomorphic intervals are no longer than $2^{m-s-1}$. Hence, the other pair of intervals are no shorter than $2^{m-s-1}$, by the duplicator's strategy.     
\item The spoiler picks a 2-tuple in other intervals. Note that all the other
pairs of intervals have the same length. According to the duplicator's
strategy, splitting such a pair of intervals only produces pairs of intervals
of the same length. And the pair of intervals which are not isomorphic
is unchanged. By the inductive assumption, {(i)} still holds.
\end{enumerate}

\item  We prove an equivalent  conclusion, i.e.\ {(ii)} holds if we add two rounds before the game in which $\mathcal{A}_f, \mathcal{A}_l, \mathcal{B}_f$ and $\mathcal{B}_l$ are picked.
Clearly, $\mathcal{A}_f\leq_2\mathcal{A}_l$ iff $\mathcal{B}_f\leq_2\mathcal{B}_l$. In other words, it holds when these two elements are picked.

Suppose {(ii)} holds when the game is at the end of the $s$-th round, where $s\geq 0$.

Now assume that the game is in the $(s+1)$-th round of the game. If the spoiler picks a 2-tuple that was picked before, then it still holds. If the spoiler picks a 2-tuple $\mathcal{B}$ that splits some interval in $L_B$, say $[\mathcal{B}_h,\mathcal{B}_t]$, the duplicator also picks a 2-tuple $\mathcal{A}$ that splits the corresponding interval $[\mathcal{A}_h,\mathcal{A}_t]$ by her strategy. If $[\mathcal{B}_h,\mathcal{B}_t]$ and $[\mathcal{A}_h,\mathcal{A}_t]$ are isomorphic, then obviously {(ii)} holds. If it is not the case, 
due to {(i)}, we know that  $[\mathcal{A}_h,\mathcal{A}_t]$ is sufficiently long such that it allows such splitting. Then for any 2-tuple $\mathcal{B}_x$ that is picked before, $\mathcal{B}_x\leq_2 \mathcal{B}_h$ if  $\mathcal{B}_x\leq_2 \mathcal{B}$, which,  by assumption, implies that $\mathcal{A}_x\leq_2 \mathcal{A}_h$. Hence, $\mathcal{A}_x\leq_2\mathcal{A}$. Likewise, $\mathcal{A}_x\geq_2 \mathcal{A}$ if $\mathcal{B}_x\geq_2 \mathcal{B}$. Therefore, for any $i,j$, $\mathcal{A}_i\leq_2 \mathcal{A}_j \hspace{4pt}\underline{\mathrm{if}}\hspace{4pt} \mathcal{B}_i\leq_2 \mathcal{B}_j.$
  The ``\underline{only if}'' part can be proven similarly.\qedhere 
\end{enumerate}
\end{proof}

\noindent Part {(i)} of Lemma \ref{color-order-game} tells us that the spoiler cannot use the linear order to force the duplicator to violate her winning strategy in $G_S(\mathfrak{A}^p_m,\mathfrak{B}^p_m)$, no matter how he picks in $L_B$. That is, if the spoiler picks a yellow node in $L_B$ (see Figure \ref{prefix-Eq-color-2}), the duplicator can also pick a yellow node.
 A similar thing can be proved when $p[1]=\forall$. Because the collection of 2-tuples whose roots are children (with the same label) of an element (node) in
$\overrightarrow{\mathfrak{A}}^p_m$ form an interval, we can generalize {(ii)} of Lemma \ref{color-order-game} such that it applies to any pair of intervals that are split in the same round. Recall that a 2-tuple is composed of two trees. A 2-tuple is at depth $i$ if the roots of trees of this 2-tuple are at depth $i$. Hence, we can call these intervals \textit{intervals at depth $i$} if the elements (2-tuples) of these intervals are at depth $i$ of the structures (trees). 

\begin{lem} \label{lower-level-game}
Let $S$ be a $\Gamma$-labelled forest and $m=rk(S)$. Let $p$ be a prefix. In the game $G_S(\overrightarrow{\mathfrak{A}}^p_m,\overrightarrow{\mathfrak{B}}^p_m)$ where $p\notin \mathscr{W}(S)$, the duplicator can play in such a way that: 
\begin{enumerate}[label=(\roman*)]
\item she follows her winning strategy in $G_S(\mathfrak{A}^p_m,\mathfrak{B}^p_m)$ that has been described in Lemma \ref{EF-Game}; 
\item for any  $a_i,a_j\in |\overrightarrow{\mathfrak{A}}^p_m|$,  $b_i,b_j\in |\overrightarrow{\mathfrak{B}}^p_m|$, which are picked by the players in $i$-th and $j$-th rounds ($i,j\leq m$), \[a_i\leq^{\overrightarrow{\mathfrak{A}}^p_m} a_j \hspace{4pt}\mathrm{iff}\hspace{4pt}   b_i\leq^{\overrightarrow{\mathfrak{B}}^p_m} b_j.\]
\end{enumerate}
\end{lem}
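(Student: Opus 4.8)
The plan is to prove (i) and (ii) simultaneously by induction on $|p|$, after strengthening the statement so that it quantifies over \emph{all} $\Gamma$-labelled forests $S$ and \emph{all} $m\ge rk(S)$: the recursion descends into subgames on structures $\overrightarrow{\mathfrak{A}}^q_m,\overrightarrow{\mathfrak{B}}^q_m$ whose parameter $m$ exceeds the rank of the residual forest, and increasing the number of leaves (equivalently of $2$-tuples) only helps the duplicator, a routine monotonicity mirrored by the ``for any $m$'' clause in the proof of Lemma~\ref{EF-Game}. As in Lemma~\ref{EF-Game} (and as already noted for the ordered case) we may assume $p[1]=\exists$, say $p=\exists q$. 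Since $p\notin\mathscr{W}(S)$ and $\mathscr{W}(S)$ is closed under subsequences, Lemma~\ref{roson} gives $\mathscr{W}(S)\subseteq\gamma^-(f(p))$; as every word read off $S$ has length at most $rk(S)=m$ this yields $\mathscr{W}(S)\subseteq f^p_m$, hence $S\preceq_e\mathscr{F}(f^p_m)$ by Lemmas~\ref{forest2word} and~\ref{words2forest}. Because $f(\exists q)$ begins with a $\forall^*$-block, that block is neutralised by isomorphic-copy moves (Lemma~\ref{pick-in-isom}), so the effective top-level game becomes the coloured linear order game $G_S(L_A,L_B)$ of Section~\ref{sec-order-hierarchy}, set up so that Lemma~\ref{color-order-game} applies.

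The duplicator's strategy would then be the composition (Lemma~\ref{composition_of_strategies}) of three pieces, along the decomposition of $\overrightarrow{\mathfrak{A}}^p_m$ into its $\mathcal{I}$-expanded root together with the $q$-$2$-tuples that are its children (carrying the linear order $L_A$), and likewise for $\overrightarrow{\mathfrak{B}}^p_m$ and $L_B$: \emph{(a)} on the $2$-tuple skeleton she replies by the linear-order strategy of Lemma~\ref{color-order-game}, matching $\overrightarrow{\mathfrak{D}}^{A,B}$- and $\overrightarrow{\mathfrak{D}}^{B,A}$-coloured $2$-tuples of $L_A$ with same-coloured ones of $L_B$ and never being forced onto the special $2$-tuple $(\overrightarrow{\mathfrak{D}}^{A,A}_{q,m},\times)$ while the spoiler picks in $L_B$; \emph{(b)} inside any matched pair of \emph{isomorphic} substructures she copies the spoiler's moves (Lemma~\ref{pick-in-isom}); \emph{(c)} inside the two matched \emph{non-isomorphic} sub-pairs, which by construction are $(\overrightarrow{\mathfrak{A}}^q_m,\overrightarrow{\mathfrak{B}}^q_m)$ and $(\overrightarrow{\mathfrak{A}}^{-q}_m,\overrightarrow{\mathfrak{B}}^{-q}_m)$ nested inside the $\overrightarrow{\mathfrak{D}}^{A,A}$-versus-$\overrightarrow{\mathfrak{D}}^{A,B}$ split, she plays the strategy given by the induction hypothesis for the residual forest $S'$ obtained by deleting the traversed roots (using $\overrightarrow{\mathfrak{A}}^{-q}_m\leadsto_{S'}\overrightarrow{\mathfrak{B}}^{-q}_m$, which follows from the $q$-case by Lemma~\ref{p2-p}). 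That clause~(i) holds for this composite strategy is exactly the argument of Lemma~\ref{EF-Game}, transported through Lemma~\ref{m2m-} to the arbitrary forest $S$; this is the only place where $p\notin\mathscr{W}(S)$, via $f(p)$, is used. So the new content is clause~(ii).

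For clause~(ii) I would first record that, by construction, the duplicator always answers a spoiler move $x$ by some $y$ with $lab(x)=lab(y)$, and answers a child of a previously picked $a_i$ by a child of $b_i$; consequently two picked elements $a_i,a_j$ lie on a common branch, or diverge at a common ancestor, according to exactly the same pattern as $b_i,b_j$. If $a_i$ is an ancestor of $a_j$ then $a_i<^{\overrightarrow{\mathfrak{A}}^p_m}a_j$ and $b_i<^{\overrightarrow{\mathfrak{B}}^p_m}b_j$ both hold by the ``a node precedes its children'' convention of Definition~\ref{ordered-structures}. If they diverge, then by the remaining ordering conventions of that definition --- every node of a later $2$-tuple is later than every node of an earlier one, and inside a $\overrightarrow{\mathfrak{D}}^{X,Y}$ the $\overrightarrow{\mathfrak{A}}^{-q}$-half precedes the $\overrightarrow{\mathfrak{A}}^{q}$-half --- the comparison of $a_i$ and $a_j$ is determined purely by which $2$-tuple (respectively which half of a single $2$-tuple) of the relevant sibling-interval they descend through, and the same reduction applies to $b_i,b_j$. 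These two reductions agree: for the $2$-tuple skeleton this is Lemma~\ref{color-order-game}(ii); matched halves of a matched $2$-tuple are matched $-q$-half to $-q$-half and $q$-half to $q$-half by~(b) and~(c); and for two elements sitting in the same half, order-consistency is the induction hypothesis. Iterating down the levels gives~(ii) for all rounds $i,j\le m$.

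The step I expect to be the real obstacle is precisely this last patching: ensuring that order-consistency established \emph{locally} --- on the $2$-tuple skeleton by Lemma~\ref{color-order-game}, and independently inside each $2$-tuple by the recursion --- assembles into \emph{global} order-consistency without conflict, given that the recursive strategies used inside different $2$-tuples are a priori unrelated. This is exactly what the ordering conventions in Definition~\ref{ordered-structures} are designed to make work: by forcing whole substructures into contiguous order-blocks and fixing once and for all the relative order of the two halves of every $\overrightarrow{\mathfrak{D}}$, they guarantee that the order between any two elements depends only on the branching-level data already controlled by the composite strategy, never on a finer comparison that could cut across the composition.
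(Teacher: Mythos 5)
Your proposal is correct and follows essentially the same route as the paper: both reduce the order problem to nested coloured linear order games on the $2$-tuple skeleton, use Lemma \ref{color-order-game} (with the $2^{m-i}$ interval-length bookkeeping) to show the duplicator is never forced onto the special $2$-tuple while still matching labels, and establish (ii) by locating the deepest level at which $a_i,a_j$ diverge and appealing to the ordering conventions of Definition \ref{ordered-structures}. The only organizational difference is that you induct on $|p|$ and compose strategies via Lemma \ref{composition_of_strategies}, whereas the paper fixes the game and inducts on the round number, describing the duplicator's reply in each round as a cascade of coloured linear order games along the root-to-pick path; the substance is the same.
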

\begin{proof}
We assume that, before the first round of the game, both the first child 
and last child of any inner node are already picked. It means that we are trying to
prove an equivalent result. 

In each round, when the spoiler picks an element $a$ in a structure, say $\overrightarrow{\mathfrak{A}}^p_m$, there is a path $P$ from the root of the tree $\overrightarrow{\mathfrak{A}}^p_m$ to the node $a$. For any node $v$ in the path $P$ there is a 2-tuple $(\mathcal{X},\mathcal{Y})$ such that $(\mathcal{X},\mathcal{Y})$ is the 2-tuple that include $a$ and $v$ is the root of either $\mathcal{X}$ or $\mathcal{Y}$. Let $a$ be at the depth $r$ of the tree $\overrightarrow{\mathfrak{A}}^p_m$ and let the path $P$ be $(a_0,\cdots,a_r)$ where $a_0$ is the root of $\overrightarrow{\mathfrak{A}}^p_m$ and $a_r=a$. Assume that the intervals of 2-tuples that are split by the path is $(L_1^A,\cdots,L_r^A)$. That is, $L_i^A$ is an interval at the depth $i$ of the tree $\overrightarrow{\mathfrak{A}}^p_m$, which includes $a_i$. Let $(L_1^B,\cdots,L_r^B)$ be the collection of intervals of 2-tuples where $L_i^B$ is at depth $i$ and is formed in the same rounds as $L_i^A$. The duplicator first picks all the nodes in the path $(a_0,\cdots,a_r)$ (Looking at it in another way, the spoiler implicitly picks all the nodes in the path). Then she uses
her strategy in coloured linear order games recursively as follows: she first plays the one round coloured linear order game at the depth 1, i.e.\ over the pair of intervals $(L_1^A,L_1^B)$, picking a 2-tuple at the depth 1
 of $\overrightarrow{\mathfrak{B}}^p_m$, which splits the interval $L_1^B$ as a consequence; then she picks the root of a tree in the 2-tuple (recall that a 2-tuple is composed of two trees) which respects her  wining strategy in  $G_S(\mathfrak{A}^p_m,\mathfrak{B}^p_m)$ (recall Definition \ref{structures-over-TAU} for the definition of $\mathfrak{A}^p_m$ and $\mathfrak{B}^p_m$): whenever possible she tries to choose  the tree, whose root is $b_1$, that is isomorphic to the one implicitly picked by the spoiler in $L_1^A$ whose root is $a_1$, and pick $b_1$. 
 Then she goes on to play the game over the pair of intervals $(L_2^A,L_2^B)$. For $1\leq i\leq r-1$, once she picked a 2-tuple at depth $i$, she will pick the root, say $c$, of a tree from the 2-tuple, afterwards she picks a 2-tuple at depth $i+1$, whose trees are the children of $c$. At last, she picks a 2-tuple in the interval $L_r^B$ and picks the root of a tree from this 2-tuple that respects her winning strategy in $G_S(\mathfrak{A}^p_m,\mathfrak{B}^p_m)$.  
 
 {(i)} will be violated only when the lengths of the pair of non-isomorphic intervals are not long enough such that the spoiler can force the duplicator to pick a different type of 2-tuple when he picks repeatedly in the shorter interval. Note that any pair of isomorphic intervals have the same type of 2-tuples. Hence, to prove {(i)}, we need only show that {(i')} at the end of the $i$-th round, if a pair of intervals is not isomorphic, then the length of them are no less than $2^{m-i}$.  However, {(i')} is obvious because the strategy of the duplicator ensures that after the  $i$-th round ($i\geq 1$), the lengths of any pair of non-isomorphic intervals reduce at most $2^{m-i}$, while they are at least $2^{m-i+1}$ before the $i$-th round, which can be proved inductively as in  Lemma \ref{color-order-game}.  In other words, this strategy is able to incorporate her winning strategy in $G_S(\mathfrak{A}^p_m,\mathfrak{B}^p_m)$ - it can guide her to win the game if order is not taken into account. It remains to show that this is a strategy for the duplicator to avoid the order problem, i.e.\ to show {(ii)}.

For the sake of convenience, we call the stage before the players play the game as the 0-th round. In the first round, the order will not be a problem since other than those nodes at the ends of intervals, there is no other node that can violate {(ii)}.

Assume that {(ii)} holds when it is at the end of the $s$-th round.

Now suppose that the game is in the $(s+1)$-th round. Let $a_i$ and $a_j$ be two elements picked in the $i$-th and $j$-th rounds ($i,j\leq s$) in $\overrightarrow{\mathfrak{A}}^p_m$. We may further assume that either $i$ or $j$ equals $s+1$. When $r, a_i, a_j$ are in a path, then $a_i<a_j$ implies $b_i<b_j$ because the duplicator's strategy ensures that $r, b_i, b_j$ are also in a path. 

Now assume that $r, a_i, a_j$ are not in a path. 

If $a_i$ and $a_j$ have the same father and the same label, which means they are in the same interval, then we can apply the same argument of  Lemma \ref{color-order-game}, simply by regarding an element as a 2-tuple. If $a_i$ and $a_j$ have the same father  but have different labels, then by definition  their order is determined by their labels. So are $b_i$ and $b_j$. Note that the label of $a_i$ and $b_i$ ($a_j$ and $b_j$ resp.) are the same, by the duplicator's strategy. Therefore, {(ii)} holds. 

Assume that $a_i$ and $a_j$ have different fathers. Note that $a_i, a_j$ always share at least one ancestor, i.e.\ the root $r$. Let $c$ be such an shared ancestor, and for all other shared ancestors, $c$ is later in the order. Let $a_i^{\prime}$ be the ancestor of $a_i$ (or $a_i$ itself) and the child of $c$. Let $a_j^{\prime}$ be the ancestor of $a_j$ (or $a_j$ itself) and the child of $c$. Let $b_i^{\prime}, b_j^{\prime}$ be defined in a similar way in $\overrightarrow{\mathfrak{B}}^p_m$. Then by definition the order between $a_i$ and $a_j$ is determined by the order between $a_i^{\prime}$ and $a_j^{\prime}$. And the duplicator can ensure that $a_i^{\prime}<a_j^{\prime}$ iff $b_i^{\prime}<b_j^{\prime}$, according to the same argument as   Lemma \ref{color-order-game}. Therefore, {(ii)} holds.
\end{proof}

Lemma \ref{lower-level-game} tell us that linear order
does not cause a problem to the duplicator, and together with the arguments in
Theorem \ref{main1} the following holds.

\begin{thm} \label{main-order-0}
Let $S_1$ and $S_2$ be two finite $\Gamma$-labelled forests. Over the class of all finite $\tau^{+\text{ORD}}$-structures,
\[
    \mbox{if   } \mathscr{W}(S_1)\nsubseteq \mathscr{W}(S_2), \mbox{   then   } \fo\{S_1\}\nsubseteq \fo\{S_2\}.
\]
\end{thm}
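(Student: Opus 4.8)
The plan is to copy the architecture of the proof of Theorem~\ref{main} almost verbatim; the only new ingredient is that the duplicator's unordered game from Lemma~\ref{EF-Game} is replaced by the order-respecting game supplied by Lemma~\ref{lower-level-game}, and that the passage from a duplicator strategy to non-definability now runs through the game characterisation (Corollary~\ref{q-type-game-2}) instead of a direct model-theoretic implication.

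First I would fix a prefix $p\in\mathscr{W}(S_1)\setminus\mathscr{W}(S_2)$ and take $\widetilde{\varphi}_p=\widetilde{\psi}_p(r)$ from Definition~\ref{property-over-TAU}; it is a $\tau^{+}$-sentence, hence a $\tau^{+\text{ORD}}$-sentence. Since every root-to-leaf path of $qs(\widetilde{\varphi}_p)$ reads off precisely $p$ and $p\in\mathscr{W}(S_1)$, one has $qs(\widetilde{\varphi}_p)\preceq_e S_1$ (collapse the whole level $i$ of $qs(\widetilde{\varphi}_p)$ onto the $i$-th letter of a witnessing occurrence of $p$ in a branch of $S_1$; the map may be non-injective), so the query $\mathcal{K}:=\mathrm{Mod}(\widetilde{\varphi}_p)$, read relative to finite $\tau^{+\text{ORD}}$-structures, lies in $\fo\{S_1\}$, and it suffices to show $\mathcal{K}\notin\fo\{S_2\}$. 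Put $m:=rk(S_2)$. Because ``$\leq$'' does not occur in $\widetilde{\varphi}_p$, the proof of Lemma~\ref{claim1} applies unchanged to the ordered structures of Definition~\ref{ordered-structures}, giving $\overrightarrow{\mathfrak{A}}^p_m\models\widetilde{\varphi}_p$ and $\overrightarrow{\mathfrak{B}}^p_m\not\models\widetilde{\varphi}_p$, i.e.\ $\overrightarrow{\mathfrak{A}}^p_m\in\mathcal{K}$ and $\overrightarrow{\mathfrak{B}}^p_m\notin\mathcal{K}$. Next, $p\notin\mathscr{W}(S_2)$ and $\mathscr{W}(S_2)$ is closed downward under $\preceq$, so no $q\in\mathscr{W}(S_2)$ has $p$ as a subsequence; by Lemma~\ref{roson} this means $\mathscr{W}(S_2)\subseteq\gamma^-(f(p))$, and as every word of $S_2$ has length $\le rk(S_2)=m$, in fact $\mathscr{W}(S_2)\subseteq f^p_m$. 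Now apply Lemma~\ref{lower-level-game} with $S:=S_2$ (legitimate, since $m=rk(S_2)$ and $p\notin\mathscr{W}(S_2)$): it produces a single strategy with which the duplicator in $G_{S_2}(\overrightarrow{\mathfrak{A}}^p_m,\overrightarrow{\mathfrak{B}}^p_m)$ simultaneously follows her unordered winning strategy (clause~(i)) and keeps the picked elements in the same $\leq$-order on the two sides (clause~(ii)). Since a $\tau^{+}$-partial isomorphism together with preservation of $\leq$ is a $\tau^{+\text{ORD}}$-partial isomorphism, and the roots interpreting $r$ are matched by this strategy, it is a winning strategy, i.e.\ $\overrightarrow{\mathfrak{A}}^p_m\leadsto_{S_2}\overrightarrow{\mathfrak{B}}^p_m$. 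Feeding $\mathfrak{A}=\overrightarrow{\mathfrak{A}}^p_m\in\mathcal{K}$ and $\mathfrak{B}=\overrightarrow{\mathfrak{B}}^p_m\notin\mathcal{K}$ into Corollary~\ref{q-type-game-2} then shows that no first-order sentence $\varphi$ with $qs(\varphi)\preceq_e S_2$ defines $\mathcal{K}$; hence $\mathcal{K}\notin\fo\{S_2\}$, and $\fo\{S_1\}\nsubseteq\fo\{S_2\}$.

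The genuinely hard part is entirely inside Lemma~\ref{lower-level-game}, which I would regard as already granted: the difficulty it handles is that over ordered structures the spoiler could in principle exploit $\leq$, and one must show the duplicator can honour $\leq$ without sacrificing the strategy of Lemma~\ref{EF-Game}. The mechanism is ``structural abstraction'': collapse each glued substructure $\overrightarrow{\mathfrak{D}}^{X,Y}_{q,m}$ to a coloured point, so that at each level the children of a node group into $2$-tuples forming two coloured linear orders $L_A,L_B$; the duplicator plays the coloured linear-order game of Lemma~\ref{color-order-game} recursively down the branch, and the invariant that any pair of non-isomorphic intervals created in the first $k$ rounds stays longer than $2^{m-k}$ --- so the unique ``defective'' interval, the one containing the special $2$-tuple $(\overrightarrow{\mathfrak{D}}^{A,A}_{q,m},\times)$, is never forced open --- is precisely what rests on the classical fact $(*)$ that linear orders of length $\ge 2^{k}$ are $\equiv_{k}$, and is the reason Definition~\ref{ordered-structures} inflates the leaf counts to $2^{m+2}+1$ and $2^{m+1}$ rather than a linear number. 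With that in hand, the only substantive difference from the unordered proof of Theorem~\ref{main} is the detour through Corollary~\ref{q-type-game-2} (there is no direct ordered analogue of Lemma~\ref{EF-Game}(2)), plus the minor bookkeeping of reading $\mathrm{Mod}$ over the class of finite $\tau^{+\text{ORD}}$-structures.
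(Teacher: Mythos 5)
Your proposal is correct and follows essentially the same route as the paper, which itself only remarks that Theorem~\ref{main-order-0} follows from Lemma~\ref{lower-level-game} (the order-respecting upgrade of the duplicator's strategy) combined with the arguments used for Theorem~\ref{main}; your write-up simply makes that combination explicit, including the observations that $\leq$ does not occur in $\widetilde{\varphi}_p$ and that $\tau^{+}$-partial isomorphism plus order preservation yields a $\tau^{+\text{ORD}}$-partial isomorphism. The choice $m=rk(S_2)$ together with the direct appeal to Corollary~\ref{q-type-game-2} is a harmless, slightly cleaner packaging of the same argument.
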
\smallskip

\noindent Using similar arguments as in the last section, in particular the same reduction as 
in Lemma \ref{translation}, we can prove the following Theorem.

\begin{thm} \label{main-order}
Let $S_1$ and $S_2$ be two finite $\Gamma$-labelled forests. Over the class of all  ordered finite digraphs,
\[
    \mbox{if   } \mathscr{W}(S_1)\nsubseteq \mathscr{W}(S_2), \mbox{   then   } \fo\{S_1\}\nsubseteq \fo\{S_2\}.
\]
\end{thm}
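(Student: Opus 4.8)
The plan is to mirror the proof of Theorem~\ref{main1}, replacing the unordered Ehrenfeucht--Fra\"iss\'e argument (Lemma~\ref{EF-Game}) by its ordered refinement (Lemma~\ref{lower-level-game}), and to observe that the ``reductions from $\tau^+$ to $\tau$'' of Section~\ref{games-and-hierarchy} carry over verbatim when a linear order is present. Fix $p\in\mathscr{W}(S_1)\setminus\mathscr{W}(S_2)$. By Lemma~\ref{roson}, $\gamma^-(f(p))$ is exactly the set of prefixes of which $p$ is not a subsequence, so $\mathscr{W}(S_2)\subseteq\gamma^-(f(p))$. Applying the reductions from $\tau^+$ to $\tau$ to the ordered structures $\overrightarrow{\mathfrak{A}}^p_m$ and $\overrightarrow{\mathfrak{B}}^p_m$ of Definition~\ref{ordered-structures} --- deleting the root, encoding red/blue edges by forward/backward arcs, marking black leaves with bi-directional edges, and treating $|p|=2$ by the exceptional self-loop construction --- while \emph{retaining} the linear order $\leq$, yields ordered digraphs, which I call again $\mathfrak{A}^p_m$ and $\mathfrak{B}^p_m$, over $\langle E,\leq\rangle$, together with the sentence $\varphi_p$ (in which $\leq$ does not occur). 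Since the colour/root information is recoverable from the $E$-pattern exactly as before, the Hintikka-type statement $\mathfrak{A}^p_m\models\varphi_p$ and $\mathfrak{B}^p_m\not\models\varphi_p$ holds by the same induction as Lemma~\ref{hintikka}, and plainly $\varphi_p\in\fo\{S_1\}$.

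Next I would establish the ordered analogue of Lemma~\ref{translation}: for every first-order sentence $\zeta$ over $\langle E,\leq\rangle$ there is a sentence $\xi$ over $\tau^{+\text{ORD}}$, with the \emph{same} quantifier structure, such that $\mathfrak{A}^p_m\models\zeta$ iff $\overrightarrow{\mathfrak{A}}^p_m\models\xi$, and $\mathfrak{B}^p_m\models\zeta$ iff $\overrightarrow{\mathfrak{B}}^p_m\models\xi$. As in Lemma~\ref{translation}, $\xi$ is obtained from $\zeta$ by relativising all quantifiers by $x\neq r$ and replacing each occurrence of $Exy$ by $Rxy\lor Byx\lor(Rrx\land x=y)\lor(Rrx\land U(y))\lor(Rry\land U(x))$ (adjusted in the $|p|=2$ case); crucially, every occurrence of the symbol $\leq$ is left untouched, because $\leq$ is interpreted as the same linear order in $\mathfrak{A}^p_m$ as in $\overrightarrow{\mathfrak{A}}^p_m$ --- the reduction alters only $E,R,B,U$ and deletes the root, which by construction is $\leq$-minimal, so relativising does not disturb the order on the surviving universe, nor the orders $L_A,L_B$ of $2$-tuples. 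The quantifier structure is unchanged since relativisation adds only quantifier-free material.

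Finally, suppose for contradiction that the property defined by $\varphi_p$ lies in $\fo\{S_2\}$, witnessed by $\psi$ with $qs(\psi)\preceq_e S_2$ and $qr(\psi)=m$. By the Hintikka lemma, $\mathfrak{A}^p_m\models\psi$ while $\mathfrak{B}^p_m\not\models\psi$; by the ordered translation lemma there is $\xi$ over $\tau^{+\text{ORD}}$ with $qs(\xi)=qs(\psi)\preceq_e S_2$, $qr(\xi)=m$, such that $\overrightarrow{\mathfrak{A}}^p_m\models\xi$ and $\overrightarrow{\mathfrak{B}}^p_m\not\models\xi$. Since $qs(\xi)\preceq_e S_2$ gives $\mathscr{W}(qs(\xi))\subseteq\mathscr{W}(S_2)\subseteq\gamma^-(f(p))$, and since every word readable off $qs(\xi)$ has length at most $qr(\xi)=m$, we get $\mathscr{W}(qs(\xi))\subseteq f^p_m$. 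But Lemma~\ref{lower-level-game}, applied with $S=\mathscr{F}(f^p_m)$ (so $rk(S)=m$), provides the duplicator with a winning strategy in $G_S(\overrightarrow{\mathfrak{A}}^p_m,\overrightarrow{\mathfrak{B}}^p_m)$ that simultaneously respects her unordered strategy from Lemma~\ref{EF-Game} and, by part~(ii), the linear order; hence by Lemma~\ref{words2forest}, Theorem~\ref{qs-game} and Lemma~\ref{m2m-} we obtain $\overrightarrow{\mathfrak{A}}^p_m\models\xi\Rightarrow\overrightarrow{\mathfrak{B}}^p_m\models\xi$, a contradiction. Therefore $\fo\{S_1\}\nsubseteq\fo\{S_2\}$ over ordered finite digraphs.

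The main obstacle is the translation/reduction step in the presence of $\leq$: one must check carefully that encoding the colours in $E$ and deleting the root hands the spoiler no new ``order-like'' resource --- i.e.\ that the linear order on the digraph is exactly the one whose analysis is already carried out in Lemma~\ref{color-order-game} and Lemma~\ref{lower-level-game} --- and in particular that the exceptional $|p|=2$ construction, which uses self-loops rather than bi-directional edges for colouring, still supports both the Hintikka lemma and an order-preserving translation. Once this bookkeeping is settled, the genuinely hard content, namely that the given linear order gives the spoiler no extra distinguishing power, has already been supplied by Lemmas~\ref{color-order-game} and~\ref{lower-level-game}.
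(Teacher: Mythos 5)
Your proposal is correct and follows essentially the same route as the paper, which proves the result by combining the ordered-game analysis of Lemmas \ref{color-order-game} and \ref{lower-level-game} (already packaged as Theorem \ref{main-order-0}) with the same reduction as in Lemma \ref{translation}; the paper leaves these details as a one-line remark, and your write-up fills them in as intended, including the correct observation that $\leq$ is left untouched by the translation and that deleting the $\leq$-minimal root does not disturb the order on the surviving universe.
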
\smallskip

\noindent Here we call $\tau\cup\{\leq\}$-structures (linearly) ordered digraphs.

The following corollary is a special case of Theorem \ref{main-order}, where $S_1$ and $S_2$ are degenerate trees (or directed paths).  

\begin{cor} \label{two-prefixes}
Let $p,q\in \Gamma^*$. Over the class of all  ordered finite digraphs,
\[
    \mbox{if   } p\npreceq q, \mbox{   then   } \fo\{p\}\nsubseteq \fo\{q\}.
\]
\end{cor}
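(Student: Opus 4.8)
The plan is to read off Corollary \ref{two-prefixes} as the instance of Theorem \ref{main-order} in which the two forests are degenerate trees, as anticipated by the remark preceding the corollary. Given prefixes $p,q\in\Gamma^*$, let $S_p$ denote the $\Gamma$-labelled degenerate tree (directed path) whose unique root-to-leaf path carries, in order, the labels $p[1],\dots,p[|p|]$, and define $S_q$ analogously from $q$; with the convention that $\fo\{p\}$ abbreviates $\fo\{S_p\}$ (and $\fo\{q\}$ abbreviates $\fo\{S_q\}$), the corollary is literally Theorem \ref{main-order} restricted to such forests. So the only thing to verify is that the hypothesis $p\npreceq q$ entails the hypothesis $\mathscr{W}(S_p)\nsubseteq\mathscr{W}(S_q)$ of Theorem \ref{main-order}.

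First I would compute $\mathscr{W}(S_p)$. Since $S_p$ is a single directed path, every directed path inside the forest $S_p$ is a contiguous subpath, and by Definition \ref{read-off} the words readable off it are the contiguous substrings of $p$ together with all of their subsequences. Because every subsequence of a substring of $p$ is a subsequence of $p$, and the full path yields $p$ itself, this gives $\mathscr{W}(S_p)=\{r\in\Gamma^*\mid r\preceq p\}=\{p\}^-$, and likewise $\mathscr{W}(S_q)=\{q\}^-$. Now $p\in\mathscr{W}(S_p)$ always, while $p\in\mathscr{W}(S_q)=\{q\}^-$ holds exactly when $p\preceq q$. Hence if $p\npreceq q$ then $p\in\mathscr{W}(S_p)\setminus\mathscr{W}(S_q)$, so $\mathscr{W}(S_p)\nsubseteq\mathscr{W}(S_q)$. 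Applying Theorem \ref{main-order} with $S_1:=S_p$ and $S_2:=S_q$ yields $\fo\{S_p\}\nsubseteq\fo\{S_q\}$, i.e.\ $\fo\{p\}\nsubseteq\fo\{q\}$ over all ordered finite digraphs.

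If one wants the witnessing query explicitly rather than through the abstract statement of Theorem \ref{main-order}, one can take the property defined by the first-order sentence of prefix-type $p$ built in the reduction to digraphs (the analogue of $\varphi_p$): Lemma \ref{hintikka} shows it separates $\mathfrak{A}^p_m$ from $\mathfrak{B}^p_m$, its quantifier structure embeds into $S_p$ so the property is in $\fo\{p\}$, and Lemma \ref{lower-level-game} (together with the argument of Theorem \ref{main-order}) shows no sentence with quantifier structure embeddable into $S_q$ can define it, since $\mathscr{W}(S_q)\subseteq\gamma^-(f(p))$.

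There is essentially no real obstacle here; it is a genuine corollary. The only points that require (trivial) care are recording the notational identification $\fo\{p\}=\fo\{S_p\}$, and the elementary fact that the set of words readable off a degenerate tree is precisely the downward closure $\{p\}^-$ of the corresponding prefix, so that prefix embedding $p\preceq q$ and word-inclusion $\mathscr{W}(S_p)\subseteq\mathscr{W}(S_q)$ coincide.
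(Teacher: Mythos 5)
Your proof is correct and takes exactly the paper's route: the paper simply observes that the corollary is the special case of Theorem \ref{main-order} in which $S_1$ and $S_2$ are degenerate trees, and you have merely spelled out the (correct) elementary verification that $\mathscr{W}(S_p)=\{p\}^-$ and hence that $p\npreceq q$ yields $\mathscr{W}(S_p)\nsubseteq\mathscr{W}(S_q)$.
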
\smallskip

\noindent Note that it is different from Gr\"adel and McColm's conjecture \cite{GradelM96}.

 A natural question is whether something similar to Theorem \ref{main1} holds, but over finite digraphs with built-in $\mathrm{BIT}$. Here, $\mathrm{BIT}$ is the binary relation for the bit operator: $\mathrm{BIT}(x,y)=1$ if the $y$-th bit of the binary representation of $x$ is $1$. The operator $\mathrm{BIT}$ seems very powerful. It is known that first-order logic equipped with $\mathrm{BIT}$ can define arbitrary algorithmic operators, including $\leq, \times, +, \mathrm{Exp}$,  and $\mathrm{Squares}$ (Schweikardt, \cite{schweikardt_arithmetic_2005}). Supprisingly, Schweikardt and Schwentick \cite{SchweikardtLinearOrder} showed that  $\mathrm{BIT}$ is similar to linear orders in terms of expressive power in first-order logic. Based on their constructions, it is not difficult to show that the quantifier structure hierarchy is strict in $\fo$, even in the presence of $\mathrm{BIT}$.

\section{A refined quantifier structure hierarchy}\label{refined-hierarchy}

\subsection{The structures and separating property.}
It is possible that two $\Gamma$-labeled forests cannot embed to each other but the set of words that can be read off them are the same. 
It is natural to conjecture that they represent different logical resources. However, the
hierarchy we defined in the last section cannot tell us about it. In the following, we are
going to show a refined strict hierarchy, which confirms this intuition: 
\begin{thm}\label{main-main}
Let $S_1$ and $S_2$ be two $\Gamma$-labelled forests. Over the class of all digraphs,
\[
         \mathrm{if}\hspace{3pt} S_1\npreceq_e S_2, \hspace{2pt} \mathrm{then}\hspace{3pt} \fo\{S_1\}\nsubseteq \fo\{S_2\}.
\]
\end{thm}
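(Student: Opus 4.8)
The plan is to follow the template of Sections~\ref{games-and-hierarchy} and~\ref{reductions-from-tauplus-to-tau}, replacing the prefix $p$ (a degenerate tree) by a genuinely branching tree, so that the witnessing formula's quantifier structure branches. First I would reduce to the case that $S_1$ is a single tree: since distinct trees of a forest share no arc, one checks directly that $S_1\preceq_e S_2$ iff every tree of $S_1$ embeds into $S_2$ (take the disjoint union of the per-tree embeddings, which is legitimate because $\iota$ need not be injective and the domains are disjoint). Hence $S_1\npreceq_e S_2$ yields a tree $\mathcal{T}$ of $S_1$ with $\mathcal{T}\npreceq_e S_2$, and since $qs(\varphi)\preceq_e\mathcal{T}$ implies $qs(\varphi)\preceq_e S_1$ it suffices to show $\fo\{\mathcal{T}\}\nsubseteq\fo\{S_2\}$. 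Moreover, by Theorem~\ref{main1} we may assume $\mathscr{W}(S_1)\subseteq\mathscr{W}(S_2)$, so the genuinely new case is when every branch of $\mathcal{T}$ is matched by a path in $S_2$ but the branching of $\mathcal{T}$ cannot be matched simultaneously.

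Next, generalizing Definitions~\ref{structures-over-TAU} and~\ref{property-over-TAU}, I would build by recursion on $\mathcal{T}$ a pair of coloured-tree structures $\widetilde{\mathfrak{A}}^{\mathcal{T}}_m,\widetilde{\mathfrak{B}}^{\mathcal{T}}_m$ over an extension of $\tau^+$, together with a sentence $\widetilde{\varphi}_{\mathcal{T}}$ with $qs(\widetilde{\varphi}_{\mathcal{T}})\preceq_e\mathcal{T}$ (so $\widetilde{\varphi}_{\mathcal{T}}\in\fo\{\mathcal{T}\}$). At a node labelled $\exists$ with hanging subtrees $\mathcal{T}_1,\dots,\mathcal{T}_k$ the formula has the shape $\exists x\,(Ryx\land\bigwedge_{i=1}^{k}(\widetilde{\psi}_{\mathcal{T}_i}(x)\land\widetilde{\psi}_{-\mathcal{T}_i}(x)))$, dually for $\forall$, forcing the witness of the $\exists$ to be a junction point carrying a red and a blue copy of every child gadget; the structures are again obtained by point-expanding a small tree along the recursion (Definition~\ref{point-expansion}), gluing one gadget per child, possibly using additional edge colours to keep the gadgets of distinct children independent. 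The Hintikka-style statement $\widetilde{\mathfrak{A}}^{\mathcal{T}}_m\models\widetilde{\varphi}_{\mathcal{T}}$ and $\widetilde{\mathfrak{B}}^{\mathcal{T}}_m\not\models\widetilde{\varphi}_{\mathcal{T}}$ then follows by induction on $\mathcal{T}$, exactly as in Lemma~\ref{claim1}.

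The core step, and the main obstacle, is the game lemma generalizing Lemma~\ref{EF-Game}: for every $\Gamma$-labelled forest $S'$ with $rk(S')\leq m$ and $\mathcal{T}\npreceq_e S'$, the duplicator wins $G_{S'}(\widetilde{\mathfrak{A}}^{\mathcal{T}}_m,\widetilde{\mathfrak{B}}^{\mathcal{T}}_m)$. The proof is by strategy composition (Lemma~\ref{composition_of_strategies}) along the recursion on $\mathcal{T}$, but one must now simultaneously handle the branching of $\mathcal{T}$ and of $S'$ and, crucially, the non-injectivity of embeddings, since several children of $\mathcal{T}$ may be routed onto a single branch of $S'$ --- a phenomenon absent from the prefix case. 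The combinatorial fact to extract is: whenever the spoiler descends a tree of $S'$ and sits on a node $v$ whose label matches the root of $\mathcal{T}$, the failure $\mathcal{T}\npreceq_e S'$ forces some child-subtree $\mathcal{T}_i$ not to embed into the sub-forest of $S'$ reachable from $v$ by nontrivial directed paths; the duplicator answers with the junction point of the gadget that is ``complete'' on the $\widetilde{\mathfrak{A}}$-side but deficient precisely in direction $\mathcal{T}_i$ on the $\widetilde{\mathfrak{B}}$-side, so that the residual play reduces, via Lemma~\ref{composition_of_strategies} and the inductive hypothesis applied to $\mathcal{T}_i$ and the residual of $S'$ below $v$, to games the duplicator already wins or to games between isomorphic substructures (handled by Lemma~\ref{pick-in-isom}). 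As with $f^p_m$ in Lemma~\ref{EF-Game}, carrying the bound $m=rk(S')$ along the recursion keeps every subgame within range. I expect the delicate bookkeeping to be in verifying that the $\widetilde{\mathfrak{B}}$-side gadget really is deficient in every direction the spoiler could need, uniformly over all ways he could have descended through $S'$.

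Finally I would transfer the whole argument to plain digraphs using the reductions from $\tau^+$ to $\tau$ of Section~\ref{reductions-from-tauplus-to-tau} (the analogue of Lemma~\ref{translation}, which preserves quantifier structure), and conclude as in the proof of Theorem~\ref{main1}. Assume for contradiction that some $\psi$ with $qs(\psi)\preceq_e S_2$ and $qr(\psi)=m$ defines the property of $\widetilde{\varphi}_{\mathcal{T}}$ (equivalently its $\tau$-analogue $\varphi_{\mathcal{T}}$). Since $\preceq_e$ is transitive, $\mathcal{T}\preceq_e qs(\psi)$ would give $\mathcal{T}\preceq_e S_2$, so in fact $\mathcal{T}\npreceq_e qs(\psi)$, and $rk(qs(\psi))=m$. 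By the transfer lemma there is $\xi$ over $\tau^+$ with the same quantifier structure as $\psi$ separating $\widetilde{\mathfrak{A}}^{\mathcal{T}}_m$ from $\widetilde{\mathfrak{B}}^{\mathcal{T}}_m$; but the game lemma together with Theorem~\ref{qs-game} and Corollary~\ref{q-type-game-2} says no sentence whose quantifier structure is a rank-$\leq m$ forest not admitting an embedded copy of $\mathcal{T}$ can do so. This contradiction establishes $\fo\{\mathcal{T}\}\nsubseteq\fo\{S_2\}$, hence $\fo\{S_1\}\nsubseteq\fo\{S_2\}$.
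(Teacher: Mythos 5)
Your proposal follows essentially the same route as the paper's Section~\ref{refined-hierarchy}: pass to a single tree $\mathcal{T}$ of $S_1$ with $\mathcal{T}\npreceq_e S_2$, build recursive junction-point gadgets $\widetilde{\mathfrak{A}}^{T}_m,\widetilde{\mathfrak{B}}^{T}_m$ with a branching Hintikka-style sentence, prove the game lemma by strategy composition using exactly the combinatorial fact you isolate (non-injectivity of $\preceq_e$ forces some child-subtree $\mathcal{T}_i$ not to embed below the spoiler's current node of $S_2$, and the duplicator answers with the gadget deficient precisely in direction $\mathcal{T}_i$), and then transfer to plain digraphs. The one substantive divergence is how the children's gadgets are kept apart: where you propose extra edge colours (which would force you to extend the reduction from $\tau^+$ to $\tau$), the paper instead chooses $\mathcal{T}$ to be a \emph{minimal} non-embeddable subtree, hence \emph{irreducible} (no child-subtree embeds into a sibling), and this is what discharges your ``delicate bookkeeping'' concern --- that the $\widetilde{\mathfrak{B}}$-side gadget is genuinely deficient in every direction the spoiler could need --- while staying within the fixed signature $\tau^+$.
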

\vspace{7pt}
 
\noindent As in the last section, we let $\tau^+:=\langle R,B,r,U \rangle$, and let $\tau^+_0:=\tau^+\setminus \{U\}$.

\begin{defi}
A $\Gamma$-labelled tree $\mathcal{T}$ is an irreducible tree if for any inner node $a$ the following holds: 

Let $b_1,\cdots,b_k$ be the children of $a$ and $\mathcal{T}_{1},\cdots,\mathcal{T}_{k}$ be the maximal subtrees of $\mathcal{T}$ that are rooted at $b_1,\cdots,b_k$ respectively, then $\mathcal{T}_{i}$ cannot embed in  $\mathcal{T}_{j}$ for any $i,j\in [1,k]$ where $i\neq j$.                   
\end{defi}

\begin{defi} \label{Refined-Struc}
 Let $\mathcal{T}$ be a $\Gamma$-labelled irreducible $k$-ary tree.
\begin{itemize}
\item $\widetilde{\mathfrak{A}}^{T}_m$ and $\widetilde{\mathfrak{B}}^{T}_m$ are coloured trees.\footnote{Here, the ``$T$'' in $\widetilde{\mathfrak{A}}^{T}_m$ and $\widetilde{\mathfrak{B}}^{T}_m$ refers to the tree $\mathcal{T}$. Also cf. Figure \ref{E(ET1)(AT2)}.} The constant $r$ is interpreted as the root of the respective trees. As in Definition \ref{structures-over-TAU}, we say an element $a$ is black if $a\in U$.
\item{(1)}  $\widetilde{\mathfrak{A}}^\exists_{m,l}$ is a depth 1 tree that has $lm+1$ leaves. One of the leaves is black and all the other leaves are not black. 

\noindent{(2)} $\widetilde{\mathfrak{B}}^\exists_{m,l}$ is a depth 1 tree that has $lm$ leaves. \textit{None} of them is black.

All edges in $\widetilde{\mathfrak{A}}^{\exists}_{m,l}$ and $\widetilde{\mathfrak{B}}^{\exists}_{m,l}$ are red. Recall that we say an edge $(a,b)$ is red if, and only if, $(a,b)\in R$.

$\mathcal{T}^{\mathfrak{A},\exists}_{m,l}:=\widetilde{\mathfrak{A}}^\exists_{m,l}|\tau^+_0$; $\mathcal{T}^{\mathfrak{B},\exists}_{m,l}:=\widetilde{\mathfrak{B}}^\exists_{m,l}|\tau^+_0$. 

\item{(1)} $\widetilde{\mathfrak{A}}^\forall_{m,l}$ is a depth 1  tree that has $lm$ leaves. All of the leaves are black. 

\noindent{(2)} $\widetilde{\mathfrak{B}}^\forall_{m,l}$ is a depth 1  tree that has $lm+1$ leaves. One of them is not black and all the other leaves are black. 

All edges in $\widetilde{\mathfrak{A}}^{\forall}_{m,l}$ and $\widetilde{\mathfrak{B}}^{\forall}_{m,l}$ are red.

$\mathcal{T}^{\mathfrak{A},\forall}_{m,l}:=\widetilde{\mathfrak{A}}^\forall_{m,l}|\tau^+_0$; $\mathcal{T}^{\mathfrak{B},\forall}_{m,l}:=\widetilde{\mathfrak{B}}^\forall_{m,l}|\tau^+_0$. 

\item If $\mathcal{T}$ contains a single $\mathscr{E}$ node,\\
 \indent $\hspace{12pt}\widetilde{\mathfrak{A}}^T_m$ is $\widetilde{\mathfrak{A}}^\exists_{m,1}$; $\widetilde{\mathfrak{B}}^T_m$ is $\widetilde{\mathfrak{B}}^\exists_{m,1}$.

\item If $\mathcal{T}$ contains a single $\mathscr{A}$ node, \\
 \indent $\hspace{12pt}\widetilde{\mathfrak{A}}^T_m$ is $\widetilde{\mathfrak{A}}^\forall_{m,1}$; $\widetilde{\mathfrak{B}}^T_m$ is $\widetilde{\mathfrak{B}}^\forall_{m,1}$.

\item When $|rk(\mathcal{T})|>1:$\\[3pt]
\indent\hspace{6pt} Assume that the root $r$ of $\mathcal{T}$ has $k$ children, and the maximal subtrees that are rooted at these  children are $\mathcal{T}_1,\cdots,\mathcal{T}_k$ respectively. Recall that $\mathcal{T}$ is an irreducible tree, which implies  $\mathcal{T}_i$ and $\mathcal{T}_j$ are not isomorphic if $i\neq j$.

\indent\hspace{6pt} For \textit{any} $\Gamma$-labelled tree $\mathcal{T}^{\prime}$, let $\widetilde{\mathfrak{A}}^{-T^{\prime}}_m$ be the same as $\widetilde{\mathfrak{A}}^{T^{\prime}}_m$ except that the colours of all the edges are exchanged, i.e.\ red is exchanged with blue. Let $P_i$, which is a member of the set $\{(A,A),(A,B),(B,A),(B,B),A,B\}$, and $\mathfrak{D}^{P_1,\cdots,P_k}_{T_1,\cdots,T_k,m}$, which is a $(\tau^+\cup\{\mathfrak{e}\}\setminus\{r\})$-structure where $\mathfrak{e}$ is a hook constant, be built by the following process:
\begin{center}
\begin{enumerate}[label=step \arabic*:]
\item Let $M:=\emptyset$;
\item For $i=1$ to $k$ do the following:
\begin{itemize}
\item if  $P_i=(A,A)$ then $M:=M\cup\{\widetilde{\mathfrak{A}}^{T_i}_m,\widetilde{\mathfrak{A}}^{-T_i}_m\}$; 
\item if  $P_i=(A,B)$ then $M:=M\cup\{\widetilde{\mathfrak{A}}^{T_i}_m,\widetilde{\mathfrak{B}}^{-T_i}_m\}$;
\item if  $P_i=(B,A)$ then $M:=M\cup\{\widetilde{\mathfrak{B}}^{T_i}_m,\widetilde{\mathfrak{A}}^{-T_i}_m\}$;
\item if  $P_i=(B,B)$ then $M:=M\cup\{\widetilde{\mathfrak{B}}^{T_i}_m,\widetilde{\mathfrak{B}}^{-T_i}_m\}$;
\item if  $P_i=A$ then $M:=M\cup\{\widetilde{\mathfrak{A}}^{T_i}_m\}$;
\item if  $P_i=B$ then $M:=M\cup\{\widetilde{\mathfrak{B}}^{T_i}_m\}$;
\end{itemize}
\item join all the trees in $M$ at their roots. Call their shared root a \textit{junction point}, which interprets the hook constant $\mathfrak{e}$.
\end{enumerate}
\end{center}
We use $O_i$ to denote the label of the root of the tree $\mathcal{T}_i$.  Let $H:=\{i\in [1,k]\mid \text{ the root of }\mathcal{T}_i$  has the same label as that of the root of  $\mathcal{T}\}$.

In the following, we define some substructure for the constructions:
\begin{itemize}
\item For any $j\in H$, let $\mathfrak{C}_{\exists,j,m}^{A,B,T}:=\mathfrak{D}^{P_1,\cdots,P_k}_{T_1,\cdots,T_k,m}$  where  $P_j=(A,B)$  and  [ if  $i\neq j$  then  $(P_i=(A,A)$ if  $O_i=\exists)$  and $(P_i=A$  if  $O_i=\forall)]$; 
\item For any $j\in H$, let  $\mathfrak{C}_{\exists,j,m}^{B,A,T}:=\mathfrak{D}^{P_1,\cdots,P_k}_{T_1,\cdots,T_k,m}$  where  $P_j=(B,A)$ and  [ if  $i\neq j$  then  $(P_i=(A,A)$ if  $O_i=\exists)$  and $(P_i=A$ if  $O_i=\forall)]$; 
\item For any $j\in [1,k]$ and $j\notin H$, let  $\mathfrak{C}_{\exists,j,m}^{B,B,T}:=\mathfrak{D}^{P_1,\cdots,P_k}_{T_1,\cdots,T_k,m} \text{ where } P_j=B \text{ and } [\text{ if } i\neq j \text { then } (P_i=(A,A)\text{ if } O_i=\exists)$ and $(P_i=A \text{ if } O_i=\forall)]$; 
\item We use $\mathfrak{C}_{\exists,0,m}^{A,A,T}$ to denote the structure  $\mathfrak{D}^{P_1,\cdots,P_k}_{T_1,\cdots,T_k,m}$  where $P_i=(A,A)$  if  $O_i=\exists; P_i=A$  if  $O_i=\forall$.
\end{itemize}

 Dually, we can define $\mathfrak{C}_{\forall,j,m}^{B,A,T}, \mathfrak{C}_{\forall,j,m}^{A,B,T}, \mathfrak{C}_{\forall,j,m}^{A,A,T}, \mathfrak{C}_{\forall,0,m}^{B,B,T}$ as follows: 
\begin{itemize}
\item For any $j\in H$, let $\mathfrak{C}_{\forall,j,m}^{A,B,T}:=\mathfrak{D}^{P_1,\cdots,P_k}_{T_1,\cdots,T_k,m}$ where  $P_j=(A,B)$  and  [ if  $i\neq j$  then  $(P_i=(B,B)$ if  $O_i=\forall)$  and $(P_i=B$  if  $O_i=\exists)]$; 
\item For any $j\in H$, let  $\mathfrak{C}_{\forall,j,m}^{B,A,T}:=\mathfrak{D}^{P_1,\cdots,P_k}_{T_1,\cdots,T_k,m}$  where  $P_j=(B,A)$  and  [ if  $i\neq j$  then  $(P_i=(B,B)$ if  $O_i=\forall)$  and $(P_i=B$  if  $O_i=\exists)]$; 
\item For $j\notin H$, let  $\mathfrak{C}_{\forall,j,m}^{A,A,T}:=\mathfrak{D}^{P_1,\cdots,P_k}_{T_1,\cdots,T_k,m}$  where  $P_j=A$  and  [ if  $i\neq j$  then  $(P_i=(B,B)$ if  $O_i=\forall)$  and $(P_i=B$  if  $O_i=\exists)]$; 
\item We use $\mathfrak{C}_{\forall,0,m}^{B,B,T}$ to denote the structure  $\mathfrak{D}^{P_1,\cdots,P_k}_{T_1,\cdots,T_k,m}$  where $P_i=(B,B)$  if  $O_i=\forall; P_i=B$  if  $O_i=\exists$.
\end{itemize}

\noindent  As usual, we define the main structures based on point-expansions over some sets of structures. Now we define such sets.

\begin{itemize}
\item $\mathcal{K}_0^{\exists,T}:=\{\mathfrak{C}^{A,B,T}_{\exists,i,m}\mid i\in H\}\cup\{\mathfrak{C}^{B,A,T}_{\exists,i,m}\mid i\in H\}\cup\{\mathfrak{C}^{B,B,T}_{\exists,i,m}\mid i\in [1,k]$ and  $i\notin H\}$;
\item $\mathcal{K}_A^{\exists,T}:=\{\mathfrak{C}^{A,A,T}_{\exists,0,m}\}\cup\{\mathcal{I}\}\cup \mathcal{K}_0^{\exists,T}$;
\item $\mathcal{K}_B^{\exists,T}:=\{\mathcal{I}\}\cup \mathcal{K}_0^{\exists,T}$.   
\end{itemize}

\begin{itemize}
\item $\mathcal{K}_0^{\forall,T}:=\{\mathfrak{C}^{A,B,T}_{\forall,i,m}\mid i\in H\}\cup\{\mathfrak{C}^{B,A,T}_{\forall,i,m}\mid i\in H\}\cup\{\mathfrak{C}^{A,A,T}_{\forall,i,m}\mid i\in [1,k] \text { and } i\notin H\}$;
\item $\mathcal{K}_A^{\forall,T}:=\{\mathcal{I}\}\cup \mathcal{K}_0^{\forall,T}$;
\item $\mathcal{K}_B^{\forall,T}:=\{\mathfrak{C}^{B,B,T}_{\forall,0,m}\}\cup\{\mathcal{I}\}\cup \mathcal{K}_0^{\forall,T}$.  
\end{itemize}

 Now we define $\widetilde{\mathfrak{A}}^T_m$ and $\widetilde{\mathfrak{B}}^T_m$ as follows:
\begin{itemize}
\item If the root of $\mathcal{T}$ is an $\mathscr{E}$ node, which is connected to the roots of $k$ trees $\mathcal{T}_1, \cdots,\mathcal{T}_k$, and assume that the number of $\exists$ in the tuple $(O_1,\cdots,O_k)$ is $\jmath$, then $\widetilde{\mathfrak{A}}^T_m$  and $\widetilde{\mathfrak{B}}^T_m$ are defined as follows:
 \begin{itemize}
\item $\widetilde{\mathfrak{A}}^T_m$ is a point-expansion of $\mathcal{T}^{\mathfrak{A},\exists}_{m,\jmath+k}$ over $\mathcal{K}_A^{\exists,T}$: The root of $\mathcal{T}^{\mathfrak{A},\exists}_{m,\jmath+k}$ is expanded by $\mathcal{I}$. It is also called the ``\textit{root}'' of $\widetilde{\mathfrak{A}}^T_m$ that interprets $r$. One leaf is expanded by a copy of $\mathfrak{C}^{A,A,T}_{\exists,0,m}$. For  each element of $\mathcal{K}_0^{\exists,T}$ there are exactly $m$ distinct  leaves which are expanded by it.
     
\item $\widetilde{\mathfrak{B}}^T_m$ is a point-expansion of $\mathcal{T}^{\mathfrak{B},\exists}_{m,\jmath+k}$ over $\mathcal{K}_B^{\exists,T}$, similar to $\widetilde{\mathfrak{A}}^T_m$: The root of $\mathcal{T}^{\mathfrak{B},\exists}_{m,\jmath+k}$ is expanded by $\mathcal{I}$. For  each element of $\mathcal{K}_0^{\exists,T}$ there are exactly $m$ distinct leaves which are expanded by it. 

\end{itemize}

\item If the root of $\mathcal{T}$ is an $\mathscr{A}$ node, which is connected to the roots of $k$ trees $\mathcal{T}_1, \cdots,\mathcal{T}_k$, and assume that the number of $\forall$ in the tuple $(O_1,\cdots,O_k)$ is $\jmath$, then $\widetilde{\mathfrak{A}}^T_m$  and $\widetilde{\mathfrak{B}}^T_m$ are defined as follows:
\begin{itemize}
\item $\widetilde{\mathfrak{A}}^T_m$ is a point-expansion of $\mathcal{T}^{\mathfrak{A},\forall}_{m,\jmath+k}$ over $\mathcal{K}_A^{\forall,T}$: The root of $\mathcal{T}^{\mathfrak{A},\forall}_{m,\jmath+k}$ is expanded by $\mathcal{I}$.  For  each element of $\mathcal{K}_0^{\forall,T}$ there are exactly $m$ distinct leaves which are expanded by it. 
\item $\widetilde{\mathfrak{B}}^T_m$ is a point-expansion of $\mathcal{T}^{\mathfrak{B},\forall}_{m,\jmath+k}$ over $\mathcal{K}_B^{\forall,T}$: The root of $\mathcal{T}^{\mathfrak{B},\forall}_{m,\jmath+k}$ is expanded by $\mathcal{I}$. One leaf is expanded by a copy of $\mathfrak{C}^{B,B,T}_{\forall,0,m}$. For  each element of $\mathcal{K}_0^{\forall,T}$ there are exactly $m$ distinct leaves which are expanded by it.

\end{itemize}

\end{itemize}
\end{itemize} 

                       \end{defi}

\begin{exa} Let $\mathcal{T}$ be a $\Gamma$-labelled irreducible binary tree.  Assume that its root is an $\mathscr{E}$ node and is connected to two subtrees $\mathcal{T}_1$ and $\mathcal{T}_2$, where the root of $\mathcal{T}_1$ is an $\mathscr{E}$ node and the root of $\mathcal{T}_2$ is an $\mathscr{A}$ node. See Figure \ref{E(ET1)(AT2)} for the illustration of the structures  $\widetilde{\mathfrak{A}}^T_m$ and  $\widetilde{\mathfrak{B}}^T_m$. A ``*'' at the root of a subtree $\mathcal{T}^{\prime}$ means that we have $m$ disjoint isomorphic copies of this tree $\mathcal{T}^{\prime}$ and for each copy we add an edge between the root, $r$, of the whole structure and the root of this copy. 
\end{exa}

\begin{figure}[h]
\centering
\includegraphics[trim = 0mm 35mm 0mm 40mm, clip, width=10cm]{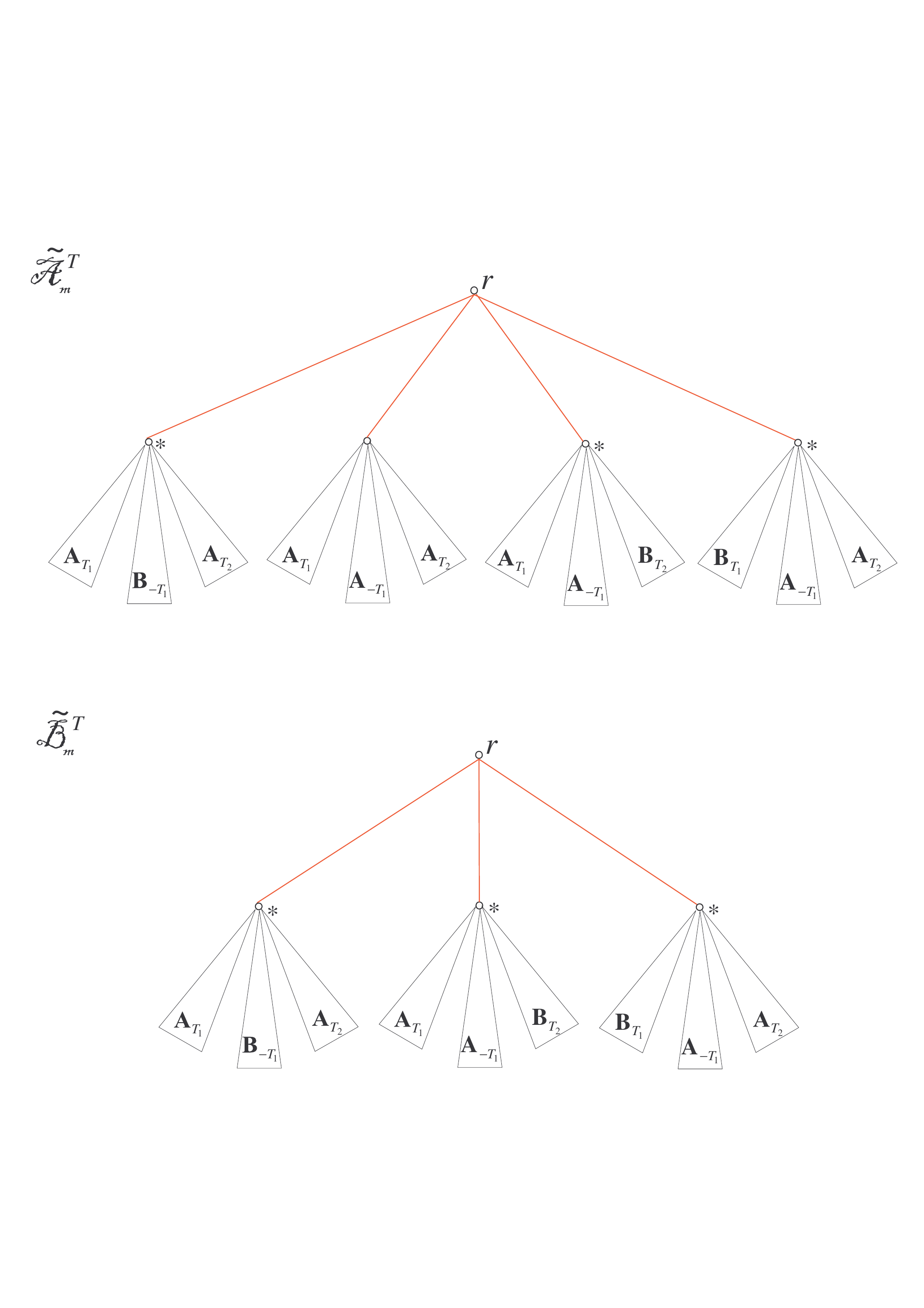}
\caption{The structures $\widetilde{\mathfrak{A}}^T_m$ and $\widetilde{\mathfrak{B}}^T_m$, where $\mathcal{T}$ is an irreducible binary tree, which is connected to $\mathcal{T}_1$ and $\mathcal{T}_2$. The root of $\mathcal{T}$ is an $\mathscr{E}$ node; The root of $\mathcal{T}_1$ is an $\mathscr{E}$ node; and the root of $\mathcal{T}_2$ is an $\mathscr{A}$ node. In the figure, $A_{T_i}$ and $B_{T_i}$ represent $\widetilde{\mathfrak{A}}^{T_i}_m$ and $\widetilde{\mathfrak{B}}^{T_i}_m$ respectively.}
\label{E(ET1)(AT2)}
\end{figure}

Now we define a sentence $\widetilde{\phi}_{\mathcal{T}}$ over the signature $\tau^+$ such that, for any $m$,  $\widetilde{\mathfrak{A}}^T_m\models \widetilde{\phi}_{\mathcal{T}}$ but $\widetilde{\mathfrak{B}}^T_m\not\models \widetilde{\phi}_{\mathcal{T}}$.

\begin{defi}\label{sentence-for-property-sigma_F}
Let $\mathcal{T}$ be a $\Gamma$-labelled irreducible $k$-ary  tree, which is connected to the roots of $k$ trees $\mathcal{T}_1, \cdots,\mathcal{T}_k$.  Assume that $rk(\mathcal{T})=d$. Recall that $H=\{i\in [k]\mid \text{ the root of }\mathcal{T}_i$  has the same label as that of the root of  $\mathcal{T}\}$. We define $\tau^+$-sentences  $\widetilde{\phi}_{\mathcal{T}}$ and $\widetilde{\phi}_{-\mathcal{T}}$ based on the tree $\mathcal{T}$ as follows: 
\begin{enumerate}
\item If $\mathcal{T}$ is empty, then 
\[\widetilde{\xi}_{\mathcal{T}}(x)=\widetilde{\xi}_{-\mathcal{T}}(x):=U(x);\]  
\item If the root of $\mathcal{T}$ is an $\mathscr{E}$ node, then 
\[\widetilde{\xi}_{\mathcal{T}}(y):=\exists  x_{d}(Ryx_{d}\land \displaystyle\bigwedge_{i\in [k]} \widetilde{\xi}_{\mathcal{T}_i}(x_{d})\land \displaystyle\bigwedge_{j\in H} \widetilde{\xi}_{-\mathcal{T}_j}(x_{d}));\]

\[\widetilde{\xi}_{-\mathcal{T}}(y):=\exists  x_{d}(Byx_{d}\land \displaystyle\bigwedge_{i\in [k]} \widetilde{\xi}_{-\mathcal{T}_i}(x_{d})\land \displaystyle\bigwedge_{j\in H} \widetilde{\xi}_{\mathcal{T}_j}(x_{d}));\]
\item If the root of $\mathcal{T}$ is an $\mathscr{A}$ node, then 
\[\widetilde{\xi}_{\mathcal{T}}(y):=\forall x_{d}(Ryx_{d}\rightarrow \displaystyle\bigvee_{i\in [k]} \widetilde{\xi}_{\mathcal{T}_i}(x_{d})\lor\displaystyle\bigvee_{j\in H}\widetilde{\xi}_{-\mathcal{T}_j}(x_{d}));\]

\[\widetilde{\xi}_{-\mathcal{T}}(y):=\forall x_{d}(Byx_{d}\rightarrow \displaystyle\bigvee_{i\in [k]} \widetilde{\xi}_{-\mathcal{T}_i}(x_{d})\lor\displaystyle\bigvee_{j\in H}\widetilde{\xi}_{\mathcal{T}_j}(x_{d}));\]
\end{enumerate}

Now, $\widetilde{\phi}_{\mathcal{T}}$ and $\widetilde{\phi}_{-\mathcal{T}}$ are defined as:
\begin{gather}\label{separate-sentence-general}
\widetilde{\phi}_{\mathcal{T}}:=\widetilde{\xi}_{\mathcal{T}}(r);\\ \widetilde{\phi}_{-\mathcal{T}}:=\widetilde{\xi}_{-\mathcal{T}}(r).
\end{gather}
                       \end{defi}

\subsection{The duplicator's winning strategy.} In the following, we show that a refined strict quantifier hierarchy exists by proving that the duplicator has a winning strategy in the games that we introduced before. Since the proof resembles that in Section \ref{games-and-hierarchy}, we sketch the main ideas and omit similar details. 

\begin{lem} \label{EF-Game-refined}
Let $S_1$ and $S_2$ be two finite $\Gamma$-labelled forests. If $S_1\npreceq_e S_2$,  then there is an irreducible subtree $\mathcal{T}$ in $S_1$ such that the duplicator has a winning strategy in $G_{S_2}(\widetilde{\mathfrak{A}}^{T}_m,\widetilde{\mathfrak{B}}^{T}_m)$ for any $m\geq rk(S_2)$.
  
\end{lem}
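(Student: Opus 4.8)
The plan is to follow closely the blueprint of Lemma~\ref{EF-Game}, replacing the prefix $p$ with an irreducible tree $\mathcal{T}$ and using the refined constructions of Definition~\ref{Refined-Struc}. First I would extract the tree $\mathcal{T}$. Since $S_1\npreceq_e S_2$, there is some tree component $\mathcal{T}^0$ of $S_1$ that does not embed into $S_2$; by repeatedly contracting, whenever the root of a subtree has two children whose maximal subtrees embed one into the other, we obtain an irreducible subtree $\mathcal{T}$ of $S_1$ with $\mathcal{T}\npreceq_e S_2$ still holding (the contraction only decreases the tree with respect to $\preceq_e$, hence cannot create an embedding into $S_2$). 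This $\mathcal{T}$ is the one named in the statement. The key combinatorial fact I would isolate is: for an irreducible tree $\mathcal{T}$, $\mathcal{T}\npreceq_e S_2$ means that along \emph{every} attempt to map $\mathcal{T}$ into $S_2$, at some node the children of $\mathcal{T}$ cannot all be realized by disjoint-enough branches of $S_2$ --- this is exactly what forces the ``$lm+1$ versus $lm$'' gap in Definition~\ref{Refined-Struc} to remain invisible to the spoiler.

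Next I would set up the induction on $rk(\mathcal{T})$, in parallel with the proof of Lemma~\ref{EF-Game}. The base case $rk(\mathcal{T})=1$ (a single $\mathscr{E}$ or $\mathscr{A}$ node) is the pigeonhole argument: $\widetilde{\mathfrak{B}}^\exists_{m,1}$ has $m$ white leaves, the spoiler (restricted by $S_2$, which by hypothesis contains no tree into which the single node embeds nontrivially enough to beat the count) can distinguish at most $m$ leaves, and $\widetilde{\mathfrak{A}}^\exists_{m,1}$ has $m$ white leaves too, so the duplicator matches label for label. For the inductive step, write $\mathcal{T}$ with root children $\mathcal{T}_1,\dots,\mathcal{T}_k$; by the reduction via duals (Lemma~\ref{Dual-Hintikka}-style and Lemma~\ref{p2-p}) assume the root of $\mathcal{T}$ is an $\mathscr{E}$ node. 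The structures $\widetilde{\mathfrak{A}}^T_m$ and $\widetilde{\mathfrak{B}}^T_m$ are point-expansions of $\mathcal{T}^{\mathfrak{A},\exists}_{m,\jmath+k}$ and $\mathcal{T}^{\mathfrak{B},\exists}_{m,\jmath+k}$ over the sets $\mathcal{K}^{\exists,T}_A$, $\mathcal{K}^{\exists,T}_B$, differing only by one extra leaf expanded by $\mathfrak{C}^{A,A,T}_{\exists,0,m}$. I would then: (i) use the induction hypothesis to get duplicator winning strategies in each subgame $G_{S_2}(\widetilde{\mathfrak{A}}^{T_i}_m,\widetilde{\mathfrak{B}}^{T_i}_m)$ and in the $-T_i$ versions via Lemma~\ref{p2-p}; (ii) use Lemma~\ref{constants-in-isom} / Lemma~\ref{pick-in-isom} to conclude the duplicator wins between the various $\mathfrak{C}^{\cdots,T}_{\exists,\cdot,m}$ structures and between an isomorphic pair $(\mathfrak{C}^{A,A,T}_{\exists,0,m},\mathfrak{C}^{A,B,T}_{\exists,j,m})$ when the hook is named; (iii) produce a strategy in the ``main game'' $G_{S_2}(\mathcal{T}^{\mathfrak{A},\exists}_{m,\jmath+k},\mathcal{T}^{\mathfrak{B},\exists}_{m,\jmath+k})$ at the root level, where the duplicator maps leaves expanded by the same element of $\mathcal{K}_0^{\exists,T}$ to each other, and maps the one $\mathfrak{C}^{A,A,T}_{\exists,0,m}$-leaf to a $\mathfrak{C}^{A,B,T}_{\exists,j,m}$- or $\mathfrak{C}^{B,A,T}_{\exists,j,m}$-leaf exploiting that $\mathfrak{C}^{A,A,T}_{\exists,0,m}$ ``contains'' a copy of each $\widetilde{\mathfrak{A}}^{T_i}_m$; (iv) glue everything by Lemma~\ref{composition_of_strategies} and weaken from $\mathscr{F}(f^{\cdots}_m)$-type forests to $S_2$ by Lemma~\ref{m2m-}, using $\mathscr{W}(qs(\psi))\subseteq$ the relevant word set because $rk(S_2)\le m$.

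The main obstacle, I expect, is step (iii) together with making precise why the single spare leaf of $\widetilde{\mathfrak{A}}^T_m$ (and the single spare black leaf buried in $\mathfrak{C}^{A,A,T}_{\exists,0,m}$) cannot be detected. In the prefix case of Lemma~\ref{EF-Game}, the spoiler had only a path to play along, so ``the first universal block is useless'' was a clean statement. Here $S_2$ is an arbitrary forest, and the spoiler may try to descend into $\widetilde{\mathfrak{A}}^T_m$ along a branch of $S_2$ that looks like one of the subtrees $\mathcal{T}_i$ and simultaneously, at a sibling node, probe for the asymmetry. The irreducibility of $\mathcal{T}$ is precisely what blocks this: because no $\mathcal{T}_i$ embeds into another $\mathcal{T}_j$, and because $\mathcal{T}\npreceq_e S_2$, the spoiler can never ``line up'' $S_2$ against the root of $\mathcal{T}$ in a way that would let him play distinct $\mathscr{E}$-witnesses into $\widetilde{\mathfrak{A}}^T_m$ corresponding to each child while also escaping the $\mathfrak{C}^{A,A,T}_{\exists,0,m}$-vs-$\mathfrak{C}^{A,B,T}_{\exists,j,m}$ masking inside. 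I would formalize this by a careful case analysis on where the token of $S_2$ sits relative to the root of $\mathcal{T}$ and an invariant maintained by the duplicator: at every point, the multiset of ``types'' (elements of $\mathcal{K}_0^{\exists,T}$ or names of subtrees $\widetilde{\mathfrak{A}}^{T_i}_m$, $\widetilde{\mathfrak{A}}^{-T_i}_m$) picked so far in $\widetilde{\mathfrak{A}}^T_m$ and in $\widetilde{\mathfrak{B}}^T_m$ coincide, and the only mismatch (the $\mathfrak{C}^{A,A,T}_{\exists,0,m}$-leaf) has not been visited. Verifying that this invariant can always be maintained --- using that $S_2$ offers at most $rk(S_2)\le m$ rounds along any branch and at most $k$ children to match, against the $m$-fold redundancy and the $\jmath+k$ leaf multiplicity built into the construction --- is the technical heart, and everything else is bookkeeping that mirrors Section~\ref{games-and-hierarchy}.
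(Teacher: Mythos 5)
Your overall architecture (induction on the rank of the extracted tree, point-expansion structures, and strategy composition via Lemmas~\ref{composition_of_strategies}, \ref{pick-in-isom}, \ref{constants-in-isom}) matches the paper's, and your extraction of $\mathcal{T}$ by repeated contraction is a legitimate alternative to the paper's choice of a minimal non-embeddable subtree. But there is a genuine gap at exactly the point you flag as the technical heart, and the two devices you propose there do not work. First, your step (i) invokes the induction hypothesis for the subgames $G_{S_2}(\widetilde{\mathfrak{A}}^{T_i}_m,\widetilde{\mathfrak{B}}^{T_i}_m)$. That hypothesis is unavailable: the children $\mathcal{T}_i$ of an irreducible non-embeddable $\mathcal{T}$ typically \emph{do} embed into $S_2$ (with the paper's minimal choice they all do), so the duplicator has no winning strategy in those games. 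What $\mathcal{T}\npreceq_e S_2$ actually yields is local: for each $\mathscr{E}$ node $a$ of $S_2$ reachable from a root without passing another $\mathscr{E}$ node, at least one child $\mathcal{T}_j$ fails to embed into the forest $\mathcal{F}_a$ of $S_2$ strictly below $a$, and the induction hypothesis applies only to $G_{\mathcal{F}_a}(\widetilde{\mathfrak{A}}^{T_j}_m,\widetilde{\mathfrak{B}}^{T_j}_m)$ for that particular $j$, which depends on $a$.

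Second, your invariant that the $\mathfrak{C}^{A,A,T}_{\exists,0,m}$-leaf ``has not been visited'' cannot be maintained: a sensible spoiler plays its junction point, and the construction is designed precisely for that event. The missing mechanism is the rule for answering it: when the spoiler enters $\mathfrak{C}^{A,A,T}_{\exists,0,m}$ while the token sits at $a$, the duplicator must \emph{look at the token} and reply with the junction point of $\mathfrak{C}^{A,B,T}_{\exists,j,m}$ or $\mathfrak{C}^{B,A,T}_{\exists,j,m}$ (or $\mathfrak{C}^{B,B,T}_{\exists,j,m}$ when the root of $\mathcal{T}_j$ is an $\mathscr{A}$ node --- a case your step (iii) omits, and the one where the matched structure contains $\widetilde{\mathfrak{B}}^{T_j}_m$ in place of $\widetilde{\mathfrak{A}}^{T_j}_m$) for a $j$ with $\mathcal{T}_j\npreceq_e\mathcal{F}_a$. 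The two paired structures then differ only by $\widetilde{\mathfrak{A}}^{\pm T_j}_m$ versus $\widetilde{\mathfrak{B}}^{\pm T_j}_m$, and the residual game over $\mathcal{F}_a$ is won by induction since $rk(\mathcal{T}_j)$ has dropped. Without this token-dependent choice of $j$ the argument does not close; the paper explicitly remarks after its proof that the duplicator's strategy here, unlike in Lemma~\ref{EF-Game}, must track how the spoiler moves the token.
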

\begin{proof}
Observe that when $rk(S_1)=1$, i.e.\ the rank of $S_1$ is 1, $S_1\npreceq_e S_2$ implies $\mathscr{W}(S_1)\nsubseteq \mathscr{W}(S_2)$. According to Lemma \ref{EF-Game}, the duplicator has a winning strategy.  

Assume that the lemma holds when $rk(S_1)\leq h$. 

Now assume that $rk(S_1)=h+1$.

Because $S_1\npreceq_e S_2$,  for some $k\in \mathbb{N}^+$ there exists a $k$-ary  subtree $\mathcal{T}$ in $S_1$ such that $\mathcal{T}\npreceq_e \mathcal{T}^{\prime}$ for any subtree $\mathcal{T}^{\prime}$ in $S_2$. Moreover, we assume that $\mathcal{T}$ is the minimal subtree in $S_1$ that is not embeddable in $S_2$, i.e.\  any subtree of $\mathcal{T}$, which is not $\mathcal{T}$ itself, is embeddable in $S_2$. Note that such a tree is an irreducible tree, whose rank is no larger than $h+1$. Here we use $r(\mathcal{T})$ to denote the root of $\mathcal{T}$. And we assume that $r(\mathcal{T})$ is an $\mathscr{E}$ node. We assume that there are $1\leq k^{\prime}\leq k$ subtrees, $\mathcal{T}_1,\cdots,\mathcal{T}_{k^{\prime}}$, whose roots are connected to $r(\mathcal{T})$. Let $\text{Mg}$ be the set of $\mathscr{E}$ nodes in $S_2$ such that for any $a\in \text{Mg}$ no other $\mathscr{E}$ node appears in the path from the root of $S_2$ to $a$. 
By assumption $\mathcal{T}$ is not embeddable in any subtree of $S_2$. For any $a\in \text{Mg}$, let  $\mathcal{T}^{\prime}$ be a tree rooted at $a$ and $\mathcal{F}$ be the forest obtained from $\mathcal{T}^{\prime}$ by removing the root of $\mathcal{T}^{\prime}$. Hence, at least one of the trees $\mathcal{T}_1,\cdots,\mathcal{T}_{k^{\prime}}$ cannot be embedded in $\mathcal{F}$. Observe that $rk(\mathcal{T}_i)\leq h$ for any $i\in [k^{\prime}]$.

Note that, in any play of the game, the moving track of the token in $S_2$ is a directed path. If the first place where the token lies is an $\mathscr{A}$ node, then the path is initiated with a block of universal quantifiers. However, in the rounds based on this  first block of universal quantifiers the spoiler has to pick in $\widetilde{\mathfrak{B}}^{T}_m$ and no matter how he picks the duplicator can mimic his picking in the isomorphic subtrees. By Lemma \ref{pick-in-isom}, if the spoiler can win the game by picking these elements, he can also win the game without picking these elements. 

When the token is on a node of $\text{Mg}$, say $a$, we assume that the spoiler picks in $\mathfrak{C}^{A,A,T}_{\exists,0,m}$ (i.e.\ in a copy of  $\widetilde{\mathfrak{A}}^{T}_m$) because if he picks in other places the duplicator can mimic his picking in an isomorphic tree substructure, hence by Lemma \ref{pick-in-isom} the spoiler can win the game without picking in these places if he has at least one winning strategy. In this round, the duplicator's strategy is as follows (i.e.\ picking in a copy of $\widetilde{\mathfrak{B}}^{T}_m$):
\begin{itemize}
\item If $\mathcal{T}_i$ is not embeddable in the forest $\mathcal{F}$ and $r(\mathcal{T}_i)$ is an $\mathscr{E}$ node, then 
     \begin{itemize}
     \item if the spoiler picks inside   $\widetilde{\mathfrak{A}}^{T_i}_m$, which is a part of $\mathfrak{C}^{A,A,T}_{\exists,0,m}$, then the duplicator mimics it in an isomorphic copy of $\widetilde{\mathfrak{A}}^{T_i}_m$, which is a part of  $\mathfrak{C}^{A,B,T}_{\exists,i,m}$.
      \item if the spoiler picks inside   $\widetilde{\mathfrak{A}}^{-T_i}_m$, which is a part of $\mathfrak{C}^{A,A,T}_{\exists,0,m}$, then the duplicator mimics it in an isomorphic copy of $\widetilde{\mathfrak{A}}^{-T_i}_m$, which is a part of  $\mathfrak{C}^{B,A,T}_{\exists,i,m}$.
     \item if the spoiler picks inside  $\widetilde{\mathfrak{A}}^{T_j}_m$ or $\widetilde{\mathfrak{A}}^{-T_j}_m$ where $j\neq i$, then the duplicator mimics it in an isomorphic copy of   $\widetilde{\mathfrak{A}}^{T_j}_m$ or $\widetilde{\mathfrak{A}}^{-T_j}_m$, which is a part of $\mathfrak{C}^{A,B,T}_{\exists,i,m}$.
     \item if the spoiler picks the junction point, i.e.\ the root, of $\mathfrak{C}^{A,A,T}_{\exists,0,m}$, then the duplicator picks the junction point of $\mathfrak{C}^{A,B,T}_{\exists,i,m}$.
     \end{itemize}
     
\item If $\mathcal{T}_i$ is not embeddable in $\mathcal{F}$ and $r(\mathcal{T}_i)$ is an $\mathscr{A}$ node, then 
\begin{itemize}
\item if the spoiler picks inside $\widetilde{\mathfrak{A}}^{T_i}_m$, which is a part of $\mathfrak{C}^{A,A,T}_{\exists,0,m}$, 
the duplicator is able to mimic the spoiler's picks inside $\widetilde{\mathfrak{B}}^{T_i}_m$, which is a part of $\mathfrak{C}^{B,B,T}_{\exists,i,m}$. It is because the root of $\mathcal{T}_i$ is an $\mathscr{A}$ node, which means that the number of different types of subtrees, whose roots are connected to the root of $\widetilde{\mathfrak{A}}^{T_i}_m$ (in $\widetilde{\mathfrak{A}}^T_m$), is less than the number of different types of subtrees, whose roots are connected to the root of $\widetilde{\mathfrak{B}}^{T_i}_m$ (in $\widetilde{\mathfrak{B}}^T_m$). 
 \item if the spoiler picks inside  $\widetilde{\mathfrak{A}}^{T_j}_m$ or $\widetilde{\mathfrak{A}}^{-T_j}_m$ where $j\neq i$, then the duplicator mimics it in an isomorphic copy of   $\widetilde{\mathfrak{A}}^{T_j}_m$ or $\widetilde{\mathfrak{A}}^{-T_j}_m$, which is a part of $\mathfrak{C}^{B,B,T}_{\exists,i,m}$.
    \item if the spoiler picks the junction point, i.e.\ the root, of $\mathfrak{C}^{A,A,T}_{\exists,0,m}$, then the duplicator picks the junction point of $\mathfrak{C}^{B,B,T}_{\exists,i,m}$.
\end{itemize}
\end{itemize}

\noindent Note that such a strategy has exploited the feature of the structures on the game board: (recall that the root of $\mathcal{T}$ is an $\mathscr{E}$ node)  the structures are constructed in such a way that if the spoiler does not pick the junction point of $\mathfrak{C}^{A,A,T}_{\exists,0,m}$  in $\widetilde{\mathfrak{A}}^{T}_m$ then the duplicator is able to mimic his picking in an isomorphic tree substructure of $\widetilde{\mathfrak{B}}^{T}_m$ - such isomorphic subtrees  always exist. And by Lemma \ref{pick-in-isom} the spoiler can also win without such picking if he can win in any way. If the spoiler pick the junction point of $\mathfrak{C}^{A,A,T}_{\exists,0,m}$,  by her strategy, the duplicator picks the junction point of $\mathfrak{C}^{A,B,T}_{\exists,i,m}$  where $\mathcal{T}_i\npreceq_e \mathcal{F}$. Observe that, the difference between $\mathfrak{C}^{A,A,T}_{\exists,0,m}$  and $\mathfrak{C}^{A,B,T}_{\exists,i,m}$ is the difference between $\widetilde{\mathfrak{A}}^{-T_i}_m$ and $\widetilde{\mathfrak{B}}^{-T_i}_m$ (or equivalently the difference between $\widetilde{\mathfrak{A}}^{T_i}_m$ and $\widetilde{\mathfrak{B}}^{T_i}_m$). Therefore, using this strategy, no matter how the spoiler picks, the duplicator can reply properly such that in the end the spoiler can win the game $G_{\mathcal{F}}(\widetilde{\mathfrak{A}}^{-T_i}_m,\widetilde{\mathfrak{B}}^{-T_i}_m)$ (or $G_{\mathcal{F}}(\widetilde{\mathfrak{A}}^{T_i}_m,\widetilde{\mathfrak{B}}^{T_i}_m)$), if the spoiler can win  in any way. Recall that $rk(\mathcal{T}_i)\leq h$. By induction assumption, the duplicator can win the game using this strategy. Therefore, the spoiler cannot win the game if the duplicator plays according to this strategy. 
 In other words, this strategy is a winning strategy for the duplicator in the game $G_{S_2}(\widetilde{\mathfrak{A}}^{T}_m,\widetilde{\mathfrak{B}}^{T}_m)$. 

 When $r(\mathcal{T})$ is an $\mathscr{A}$ node, the analysis is similar.  
\end{proof}

\noindent Note that the duplicator's winning strategy in Lemma \ref{EF-Game-refined} depends on how the spoiler moves the token. That is, she has to keep an eye on the token track before she make the choices. 
          

The following corollary is directly due to Lemma \ref{m2m-} and Theorem \ref{qs-game}.
\begin{cor}
Let $S_1$ and $S_2$ be two finite $\Gamma$-labelled forests. If $S_1\npreceq_e S_2$,  then for any $\psi$ such that $qs(\psi)\preceq_e S_2$, and any $m\geq rk(S_2)$, we have: 
\[
\widetilde{\mathfrak{A}}^{S_1}_m\models \psi\Rightarrow\widetilde{\mathfrak{B}}^{S_1}_m\models \psi.
\]
\end{cor}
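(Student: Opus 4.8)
The plan is to read this corollary as the packaging of Lemma~\ref{EF-Game-refined} through the dictionary between the asymmetric games and quantifier classes, using Lemma~\ref{m2m-} for monotonicity in the forest and Theorem~\ref{qs-game} for the game/type correspondence. First I would invoke Lemma~\ref{EF-Game-refined}: from $S_1\npreceq_e S_2$ it produces an irreducible subtree $\mathcal{T}$ of $S_1$ --- the structures $\widetilde{\mathfrak{A}}^{S_1}_m$ and $\widetilde{\mathfrak{B}}^{S_1}_m$ in the statement being the ones built from this $\mathcal{T}$ according to Definition~\ref{Refined-Struc} --- such that for every $m\geq rk(S_2)$ the duplicator has a winning strategy in $G_{S_2}(\widetilde{\mathfrak{A}}^{S_1}_m,\widetilde{\mathfrak{B}}^{S_1}_m)$; equivalently, $\widetilde{\mathfrak{A}}^{S_1}_m\leadsto_{S_2}\widetilde{\mathfrak{B}}^{S_1}_m$.

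Next I would fix a sentence $\psi$ with $qs(\psi)\preceq_e S_2$ and an $m\geq rk(S_2)$. Since $qs(\psi)\preceq_e S_2$, Lemma~\ref{m2m-} lets the duplicator transfer her winning strategy in $G_{S_2}(\widetilde{\mathfrak{A}}^{S_1}_m,\widetilde{\mathfrak{B}}^{S_1}_m)$ to a winning strategy in $G_{qs(\psi)}(\widetilde{\mathfrak{A}}^{S_1}_m,\widetilde{\mathfrak{B}}^{S_1}_m)$, so $\widetilde{\mathfrak{A}}^{S_1}_m\leadsto_{qs(\psi)}\widetilde{\mathfrak{B}}^{S_1}_m$. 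Then I would apply Theorem~\ref{qs-game} with $S:=qs(\psi)$ and the empty distinguished tuple ($t=0$; the constant $r$ plays the role of the carried-over interpreted constants): the implication (i)$\Rightarrow$(ii) yields $tp^{qs(\psi)}_0(\widetilde{\mathfrak{A}}^{S_1}_m)\subseteq tp^{qs(\psi)}_0(\widetilde{\mathfrak{B}}^{S_1}_m)$. Because $qs(\psi)\preceq_e qs(\psi)$ holds via the identity map, $\psi$ is among the sentences defining $\fo\{qs(\psi)\}$; hence if $\widetilde{\mathfrak{A}}^{S_1}_m\models\psi$ then $\psi$ lies in $tp^{qs(\psi)}_0(\widetilde{\mathfrak{A}}^{S_1}_m)$, therefore in $tp^{qs(\psi)}_0(\widetilde{\mathfrak{B}}^{S_1}_m)$, i.e.\ $\widetilde{\mathfrak{B}}^{S_1}_m\models\psi$. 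This is exactly the asserted implication.

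I do not expect any real obstacle here: all the combinatorics --- on the irreducible subtree $\mathcal{T}$ and on the point-expanded structures --- has already been discharged in the proof of Lemma~\ref{EF-Game-refined}, and the present corollary only has to translate ``the duplicator wins the $S_2$-game'' into ``no $\fo\{S_2\}$-sentence separates the two structures''. The only points that need care are bookkeeping ones: that the range $m\geq rk(S_2)$ in the corollary is precisely the range for which Lemma~\ref{EF-Game-refined} supplies the strategy, and that Lemma~\ref{m2m-} is applied in the correct direction, namely that a duplicator winning strategy descends along $\preceq_e$ from the larger forest $S_2$ to the smaller forest $qs(\psi)$ (equivalently, the spoiler only gains power as the forest grows).
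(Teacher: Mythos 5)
Your proof is correct and follows exactly the route the paper intends: the paper derives this corollary "directly" from Lemma~\ref{m2m-} and Theorem~\ref{qs-game}, applied to the winning strategy supplied by Lemma~\ref{EF-Game-refined}, which is precisely your chain of reductions. The only point worth noting is your (correct) observation that the structures $\widetilde{\mathfrak{A}}^{S_1}_m,\widetilde{\mathfrak{B}}^{S_1}_m$ in the statement are to be read as those built from the irreducible subtree $\mathcal{T}$ produced by Lemma~\ref{EF-Game-refined}, which matches the paper's own usage in Theorem~\ref{main-main-tau-plus}.
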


\noindent The following lemma is similar to Lemma \ref{claim1}. Actually, the proof resembles that of Lemma \ref{claim1}. 
\begin{lem}\label{claim1-refined}
Let $\mathcal{T}$ be $\Gamma$-labelled irreducible tree. Then 
\[
   \widetilde{\mathfrak{A}}^T_m\models\widetilde{\phi}_{\mathcal{T}} \text{  but  } \widetilde{\mathfrak{B}}^T_m\not\models\widetilde{\phi}_{\mathcal{T}}.   
\]  
\end{lem}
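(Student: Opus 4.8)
The plan is to prove Lemma~\ref{claim1-refined} by induction on $rk(\mathcal{T})$, following the proof of Lemma~\ref{claim1} but carrying along, as part of the inductive package, the analogue of Lemma~\ref{Dual-Hintikka}: that $\widetilde{\mathfrak{A}}^{-T}_m$ is obtained from $\widetilde{\mathfrak{A}}^{T}_m$ by flipping only the colours of the edges incident to the root (and likewise for $\mathfrak{B}$), so that, since $\widetilde{\phi}_{\mathcal{T}}$ and $\widetilde{\phi}_{-\mathcal{T}}$ differ only in whether the outermost guard atom is $R$ or $B$, we have $\widetilde{\mathfrak{A}}^{T}_m\models\widetilde{\phi}_{\mathcal{T}}$ iff $\widetilde{\mathfrak{A}}^{-T}_m\models\widetilde{\phi}_{-\mathcal{T}}$ and the same for $\mathfrak{B}$. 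The base case is $\mathcal{T}$ a single node: if it is an $\mathscr{E}$ node then $\widetilde{\mathfrak{A}}^T_m=\widetilde{\mathfrak{A}}^{\exists}_{m,1}$ has a black leaf, which is a witness for $\widetilde{\phi}_{\mathcal{T}}=\exists x(Rrx\land U(x))$, whereas $\widetilde{\mathfrak{B}}^T_m=\widetilde{\mathfrak{B}}^{\exists}_{m,1}$ has none; the $\mathscr{A}$ case is dual.

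For the inductive step assume $r(\mathcal{T})$ is an $\mathscr{E}$ node with child subtrees $\mathcal{T}_1,\dots,\mathcal{T}_k$ (the $\mathscr{A}$ case being dual, with $\mathfrak{C}^{B,B,T}_{\forall,0,m}$ in place of $\mathfrak{C}^{A,A,T}_{\exists,0,m}$ and disjunctions in place of conjunctions). To see $\widetilde{\mathfrak{A}}^T_m\models\widetilde{\phi}_{\mathcal{T}}$ I would take as witness of $\exists x$ the junction point $b$ of the designated copy of $\mathfrak{C}^{A,A,T}_{\exists,0,m}$; the edge $rb$ is red, and $b$ is the common root of isomorphic copies of $\widetilde{\mathfrak{A}}^{T_i}_m$ (for all $i$) and of $\widetilde{\mathfrak{A}}^{-T_i}_m$ (for $i\in H$). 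By the induction hypothesis $\widetilde{\mathfrak{A}}^{T_i}_m\models\widetilde{\xi}_{\mathcal{T}_i}(r)$ and, by the dual part of the package, $\widetilde{\mathfrak{A}}^{-T_i}_m\models\widetilde{\xi}_{-\mathcal{T}_i}(r)$. Every quantifier of $\widetilde{\xi}_{\mathcal{T}_i}$ and $\widetilde{\xi}_{-\mathcal{T}_i}$ is guarded by $R$ or $B$ and the free variable does not occur inside the guarded part, so these formulas only inspect the subtree below their argument; since the red (resp.\ blue) edges leaving $b$ lead precisely into the $\widetilde{\mathfrak{A}}^{T_i}_m$ (resp.\ $\widetilde{\mathfrak{A}}^{-T_i}_m$) copies, it follows that $\widetilde{\mathfrak{A}}^{T}_m\models\widetilde{\xi}_{\mathcal{T}_i}(b)$ for all $i$ and $\widetilde{\mathfrak{A}}^{T}_m\models\widetilde{\xi}_{-\mathcal{T}_j}(b)$ for $j\in H$, i.e.\ $\widetilde{\mathfrak{A}}^{T}_m\models\widetilde{\phi}_{\mathcal{T}}$.

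For $\widetilde{\mathfrak{B}}^T_m\not\models\widetilde{\phi}_{\mathcal{T}}$ I would run through every possible witness of $\exists x$, i.e.\ every red child $x$ of $r$ in $\widetilde{\mathfrak{B}}^T_m$; such $x$ is the junction point of one of the structures $\mathfrak{C}^{A,B,T}_{\exists,i,m}$, $\mathfrak{C}^{B,A,T}_{\exists,i,m}$ ($i\in H$) or $\mathfrak{C}^{B,B,T}_{\exists,i,m}$ ($i\notin H$), each of which has its ``$i$-th slot'' replaced by a $\widetilde{\mathfrak{B}}$-copy rather than an $\widetilde{\mathfrak{A}}$-copy. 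Guardedness again localises the responsible conjunct of $\widetilde{\phi}_{\mathcal{T}}$ to that slot: for $x$ of type $\mathfrak{C}^{A,B,T}_{\exists,i,m}$ the conjunct $\widetilde{\xi}_{-\mathcal{T}_i}(x)$ is at stake, its guarded witnesses lie in the $\widetilde{\mathfrak{B}}^{-T_i}_m$-copy, and the induction hypothesis gives $\widetilde{\mathfrak{B}}^{-T_i}_m\not\models\widetilde{\xi}_{-\mathcal{T}_i}(r)$; and any other blue neighbour of $x$ lies in some $\widetilde{\mathfrak{A}}^{-T_j}_m$ with $j\neq i$, below which $\widetilde{\xi}_{-\mathcal{T}_i}$ cannot be satisfied because $\mathcal{T}_i\npreceq_e\mathcal{T}_j$ by irreducibility of $\mathcal{T}$. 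The types $\mathfrak{C}^{B,A,T}_{\exists,i,m}$ and $\mathfrak{C}^{B,B,T}_{\exists,i,m}$ are handled symmetrically via the conjunct $\widetilde{\xi}_{\mathcal{T}_i}(x)$.

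The main obstacle is precisely the cross-interference in the last step: when discarding a candidate witness $x$ in $\widetilde{\mathfrak{B}}^T_m$ one must rule out that a sub-copy attached at $x$ but belonging to a \emph{different} child subtree accidentally satisfies a formula tailored to $\mathcal{T}_i$. To make this clean I expect to need an auxiliary claim, proved by the same induction, that pins down exactly when $\widetilde{\xi}_{\mathcal{S}}(v)$ (and $\widetilde{\xi}_{-\mathcal{S}}(v)$) holds in $\widetilde{\mathfrak{A}}^T_m$ or $\widetilde{\mathfrak{B}}^T_m$ in terms of the $\Gamma$-labelled subtree of $\mathcal{T}$ determined by the position of $v$ — so that irreducibility of $\mathcal{T}$ (hence $\mathcal{T}_i\npreceq_e\mathcal{T}_j$ for $i\neq j$) immediately yields the required non-interference; Lemma~\ref{claim1-refined} is then the instance $\mathcal{S}=\mathcal{T}$, $v=r$. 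Everything else is bookkeeping that runs in parallel with the proof of Lemma~\ref{claim1}.
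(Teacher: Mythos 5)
Your proposal is correct and follows essentially the same route as the paper's proof: induction on the rank of $\mathcal{T}$, with the junction point of $\mathfrak{C}^{A,A,T}_{\exists,0,m}$ serving as the witness in $\widetilde{\mathfrak{A}}^T_m$ and a slot-by-slot refutation of every candidate witness in $\widetilde{\mathfrak{B}}^T_m$, using the relativization of all quantifiers by $R$ and $B$ to localize each conjunct to the subtree below its argument. The cross-interference issue you flag is real and is precisely where the paper's own proof is terse --- it simply asserts that each $\mathfrak{C}^{A,B,T}_{\exists,i,m}$, $\mathfrak{C}^{B,A,T}_{\exists,i,m}$ or $\mathfrak{C}^{B,B,T}_{\exists,i,m}$ contains a failing slot, without ruling out that a copy attached for a different $\mathcal{T}_j$ accidentally witnesses the conjunct tailored to $\mathcal{T}_i$ --- so your auxiliary claim (characterizing when $\widetilde{\xi}_{\mathcal{S}}(v)$ holds so that irreducibility, i.e.\ $\mathcal{T}_i\npreceq_e\mathcal{T}_j$ for $i\neq j$, blocks such accidents) is a sensible strengthening of the same induction rather than a departure from it.
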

\begin{proof}
It is obvious when $\mathcal{T}$ is a single node. Assume that it holds when $rk(\mathcal{T})=h$.

Now assume that $\mathcal{T}$ is a $k$-ary $\Gamma$-labelled tree and its rank is $h+1$. Suppose that the root of $\mathcal{T}$ is an $\mathscr{E}$ node, and its subtrees, whose roots are children of $r(\mathcal{T})$, are $\mathcal{T}_1,\cdots,\mathcal{T}_{k}$. Recall that $H=\{i\in [k]\mid \text{ the root of }\mathcal{T}_i$  has the same label as that of the root of  $\mathcal{T}\}$.

Let $a_r$ be the node that interprets $r$ in $\widetilde{\mathfrak{A}}^T_m$. And let $b_r$ be the node that interprets the hook constant in $\mathfrak{C}^{A,A,T}_{\exists,0,m}$, i.e.\ $b_r$ is the junction point of $\mathfrak{C}^{A,A,T}_{\exists,0,m}$. By assumption and an observation similar to Lemma \ref{Dual-Hintikka},  $\widetilde{\mathcal{A}}^{T_i}_m\models \widetilde{\xi}_{\mathcal{T}_i}(r)$ ($i\in [k]$), and $\widetilde{\mathcal{A}}^{-T_j}_m\models \widetilde{\xi}_{-\mathcal{T}_j}(r)$ ($j\in H$).

Moreover, all the quantifiers in $\widetilde{\xi}_{\mathcal{T}_i}(r)$ are relativized by relations either $Ryx$ or $Byx$, where $x$ is the quantified variable. It means that $\widetilde{\xi}_{\mathcal{T}_i}(r)$ expresses some property that is nothing to do with the elements of $\widetilde{\mathfrak{A}}^{T}_m$ outside the tree substructure $\widetilde{\mathfrak{A}}^{T_i}_m$. As a consequence, $\widetilde{\mathfrak{A}}^{T_i}_m\models\widetilde{\xi}_{\mathcal{T}_i}(r)$ implies $\widetilde{\mathfrak{A}}^{T}_m\models\widetilde{\xi}_{\mathcal{T}_i}(b_r)$.\\[4pt]
\indent By the same argument, $\widetilde{\mathfrak{A}}^{-T_i}_m\models\widetilde{\xi}_{-\mathcal{T}_i}(r)$ implies $\widetilde{\mathfrak{A}}^{T}_m\models\widetilde{\xi}_{-\mathcal{T}_i}(b_r)$.\\[4pt]
\indent Therefore, in $\widetilde{\mathfrak{A}}^{T}_m$, $b_r$ is the witness of the quantifier $\exists x_{d}$ in the formula $\widetilde{\xi}_{\mathcal{T}}(r)$. That is, $\widetilde{\mathfrak{A}}^{T}_m\models\widetilde{\xi}_{\mathcal{T}}(r)$, or $\widetilde{\mathfrak{A}}^{T}_m\models\widetilde{\phi}_{\mathcal{T}}$. 

Likewise, by assumption and similar analysis,  
in $\widetilde{\mathfrak{B}}^T_m$ the subtrees $\mathfrak{C}^{A,B,T}_{\exists,i,m}$, or $\mathfrak{C}^{B,A,T}_{\exists,i,m}$, or $\mathfrak{C}^{B,B,T}_{\exists,i,m}$, has exactly one subtree $\widetilde{\mathfrak{B}}^{T_i}_m$, or $\widetilde{\mathfrak{B}}^{-T_i}_m$, which does not satisfy  some subformula of $\widetilde{\phi}_{\mathcal{T}}$ (i.e.\ one conjunct of $\displaystyle\bigwedge_{i\in [k]} \widetilde{\xi}_{\mathcal{T}_i}(x_{d})\land \displaystyle\bigwedge_{j\in H} \widetilde{\xi}_{-\mathcal{T}_j}(x_{d})$), when its root interprets $x_{d}$. In other words, $b_r$ cannot be a witness of the quantifer $\exists x_{d}$. Therefore, $\widetilde{\mathfrak{B}}^T_m\not\models \widetilde{\phi}_{\mathcal{T}}$.

When the root of $\mathcal{T}$ is an $\mathscr{A}$ node, the analysis is similar.
\end{proof}
  
\begin{thm}\label{main-main-tau-plus}
Let $S_1$ and $S_2$ be two $\Gamma$-labelled forests. Over the class of all finite $\tau^+$-structures,
\[
         \mathrm{if}\hspace{3pt} S_1\npreceq_e S_2, \hspace{2pt} \mathrm{then}\hspace{3pt} \fo\{S_1\}\nsubseteq \fo\{S_2\}.
\]\end{thm}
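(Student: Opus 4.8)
The plan is to run the same endgame as in the proof of Theorem \ref{main}, but fed by the refined constructions and lemmas of this section rather than their word-theoretic counterparts. The genuinely analytic content --- the duplicator's token-track-dependent strategy and the fact that the refined trees separate exactly the intended formulas --- is already isolated in Lemma \ref{EF-Game-refined} and Lemma \ref{claim1-refined}, which I take for granted; what remains is to glue these together and to check one bookkeeping point about quantifier classes.

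First I would invoke Lemma \ref{EF-Game-refined}. Since $S_1 \npreceq_e S_2$, it yields an irreducible subtree $\mathcal{T}$ of $S_1$ for which the duplicator has a winning strategy in $G_{S_2}(\widetilde{\mathfrak{A}}^{T}_m,\widetilde{\mathfrak{B}}^{T}_m)$ whenever $m \geq rk(S_2)$. I fix $m := \max\{rk(S_2),1\}$; then $\widetilde{\mathfrak{A}}^{T}_m$ and $\widetilde{\mathfrak{B}}^{T}_m$ are genuine finite $\tau^+$-structures and $\widetilde{\mathfrak{A}}^{T}_m \leadsto_{S_2} \widetilde{\mathfrak{B}}^{T}_m$.

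Next I would pin down a separating query that lies in $\fo\{S_1\}$. Take $\widetilde{\phi}_{\mathcal{T}}$ from Definition \ref{sentence-for-property-sigma_F} and let $\mathcal{K}:=\mathrm{Mod}(\widetilde{\phi}_{\mathcal{T}})$ over finite $\tau^+$-structures. I claim $qs(\widetilde{\phi}_{\mathcal{T}}) \preceq_e \mathcal{T}$, proved by induction on $rk(\mathcal{T})$: in the inductive step the outermost quantifier of $\widetilde{\xi}_{\mathcal{T}}(r)$ maps to $r(\mathcal{T})$; each conjunct $\widetilde{\xi}_{\mathcal{T}_i}(x_d)$ has its quantifier structure embedding into $\mathcal{T}_i$ by the inductive hypothesis; and each extra conjunct $\widetilde{\xi}_{-\mathcal{T}_j}(x_d)$ with $j\in H$ has the \emph{same} quantifier structure as $\widetilde{\xi}_{\mathcal{T}_j}(x_d)$ (dualising flips only $R$ and $B$), so it too embeds into $\mathcal{T}_j$ --- here I exploit that the map in Definition \ref{forests-embedding-1} need not be injective. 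Since $\mathcal{T}$ is a subtree of $S_1$ we have $\mathcal{T} \preceq_e S_1$, and transitivity of $\preceq_e$ gives $qs(\widetilde{\phi}_{\mathcal{T}}) \preceq_e S_1$; hence $\mathcal{K} \in \fo\{S_1\}$.

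Finally I would rule out $\mathcal{K} \in \fo\{S_2\}$. By Lemma \ref{claim1-refined}, $\widetilde{\mathfrak{A}}^{T}_m \models \widetilde{\phi}_{\mathcal{T}}$ and $\widetilde{\mathfrak{B}}^{T}_m \not\models \widetilde{\phi}_{\mathcal{T}}$, so $\widetilde{\mathfrak{A}}^{T}_m \in \mathcal{K}$ and $\widetilde{\mathfrak{B}}^{T}_m \notin \mathcal{K}$; combining this with $\widetilde{\mathfrak{A}}^{T}_m \leadsto_{S_2} \widetilde{\mathfrak{B}}^{T}_m$ and applying Corollary \ref{q-type-game-2}, no first-order sentence $\varphi$ with $qs(\varphi) \preceq_e S_2$ satisfies $\mathrm{Mod}(\varphi)=\mathcal{K}$, i.e.\ $\mathcal{K} \notin \fo\{S_2\}$. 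Therefore $\fo\{S_1\} \nsubseteq \fo\{S_2\}$. (One could equally close via the corollary stated just after Lemma \ref{EF-Game-refined} together with Theorem \ref{qs-game}, exactly as in Theorem \ref{main}.) The only step that needs real care is the bookkeeping claim $qs(\widetilde{\phi}_{\mathcal{T}}) \preceq_e S_1$: the duplicated conjuncts $\widetilde{\xi}_{-\mathcal{T}_j}$ make $qs(\widetilde{\phi}_{\mathcal{T}})$ strictly larger than $\mathcal{T}$ in general, and it is precisely the permitted non-injectivity of $\preceq_e$ that keeps the separating sentence inside $\fo\{S_1\}$.
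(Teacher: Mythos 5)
Your proposal is correct and follows essentially the same route as the paper: invoke Lemma \ref{EF-Game-refined} to get the irreducible subtree $\mathcal{T}$ and the duplicator's win in $G_{S_2}(\widetilde{\mathfrak{A}}^{T}_m,\widetilde{\mathfrak{B}}^{T}_m)$, separate via $\widetilde{\phi}_{\mathcal{T}}$ using Lemma \ref{claim1-refined}, and conclude with Corollary \ref{q-type-game-2}. The only difference is that you spell out the claim $qs(\widetilde{\phi}_{\mathcal{T}})\preceq_e S_1$ (via the non-injectivity of the embedding absorbing the duplicated conjuncts $\widetilde{\xi}_{-\mathcal{T}_j}$), which the paper merely asserts with ``Observe that $\widetilde{\phi}_{\mathcal{T}}\in\fo\{S_1\}$''.
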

\begin{proof}
By Lemma \ref{EF-Game-refined}, we know that the duplicator has a winning strategy in the game  $G_{S_2}(\widetilde{\mathfrak{A}}^{S_1}_m,\widetilde{\mathfrak{B}}^{S_1}_m).$ In other words, the property defined by $\widetilde{\phi}_{\mathcal{T}}$ is not expressible in $\fo\{S_2\}$, by Lemma \ref{q-type-game-2}. Observe that   $\widetilde{\phi}_{\mathcal{T}}\in\fo\{S_1\}$. Therefore, $\fo\{S_1\}\nsubseteq\fo\{S_2\}$.  
\end{proof}

Recall that $\tau=\langle E\rangle$. Based on the same transformations as the ``reductions from $\tau^+$ to $\tau$'' (p. \pageref{reductions-from-tauplus-to-tau}), the diagraph $\mathfrak{A}^T_m$ ($\mathfrak{B}^T_m$ resp.) is obtained from $\widetilde{\mathfrak{A}}^T_m$ ($\widetilde{\mathfrak{B}}^T_m$ resp.), as $\mathfrak{A}^p_m$ ($\mathfrak{B}^p_m$ resp.) is obtained from $\widetilde{\mathfrak{A}}^p_m$ ($\widetilde{\mathfrak{B}}^p_m$ resp.); $\phi_{\mathcal{T}}$ is obtained from $\widetilde{\phi}_{\mathcal{T}}$ as $\varphi_p$ is obtained from $\widetilde{\varphi}_p$. 
  Recall that the reductions are mainly doing three things: 
 \begin{itemize}
 \item change red edges to forward edges; change blue edges to backward edges;
 \item use self-loops to indicate the positions of junction points;
 \item use bi-directional edges to indicate the positions of   black leaves.
 \end{itemize}
 
\noindent We can use the same reductions, prove a new version of Lemma \ref{translation} as the following:
 
For any first-order sentence $\zeta$ over $\tau$, there is a first-order sentence $\xi$ over $\tau^+$, with the same quantifier structure, such that

\vspace{7pt}
\begin{enumerate}
\item $\mathfrak{A}^T_m\models \zeta\hspace{3pt}$ iff  $\hspace{3pt}\widetilde{\mathfrak{A}}^T_m\models \xi$;\smallskip
\item $\mathfrak{B}^T_m\models \zeta\hspace{3pt}$ iff  $\hspace{3pt}\widetilde{\mathfrak{B}}^T_m\models \xi$. 
 \end{enumerate}\medskip
 
 And Theorem \ref{main-main} is immediate, by the same argument in the proof of Theorem \ref{main1}.

The following is a direct corollary of Theorem \ref{main-main}:

\begin{cor}
Let $S_1$ and $S_2$ be two $\Gamma$-labelled forests and $\sigma_g$ includes a $k$-ary relation symbol where $k\geq 2$. Over the class of all finite $\sigma_g$-structures,
\[
         \mathrm{if}\hspace{3pt} S_1\npreceq_e S_2, \hspace{2pt} \mathrm{then}\hspace{3pt} \fo\{S_1\}\nsubseteq \fo\{S_2\}.
\]
\end{cor}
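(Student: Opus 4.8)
The plan is to deduce this from Theorem~\ref{main-main} by a single, quantifier-structure-preserving change of signature. Fix a $k$-ary relation symbol $R_g\in\sigma_g$ with $k\geq 2$. Given a digraph $\mathfrak{C}=(|\mathfrak{C}|,E^{\mathfrak{C}})$, let $\widehat{\mathfrak{C}}$ be the $\sigma_g$-structure with universe $|\mathfrak{C}|\uplus\{\star\}$ (the fresh element $\star$ is needed only if $\sigma_g$ has constants), where $R_g^{\widehat{\mathfrak{C}}}:=\{(a,b,b,\ldots,b)\mid (a,b)\in E^{\mathfrak{C}}\}$, every other relation symbol of $\sigma_g$ is interpreted as the empty relation, and every constant symbol of $\sigma_g$ is interpreted as $\star$. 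The key feature of this encoding is that it is reversible \emph{atom-by-atom}: over $\mathfrak{C}$, an atom $R_g(t_1,\ldots,t_k)$ is expressed by $\bigwedge_{2\leq i<k}(t_i=t_{i+1})\wedge E t_1 t_2$, any other atom $R'(\bar t)$ by $\lnot(t_1=t_1)$, and an atom $t=c$ by ``$t=\star$''; conversely, $Exy$ is expressed over $\widehat{\mathfrak{C}}$ by the atom $R_g(x,y,y,\ldots,y)$. All these replacements are quantifier-free, hence preserve quantifier structure.

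First I would move the separating sentence across. Let $\mathcal{T}$, $\phi_{\mathcal{T}}$, $\mathfrak{A}^{T}_m$, $\mathfrak{B}^{T}_m$ be the irreducible subtree of $S_1$, the $\tau=\langle E\rangle$-sentence, and the digraphs furnished in the proof of Theorem~\ref{main-main} (Definitions~\ref{Refined-Struc}, \ref{sentence-for-property-sigma_F} and the $\tau^+$-to-$\tau$ reduction); recall $\phi_{\mathcal{T}}\in\fo\{S_1\}$, $\mathfrak{A}^{T}_m\models\phi_{\mathcal{T}}$ and $\mathfrak{B}^{T}_m\not\models\phi_{\mathcal{T}}$ for every $m$, and the duplicator wins $G_{S_2}(\mathfrak{A}^{T}_m,\mathfrak{B}^{T}_m)$ for $m\geq rk(S_2)$. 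Let $\phi^g_{\mathcal{T}}$ be obtained from $\phi_{\mathcal{T}}$ by replacing each atom $Exy$ with $R_g(x,y,y,\ldots,y)$, and put $\mathfrak{A}^{T,g}_m:=\widehat{\mathfrak{A}^{T}_m}$, $\mathfrak{B}^{T,g}_m:=\widehat{\mathfrak{B}^{T}_m}$. Then $qs(\phi^g_{\mathcal{T}})=qs(\phi_{\mathcal{T}})\preceq_e S_1$, so $\phi^g_{\mathcal{T}}\in\fo\{S_1\}$. Moreover all quantifiers of $\phi_{\mathcal{T}}$ are guarded by $E$-atoms and $\star$ lies in no $R_g$-tuple of $\widehat{\mathfrak{C}}$, so $\star$ can never witness a guarded quantifier; hence $\mathfrak{A}^{T,g}_m\models\phi^g_{\mathcal{T}}\Leftrightarrow\mathfrak{A}^{T}_m\models\phi_{\mathcal{T}}$ and likewise for $\mathfrak{B}$, whence $\mathfrak{A}^{T,g}_m\models\phi^g_{\mathcal{T}}$ and $\mathfrak{B}^{T,g}_m\not\models\phi^g_{\mathcal{T}}$ for all $m$.

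Next I would transfer the duplicator's strategy. Set $m:=rk(S_2)$ (take $m:=1$ if $rk(S_2)=0$; note $S_1\npreceq_e S_2$ forces $S_1$ nonempty). By the proof of Theorem~\ref{main-main} the duplicator wins $G_{S_2}(\mathfrak{A}^{T}_m,\mathfrak{B}^{T}_m)$; I claim the same strategy, together with the rule ``if the spoiler picks $\star$, reply $\star$'', wins $G_{S_2}(\mathfrak{A}^{T,g}_m,\mathfrak{B}^{T,g}_m)$. Indeed, the recoding is a bijection between the non-$\star$ parts of the two $\sigma_g$-structures under which $E$-edges and the padded $R_g$-tuples correspond, all other $\sigma_g$-relations are empty on both sides, every $\sigma_g$-constant sits at $\star$ on both sides, and $\star$ is isolated; hence for any sequences $\bar c$ in $\mathfrak{A}^{T,g}_m$ and $\bar d$ in $\mathfrak{B}^{T,g}_m$, $(\bar c,\bar d)$ defines a partial isomorphism of the $\sigma_g$-structures if and only if the corresponding sequences define one of the digraphs $\mathfrak{A}^{T}_m,\mathfrak{B}^{T}_m$, so the duplicator can run the digraph game underneath. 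Thus $\mathfrak{A}^{T,g}_m\leadsto_{S_2}\mathfrak{B}^{T,g}_m$. Since $\mathfrak{A}^{T,g}_m\in\mathrm{Mod}(\phi^g_{\mathcal{T}})$ and $\mathfrak{B}^{T,g}_m\notin\mathrm{Mod}(\phi^g_{\mathcal{T}})$, Corollary~\ref{q-type-game-2} (with $\mathcal{K}=\mathrm{Mod}(\phi^g_{\mathcal{T}})$ over the finite $\sigma_g$-structures) shows that no first-order sentence $\psi$ with $qs(\psi)\preceq_e S_2$ defines $\mathrm{Mod}(\phi^g_{\mathcal{T}})$; hence $\mathrm{Mod}(\phi^g_{\mathcal{T}})\notin\fo\{S_2\}$ while $\mathrm{Mod}(\phi^g_{\mathcal{T}})\in\fo\{S_1\}$, i.e.\ $\fo\{S_1\}\nsubseteq\fo\{S_2\}$.

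There is no genuinely new difficulty here: every step is the sort of quantifier-free bookkeeping already done in Lemma~\ref{translation} and the ``reductions from $\tau^+$ to $\tau$''. The only mildly new point — and the one to be careful about — is the presence of constant symbols in $\sigma_g$; parking all of them on the single fresh isolated vertex $\star$ is legitimate precisely because the separating sentence $\phi^g_{\mathcal{T}}$ has all its quantifiers guarded, so $\star$ neither helps the spoiler in the game nor changes the truth value of $\phi^g_{\mathcal{T}}$. If one prefers an equivalent route, the hypothetical $\fo\{S_2\}$-definition $\psi$ over $\sigma_g$ can instead be translated atom-by-atom into a $\tau$-sentence $\zeta$ with $qs(\zeta)=qs(\psi)$ that separates $\mathfrak{A}^{T}_m$ from $\mathfrak{B}^{T}_m$, contradicting Theorem~\ref{main-main} directly; the two routes are interchangeable by Theorem~\ref{qs-game}.
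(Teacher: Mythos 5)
Your proposal is correct and follows essentially the same route as the paper, whose entire proof is the one-line observation that the binary edge relation can be encoded into the $k$-ary relation (via padding) with all other relations left empty. The only thing you add beyond the paper is the explicit handling of constant symbols in $\sigma_g$ via the isolated element $\star$, which is a legitimate piece of bookkeeping the paper silently omits.
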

\begin{proof}
We can use the $k$-ary relation ($k\geq 2$) to encode the binary relations we defined in the structures for Theorem \ref{main-main} and let all the other relations be empty. 
\end{proof}

\section{Summary}\label{finite-Rosen}
It is natural to classify fragments of first-order logic based on quantifier structures. So far, most related works are focused on those fragments based on quantifier prefixes, a special kind of quantifier structure, and contribute to our understanding of their different expressiveness. Gr\"adel-McColm's conjecture, which claims the strictness of the first-order prefix hierarchy, generalizes the results of Walkoe \cite{W.Walkoe_partially-ordered_1970}, Keisler \& Walkoe  \cite{kw73diversity} and Chandra \& Harel \cite{ch82fohierarchy}. Rosen proved this conjecture by showing that it holds over infinite structures \cite{rosen05prifix} and raised the question of whether it also holds over finite structures. We define games that characterize the quantifier classes, which generalize the standard  Ehrenfeucht-Fra\"iss\' e games, and prove that a similar and natural hierarchy, i.e.\ the first-order quantifier structure hierarchy, is strict over finite structures. Although these two hierarchies are similar, they are independent in that none implies the other directly. Nevertheless, our constructions do provide justifications for some special cases of  Gr\"adel-McColm's conjecture over finite structures. For example, from the constructions introduced in this paper we can see that there is a property that is expressible in $\fo(\exists\forall\forall)$, but not in $\fo(\forall\forall\forall)$, $\fo(\forall\forall\exists)$, $\fo(\forall\exists\exists)$ and $\fo(\exists\exists\exists)$. But we don't know whether it is expressible in $\fo(\forall\exists\forall)$ or not. 

Recall that the mapping $\iota$ between quantifier structures is not necessarily injective when we define the embedding relation $\preceq_e$ (Definition \ref{forests-embedding-1}). 
Now, we change the definition of quantifier structure embedding a little bit, i.e.\ require $\iota$ to be injective. 
As a consequence, the quantifier classes are also changed correspondingly. 
And a finite version of Rosen's main theorem \cite{rosen05prifix} can be stated as the following, based on such change. Recall that we assume that all the structures are finite. 
\[
   \text{Let }p\in\Gamma^* \text{ and }S \text{ be a } \Gamma \text {-labelled forest}, \fo(p)\nsubseteq \fo\{S\} \text{ if }p\notin \mathscr{W}(S).
\] 
As has been mentioned in the introduction, a proof of this theorem will solve Gr\"adel-McColm's conjecture over finite structures,  which is still open at present. Another possible way to resolve this conjecture is to define  \EF style games for the prefix classes. In any cases, we need new techniques and insights.

Let $\mathcal{T}$ be a $\Gamma$-labelled tree and $\mathcal{T}^i$ be a forest composed of $i$ disjoint copies of $\mathcal{T}$. Let $\mu_t$ be a map from a $\Gamma$-labelled tree to a natural number such that $\mu_t(\mathcal{T})$ is the minimum number for $m$ such that $\fo\{\mathcal{T}^m\}=\fo\{\mathcal{T}^{m+1}\}$.\footnote{The existence of such a number $m$ is determined by Lemma \ref{finite-logic-equiv-fixed-rank}.} Then, the following is conceivable, under the new definition of quantifier structure embedding: 

Let $\mathcal{T}$ be a  $\Gamma$-labelled tree and $S$ be an arbitrary $\Gamma$-labelled forests. Over the class of all finite digraphs, for any $m\leq \mu_t(\mathcal{T})$,
\[
         \mathrm{if}\hspace{3pt} \mathcal{T}^m\npreceq_e S, \hspace{2pt} \mathrm{then}\hspace{3pt} \fo\{\mathcal{T}^m\}\nsubseteq \fo\{S\}.
\]
  
Note that it generalizes not only Gr\"adel-McColm's conjecture over finite structures but also the finite version of Rosen's main theorem. Furthermore, it is also interesting to study $\mu_t(\mathcal{T})$. For example, how fast will it grow w.r.t. the size of $\mathcal{T}$?    

Another question is whether we can prove similar hierarchies in other logics. One candidate is so called independence-friendly logic  (IFL). What makes it interesting is that IFL has the form of first-order logic, while has the expressive power equals existential second-order logic (ESO). It is well-known that ESO captures the complexity class NP over finite structures. Clearly, establishing a natural and strict hierarchy for the NP problems would be very interesting. 

\section*{Acknowledgement}
  I am grateful to Anuj Dawar for giving me a lot of suggestions and ideas which improve
the paper significantly and to Bjarki Holm for useful discussions and proofreading. I am
also grateful to the anonymous referees for their valuable comments.

\end{document}